\pdfoutput=1

\documentclass[11pt]{article}
\usepackage[letterpaper,margin=1in]{geometry} 
\usepackage{iftex}

\ifPDFTeX
\usepackage[utf8]{inputenc}
\usepackage[T1]{fontenc}
\fi

\usepackage{amsmath}
\usepackage{amssymb}
\usepackage{amsthm}
\usepackage{mathtools}
\usepackage{thmtools}
\usepackage{isomath}
\usepackage{algorithm}
\usepackage{thm-restate}
\usepackage{enumitem}
\usepackage{subcaption}

\newcounter{desccount}

\newcommand{\descref}[1]{\hyperref[#1]{#1}}

\usepackage{lineno}
\usepackage[disable]{todonotes}
\usepackage{changes}
\usepackage{comment}
\usepackage[most]{tcolorbox}

\ifPDFTeX
\usepackage{libertine}
\usepackage[libertine]{newtxmath} 
\usepackage[scaled=0.96]{zi4} 
\else
\usepackage{fontspec}
\usepackage{unicode-math}
\fi

\usepackage{microtype}
\usepackage[noend]{algpseudocode}

\usepackage{xcolor}

\usepackage[
    style=alphabetic,
    maxalphanames=4,
    minalphanames=4,
    maxcitenames=4,
    minnames=1,
    maxbibnames=20,
    backref=true,
    doi=true,
    url=true,
    backend=biber
]{biblatex}

\ifpdf
\usepackage[ocgcolorlinks]{hyperref} 
\else
\usepackage[hidelinks]{hyperref}
\fi

\usepackage[capitalize]{cleveref}
\usepackage{xparse}
\newcommand{\linecref}[1]{Line~\ref{#1}}
\newcommand{\linescref}[2]{Lines~\ref{#1} and~\ref{#2}}
\newcommand{\linesscref}[3]{Lines~\ref{#1},~\ref{#2}, and~\ref{#3}}

\usepackage{xspace}

\allowdisplaybreaks[1]

\makeatletter
\g@addto@macro\bfseries{\boldmath}
\makeatother

\makeatletter
\g@addto@macro\mdseries{\unboldmath}
\g@addto@macro\normalfont{\unboldmath}
\g@addto@macro\rmfamily{\unboldmath}
\g@addto@macro\upshape{\unboldmath}
\makeatother

\makeatletter
\g@addto@macro\bfseries{\boldmath}
\def\thmhead@plain#1#2#3{%
  \thmname{#1}\thmnumber{\@ifnotempty{#1}{ }\@upn{#2}}%
  \thmnote{ {\the\thm@notefont\unboldmath(#3)}}}
\let\thmhead\thmhead@plain
\makeatother

\DeclareCiteCommand{\citem}
    {}
    {\mkbibbrackets{\bibhyperref{\usebibmacro{postnote}}}}
    {\multicitedelim}
    {}

\renewcommand*{\multicitedelim}{\addcomma\space}

\AtEveryCitekey{\clearfield{note}}

\makeatletter
\AtBeginDocument{%
    \newlength{\temp@x}%
    \newlength{\temp@y}%
    \newlength{\temp@w}%
    \newlength{\temp@h}%
    \def\my@coords#1#2#3#4{%
      \setlength{\temp@x}{#1}%
      \setlength{\temp@y}{#2}%
      \setlength{\temp@w}{#3}%
      \setlength{\temp@h}{#4}%
      \adjustlengths{}%
      \my@pdfliteral{\strip@pt\temp@x\space\strip@pt\temp@y\space\strip@pt\temp@w\space\strip@pt\temp@h\space re}}%
    \ifpdf
      \typeout{In PDF mode}%
      \def\my@pdfliteral#1{\pdfliteral page{#1}}
      \def\adjustlengths{}%
    \fi
    \ifxetex
      \def\my@pdfliteral #1{}
      \def\adjustlengths{\setlength{\temp@h}{-\temp@h}\addtolength{\temp@y}{1in}\addtolength{\temp@x}{-1in}}%
    \fi%
    \def\Hy@colorlink#1{%
      \begingroup
        \ifHy@ocgcolorlinks
          \def\Hy@ocgcolor{#1}%
          \my@pdfliteral{q}%
          \my@pdfliteral{7 Tr}
        \else
          \HyColor@UseColor#1%
        \fi
    }%
    \def\Hy@endcolorlink{%
      \ifHy@ocgcolorlinks%
        \my@pdfliteral{/OC/OCPrint BDC}%
        \my@coords{0pt}{0pt}{\pdfpagewidth}{\pdfpageheight}%
        \my@pdfliteral{F}
        \my@pdfliteral{EMC/OC/OCView BDC}%
        \begingroup%
          \expandafter\HyColor@UseColor\Hy@ocgcolor%
          \my@coords{0pt}{0pt}{\pdfpagewidth}{\pdfpageheight}%
          \my@pdfliteral{F}
        \endgroup%
        \my@pdfliteral{EMC}%
        \my@pdfliteral{0 Tr}
        \my@pdfliteral{Q}%
      \fi
      \endgroup
    }%
}
\makeatother

\newcommand{\antonis}[1]{\todo[linecolor=orange!50!black,backgroundcolor=orange!25,bordercolor=orange!50!black]{\scriptsize \textbf{AS:} #1}}

\colorlet{DarkRed}{red!50!black}
\colorlet{DarkGreen}{green!50!black}
\colorlet{DarkBlue}{blue!50!black}

\hypersetup{
    hypertexnames=false,
	linkcolor = DarkRed,
	citecolor = DarkGreen,
	urlcolor = DarkBlue,
	bookmarksnumbered = true,
	linktocpage = true
}

\declaretheorem[numberwithin=section]{theorem}
\declaretheorem[numberlike=theorem]{lemma}

\declaretheorem[numberlike=theorem]{corollary}
\declaretheorem[numberlike=theorem]{definition}
\declaretheorem[numberlike=theorem]{claim}
\declaretheorem[numberlike=theorem]{observation}

\newcommand{\dist}{\operatorname{dist}}
\newcommand{\dd}{\operatorname{d}}
\newcommand{\cost}{\operatorname{cost}}
\newcommand{\ball}{\operatorname{Ball}}
\newcommand{\Retain}{\operatorname{Retain}}
\newcommand{\wt}{\operatorname{wt}}
\newcommand{\old}{\text{(old)}}
\newcommand{\OPT}{\text{OPT}}
\newcommand{\myhs}{\hspace{0.05em}}

\DeclarePairedDelimiter\ceil{\lceil}{\rceil}

\DeclareMathOperator*{\argmin}{argmin}

\DeclareMathOperator*{\Prob}{\ensuremath{\textnormal{Pr}}}
\renewcommand{\Pr}{\Prob}

\newcommand{\prob}[1]{\Pr\paren{#1}}
\newcommand{\card}[1]{|#1|}
\newcommand{\paren}[1]{\left ( #1 \right )}

\newcommand{\set}[1]{\ensuremath{\left\{ #1 \right\}}}

\title{Incremental \((k, z)\)-Clustering on Graphs}
\author{Emilio Cruciani\thanks{European University of Rome, Italy. This research was partially conducted when the author was a postdoc at the University of Salzburg.}
  \and 
  Sebastian Forster\thanks{Department of Computer Science, University of Salzburg, Salzburg, Austria. This work is supported by the Austrian Science Fund (FWF): P 32863-N. This project has received funding from the European Research Council (ERC) under the European Union's Horizon 2020 research and innovation programme (grant agreement No 947702).}
  \and
  Antonis Skarlatos\thanks{Department of Computer Science, University of Warwick, Coventry, England. This research was partially conducted when the author was a PhD student at the University of Salzburg. Antonis Skarlatos is funded by the European Union (ERC grant, DYNALP, 101170133). Views and opinions expressed are however those of the author(s) only and do not necessarily reflect those of the European Union or the European Research Council Executive Agency. Neither the European Union nor the granting authority can be held responsible for them.}}
\date{}

\begin{document}
    \maketitle
    \begin{abstract}
        Given a weighted undirected graph, a number of clusters $k$, and an exponent $z$, the goal in the $(k, z)$-clustering problem on graphs is to select $k$ vertices as centers that minimize the sum of the distances raised to the power $z$ of each vertex to its closest center. This problem includes the well-known $k$-median ($z = 1)$ and $k$-means ($z = 2$) clustering problems. In the dynamic setting, the graph is subject to adversarial edge updates, and the goal is to maintain explicitly an exact $(k, z)$-clustering solution in the induced shortest-path metric.

Prior works by Bhattacharya, Costa, Garg, Lattanzi, and Parotsidis [FOCS 2024]
and by Bhattacharya, Costa, and Farokhnejad [STOC 2025] consider the dynamic $(k, z)$-clustering problem for point sets in metric spaces. These algorithms support adversarial point insertions and deletions under a model with access to pairwise distances. This model differs significantly from the dynamic graph setting, where no oracle access is given to pairwise distances and a single edge update can affect many distances---making these approaches inefficient when applied to graphs. 
While efficient dynamic $k$-center approximation algorithms on graphs exist [Cruciani, Forster, Goranci, Nazari, and Skarlatos, SODA 2024], to the best of our knowledge, no prior work provides similar results for the dynamic $(k,z)$-clustering problem.

As the main result of this paper, we develop a randomized incremental $(k, z)$-clustering algorithm that maintains with high probability a constant-factor approximation in a graph undergoing edge insertions with a total update time of $\tilde O(k \myhs m^{1+o(1)}+ k^{1+\frac{1}{\lambda}} \myhs m)$, where $\lambda \geq 1$ is an arbitrary fixed constant. 
Our incremental algorithm also achieves an amortized update time of $\tilde O(k \myhs n^{o(1)}+ k^{1+\frac{1}{\lambda}})$ and consists of two stages. In the first stage, we maintain a constant-factor bicriteria approximate solution of size~$\tilde{O}(k)$ with a total update time of $m^{1+o(1)}$ (independent of the parameter $k$) over all adversarial edge insertions. This first stage is an intricate adaptation of the bicriteria approximation algorithm by Mettu and Plaxton [Machine Learning 2004] to incremental graphs. One of our key technical results is that the radii in their algorithm can be assumed to be non-decreasing while the approximation ratio remains constant---a property that may be of independent interest.

In the second stage, we maintain a constant-factor approximate $(k,z)$-clustering solution on a dynamic weighted instance induced by the bicriteria approximate solution. 
For this subproblem, we employ a dynamic spanner algorithm together with a static $(k,z)$-clustering algorithm.
    \end{abstract}

    \thispagestyle{empty}
    \newpage
    \tableofcontents
    \thispagestyle{empty}
    \newpage
    
    \setcounter{page}{1}
    \section{Introduction}
Clustering is a fundamental and well-studied problem in computer science with numerous applications across various domains~\cite{JainMF99,hansen1997cluster,ShiM00,jain2010data,coates2012learning}.
Conceptually, clustering involves partitioning  data points into groups such that points within the same group are more similar to each other than to those in different groups, according to some measure of similarity. Clustering algorithms optimize a given objective function, and for $k$-clustering problems
the goal is to output a set of $k$ representative points that minimize a $k$-clustering objective.

While the graph clustering toolkit for network analysis tasks (e.g., community detection) encompasses a wide range of clustering objectives suited to different scientific communities~\cite{Newman04,schaeffer2007graph,fortunato2010community,Aggarwal2010}, researchers have in particular performed experimental studies with $k$-clustering objectives on graphs~\cite{SchenkerLBK04,RattiganMJ07}.
It has been observed in these experiments that (a) high computational cost and (b) the sensitivity of distances to single edge updates pose a challenge when employing $k$-clustering in graphs~\cite{RattiganMJ07} (see also~\cite[Section~2.3]{Aggarwal2010}), which motivates further algorithm design in this direction.

Among the most widely studied $k$-clustering objectives are
\emph{$k$-median} and \emph{$k$-means}. The $k$-median objective
is the sum of distances from each point to its closest center, 
while the $k$-means objective is the sum of squared distances from each point to its closest center.
A generalization of both is the $(k, z)$-clustering objective, which
is the sum of distances raised to the power $z$ from each point to its closest center. 
Hence given a metric space $(\mathcal{X}, \dist)$, a set of points $P \subseteq \mathcal{X}$, and a positive integer $k \leq |P|$, the goal in the $(k, z)$-clustering problem is to output a subset $S \subseteq P$ of at most $k$ points, referred to as \emph{centers}, such that the value $\sum_{p \in P} \dist(p, S)^z$ is minimized. Observe that the $k$-median clustering problem corresponds to $z=1$, and the $k$-means clustering problem corresponds to $z=2$.

The $k$-median clustering problem is known to be NP-hard in arbitrary metric spaces \cite{kariv1979algorithmic}.
The first approximation algorithms leveraged tree embeddings and polynomial-time algorithms for the tree setting~\cite{tamir1996pn2}, achieving an approximation ratio of $O(\log n \log\log n)$ \cite{bartal1996probabilistic,bartal1998approximating}; later refined to~$O(\log k \log\log k)$ \cite{charikar1998rounding}. The first constant-factor approximation was established in \cite{charikar1999constant} using LP rounding and was later improved to $3 + \epsilon$ through local search \cite{arya2004locals, gupta2008simpler, cohen2022improved}. Subsequently, approximation ratios below $3$ were achieved~\cite{LiS13, byrka2015kmedian,cohen2022improved},
with the best-known approximation ratio currently standing at~$2+\epsilon$~\cite{cohen2025kmedian}. Meanwhile, the best known hardness result rules out an approximation factor less than~$ 1+2/e \approx 1.74 $~\cite{jain2002facility}.
Many of the aforementioned algorithmic approaches carry over to $k$-means and $(k, z)$-clustering with worse approximation factors.
For the $k$-means clustering problem, the best known approximation ratio is~$5+\epsilon$~\cite{byrka2026kclusteringiterativerandomizedrounding} and the best known hardness result rules out an approximation factor less than $ 1+8/e \approx 3.94 $~\cite{jain2002facility,ahmadian2020better}.

\paragraph{Efficient $k$-clustering algorithms.}
Fast algorithms have been proposed for the $k$-median clustering problem in arbitrary metric spaces with oracle access to the distances between points. Jain and Vazirani~\cite{JainV01} developed a deterministic $6$-approximation algorithm that runs in $\tilde{O}(n^2)$ time.
Subsequently, Mettu and Plaxton presented another approach for obtaining a constant-factor approximation in~$\tilde{O}(n^2)$ time~\cite{MettuP03online}.
Later, they presented a randomized $O(1)$-approximation algorithm with a more efficient running time of $\tilde{O}(nk)$~\cite{mettu2004optimal}. As noted by Huang and Vishnoi~\cite{HuangV20}, the algorithm in~\cite{mettu2004optimal} can be easily generalized to the $(k, z)$-clustering problem.

\paragraph{Graph setting.}
In the \emph{graph setting}, the $(k, z)$-clustering problem is defined as before, where the metric space is induced by the shortest-path distances in a weighted undirected graph with $n$ vertices, $m$ edges, and maximum edge weight $W$. Here, no oracle access is available, and distance computations must be performed explicitly by the algorithm if needed. As a consequence, although algorithms designed for point sets can be applied in the graph setting, this approach leads to inefficient algorithms. Thorup~\cite{thorup2005quick} addressed this issue for the $k$-median clustering problem by developing a randomized $(12+o(1))$-approximation algorithm that runs in $\tilde{O}(m)$ time.\footnote{The notation $\tilde{O}(\cdot)$ hides polylogarithmic factors in $nW$.}
As noted in~\cite{latour2025fastersimplergreedyalgorithm}, it seems plausible that the approach of Thorup~\cite{thorup2005quick} generalizes to $ (k, z) $-clustering using recent insights on the primal-dual method used as a subroutine.
Very recently, alternative  near-linear and almost-linear time constant-factor approximation algorithms based on a greedy approach~\cite{latour2025fastersimplergreedyalgorithm} and local search~\cite{JiangJLL25} have been developed.
It is worth noting that the fastest constant-factor approximation algorithms for Euclidean $ (k, z) $-clustering can be obtained by running the state-of-the-art graph clustering algorithms on a sparse metric spanner~\cite{Har-PeledIS13}.
The graph setting has also received notable attention for the $k$-center clustering problem~\cite{thorup2005quick,EppsteinHS20,AbboudCLM23,biabani2024k,jin2025beyond}.

\paragraph{Dynamic $k$-clustering algorithms.}
In recent years, the rapid growth and continuous change of data have led to strong interest in developing dynamic $k$-clustering algorithms~\cite{BateniEFHJMW23, Bhattacharya24clustering, TourHS24, bhattacharya2024fully, cruciani2024dynamic, Cohen-AddadHPSS19, FichtenbergerLN21, ForsterSkarlatos}. The related work most similar to ours is by Cruciani, Forster, Goranci, Nazari, and Skarlatos~\cite{cruciani2024dynamic}, who study the $k$-center clustering problem on dynamic graphs under adversarial edge updates; they achieve for example an amortized update time of $ k \myhs n^{o(1)} $ in the incremental setting. Since the shortest-path metric of a graph induces a metric space, it may seem natural to apply algorithms designed for dynamic point sets in metric spaces~\cite{BateniEFHJMW23, Bhattacharya24clustering, bhattacharya2024fully}. However, as noted in~\cite{cruciani2024dynamic}, this approach leads to inefficient algorithms as discussed in the next paragraph.

\paragraph{Comparison of dynamic graphs and dynamic point sets.}
Let us briefly highlight the key differences and challenges between the model with dynamic graphs, which we also consider in this paper, and the model with dynamic point sets in metric spaces~\cite{BateniEFHJMW23,Bhattacharya24clustering, bhattacharya2024fully}. In the model with dynamic point sets, the adversarial updates are point insertions or point deletions. On the other hand, in our model with dynamic graphs, the adversarial updates are edge insertions or edge deletions. This distinction introduces two main challenges, as explained in~\cite{cruciani2024dynamic}: (i) there is no oracle access to all-pairs shortest-path distances, and (ii) a single edge update can affect multiple distances simultaneously, whereas a point update does not affect distances between previously inserted points. As a result, directly applying black-box algorithms designed for dynamic point sets to dynamic graphs is very inefficient.

\bigskip
To the best of our knowledge, no algorithms have been developed so far for the $k$-median, $k$-means, or their generalization $(k, z)$-clustering on graphs undergoing edge updates.
For this reason, a natural question arises about the efficiency of dynamic algorithms for the $(k, z)$-clustering problem on dynamic graphs:
\begin{tcolorbox}[colback=gray!5, colframe=blue!50!black, coltitle=black, sharp corners=south, boxrule=0.7pt]
\begin{center}
    Are there efficient constant-factor approximation algorithms for the $(k, z)$-clustering problem on graphs undergoing edge updates?
\end{center}
\end{tcolorbox}

\subsection{Our Contributions}
In this paper, we answer this question in the affirmative for the incremental setting, in which the weighted undirected graph undergoes adversarial edge insertions. We believe that the incremental setting is particularly interesting and well-motivated from a practical point of view, as there exist inherently incremental graphs in practice.
For example, real-world graphs such as co-authorship networks are incremental, since the fact that two scientists have co-authored a research paper (almost) never changes over time.

Throughout the paper, we assume an oblivious adversary for our randomized results. 
An oblivious adversary fixes the entire sequence of updates
before the algorithm begins. Namely, the adversary cannot adapt the edge updates based on the (randomized) outputs of the
algorithm during the execution. 
Our central result is a highly efficient incremental algorithm---independent of the parameter $k$---that maintains a constant-factor \textit{bicriteria approximate solution} of size $\tilde{O}(k)$ (see \Cref{def:bicriteria}), which is optimal up to polylogarithmic factors for the $(k,z)$-clustering problem. 

\antonis{Reminder: This algorithm generalizes to $k$-center.}

\begin{restatable}{theorem}{incralg}\label{thm:incr_alg}
    There is a randomized incremental algorithm for the $(k, z)$-clustering problem that, given a weighted undirected graph $G = (V, E, w)$ with maximum edge weight $W$
    subject to edge insertions, an integer~$k \geq 1$, and constants $z \geq 1, \epsilon \in (0, 1)$, maintains with high probability:
    \begin{itemize}
        \item a $(O(1), O(\log^3 n \, \log_{1+\epsilon}nW))$-bicriteria approximate solution and
        
        \item a $(O(1), O(\log^3 n \, \log_{1+\epsilon}nW))$-bicriteria approximate assignment,
    \end{itemize}
    and with high probability~has an amortized update time of $\tilde O (n^{o(1)})$. 
\end{restatable}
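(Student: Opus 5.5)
The plan is to adapt the Mettu--Plaxton bicriteria algorithm~\cite{mettu2004optimal} to incremental graphs. I will take as given the key structural fact highlighted in the abstract: the radii used in that algorithm may be assumed non-decreasing over time while the approximation ratio stays constant. Recall the static algorithm. It proceeds in $O(\log n)$ rounds. In each round it samples a set $S_i$ of $\tilde O(k)$ vertices uniformly from the remaining vertices $R_i$. It then computes, for every remaining vertex, its distance to $S_i$. A constant fraction of $R_i$ closest to $S_i$ is removed; the sampled centers are said to "cover" these vertices. The union $\bigcup_i S_i$ is the bicriteria solution.

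\textbf{Step 1 (static reformulation with radii).} First I reformulate each round via a radius $r_i$. The round removes all of $R_i$ within distance $r_i$ of $S_i$, where $r_i$ is chosen as a $(1+\epsilon)$-rounded value making a constant fraction of $R_i$ covered. Rounding radii to powers of $(1+\epsilon)$ gives $O(\log_{1+\epsilon} nW)$ possible values. The extra $\log^3 n$ factor in the size comes from three sources: $O(\log n)$ rounds, samples of size $O(k\log n)$, and an additional $O(\log n)$ factor from tolerating the slack in the radius choices. The $(k,z)$ approximation argument of Mettu--Plaxton, as extended by Huang--Vishnoi~\cite{HuangV20}, goes through with constant loss for approximate radii, since distances are off by at most a factor $(1+\epsilon)^z=O(1)$.

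\textbf{Step 2 (incremental maintenance).} Under edge insertions, distances only decrease. So balls of a fixed radius $r_i$ only grow, and the covered sets only grow. Because radii are non-decreasing, a round never needs to shrink its ball. When too many vertices are covered, or the invariant that a constant fraction remains uncovered is violated, we either keep the radius or increase it. I would maintain, for each round $i$ and each radius value, an approximate incremental single-source shortest-path / ball-growing structure from the super-source $S_i$ truncated at depth $r_i$. Using known almost-linear total time incremental approximate SSSP (e.g. the $m^{1+o(1)}$ decremental/incremental structures used in~\cite{cruciani2024dynamic}), each round costs $m^{1+o(1)}$ total, independent of $k$. We resample a round (and all later rounds) only when the size of $R_i$ changes by a constant factor. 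Since the $R_i$ only shrink under insertions, this happens $O(\log n)$ times per level. Obliviousness of the adversary keeps the samples independent of the update sequence, so the high-probability guarantees union-bound over all $\mathrm{poly}(n)$ time steps. Summing over $O(\log n)$ levels, $O(\log n)$ rebuilds, and $O(\log_{1+\epsilon}nW)$ radius values gives total time $m^{1+o(1)}$, that is, amortized $\tilde O(n^{o(1)})$ per insertion.

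\textbf{Step 3 (assignment).} The approximate SSSP structures give each vertex the root (center) of the tree that covered it, together with an approximate distance. This yields the bicriteria assignment, whose cost is within $O(1)$ of the cost of the bicriteria solution.

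\textbf{Main obstacle.} The hardest step is proving that monotone radii preserve a constant approximation. Once the graph changes, the radius that the static algorithm would choose may drop, yet we keep a larger one and may cover vertices with too-distant centers. I would argue this through a charging argument: any vertex covered at radius $r_i$ lies in a region whose optimal cost contribution is $\Omega(r_i^z)$ per $\Theta(1)$ fraction of $R_i$. When the radius would have dropped below $r_i/\mathrm{const}$, we instead restart the level, and bound the number of such restarts by $O(\log n)$ via the potential $|R_i|$.
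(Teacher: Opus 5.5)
Your proposal has a genuine gap: it rests on a misreading of the monotonicity property, and the resulting running-time argument fails.

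\textbf{Wrong direction of monotonicity.} The paper's monotonicity is \emph{across levels}, $\nu_0 \le \nu_1 \le \cdots \le \nu_t$, enforced by $\nu_i = \max(\tilde\nu_i,\nu_{i-1})$. \emph{Over time}, each radius is forced to be non-increasing. A radius that is kept fixed or grows while edges are inserted cannot work. Distances only decrease, and hence so do the quantities $\mu_i^*$ in the lower bound $\cost^z(X) \ge \frac{1-\gamma}{2r}\sum_i |U_i|(\mu_i^*)^z$. Once $\nu_i \gg \mu_i^*$, the upper bound $\cost^z(S) \le \sum_i |U_i\setminus U_{i+1}|\,\nu_i^z$ can no longer be compared with $\OPT$. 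Increasing the radius when ``too many vertices are covered'' also goes the wrong way. Your final paragraph concedes that radii must drop, but that contradicts the premise of Step~2 that balls only grow and the $R_i$ only shrink.

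\textbf{Why the running-time argument fails.} When the radius at level $j$ drops, vertices leave that ball and re-enter $U_{j+1}, U_{j+2}, \dots$. So the later execution sets change fully dynamically, and $|R_i|$ is not a monotone potential. If you rebuild the later levels from scratch, their radii may go back up, and nothing stops the rebuild counts from compounding over the $\Theta(\log n)$ levels. The only available bound is roughly $\min\{(\log_{1+\epsilon} nW)^{\Theta(\log n)}, m\}$ rebuild events, each costing $m^{1+o(1)}$, which is not $m^{1+o(1)}$ total time.

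\textbf{The missing idea.}
\begin{itemize}
\item Keep every $\nu_i$ non-increasing in time, even when $U_i$ changes and a freshly sampled set suggests a larger radius.
\item Fill the level's quota of $\lceil \beta |U_i| \rceil$ vertices with vertices released by earlier levels; these are the leaking sets $Z_i$.
\item Using $\Retain$ keeps every $|U_i|$ unchanged, so $t=O(\log n)$ and the Mettu--Plaxton lower bound still apply.
\item A vertex in $Z_i$ came from some level $j \le i$. Since distances only shrink and a level with $Z_i \neq \emptyset$ did not decrease its radius, it is within $\nu_j^{\old} \le \nu_i^{\old} = \nu_i$ of $S$. This charging is exactly where cross-level monotonicity is needed.
\end{itemize}
With this, every rebuild is triggered by a strict decrease of some radius. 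That happens $O(\log n \cdot \log_{1+\epsilon} nW)$ times overall, each costing at most $t$ SSSP restarts.

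\textbf{Source of the size bound.} Fresh samples are added to $S_i$ rather than replacing it. The $O(k\log^3 n\,\log_{1+\epsilon} nW)$ size bound is this rebuild count times $O(k\log n)$ per sample times $t$ levels, not ``slack in the radius choices''.
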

For clarity of presentation, in~\cref{thm:incr_alg} we hide the dependence on $z$ in the approximation ratio---which is $O((1+\epsilon)^{2z})$---and the dependence on $\epsilon$ and $ W $ in the update time. By combining~\cref{thm:incr_alg} with our incremental reduction algorithm (in~\cref{thm:bicr_to_kzclustering}), we obtain our main result.

\begin{restatable}{theorem}{mainthmkzclust}\label{thm:mainthm_kzcluster}
  There is a randomized incremental algorithm for the $(k,z)$-clustering problem that, given a weighted undirected graph $G=(V,E,w)$ subject to edge insertions, an integer $k \geq 1$, and constants~$z \geq 1, \lambda \geq 1$, maintains with high probability an $O(1)$-approximate $(k,z)$-clustering solution with a total update time of~$\tilde O(k \myhs m^{1+o(1)}+ k^{1+\frac{1}{\lambda}} \myhs m)$ and an amortized update time of $\tilde O(k \myhs n^{o(1)}+ k^{1+\frac{1}{\lambda}})$.
\end{restatable}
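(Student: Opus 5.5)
The plan is to combine \cref{thm:incr_alg} with the incremental reduction of \cref{thm:bicr_to_kzclustering} and to bound the cost of the two stages separately.

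\textbf{Stage 1.} We run the algorithm of \cref{thm:incr_alg} on the incremental graph. With high probability it maintains a $(O(1),O(\log^3 n\,\log_{1+\epsilon} nW))$-bicriteria approximate solution $S$ with $|S|=\tilde O(k)$, together with an approximate assignment $\sigma\colon V\to S$. Its amortized update time is $\tilde O(n^{o(1)})$, which gives total time $m^{1+o(1)}$, independent of $k$.

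\textbf{Stage 2.} The assignment induces a weighted instance on $S$: each center $s$ receives weight $|\sigma^{-1}(s)|$. We feed the stream of changes to this instance, namely center insertions and deletions and weight changes, into the reduction algorithm of \cref{thm:bicr_to_kzclustering}. That algorithm maintains a dynamic spanner with stretch $O(\lambda)$ or $2\lambda-1$ over the metric restricted to the $\tilde O(k)$ centers. Whenever needed, it reruns a static near-linear-time $(k,z)$-clustering algorithm (e.g., Thorup-style or \cite{latour2025fastersimplergreedyalgorithm}) on the spanner, which has $\tilde O(k^{1+1/\lambda})$ edges.

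\textbf{Approximation.} We use the standard composition argument. By the triangle inequality together with $(a+b)^z\le 2^{z-1}(a^z+b^z)$, an $\alpha$-approximate solution on the weighted instance, whose cost in turn is within $O(\lambda^z)$ of the true cost because of the spanner stretch, yields an $O(2^{z}(\beta+\alpha\lambda^z))$-approximation on $G$, where $\beta$ is the bicriteria ratio. For constant $z$ and $\lambda$ this is $O(1)$.

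\textbf{Update time.} The total time of Stage 2 is bounded through the guarantee of \cref{thm:bicr_to_kzclustering}. The main obstacle is recourse: each edge insertion may change many distances between centers and many assignments, so we need to control how often the induced instance changes. The approach I would try first is to charge spanner maintenance to the $O(k)$ single-source shortest-path style structures, one per center, inside Stage 1. This should give the $\tilde O(k\, m^{1+o(1)})$ term. The static recomputations on sparse spanners would be charged as $\tilde O(k^{1+1/\lambda})$ per update, giving $\tilde O(k^{1+1/\lambda} m)$ in total.

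Summing the two stages gives total update time $\tilde O(k\,m^{1+o(1)}+k^{1+1/\lambda}m)$, and dividing by the at most $m$ insertions gives the stated amortized bound. A union bound over the high-probability events of both stages completes the argument. If the reduction theorem already states its total time in terms of the number of changes to $S$ and to the weights, the remaining work is to bound that number using the structure of Stage 1, namely non-decreasing radii and $O(\log nW)$ phases.
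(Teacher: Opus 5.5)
Your architecture is the paper's: the bicriteria stage, then the spanner-based reduction on the vertex-weighted instance induced by $\sigma$, then the composition bound for the ratio. The approximation part is fine for constant $z,\lambda$. The gap is in the update time, in exactly the step you defer as ``remaining work''. That step is essentially the whole content of the paper's proof.

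\cref{thm:bicr_to_kzclustering} runs one incremental $(1+\epsilon)$-SSSP for every vertex that \emph{ever} enters $S$. It also rebuilds the spanner on the complete graph $H$, at cost $\tilde O(|S|^2)$, every time new vertices enter $S$. Its cost is therefore governed by $|\sigma_\text{max}(V)|$ and $\sigma_\text{inc}(V)$. It is not governed by the current $|S|$ or by the number of weight changes, since weights only feed the static algorithm, which is rerun after every update anyway. You bound neither quantity.

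For $|\sigma_\text{max}(V)|=\tilde O(k)$ you need two facts:
\begin{itemize}
\item $S$ is insertion-only: candidate sets are only extended via $S_i\gets S_i\cup\tilde S_i$ (\cref{obs:S_extends_only}).
\item The accumulated sets stay of size $O(k\log^3 n\log_{1+\epsilon}nW)$. This is precisely why \cref{thm:incr_alg} carries the extra $\log n\log_{1+\epsilon}nW$ factor.
\end{itemize}
Your framing of ``center insertions and deletions'' points the wrong way. With a fully dynamic center set, the per-center SSSP cost is multiplied by the number of center insertions, and nothing in your argument controls that number.

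Relatedly, the $\tilde O(k\,m^{1+o(1)})$ term does not come from per-center SSSP structures ``inside Stage~1''. Stage~1 runs one super-source SSSP per level, and, as you note yourself, its total time is independent of $k$. The term is the $|\sigma_\text{max}(V)|\cdot m^{1+o(1)}$ spent by the reduction's per-center SSSPs.

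For $\sigma_\text{inc}(V)=O(\log n\log_{1+\epsilon}nW)$, the argument is as follows. $\sigma$, and hence $S$, changes only when \textsc{First-Level-Decrease} finds a level whose radius drops. Each radius is a power of $(1+\epsilon)$ that only decreases over time (\ref{property_rad_1}, \ref{property_rad_2}), and there are $O(\log n)$ levels. So such events happen $O(\log n\log_{1+\epsilon}nW)$ times in total.

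The property doing the work here is this non-increase over time. It is not the non-decreasing order of radii across levels (\ref{property_rad_3}) that you cite; that one serves only the cost bound for leaked vertices.

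With both bounds in hand, the rebuilds cost $\tilde O(k^2)$ in total. You still need to assume $m\ge k$ (otherwise rerun a static algorithm from scratch after each insertion) to absorb this term into the claimed bound.

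Your idea of charging the spanner's weight-decrease updates to the change reports of the per-center SSSPs is a valid alternative to the paper's count of $O(\log_{1+\epsilon}nW)$ rounded decreases per pair. It does not cover the rebuilds, though, which is where $\sigma_\text{inc}(V)$ is indispensable.

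Finally, a smaller bookkeeping point. If the static solver is \cref{lem:static_kz_clust}, its running time is $\tilde O(m^{1+1/\lambda})$, not near-linear. You then have to rebalance the spanner and solver parameters so that each rerun costs $\tilde O(k^{1+1/\lambda})$; the paper does this by taking $\epsilon=0.5/\lambda$.
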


For clarity of presentation, in~\cref{thm:mainthm_kzcluster} (our incremental algorithm) we hide the dependence on~$W$ inside the notation~$\tilde{O}(\cdot)$, which we use to suppress polylogarithmic factors in~$nW$.
The hidden dependence on~$\lambda$ in the $O(1)$-approximation ratio is $O(\lambda^{6z})$.

\paragraph{Comparison between $(k, z)$-clustering and $k$-center.}
Let us make a few remarks about our results in comparison with the results in~\cite{cruciani2024dynamic}, which studies the $k$-center objective. In the work by Cruciani, Forster, Goranci, Nazari, and Skarlatos~\cite{cruciani2024dynamic}, the most challenging setting was the incremental one; the decremental setting naturally follows from the connection between the maximal independent set problem and the $k$-center clustering problem. Moreover, the fully dynamic setting follows from a reduction to  fully dynamic approximate SSSP that ``simulates'' a run of a static algorithm after each adversarial edge update.
The latter approach would also work for fully dynamic $ (k, z) $-clustering, but because of the larger constants in the approximation ratio would not be competitive with the naive approach of recomputing from scratch on a dynamic spanner after each adversarial edge update.

In the incremental setting, the approach of~\cite{cruciani2024dynamic} is also to first maintain a bicriteria approximate solution and then construct a $k$-center solution on top of it. Our update time guarantee in~\cref{thm:incr_alg} matches theirs (for bicriteria approximation), and the total update times match the static runtime up to subpolynomial factors.
Their second step, which converts the bicriteria approximate solution into a $k$-center solution, exploits a fast dynamic maximal independent set algorithm~\cite{ChechikZ19,BehnezhadDHSS19} as a black box. In contrast, for the $(k, z)$-clustering problem we need to take additional steps using a dynamic spanner algorithm.

\subsection{Structure of the Paper}
We provide a technical overview in~\cref{sec:techinical_overview}, followed by basic preliminaries in~\cref{sec:preliminaries}. In~\cref{sec:static_bicr_alg}, we present a static version of the bicriteria approximation which helps introduce the ideas behind the incremental version in Section~\ref{sec:incr_bicr}. In particular, it shows that the order of the radii can be made non-decreasing without affecting the approximation ratio by more than a constant factor. In~\cref{sec:bicr_to_kmed_incr}, we combine our incremental bicriteria approximation algorithm with our incremental reduction algorithm to address the $(k, z)$-clustering problem on incremental graphs. 

    \section{Technical Overview}\label{sec:techinical_overview}
In the following, we give a brief overview of our main technical contributions for the incremental setting, where the input graph $G = (V, E, w)$ undergoes edge insertions sent by the adversary. The goal is to develop an efficient incremental constant-factor approximation algorithm for the $(k, z)$-clustering problem on graphs. Our incremental $(k, z)$-clustering algorithm consists of two subroutines.
The first subroutine described in~\cref{sec:incr_bicr}, is an incremental $(O(1), O(\log^3 n \, \log_{1+\epsilon}nW))$-bicriteria approximation algorithm for the $(k, z)$-clustering problem on graphs. 
The second subroutine described
in~\cref{sec:bicr_to_kmed_incr}, efficiently converts the previous bicriteria approximate solution into a constant-factor approximate $(k, z)$-clustering solution after every adversarial edge insertion.

\subsection{Basis of Our Incremental Bicriteria Approximation Algorithm}
The basis of our incremental $(O(1), O(\log^3 n \log_{1+\epsilon} nW))$-bicriteria approximation algorithm for the $(k, z)$-clustering problem is the static $(O(1), O(\log^2 n))$-bicriteria approximation algorithm by Mettu and Plaxton~\cite{mettu2004optimal}, henceforth referred to as the \emph{MP-bi algorithm}. A high-level overview of the static MP-bi algorithm is presented in the next paragraph.

\paragraph{MP-bi algorithm.}
Let $U_0 = V$ be the whole vertex set.
The MP-bi algorithm performs at most $t \coloneqq O(\log \frac{n}{k})$ iterations, where at the $i$-th iteration:
\begin{enumerate}
    \item A candidate set $S_i$ is constructed by sampling a small subset of $U_i$, where $|S_i| = \tilde{O}(k)$.
    
    \item The smallest radius $\nu_i$ is computed such that the ball $B_i$ of radius $\nu_i$ around $S_i$ contains a constant fraction $\beta$ of vertices from $U_i$ (i.e., $|B_i \cap U_i| \geq \beta \myhs |U_i|$).
    
    \item The ball $B_i$ is removed from $U_i$, and the MP-bi algorithm recurses on the remaining vertices $U_{i+1} \coloneq U_i \setminus B_i$.
\end{enumerate}

\vspace{0.5em}
In the incremental setting, the adversary inserts new edges into the graph; our aim is to design an incremental version of the MP-bi algorithm under adversarial edge insertions. We refer to the $i$-th iteration of the MP-bi algorithm as the $i$-th level. Thus, our goal is to incrementally maintain the data structures of the MP-bi algorithm at each level $i$. 

\subsubsection{Challenges in the Incremental Setting}
The central step of the MP-bi algorithm is its second step, where the smallest radius $\nu_i$ is computed. Under edge insertions the distances are non-increasing, and so for a fixed radius $\nu_i$, the ball $B_i$ can only ``receive'' new vertices. In this case, whenever a ball $B_i$ contains more than a $\beta$ fraction of vertices from $U_i$, the radius~$\nu_i$ should be decreased accordingly. The core issue however, is that the subsequent set $U_{i+1}$ of remaining vertices can change in a \emph{fully dynamic} manner, as follows:
\begin{itemize}
    \item Vertices and edges can be removed from the set $U_{i+1}$ by entering the ball at a previous level. This occurs when a newly inserted adversarial edge decreases the distance between the candidate set $S_j$ and a vertex in $U_{i+1}$ to at most $\nu_j$, for some previous level $j < i + 1$.

    \item Vertices and edges can be added to the set $U_{i+1}$ by leaving the ball at a previous level. This occurs when  the radius $\nu_j$ is decreased accordingly because many vertices enter the ball $B_j$ due to a newly inserted adversarial edge, for some previous level $j < i+1$.
\end{itemize}

\noindent
As a consequence, the current candidate set $S_{i+1}$ which was sampled from the old set $U_{i+1}^\old$, may no longer be a good representative in the new set $U_{i+1}$. For this reason, a new candidate set $S_{i+1}$ should be sampled from the modified set $U_{i+1}$, and the new radius $\nu_{i+1}$ should be computed. In other words, the first two steps of the MP-bi algorithm should be repeated at level $i+1$. 
The second step of the MP-bi algorithm in the incremental setting can be implemented straightforwardly by maintaining the ball $B_j$ and the radius $\nu_j$ at each level $j$, as follows:
\begin{itemize}
    \item either by using a fully dynamic SSSP (single-source shortest paths) algorithm with a super-source attached to the candidate set $S_j$,
    \item or by restarting an incremental SSSP algorithm with a super-source attached to the candidate set $S_j$, whenever the set $U_j$ is modified.
\end{itemize}

The first option of employing a fully dynamic SSSP algorithm has a prohibitive update time. The second option of employing an incremental SSSP algorithm depends on the total number of times any set $U_j$ is modified, which is naively proportional to the number of distinct sequences of the radii $\nu_j$. 
Therefore, we can infer that the total update time of the second option depends on the number of distinct sequences of the radii $\nu_j$. 

\paragraph{Number of distinct sequences of the radii.}
Even when the radii are powers of $(1+\epsilon)$ and the number of levels $t$ is $O(\log n)$, the number of distinct sequences of the radii $\nu_j$ is naively at least: 
\[
    \Omega\left( \min\left( \left(\log_{1+\epsilon} nW\right)^{\log n},\, m \right) \right),
\] 
where $W$ is the maximum weight of an edge, and thus $nW$ is the maximum possible distance between two vertices in the graph.
The first term is because the new radius $\nu_j$ may differ arbitrarily from the old radius~$\nu^\old_j$ at level $j$, after repeating the second step of the MP-bi algorithm. The second term is because the sequence of the radii could change after each adversarial edge insertion.

For the second straightforward option, an incremental SSSP algorithm is restarted at most $t = O(\log n)$ times (for the $t$ sets $U_j$) per distinct sequence of the radii. As a result, the total update time required for the second straightforward option is at least:
\[\Omega\left( \min\left( \left(\log_{1+\epsilon} nW\right)^{\log n},\, m \right) \right) \cdot \log n \cdot \text{ the running time of an SSSP algorithm.}\]
In other words, the total update time required for the second straightforward option is at least $\Omega(m \cdot (\log_{1+\epsilon} nW)^{\log n})$, which is excessively high.

\subsection{Incremental Bicriteria Approximation Algorithm}
In principle our approach follows the second option, which is to restart an incremental SSSP algorithm as many times as the number of distinct sequences of the radii $\nu_i$. However, to overcome the issue of exponentially many distinct sequences of $\nu_i$ while ensuring constant-factor approximation, we introduce a refined and elegant solution for managing the radii $\nu_i$. Our solution is a variant of the MP-bi algorithm and relies on two properties that are enforced on the radii. 

The first property which is referred to as the \emph{non-increasing property}, is that the value of each radius is non-increasing over time (i.e., the value of a specific $\nu_i$ either decreases or remains the same). By the non-increasing property, along with the facts that each radius is a power of $(1+\epsilon)$ and that $t = O(\log n)$, we argue that there are at most $O\left(\log_{1+\epsilon} nW \cdot \log n\right)$ distinct sequences of radii. In turn, the update time becomes:
\[
    O\left(\log_{1+\epsilon} nW \cdot \log n\right) \cdot \log n  \cdot \text{ the update time of an incremental SSSP algorithm.}
\]

Nonetheless, since the non-increasing property is enforced, there is no guarantee on the approximation ratio. In particular, simultaneously enforcing the non-increasing property and maintaining $t = O(\log n)$ levels may lead to decisions that do not guarantee a constant-factor approximation. To that end, in order to maintain a constant-factor approximation, we establish the following crucial structural property of the MP-bi algorithm:
\begin{tcolorbox}[colback=gray!5, colframe=blue!50!black, coltitle=black, sharp corners=south, boxrule=0.7pt]
\begin{center}
   The order of the values of the radii $\nu_i$ can be enforced to be non-decreasing (i.e., $\nu_0 \leq \nu_1 \leq \cdots \leq \nu_t$) without affecting the approximation ratio by more than a constant factor.
    \end{center}
\end{tcolorbox}
\noindent We refer to this second property as the \emph{monotonicity property} of the radii.
Using the monotonicity property, along with the fact that the distances are non-increasing in the incremental setting (under adversarial edge insertions), we show that the approximation ratio is constant. As a warm-up, we introduce the monotonicity property also in the static setting. Our static MP-bi variant enforces the monotonicity property from level $0$ up to level $t$, as follows:
\begin{enumerate}
    \item If $\nu_i \geq \nu_{i-1}$, then our MP-bi variant retains the radius $\nu_i$.

    \item If $\nu_i < \nu_{i-1}$, then our MP-bi variant replaces the value of the radius $\nu_i$ with the (updated) value of the radius $\nu_{i-1}$ from the previous level.
\end{enumerate}
Notably, we prove that this modification preserves the constant-factor approximation of the MP-bi algorithm. 
Therefore, the core idea of our approach is to enforce the non-increasing property to achieve efficient update time, while relying on the monotonicity property to guarantee a constant-factor approximation.

\subsubsection{Combining Non-Increasing and Monotonicity Properties in the Incremental Setting} 
In the incremental setting, we demonstrate how both the non-increasing and monotonicity properties can be enforced simultaneously, and how the monotonicity property can be leveraged to argue about the approximation ratio.
Let $\nu_i^\old$ be the old radius at level $i$, and $\nu_i$ be the new radius at level $i$ maintained by the incremental algorithm after an adversarial edge insertion.
  
Consider the set $U_i$ at level $i$, and note that after an adversarial edge insertion some vertices may move closer to the corresponding candidate set $S_i$. In other words, the ball $B_i$ may receive new vertices, causing~$|B_i \cap U_i|$ to exceed $\beta |U_i|$ and the radius $\nu_i$ to be decreased accordingly (i.e., $\nu_i < \nu_i^\old$). 
Consider a vertex~$v \in U_i$ that is within a distance of $\nu^\old_i$ from the candidate set $S_i$. Since under edge insertions the distances are non-increasing, the distance between $v$ and $S_i$ remains at most $\nu_i^\old$. However, it is possible that the distance between $v$ and $S_i$ is greater than the new radius $\nu_i$, which implies that the vertex $v$ does not belong to the ball $B_i$ anymore. In this case, we add such a vertex $v$ in a set $Z$, which we call the \emph{leaking set}. Specifically, the leaking set $Z$ contains all vertices that have exited a ball from a previous level and have \emph{leaked} to a subsequent level. 

Since the set $U_{i+1}$ may be modified, we repeat the first two steps of the MP-bi algorithm at level $i + 1$: we resample a fresh candidate set $\tilde{S}_{i+1}$, add it to the old candidate set~$S_{i+1}$, and compute a \emph{suggested} radius~$\tilde{\nu}_{i+1}$ based on the second step of the MP-bi algorithm. The difference between our incremental algorithm and the straightforward option lies in the fact that the new radius~$\nu_{i+1}$ is updated according to our MP-bi variant, while enforcing both the non-increasing and monotonicity properties:
\begin{enumerate}
    \item If $\tilde{\nu}_{i+1} \geq \nu_{i+1}^\old$, then our incremental algorithm ignores the suggested radius $\tilde{\nu}_{i+1}$ and
    retains the old radius $\nu_{i+1}^\old$ (i.e., $\nu_{i+1} = \nu_{i+1}^\old)$.

    \item Else if $\tilde{\nu}_{i+1} < \nu_i$, then our incremental algorithm sets the new radius $\nu_{i+1}$ to the current radius $\nu_i$ of the previous level.

    \item Otherwise, our incremental algorithm sets the new radius $\nu_{i+1}$ to the suggested radius $\tilde{\nu}_{i+1}$.
\end{enumerate}

\noindent
The first condition enforces the non-increasing property, while the first and second conditions enforce the monotonicity property. As a consequence of the non-increasing property, the number of distinct sequences of the radii $\nu_j$ is reduced to $O(\log n \cdot \log_{1+\epsilon} nW)$. Although we prove that the monotonicity property preserves the constant-factor approximation of the MP-bi algorithm in both the static and incremental settings, in the incremental setting we must also handle the leaking set, which poses an additional challenge. Recall that the leaking set contains vertices that have exited a ball at a previous level and have leaked into a subsequent level. 

\paragraph{Role of the leaking set.}
Since the non-increasing property is enforced for efficiency reasons, in order to identify a $\beta$ fraction of vertices from $U_{i+1}$ (as dictated by the second step of the MP-bi algorithm), we may need to use some vertices from the leaking set $Z$. 
In the analysis, we prove that a suitable subset of the
leaking set $Z$ exists.
Finding a ball $B_{i+1}$ containing a constant fraction $\beta$ of vertices from $U_{i+1}$ (second step of the MP-bi algorithm), and removing $B_{i+1}$ from $U_{i+1}$ (third step of the MP-bi algorithm) is important for maintaining at most $t = O(\log n)$ levels.

\vspace{1em}
To appropriately provide an upper bound on the assigned cost of leaked vertices, we leverage the monotonicity property, as described in the next paragraph.
The monotonicity property acts as a glue, ensuring both the non-increasing property (i.e., efficiency) and the constant-factor approximation are maintained.

\paragraph{Assigned cost of the leaking set.}
Roughly speaking, subsets of the leaking set $Z$ are distributed among the sets~$U_i$ for which the corresponding suggested radius $\tilde{\nu}_i$ is greater than the current radius $\nu_i^\old = \nu_i$.
For a fixed vertex $v \in U_i$ that belongs to the leaking set $Z$, there is a crucial structural observation to be made: The vertex $v$ must have entered the set $U_i$ from a previous level $j \leq i$. Thus, the vertex $v$ is within a distance of at most $\nu_j^\old$ from the candidate set~$S_j$. Observe also that the vertex $v$ may be far from the candidate set~$S_i$. Since under edge insertions the distance between $v$ and $S_j$ remains at most $\nu_j^\old$ and~$\nu_j^\old \leq \nu^\old_i = \nu_i$ by the monotonicity property, we can charge for the vertex $v$ a cost of $\nu_i$. 

Therefore using the monotonicity property, we can argue that each relevant vertex of any set $U_i$ can be charged a cost of $\nu_i$. This in turn helps us to upper bound the approximation ratio of our incremental algorithm, while ensuring that the value of the last level $t$ remains $O(\log n)$ throughout the algorithm.

\subsection{Maintaining $(k, z)$-Clustering on Bicriteria Approximate Solution}
So far, we have described how our incremental algorithm maintains a constant-factor bicriteria approximate solution $S \subseteq V$ of size~$\tilde{O}(k)$. The bicriteria approximate solution $S$ induces a dynamic $(1 + \epsilon)$-metric space~$(P, \dd)$\footnote{Recall that in a $\rho$-metric space, the triangle inequality may be violated by a factor of~$\rho$.} by maintaining incremental $(1+\epsilon)$-approximate~SSSP algorithms for each vertex in $S$ (i.e., for each node in $P = S$).
We ``dynamize'' a known argument to show that the optimal $(k, z)$-clustering solution on the $(1 + \epsilon)$-metric space $(P, \dd)$, with appropriate weights on the nodes in $P$, is an $ O(1) $-approximation to the optimal $(k, z)$-clustering solution on the graph $G$ (see~\cref{lem:convert_bicr_kzclustering}).

Over the course of the adversarial edge insertions to the incremental graph~$G$, both the set $P$ and
the weights of nodes in $P$ change at most $\tilde{O}(1)$ times. The set $P$ can change because the bicriteria approximate solution~$S$ may receive at most $\tilde{O}(k)$ new vertices due to an adversarial edge insertion into $G$.
We emphasize that, for the efficiency of all the subroutines of our incremental $(k, z)$-clustering algorithm, it is important that the total number of vertices that ever belong to $S$ (and in turn in $P$) is at most $\tilde{O}(k)$.
Additionally, the distance between any pair of nodes in $P$ may decrease at most $\tilde{O}(1)$ times, resulting in a total of~$\tilde{O}(k^2)$ distance decreases.

The straightforward approach for maintaining an approximate $(k, z)$-clustering on $(P, \dd)$ would be to recompute it from scratch in time $ \tilde O(|P| \myhs k) = \tilde O(k^2) $ after every change to the set $P$ or to the distance function $\dd$.
This straightforward approach would result in a total update time of $ \tilde O (k \myhs m^{1+o(1)} + k^4) $ for our overall incremental $(k, z)$-clustering algorithm, which implies an amortized update time of $k n^{o(1)} $ for any $ k \leq m^{1/3 + o(1)}$. To obtain an update time close to $ k $ for the full parameter range, we instead employ an approach involving a dynamic spanner algorithm. At a high level, since we achieve vertex sparsification via the bicriteria approximate solution $S$, we exploit it further by performing edge sparsification on top of the vertex-sparsified graph $H = (P, P \times P)$ whose edges are weighted by approximate distances in $G$. 

\paragraph{Our approach.}
In more detail, we view $(P, \dd)$ as a complete graph $ H $ with $\tilde{O}(k^2)$ edges, where the weight of an edge~$(x, y)$ in $H$ corresponds to the approximate distance between $x$ and $y$ in $G$.
We further bin the edges of $H$ into weight classes and maintain a fully dynamic $O(\lambda)$-spanner of size $ \tilde O (k^{1+1/\lambda}) $ on the edges of each weight class~\cite{BaswanaKS12}.
The union $ \tilde{H} $ of these spanners is a $ O (\lambda) $-spanner for $ H $ of size $ \tilde O (k^{1+1/\lambda})$; that is, $\tilde{H}$ is a subgraph of $H$ (in terms of edges) in which the distance between every pair of vertices in $H$ is preserved up to a multiplicative factor of $O(\lambda)$.
This dynamic spanner algorithm~\cite{BaswanaKS12} can deal with adversarial edge updates in polylogarithmic time per update.

Since the changes to the vertex set of $H$ (and of $\tilde{H}$) occur only in $ \tilde O(1) $ many batches, we can restrict the dynamic spanner algorithm to adversarial edge insertions by simply restarting it after every batch of vertex insertions.
Since distances in $ G $ are non-increasing, each edge appears at most once in each weight class in between restarts and thus the total update time for maintaining the spanner $\tilde{H}$ is $ \tilde O (k^2) $. As the final step of handling each of the $ m $ adversarial edge insertions to $ G $, we compute a $(k, z)$-clustering solution on $ \tilde{H} $ in time~$\tilde{O} (k^{1+1/\lambda}) $ from scratch using one of the state-of-the-art static  $(k, z)$-clustering algorithms~\cite{thorup2005quick,latour2025fastersimplergreedyalgorithm,JiangJLL25}.
Overall, all these steps result in an incremental constant-factor approximation $(k, z)$-clustering algorithm with a total update time of $\tilde O(k \myhs m^{1+o(1)}+ k^{1+\frac{1}{\lambda}} \myhs m)$ and an amortized update time of $\tilde O(k \myhs n^{o(1)}+ k^{1+\frac{1}{\lambda}})$, where $\lambda \geq 1$ is an arbitrary fixed constant.


    \section{Preliminaries}\label{sec:preliminaries}

Consider a weighted undirected graph $G = (V, E, w)$ with nonnegative weights.
We denote by $n \coloneqq |V|$ the number of vertices, by $m \coloneqq |E|$ the number of edges, and by $W$ the maximum weight of an edge.%
\footnote{We assume that the weights are at least $1$ and are upper bounded by a polynomial in $n$. This can be achieved by rescaling all the weights.\label{ftnote:W_poly_n}} For a vertex $v \in V$ and a subset of vertices $S \subseteq V$, let $\dist(v, S) \coloneqq \min_{s \in S} \dist(v, s)$ be the \emph{distance} between the vertex $v$ and the set $S$. We write $\dist_H(\cdot, \cdot)$ to explicitly denote the corresponding distance in a graph $H$; when the subscript is omitted, it refers to $\dist_G(\cdot, \cdot)$. 

\begin{definition}[$(k,z)$-clustering problem on graphs]\label{def:k_median}
    Given a weighted undirected graph $G = (V, E, w)$, an integer $k \geq 1$, and a constant $z \ge 1$, the goal in the $(k,z)$-clustering problem
    is to output a subset of vertices $S \subseteq V$ with size at most $k$ (i.e., $|S| \leq k$) such that the value of $\cost^z(S) \coloneqq \sum_{v \in V} \dist(v, S)^z$ is minimized.
\end{definition}

The $(k,z)$-clustering problem includes the problems of $k$-median ($z=1$) and $k$-means ($z=2$). A \emph{$(k, z)$-clustering instance} is the triple consisting of the given input object, the integer parameter $k$, and the constant $z$.
Any subset of vertices $S \subseteq V$ with size at most $k$ is referred to as a \emph{(feasible) solution}, and its cost is referred to as its \emph{$(k,z)$-clustering cost}. Hence, the goal of the $(k,z)$-clustering problem can be rephrased as finding a solution $S$ with minimum $(k,z)$-clustering cost $\OPT \coloneqq \min_{S \subseteq V: |S| \leq k} \text{cost}^z(S)$. The value $\OPT$ is called \emph{optimal $(k,z)$-clustering cost}, and a solution that minimizes the cost is called an \emph{optimal $(k,z)$-clustering solution}.

\begin{definition}[$\rho$-approximate $(k,z)$-clustering problem]\label{def:approxkmedian}
    Given a $(k,z)$-clustering instance, the goal of the $\rho$-approximate $(k,z)$-clustering problem is to output a subset of vertices $S
    \subseteq V$ such that $|S|\leq k$ and $\cost^z(S)\leq \rho \cdot \OPT$. 
\end{definition}

\begin{definition}[$(\alpha, \beta)$-bicriteria approximation] \label{def:bicriteria}
    Given a $(k,z)$-clustering instance, an $(\alpha, \beta)$-bicriteria approximate solution is a subset of vertices $S \subseteq V$ such that:
    \[
        \cost^z(S) \coloneqq \sum_{v \in V} \dist(v, S)^z \leq \alpha \cdot \OPT \;\text{ and } \; |S| \leq \beta \cdot k.
    \]
\end{definition}

\begin{definition}[$(\alpha, \beta)$-bicriteria approximate assignment] \label{def:assignment}
    Given a weighted undirected graph $G = (V, E, w)$ and an integer $k \geq 1$,
    an $(\alpha, \beta)$-bicriteria approximate assignment is a function $\sigma: V \to V$
    such that:
    \[
        \cost^z(\sigma) \coloneqq \sum_{v \in V} \dist(v, \sigma(v))^z \leq \alpha \cdot \OPT \;\text{ and } \; |\sigma(V)| \leq \beta \cdot k.
    \]
Conversely, the preimage of every vertex $u \in V$ is $\sigma^{-1}(u) \coloneqq \set{v \in V \mid \sigma(v) = u}$. 
\end{definition}

Our incremental reduction algorithm in~\cref{sec:bicr_to_kmed_incr} deals with \emph{weighted} $(k, z)$-clustering instances on a subgraph $H$, where $H$ contains both edge and vertex weights.

\begin{restatable}[weighted $(k,z)$-clustering problem on graphs]{definition}{weightedkzclustering} \label{def:wgt_k_median}
    Given a weighted undirected graph $\hat{G} = (\hat{V}, \hat{E}, \hat{w})$ with vertex weights $\wt(\cdot)$, the corresponding weighted $(k,z)$-clustering instance is denoted by $\mathcal{K} \coloneqq (\hat{G}, k, z, \wt)$. The cost of a subset of vertices $C \subseteq \hat{V}$ is defined as $\cost^z_{\wt}(C)\coloneqq \sum_{v \in \hat{V}} \wt(v) \cdot \dist_{\hat{G}}(v, C)^z$. The optimal $(k,z)$-clustering cost of $\mathcal{K}$ is denoted by $\OPT_\mathcal{K}$.
\end{restatable}

Our incremental algorithms utilize the following incremental $(1+\epsilon)$-approximate single-source shortest paths (SSSP) algorithm. We note that similar results have been known implicitly for undirected graphs; these were randomized and assumed an oblivious adversary~\cite{HKN2014hopsets,AnconaHRWW19,LackiN22}.

\begin{lemma}[incremental $(1+\epsilon)$-approximate SSSP \cite{liu2025incremental}] \label{lem:incr_appr_sssp}
    Let $ m \geq 1 $ and $ \epsilon \geq \exp (- (\log m )^{0.99}) $.
    Let $G = (V, E, w)$ be an incremental graph undergoing $m$ edge insertions with integral edge weights in the range $ [1, W] $ with a source $ s \in V $. There is a deterministic data structure that explicitly maintains distance estimates $ \delta_s: V \rightarrow \mathbb{R}_{\geq 0} $ such that:
    \begin{equation*}
        \dist(s,v) \,\le\, \delta_s(v) \,\le\, (1+\epsilon) \dist(s,v),
    \end{equation*}
    and the ability to report the corresponding approximate shortest paths $ \pi_{s, v} $ from $ s $ to $ v $ in time $ O (|\pi_{s, v}|) $.
    The total update time of the data structure is $ m^{1+o(1)} \log W $. Moreover, the algorithm detects and reports the distance estimate changes explicitly. 
\end{lemma}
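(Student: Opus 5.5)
Since this lemma is imported from~\cite{liu2025incremental}, I would reconstruct its proof at a high level rather than re-derive every component. The backbone is a reduction from incremental approximate SSSP to a deterministic \emph{incremental} min-cost-flow / min-ratio-cycle machinery with almost-linear total time, combined with a rounding scheme that bounds how often any estimate may change.

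First I would fix the output discretization. Each estimate $\delta_s(v)$ is kept as a power of $(1+\epsilon/3)$ between $1$ and $nW$. The invariant to maintain is $\dist(s,v)\le\delta_s(v)\le(1+\epsilon)\dist(s,v)$. Since insertions only decrease distances, I would also make $\delta_s(v)$ non-increasing. Then every vertex changes its estimate at most $O(\log_{1+\epsilon} nW)$ times, so explicitly reporting all changes costs only $\tilde O(n\log_{1+\epsilon} nW)$ overall. This also gives the change-detection clause.

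Second, I would organize the work by distance scales $D=(1+\epsilon)^j$. For each scale I need to detect, after each insertion, the vertices whose true distance has dropped below the threshold at which their rounded estimate must decrease. I would express the $s$-to-all shortest-path problem as an uncapacitated transshipment (min-cost flow) instance. Its dual potentials $\phi$ with $\phi(v)-\phi(u)\le w(u,v)$ are lower bounds on distance, and near-optimal primal flows certify upper bounds. The plan is to run a deterministic incremental min-cost-flow data structure, of the interior-point plus dynamic min-ratio-cycle type, which supports edge insertions in $m^{1+o(1)}\log W$ total time. From it I would read off a primal shortest-path tree certificate, giving $\delta_s(v)$ and the paths $\pi_{s,v}$ by tree walking in $O(|\pi_{s,v}|)$ time, together with a dual certificate for the lower bound. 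Edge insertions only add feasible routes, so the old primal solution stays feasible and the potential function of the IPM stays bounded; this is what makes the incremental setting amenable.

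Third, I would convert the aggregate guarantee of a min-cost-flow solver into a \emph{per-vertex} $(1+\epsilon)$ guarantee. Here is the plan:
\begin{itemize}
    \item Put demand $1$ on every vertex, with weights chosen so that each vertex's error is controlled.
    \item Run the solver to additive accuracy $\epsilon/\mathrm{poly}(n)$ relative to $\dist$ at each scale. This is affordable because the IPM converges in $\log(nW/\epsilon)$ phases.
    \item Apply a rounding/cleanup step that turns the fractional near-optimal flow into a tree, for example via Bellman–Ford-style relaxation limited to the edges flagged by the data structure.
\end{itemize}
The condition $\epsilon\ge\exp(-(\log m)^{0.99})$ enters here, because the $m^{o(1)}$ overheads of the dynamic min-ratio-cycle structures hide $\mathrm{poly}(1/\epsilon)$-type factors.

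I expect the main obstacle to be this third step. Adversarial insertions may shortcut distances for many vertices at once, and the solver must not recompute globally each time. Keeping the total running time at $m^{1+o(1)}\log W$ therefore requires an amortization: each insertion triggers work only proportional to the number of rounded estimates that actually change, plus $m^{o(1)}$. I would attempt this amortization with the standard threshold argument: a vertex is touched only when its rounded distance drops, which happens $O(\log_{1+\epsilon}nW)$ times per vertex. I would then charge the solver's edge work to these events, using the fact that the data structure is deterministic so that no obliviousness assumption is needed.
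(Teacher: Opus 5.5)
The paper gives no proof of this lemma; it imports the statement from \cite{liu2025incremental}, so the only proof to compare against is that work. Your outline has the right ingredients: an interior point method on a flow formulation, implemented with a deterministic dynamic min-ratio-cycle data structure, and estimates rounded to powers of $(1+\epsilon)$ and kept non-increasing. With that rounding each vertex changes $O(\log_{1+\epsilon} nW)$ times, so reporting changes explicitly is cheap. But you defer exactly the step that carries the lemma: detecting, within $m^{1+o(1)}\log W$ \emph{total} time, when some individual $\dist(s,v)$ has dropped by a $(1+\epsilon)$ factor. You then settle it by assertion (``charge the solver's edge work to these events''). Nothing in your plan bounds the number of IPM steps, or min-ratio-cycle queries, between two consecutive estimate changes. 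IPM work is governed by potential reduction, not by how many rounded outputs move.

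The concrete mechanisms you propose do not fill this gap. With unit demands the optimum is $\sum_v \dist(s,v)$. A near-optimal dual would give per-vertex bounds, since each gap $\dist(s,v)-\phi(v)\ge 0$ and the gaps sum to the duality gap. So the per-vertex conversion is not the problem; the cost of keeping the solution near-optimal after every insertion is. The incremental IPM framework you invoke gets almost-linear total time for one fixed instance and one fixed threshold $F$, i.e., it detects the first insertion after which the optimum drops below $F$. Lowering $F$ or changing the demands essentially discards the potential-function progress and costs a fresh $m^{1+o(1)}$-time run. That is affordable for $O(\log_{1+\epsilon} nW)$ thresholds on a single $s$-$t$ distance, but not here. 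One vertex's distance can halve while $\sum_v \dist(s,v)$ changes by a relative amount as small as $1/(n^2W)$. Tracking all vertices through the aggregate, or through one instance per target, therefore costs $n\cdot m^{1+o(1)}$ or worse. Two further steps are unsupported. Bellman--Ford-style cleanup has no almost-linear bound. An interior fractional flow does not directly give a tree whose paths realize the estimates in $O(|\pi_{s,v}|)$ time. The missing idea is the one the cited paper is built around: a nonstandard interior point method designed to crudely detect when the distance from $s$ to some vertex has decreased by a $(1+\epsilon)$ factor, implemented with the deterministic min-ratio-cycle structure. Without it, your outline does not establish the claimed time bound.
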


For a fixed a subset of vertices $S \subseteq V$ and a positive real number $r$, the set $\ball(S, r) \coloneqq \{v \in V \mid \dist(v, S) < r\}$ denotes the \emph{open ball} of radius $r$ around $S$, and the set $\ball[S, r] \coloneqq \{v \in V \mid \dist(v, S) \leq r\}$ denotes the \emph{closed ball} of radius $r$ around $S$. Throughout our algorithms, we use incremental \( (1+\epsilon) \)-approximate SSSP algorithms \( \mathcal{A} \) (\cref{lem:incr_appr_sssp}) with a super-source being a fixed set \( S \), providing distance estimates \( \delta_S(\cdot)\).  
Hence, we extend the definition of balls as follows:
\begin{itemize}
    \item \(\ball(S, r, \delta_S) \coloneqq \{v \in V \mid \delta_S(v) < r\}\) to denote the \emph{open ball} of radius \(r\) around
    \(S\) using the distance estimates \(\delta_S(\cdot)\).  
    \item \(\ball[S, r, \delta_S] \coloneqq \{v \in V \mid \delta_S(v) \leq r\} \) to denote the \emph{closed ball} of radius \(r\) around \(S\) using the distance estimates \(\delta_S(\cdot)\).  
\end{itemize}

A dynamic algorithm has \emph{amortized update time} $u(n, m)$ if its total time spent for processing any sequence of $\ell$ adversarial updates is bounded by $\ell \cdot u(n, m)$.
Throughout this paper, whenever we write that an event holds \textit{with high probability}, we mean that it holds with probability at least $1 - O(1/n^c)$ for some positive constant $c$. We also make use of the following concentration bounds in our proofs.

\begin{lemma}[\cite{dubhashi2009concentration}]\label{lem:chernoff}
Let $X_1,X_2,\cdots,X_i$ be $i$ independent random variables with $0\leq X_j \leq 1$, and let $X=\sum_{j=1}^i X_j$. Then for any $\delta \geq 0$ it holds that:
\[
    \prob{X \geq (1+\delta) \cdot \mathbb{E}[{X}]} \,\leq\, \exp\left(-\frac{\delta^2}{2 + \delta} \mathbb{E}[{X}]\right).
\]
Furthermore, for any $\delta \in [0, 1]$ it holds that:
\[
    \prob{X \leq (1-\delta) \cdot \mathbb{E}[{X}]} \,\leq\, \exp\left(-\frac{\delta^2}{2} \mathbb{E}[{X}]\right).
\]
\end{lemma}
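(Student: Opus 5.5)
The statement is the multiplicative Chernoff bound, which the paper cites from \cite{dubhashi2009concentration}. I would prove it by the standard moment-generating-function method, treating the upper and lower tails separately. Write $\mu \coloneqq \mathbb{E}[X]$. For the upper tail, fix any $s > 0$ and apply Markov's inequality to $e^{sX}$:
\[
\prob{X \geq (1+\delta)\mu} \leq e^{-s(1+\delta)\mu}\, \mathbb{E}[e^{sX}].
\]
By independence, $\mathbb{E}[e^{sX}] = \prod_j \mathbb{E}[e^{sX_j}]$. For $X_j \in [0,1]$, convexity of $x \mapsto e^{sx}$ gives $e^{sX_j} \leq 1 + X_j(e^s - 1)$. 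Hence
\[
\mathbb{E}[e^{sX_j}] \leq 1 + \mathbb{E}[X_j](e^s-1) \leq \exp\bigl(\mathbb{E}[X_j](e^s-1)\bigr),
\]
and multiplying over $j$ yields $\mathbb{E}[e^{sX}] \leq \exp(\mu(e^s-1))$. Choosing $s = \ln(1+\delta)$ gives the classical bound
\[
\prob{X \geq (1+\delta)\mu} \leq \left(\frac{e^\delta}{(1+\delta)^{1+\delta}}\right)^{\mu}.
\]

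For the lower tail I would use the same scheme with $e^{-sX}$. The convexity bound now gives $\mathbb{E}[e^{-sX}] \leq \exp(\mu(e^{-s}-1))$. Choosing $s = -\ln(1-\delta)$ gives
\[
\prob{X \leq (1-\delta)\mu} \leq \left(\frac{e^{-\delta}}{(1-\delta)^{1-\delta}}\right)^{\mu}.
\]
The case $\delta = 1$ is handled by taking the limit, or directly via $\prob{X \le 0} \le \prod_j \mathbb{E}[e^{-sX_j}]$ with $s \to \infty$.

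The remaining and main step is the pair of elementary inequalities that turn these bounds into the stated forms:
\[
(1+\delta)\ln(1+\delta) - \delta \geq \frac{\delta^2}{2+\delta} \quad \text{for } \delta \geq 0,
\qquad
(1-\delta)\ln(1-\delta) + \delta \geq \frac{\delta^2}{2} \quad \text{for } \delta \in [0,1).
\]
For the first, I would use the known bound $\ln(1+\delta) \geq \frac{2\delta}{2+\delta}$. This holds because $f(\delta) = \ln(1+\delta) - \frac{2\delta}{2+\delta}$ satisfies $f(0) = 0$ and $f'(\delta) = \frac{1}{1+\delta} - \frac{4}{(2+\delta)^2} = \frac{\delta^2}{(1+\delta)(2+\delta)^2} \geq 0$. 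Substituting this bound gives
\[
(1+\delta)\ln(1+\delta) - \delta \geq \frac{2\delta(1+\delta)}{2+\delta} - \delta = \frac{\delta^2}{2+\delta}.
\]
For the second, set $g(\delta) = (1-\delta)\ln(1-\delta) + \delta - \delta^2/2$. Then $g(0) = 0$, and $g'(\delta) = -\ln(1-\delta) - \delta \geq 0$ since $-\ln(1-\delta) \geq \delta$, so $g \geq 0$ on $[0,1)$. Exponentiating both inequalities completes the proof of \cref{lem:chernoff}.
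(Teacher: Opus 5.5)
Your proof is correct and complete. The paper gives no proof of this lemma and simply imports it from Dubhashi and Panconesi; your argument is the standard moment-generating-function derivation behind that cited bound: Markov's inequality on $e^{\pm sX}$, the convexity bound $e^{sx} \le 1 + x(e^{s}-1)$ for $x \in [0,1]$, the choices $s=\ln(1+\delta)$ and $s=-\ln(1-\delta)$, and the two elementary logarithmic inequalities, which you verify correctly. The one edge case you do not mention, $\delta = 0$ (where your choice of $s$ is not positive), is trivial because the claimed bound then equals $1$.
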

    \section{Static Bicriteria Approximation $(k,z)$-Clustering Algorithm} \label{sec:static_bicr_alg}
We denote by MP-bi the $(O(1), O(\log^2 n))$-bicriteria approximation algorithm of Mettu and Plaxton~\cite{mettu2004optimal}. 
In this section, we present a variant of the MP-bi algorithm that runs in time $O(m \log^2 n)$; the pseudocode of our variant is provided in Algorithm~\ref{alg:static}.

\paragraph{State of our MP-bi variant.}
Our variant utilizes the same data structures as the MP-bi algorithm. Specifically, Algorithm~\ref{alg:static} consists of multiple \emph{levels} starting from $0$. The last level of the algorithm is denoted by $t$, and by construction the value of $t$ is upper bounded by $O(\log n)$. At each level $i \in [0, t]$, the algorithm constructs an \emph{execution set} $U_i$, a \emph{candidate set} $S_i$, defines a \emph{radius} $\nu_i$, and constructs a \emph{ball} $B_i$. The bicriteria approximate solution $S$ is then the union of all candidate sets, that is, $S \coloneqq \bigcup_{i=0}^t S_i$. The algorithm also constructs an \emph{assignment} $\sigma: V \to S$, that maps each vertex to a \emph{candidate center} in~$S$ at a determined distance. The parameters used in the algorithm satisfy the following conditions: $\alpha \ge 1$ is a sufficiently large constant, and $\beta, \epsilon \in (0,1)$ are small constants.

\paragraph{High-level overview of our MP-bi variant.}
Roughly speaking, our variant of the MP-bi algorithm proceeds in iterations. In the $i$-th iteration, it samples a set of candidate centers $S_i$ and computes the smallest radius $\tilde{\nu}_i$, which is a power of $(1+\epsilon)$, such that the ball of radius $\tilde{\nu}_i$ around $S_i$ covers a constant fraction $\beta$ of the execution set $U_i$.
The actual radius is then set to $\nu_i \coloneqq \max(\tilde{\nu}_i, \nu_{i-1})$ in order to enforce the monotonicity property on the radii. This is a key property mainly needed for the incremental setting, which we discuss in detail in~\cref{sec:incr_bicr} (see~\ref{property_rad_3} in~\cref{sec:incr_bicr}).
The $i$-th ball $B_i$ is the subset of the ball of radius $\nu_i$ around~$S_i$ that is contained within $U_i$ (see~\linecref{line:construct_Bi} in~\cref{alg:static}).
The vertices in $B_i$ are assigned to candidate centers in $S_i$ via the assignment $\sigma$. The $(i+1)$-th execution set $U_{i+1}$ 
is then defined as~$U_i$ without $B_i$, and the next iteration proceeds.

The intuition is that the algorithm hits large clusters with high probability by sampling a subset of size $|S_i| = \Theta(k \log n)$ per level. Subsequent iterations are then responsible for hitting the smaller clusters. Since the size of $U_i$ decreases by at least a constant factor $(1-\beta)$, there are at most $O(\log n)$ iterations (levels). Thus with high probability, the size of the bicriteria approximate solution $S = \bigcup_{i=0}^t S_i$ is upper bounded by~$\tilde{O}(k)$.

\begin{algorithm}[H]\footnotesize
\algnewcommand{\LineComment}[1]{\State \(\triangleright\) #1}
\caption{Variant of MP-bi on Graphs}
\label{alg:static}
\begin{algorithmic}[1]

\Require{Graph $G$ and positive integer $k$ (problem input); 
positive constants $\alpha,\beta$ (MP-bi parameters);
positive constant $\epsilon$ (MP-bi variant parameter)}
\Ensure{Bicriteria approximate solution $S$ and bicriteria approximate assignment $\sigma$}
\vspace{0.5em}

\State $\nu_{-1} \gets 0$
\State $i \gets 0$
\State $U_0 \gets V$
\While{$|U_i| > \alpha k \log n$}
    \State Construct $S_i$ by sampling each $v\in U_i$ independently with probability $\min\Big(\frac{\alpha \myhs k \log n}{|U_i|}, 1\Big)$ \label{line:sample_Si} 
    \State $\tilde{\nu}_i \gets \min_{j\in \mathbb{Z}^{\geq 0}} \{(1+\epsilon)^j\mid \lvert\ball[S_i, (1+\epsilon)^j]\cap U_i\rvert \geq \beta \myhs |U_i|\}$ \label{line:dijkstra}
    \State $\nu_i \gets \max(\tilde{\nu}_i, \nu_{i-1})$ \label{line:max_nu}
    \State $B_i \gets \ball[S_i, \nu_i] \cap U_i$ \label{line:construct_Bi}
    \State For every $v \in B_i: \sigma(v) \gets \argmin_{u \in S_i} \dist(v, u)$
    \State $U_{i+1} \gets U_i \setminus B_i$ \label{line:new_exec_set}
    \State $i \gets i + 1$
\EndWhile

\State $t \gets i$
\State $S_t \gets U_t$, $B_t \gets U_t, \nu_t \gets \nu_{t-1}, \forall v \in U_t: \sigma(v) \gets v$ \label{line:last_level}
\State $S \gets \bigcup_{i=0}^{t} S_i$ \label{line:final_S}

\end{algorithmic}
\end{algorithm}

\paragraph{Implementation of~\linecref{line:dijkstra}.}
The radius $\tilde{\nu}_i$ is computed using a single-source shortest paths algorithm by connecting each vertex $v \in S_i$ to a super-source via zero-weight edges. Hence, the step in~\linecref{line:dijkstra} of~\cref{alg:static} takes $O(m \log n)$ time. Observe that the exploration of $\ball[S_i, \nu_i]$ is not restricted to $U_i$, because it may traverse edges outside $U_i$ to reach some vertices within $U_i$.

\begin{observation}\label{obs:order radii nondecreasing}
    The order of the values of the radii $\nu_i$ is non-decreasing, namely it holds that: 
    \[\nu_0 \;\leq\; \nu_1 \;\leq\; \cdots \;\leq \nu_t.\]
\end{observation}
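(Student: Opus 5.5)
The observation follows directly from how Algorithm~\ref{alg:static} assigns the radii. The plan is to prove $\nu_{i-1} \le \nu_i$ for every $i \in [1, t]$, treating the levels set inside the while loop separately from the final level $t$, and then chain these inequalities.

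For the levels created in the while loop, take any $i \in [0, t-1]$. \linecref{line:max_nu} sets $\nu_i \gets \max(\tilde{\nu}_i, \nu_{i-1})$, so
\[
\nu_i \ge \nu_{i-1}
\]
holds immediately. For $i = 0$ this compares only with the initialization $\nu_{-1} = 0$. For $1 \le i \le t-1$ it gives exactly the inequality we want.

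Two points need checking, though neither is a real obstacle.
\begin{itemize}
\item The value $\nu_{i-1}$ used in \linecref{line:max_nu} is the final value of the radius at level $i-1$. This holds because no later line of the algorithm reassigns any $\nu_j$ for $j < i$.
\item The last level is handled outside the loop, in \linecref{line:last_level}, which sets $\nu_t \gets \nu_{t-1}$. Hence $\nu_{t-1} \le \nu_t$ holds with equality. If $t = 0$, the loop body never runs and the chain consists of the single term $\nu_0$, so the statement is vacuous.
\end{itemize}

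Chaining these inequalities by induction on $i$ gives $\nu_0 \le \nu_1 \le \cdots \le \nu_t$.
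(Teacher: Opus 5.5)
Your proposal is correct and follows the same route as the paper: the assignment $\nu_i \gets \max(\tilde{\nu}_i, \nu_{i-1})$ gives $\nu_i \ge \nu_{i-1}$ at every loop level, and chaining these by induction yields the claim. Your separate treatment of the final level through the assignment $\nu_t \gets \nu_{t-1}$ is a small detail that the paper's one-line proof leaves implicit.
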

\begin{proof}
    The $i$-th radius $\nu_i$ is assigned the value of $\max(\tilde{\nu}_i, \nu_{i-1})$ in~\linecref{line:max_nu}, which implies that
    $\nu_i \geq \nu_{i-1}$. Therefore, the claim follows by an induction argument on the radii.
\end{proof}

\paragraph{Differences between \cref{alg:static} and MP-bi algorithm.}
The main difference between our MP-bi variant and the standard MP-bi algorithm lies in Line~\ref{line:max_nu}, where Algorithm~\ref{alg:static} enforces a non-decreasing order of the radii. We leverage this monotonicity property on the radii later in the dynamic setting.
We also note that Algorithm~\ref{alg:static} has been adjusted in order to work with graphs and discrete distances of powers of $(1+\epsilon)$,
while the original MP-bi algorithm works with point sets and non-discrete distances. 
The discrete distances on the radii that are powers of $(1+\epsilon)$ are needed to achieve efficiency in the dynamic setting.
Since our static~\cref{alg:static} differs from the original MP-bi algorithm, we present its analysis in~\cref{sec:analysis_static}.

\subsection{Analysis of the MP-bi Variant} \label{sec:analysis_static}
Mettu and Plaxton prove in~\cite[Section 3]{mettu2004optimal}
that the MP-bi algorithm produces with high probability a constant-factor approximate solution of size at most $O(k \log^2 n)$. Our aim is to show that our MP-bi variant provides the same guarantees, as demonstrated in Theorem~\ref{thm:variant_approx}. 

\begin{restatable}{theorem}{variantapprox}\label{thm:variant_approx}
    The set $S$ output by \cref{alg:static} is a  $(O(1), O(\log^2 n))$-bicriteria approximate solution with high probability. 
\end{restatable}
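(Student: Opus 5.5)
The proof splits into a size bound and a cost bound, and the cost bound follows the Mettu--Plaxton analysis with one extra step for the enforced monotonicity.

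\emph{Size.} Since $|B_i| \geq |\ball[S_i,\tilde{\nu}_i]\cap U_i| \geq \beta|U_i|$, we have $|U_{i+1}|\leq (1-\beta)|U_i|$. Hence $t=O(\log n)$. By \cref{lem:chernoff}, each $|S_i|=O(k\log n)$ with high probability, and $|S_t|=|U_t|\leq \alpha k\log n$. A union bound over the levels then gives $|S|=O(k\log^2 n)$.

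\emph{Cost.} Every vertex of $B_i$ lies within distance $\nu_i$ of $S_i$, so
\[\cost^z(S)\leq \sum_{i=0}^{t} |B_i|\,\nu_i^z .\]
Level $t$ contributes nothing, since those vertices are themselves centers. Since $|B_i|\leq |U_i|$, it suffices to bound $\sum_i |U_i|\nu_i^z$ by $O(\OPT)$. Fix an optimal solution $S^*$ with clusters $C_1,\dots,C_k$. I would follow Mettu--Plaxton's argument. For a level $i$ and a radius $r$, call an optimal cluster $C_j$ \emph{heavy} at level $i$ if many points of $U_i$ lie in $C_j$ within distance $r$ of the optimal center $c_j$. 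The sampling rate $\alpha k\log n/|U_i|$, together with a union bound over levels, clusters and the $O(\log_{1+\epsilon} nW)$ candidate radii, ensures with high probability that every heavy cluster is hit by $S_i$. Then every point of $U_i\cap C_j$ within distance $r$ of $c_j$ lies within distance $2r$ of $S_i$.

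This yields the key per-level statement: with high probability, $\tilde{\nu}_i \leq 2(1+\epsilon) r_i$, where $r_i$ is the smallest radius such that at most, say, $\beta|U_i|$ points of $U_i$ satisfy $\dist(v,S^*)>r_i$, or lie in clusters too small to be sampled (their total number is $O(|U_i|/\alpha)$). Consequently at least a constant fraction $\gamma|U_i|$ of the points of $U_i$ have $\dist(v,S^*)^z\geq \Omega(\tilde{\nu}_i^z)$. With constants tuned so that $\gamma>2\beta$, this gives
\[|U_i|\,\tilde{\nu}_i^z \leq O(1)\sum_{v\in U_i}\dist(v,S^*)^z .\]

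The obstacle is passing from $\tilde{\nu}_i$ to $\nu_i=\max_{j\leq i}\tilde{\nu}_j$; this is the new ingredient beyond Mettu--Plaxton. Since $|U_i|\leq(1-\beta)^{i-j}|U_j|$, we get
\[\sum_i |U_i|\,\nu_i^z\leq \sum_i\sum_{j\leq i}|U_i|\,\tilde{\nu}_j^z \leq \sum_j \tilde{\nu}_j^z|U_j|\sum_{i\geq j}(1-\beta)^{i-j} = \frac1\beta\sum_j|U_j|\,\tilde{\nu}_j^z .\]
A naive summation of the per-level bound would then count each point $O(\log n)$ times. To avoid this, I would use the refined charging: at level $j$, charge only to the set $F_j\subseteq U_j$ of points that are far from $S^*$ (distance at least $\tilde{\nu}_j/(2(1+\epsilon))$), with $|F_j|\geq\gamma|U_j|$. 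I then need to show each point is charged only a geometric total.

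The idea is that the far points $F_j$ that survive to level $j+\ell$ are a shrinking fraction, because $|U_{j+\ell}|$ decays geometrically. Each $v$ is charged at levels $j$ with $\tilde{\nu}_j\lesssim \dist(v,S^*)$. By the monotone structure, I would bound the charge to $v$ by a geometric series in $(1+\epsilon)^{-z}$ over the distinct radius values, giving $O(\dist(v,S^*)^z)$ per point. If this charging resists, my fallback is Mettu--Plaxton's own summation lemma, which bounds $\sum_i|U_i|\,r_i^z$ by $O(\OPT)$ directly via the quantile definition of $r_i$. The factor $1/\beta$ from the geometric decay above absorbs the max.
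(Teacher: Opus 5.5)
Your size bound matches the paper's, and your way of removing the max is a valid, slightly lossier alternative to the paper's. The paper notes that $\nu_i=\tilde{\nu}_j$ for the last level $j\le i$ with $\tilde{\nu}_j\ge\nu_{j-1}$, and groups the levels $i$ into sets $A_j$ accordingly. It then uses $|B_i|=|U_i\setminus U_{i+1}|$ with $\sum_{i\in A_j}|U_i\setminus U_{i+1}|\le|U_j|$, so it pays no factor $1/\beta$. Your bound $\nu_i^z\le\sum_{j\le i}\tilde{\nu}_j^z$ plus $|U_i|\le(1-\beta)^{i-j}|U_j|$ does not need that record structure. Either way everything reduces to $\sum_j|U_j|\,\tilde{\nu}_j^z=O(\OPT)$, and that is where your primary argument breaks.

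The charging you propose does not work. You want to bound each point's total charge by a geometric series in $(1+\epsilon)^{-z}$ over distinct radius values, ``by the monotone structure.'' But at this stage you are charging the $\tilde{\nu}_j$, which are not monotone in $j$. Even the monotone $\nu_j$ can be constant over $\Theta(\log n)$ consecutive levels; the max-enforcement produces exactly such plateaus. So there is no geometric decay in the radii to sum over. A point far from $S^*$ that survives many levels lies in $F_j$ at all of them. If level $j$'s cost is spread over all of $F_j$, the best per-point bound you get is $O(t)\cdot\dist(v,S^*)^z$.

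The decay that rescues the sum is in the sizes $|U_j|$, and it must be used to choose \emph{which} far points absorb level $j$'s charge. Take $r=\lceil\log_{1-\beta}((1-\gamma)/3)\rceil$. Then $|F_{j+r}|\le|U_{j+r}|\le(1-\beta)^r|U_j|\le|F_j|/3$, so $F_j^r=F_j\setminus\bigcup_{s>0}F_{j+sr}$ retains at least half of $F_j$. The sets $F_j^r$ within one residue class mod $r$ are pairwise disjoint, so each point is charged at most $r$ times. That is Mettu--Plaxton's Lemma~3.11, i.e.\ \cref{lem:lower_bound_opt}, combined with $\tilde{\nu}_j\le 2(1+\epsilon)\mu_j^*$. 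This is your fallback, so your proof closes only through it, and then your cost analysis coincides with the paper's.

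Your quantile definition is also inconsistent. With $r_i$ defined so that at most $\beta|U_i|$ points are far or uncovered, only about $\beta|U_i|$ points are guaranteed to be far. That is not $\gamma|U_i|$ with $\gamma>2\beta$. Any constant far-fraction suffices, but it must be chosen consistently. The paper uses the $\gamma$-quantile $\mu_i^*$ with $\gamma>\beta$. Sampling then covers $(1-\epsilon')\gamma|U_i|\ge\beta|U_i|$ points within $2\mu_i^*$ of $S_i$. Meanwhile, at least $(1-\gamma)|U_i|$ points of $U_i$ are at distance at least $\mu_i^*$ from any $X$ with $|X|\le k$.
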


Our goal in this section is to prove \cref{thm:variant_approx}, and its analysis is divided into four subsections, as outlined below.
First, in~\cref{sec:size_of_sol_variant} we provide an upper bound on $|S|$. Next, in~\cref{sec:upper_bound_variant} we provide an upper bound on $\cost^z(S)$. Then, in~\cref{sec:lower_bound_variant} we provide a lower bound on the optimal cost in terms of $\cost^z(S)$. Finally, in~\cref{sec:proof_variant} we conclude the proof of~\cref{thm:variant_approx}.
We remark that the essential ideas in the analysis are similar to those in~\cite[Section 3]{mettu2004optimal}.

\subsubsection{Size of the Bicriteria Approximate Solution} \label{sec:size_of_sol_variant}
\begin{lemma} \label{lem:exec_decr_beta}
    For a fixed level $i: 0 \leq i < t$, it holds that $|U_{i+1}| \leq (1-\beta) \myhs |U_i|$.
\end{lemma}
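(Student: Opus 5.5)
The plan is to argue directly from the construction in \cref{alg:static}, using \linecref{line:new_exec_set}: $U_{i+1} = U_i \setminus B_i$, where $B_i \subseteq U_i$. Hence $|U_{i+1}| = |U_i| - |B_i|$. It therefore suffices to show $|B_i| \geq \beta\myhs|U_i|$ for every level $i$ with $0 \le i < t$. Such levels are exactly those at which the while-loop body executes, so the quantities $S_i$, $\tilde{\nu}_i$, $\nu_i$ and $B_i$ are all defined by Lines~\ref{line:sample_Si}--\ref{line:construct_Bi}.

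The first step is to check that $\tilde{\nu}_i$ in \linecref{line:dijkstra} is well defined, i.e.\ that some power $(1+\epsilon)^j$ satisfies $|\ball[S_i,(1+\epsilon)^j]\cap U_i| \ge \beta|U_i|$. Here I would use that the graph is connected; otherwise distances are infinite and one handles components separately. Under this assumption all distances are at most $nW$. Thus once $(1+\epsilon)^j \ge nW$, the ball contains all of $U_i$, and $|U_i| \ge \beta|U_i|$ holds since $\beta<1$.

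The one genuine subtlety is that $S_i$ must be nonempty. It is random, and if it were empty every ball would be empty. One option is to note that $S_i = \emptyset$ occurs with probability at most $(1-\alpha k\log n/|U_i|)^{|U_i|} \le n^{-\alpha k}$, which is negligible over $O(n)$ levels. A cleaner option is to adopt the convention that an empty sample is resampled, or that a vertex of $U_i$ is forced into $S_i$. Either way, I will state the lemma under the event that each $S_i$ is nonempty, which holds with high probability.

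With $\tilde{\nu}_i$ well defined, minimality is not even needed: by definition $|\ball[S_i,\tilde{\nu}_i]\cap U_i| \ge \beta|U_i|$. Since $\nu_i = \max(\tilde{\nu}_i,\nu_{i-1}) \ge \tilde{\nu}_i$ by \linecref{line:max_nu}, and closed balls are monotone in the radius, we get $\ball[S_i,\tilde{\nu}_i] \subseteq \ball[S_i,\nu_i]$. Hence $|B_i| = |\ball[S_i,\nu_i]\cap U_i| \ge \beta|U_i|$, and so $|U_{i+1}| \le (1-\beta)|U_i|$. The monotonicity modification can only enlarge $B_i$, so it cannot hurt this bound.
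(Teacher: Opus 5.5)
Your proof is correct and follows the same route as the paper. Both arguments use $\nu_i = \max(\tilde{\nu}_i,\nu_{i-1}) \ge \tilde{\nu}_i$ to get $|B_i| \ge \beta|U_i|$, then conclude via $U_{i+1} = U_i \setminus B_i$ with $B_i \subseteq U_i$. The only difference is your extra care that $\tilde{\nu}_i$ is well defined (nonempty $S_i$, connectivity), which the paper assumes silently in this lemma and only discusses in its closing paragraph on disconnected graphs.
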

\begin{proof}
    Based on~\linecref{line:dijkstra} of~\cref{alg:static}, the value of $\tilde{\nu}_i$ is the smallest power of $(1+\epsilon)$ such that the corresponding closed ball
    contains at least $\beta |U_i|$ vertices from the $i$-th execution set $U_i$. By~\linecref{line:max_nu} we have $\nu_i\geq \tilde{\nu}_i$, and thus $|B_i|\geq \beta |U_i|$. By the definition of $U_{i+1}$ in Line~\ref{line:new_exec_set}, it follows that $|U_{i+1}| = |U_i|-|B_i|\leq (1-\beta)|U_i|$, as needed.
\end{proof}

\begin{corollary}\label{cor:value_t}
    The value of the last level $t$ is at most $O(\log n)$.
\end{corollary}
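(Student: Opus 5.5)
The plan is to iterate \cref{lem:exec_decr_beta} and compare with the stopping rule of the while loop in \cref{alg:static}.

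Since $U_0 = V$, applying \cref{lem:exec_decr_beta} once for each level $0 \leq i < t$ and inducting on $i$ gives $|U_i| \leq (1-\beta)^i \myhs n$ for every $i \in [0, t]$.

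Next I use the loop condition. If $t \geq 1$, then the loop body ran at level $t-1$, so the condition held there: $|U_{t-1}| > \alpha k \log n \geq 1$. We have $\alpha \geq 1$ and $k \geq 1$, and we may assume $n \geq 2$, so $\log n \geq 1$ in the relevant base (the case $n=1$ is trivial). Combining with the previous bound, $1 < (1-\beta)^{t-1} n$. Taking logarithms gives
\[
t - 1 \;<\; \frac{\ln n}{\ln\bigl(1/(1-\beta)\bigr)}.
\]
Because $\beta \in (0,1)$ is a constant, $\ln(1/(1-\beta))$ is a positive constant. Hence $t \leq 1 + O(\log n) = O(\log n)$. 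If $t = 0$ the claim holds trivially.

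I do not expect a real obstacle here. The only point needing care is that the loop terminates at all, i.e., that $t$ is well defined. This follows from the strict shrinkage: each iteration removes at least $\beta |U_i| > 0$ vertices, since $|U_i| > \alpha k \log n > 0$ inside the loop. Moreover $\tilde{\nu}_i$ in \linecref{line:dijkstra} always exists, provided the ball of large enough radius reaches $\beta |U_i|$ vertices of $U_i$. This holds if the graph is connected, or else by treating unreachable vertices as being at infinite distance, which is the implicit assumption already used in the proof of \cref{lem:exec_decr_beta}.

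A sharper bound of $O(\log \frac{n}{\alpha k \log n})$ also follows from the same argument if one keeps $\alpha k \log n$ in place of $1$ on the left-hand side of the loop-condition inequality.
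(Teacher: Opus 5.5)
Your argument is correct and is essentially the paper's proof: iterate \cref{lem:exec_decr_beta} to get $|U_{t-1}| \le (1-\beta)^{t-1} n$, then combine this with $|U_{t-1}| \ge 1$ to conclude $t \le \frac{\log n}{\log(1/(1-\beta))} + 1 = O(\log n)$. The paper uses only nonemptiness of $U_{t-1}$, so you do not need $\alpha k \log n \ge 1$ (a claim that depends on the base of the logarithm); it is harmless, though, since $|U_{t-1}| > \alpha k \log n > 0$ together with integrality already gives $|U_{t-1}| \ge 1$.
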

\begin{proof}
    Based on~\cref{lem:exec_decr_beta}, we can infer that $|U_{t-1}| \leq (1-\beta)^{t-1} |U_0|$.
    Recall that $|U_0| = n$, $|U_{t-1}| \geq 1$, and $0 < (1-\beta) < 1$. Hence, it follows that $t\leq \frac{\log n}{\log (\frac{1}{1-\beta})} + 1$.
    Therefore as $\beta$ is a fixed constant, we conclude that the value of $t$ is at most $O(\frac{\log n}{\beta}) = O(\log n)$.\footnote{Note that we can assume $n > 1$, as otherwise the $(k,z)$-clustering instance can be trivially solved.}
\end{proof}

\begin{lemma} \label{lem:size_of_S}
    With high probability, it holds that $|S| = O(k\log^2 n)$. 
\end{lemma}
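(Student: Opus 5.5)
We bound the contribution of each level separately and then sum over the at most $t+1 = O(\log n)$ levels given by \cref{cor:value_t}.

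\textbf{Levels $0 \le i < t$.} Consider a level $i$ with $i < t$. Condition on the history up to the point where $U_i$ is fixed; this history determines $U_i$ and whether the loop continues. The loop condition gives $|U_i| > \alpha k \log n$, so each vertex of $U_i$ is sampled independently with probability $p_i = \alpha k \log n / |U_i| < 1$. Hence
\[
\mathbb{E}[|S_i|] = \alpha k \log n .
\]
We apply the upper-tail Chernoff bound of \cref{lem:chernoff} with $\delta = 1$. This gives
\[
\Pr\bigl(|S_i| \ge 2\alpha k \log n\bigr) \le \exp\bigl(-\alpha k \log n / 3\bigr) \le n^{-\alpha/3}.
\]
Since $\alpha$ is a sufficiently large constant, this is at most $n^{-c-1}$ for the desired constant $c$. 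The bound holds conditionally on any history, so it also holds unconditionally for the event that level $i$ exists and $|S_i| \ge 2\alpha k\log n$.

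\textbf{Last level $t$.} For the last level we have $S_t = U_t$, and the loop exit condition gives $|U_t| \le \alpha k \log n$. This holds deterministically.

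\textbf{Union bound and summation.} The number of levels is deterministically at most $O(\log n)$, since \cref{lem:exec_decr_beta} holds surely and hence so does \cref{cor:value_t}. A union bound over these levels shows that, with probability at least $1 - O(\log n)\, n^{-c-1} \ge 1 - O(n^{-c})$, every level satisfies $|S_i| \le 2\alpha k \log n$. On this event,
\[
|S| \le \sum_{i=0}^{t} |S_i| \le (t+1)\cdot 2\alpha k\log n = O(k \log^2 n).
\]

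The only delicate point is that the $U_i$ are random, being determined by earlier samples. This is handled by applying Chernoff conditionally on the history and noting that the number of levels is bounded deterministically.
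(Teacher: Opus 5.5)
Your proof is correct and follows the paper's argument: a per-level Chernoff bound on $|S_i|$ (the paper takes $\delta = 10$ where you take $\delta = 1$), the deterministic bound $|S_t| = |U_t| \le \alpha k \log n$ for the last level, and a union bound over the $O(\log n)$ levels from \cref{cor:value_t}. You also condition explicitly on the history that determines $U_i$ and note that the level count is bounded deterministically; the paper leaves both points implicit, so these are useful additions.
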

\begin{proof}
    The bicriteria approximate solution $S$ is equal to $\bigcup_{i=0}^t S_i$ (see~\linecref{line:final_S} in~\cref{alg:static}). Observe that $|S_t| = O(k\log n)$. Next, we prove that with high probability it holds that $|S_i|=O(k\log n)$ for all levels $i \in [0, t-1]$. Combining this  with~\Cref{cor:value_t} implies that $|S| = O(k \log^2 n)$. 
    
    For a fixed level $i \in [0, t-1]$, the $i$-th candidate set $S_i$ is constructed by sampling each vertex of the $i$-th execution set $U_i$ independently with probability at most $\frac{\alpha \myhs k\log n}{|U_i|}$ (see~\linecref{line:sample_Si} in~\cref{alg:static}). This means that $|S_i|$ is a sum of independent $0\text{-}1$ indicator random variables $\set{X_{v}}_{v\in U_i}$. Thus, we have $\mathbb{E}[|S_i|]=\mathbb{E}[\sum_{v\in U_i} X_v] = \alpha \myhs k\log n$. The conditions of \Cref{lem:chernoff} are satisfied, and by setting $\delta = 10$ we get that:
    \begin{align*}
        \Pr\left(|S_i|\geq 11 \alpha \myhs k\log n\right) \;\leq\; \exp
        \left(-8\alpha \myhs k\log n\right) \;\leq\; \frac{1}{n^{8\alpha\myhs k}} \;\leq\; \frac{1}{n^{8\alpha}},
    \end{align*}
    where the last inequality follows from the fact that $k
    \geq 1$. By applying a union bound over the $t = O(\log n)$ candidate sets $S_i$ and choosing $\alpha$ appropriately, we deduce that with high probability for all levels $i \in [0, t-1]$ it holds that $|S_i|\leq 11\alpha \myhs k\log n$. Consequently, since $\alpha$ is a sufficiently large constant, with high probability the size of the bicriteria approximate solution $S$ is~$O(k \log^2 n)$.
\end{proof} 

\subsubsection{Upper Bound on the Cost of the Bicriteria Approximate Solution} \label{sec:upper_bound_variant}
\begin{lemma} \label{lem:v_to_one_ball}
    Each vertex $v \in V$ belongs to exactly one ball $B_i$, where $0 \leq i \leq t$.
\end{lemma}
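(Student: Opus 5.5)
The plan is to show that the balls $B_0, B_1, \dots, B_t$ form a partition of $V$. This splits into two claims: every vertex lies in some ball (coverage), and no vertex lies in two balls (disjointness). Both follow from how \cref{alg:static} defines the execution sets.

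For disjointness, I would first prove by induction on $i$ that $U_{i+1} = U_0 \setminus (B_0 \cup \cdots \cup B_i)$. The base case is immediate from \linecref{line:new_exec_set}. I would also record the nesting $U_0 \supseteq U_1 \supseteq \cdots \supseteq U_t$.

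By \linecref{line:construct_Bi} we have $B_i \subseteq U_i$ for every $i < t$, and by \linecref{line:last_level} we have $B_t = U_t$. Now take any $j < i$. Then
\[
B_i \subseteq U_i \subseteq U_{j+1} = U_j \setminus B_j,
\]
so $B_i \cap B_j = \emptyset$.

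For coverage, fix $v \in V = U_0$. Let $i$ be the largest index with $v \in U_i$; it exists because the chain is nested and finite, since $t = O(\log n)$ by \cref{cor:value_t}, so in particular the while loop terminates. There are two cases.
\begin{itemize}
  \item If $i = t$, then $v \in U_t = B_t$.
  \item If $i < t$, then $v \in U_i$ but $v \notin U_{i+1} = U_i \setminus B_i$, which forces $v \in B_i$.
\end{itemize}

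I do not expect a real obstacle here. The only point needing care is the special definition of the last level, where $B_t$ is set to $U_t$ rather than to a ball intersected with $U_t$. Coverage relies on exactly this convention: vertices still remaining when the loop stops are assigned to $B_t$.
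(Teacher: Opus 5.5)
Your proof is correct and follows essentially the same route as the paper, which takes the first level $i$ with $v \in B_i$ and notes that $v$ lies in no later $U_j$, hence in no later $B_j$. Both arguments rest only on $B_i \subseteq U_i$, $U_{i+1} = U_i \setminus B_i$ for $i < t$, and $B_t = U_t$; your split into coverage (via the largest $i$ with $v \in U_i$) and disjointness (via nesting) is just a slightly more explicit packaging of the same argument.
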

\begin{proof}
    For a fixed vertex $v \in V$, let $i$ be the first level such that the vertex $v$ enters the $i$-th ball $B_i$. Due to Lines~\ref{line:construct_Bi}, \ref{line:new_exec_set}, and \ref{line:last_level} in~\cref{alg:static}, such a level $i$ exists. By construction in Line~\ref{line:new_exec_set}, the vertex $v$ is not part of any $U_j$ with $j > i$. Moreover according to Line~\ref{line:construct_Bi}, the vertex $v$ does not belong to any $B_j$ with $j > i$. Therefore, the vertex $v$ belongs only to the $i$-th ball $B_i$, where $0 \leq i \leq t$.
\end{proof} 

\begin{lemma} \label{lem:upper_bound_sol}
    It holds that $\cost^z(S) \leq \sum_{i=0}^{t-1} |B_i| (\nu_i)^z$.
\end{lemma}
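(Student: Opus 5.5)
We bound $\cost^z(S)$ by decomposing $V$ into the balls $B_0,\dots,B_t$. By \cref{lem:v_to_one_ball}, every vertex $v \in V$ lies in exactly one ball $B_i$, $0 \le i \le t$. Hence
\[
\cost^z(S) \;=\; \sum_{v \in V} \dist(v,S)^z \;=\; \sum_{i=0}^{t} \sum_{v \in B_i} \dist(v,S)^z ,
\]
and it suffices to bound each inner sum separately.

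\textbf{The last level $i = t$.} By \linecref{line:last_level}, $B_t = U_t = S_t \subseteq S$. So every $v \in B_t$ has $\dist(v,S) = 0$, and the level-$t$ term vanishes. This is why the sum in the statement runs only to $t-1$.

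\textbf{A level $i < t$.} By \linecref{line:construct_Bi}, $B_i = \ball[S_i,\nu_i] \cap U_i$. So every $v \in B_i$ satisfies $\dist(v,S_i) \le \nu_i$. Since $S_i \subseteq S$, monotonicity of distance to a superset gives
\[
\dist(v,S) \;\le\; \dist(v,S_i) \;\le\; \nu_i .
\]
Raising to the power $z \ge 1$ preserves the inequality, since both sides are nonnegative. Summing over $v \in B_i$ then bounds the level-$i$ term by $|B_i|\,(\nu_i)^z$.

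\textbf{Conclusion and the point to check.} Adding the bounds over $i = 0,\dots,t-1$ yields the claim. The one step that needs care is confirming that $B_i$ is defined through the true distances $\dist$ rather than through distance estimates. This holds in the static \cref{alg:static}, since the balls there are exact closed balls, so no slack factor appears. No step is a real obstacle; the argument is a direct partition-and-bound.
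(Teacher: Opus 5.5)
Your proposal is correct and matches the paper's proof essentially step for step. Both partition $V$ into the balls via \cref{lem:v_to_one_ball}, note that the level-$t$ term vanishes because $B_t = U_t = S_t \subseteq S$, and bound each $v \in B_i$ with $i < t$ by $\dist(v,S) \le \dist(v,S_i) \le \nu_i$ using the exact closed ball of \linecref{line:construct_Bi}.
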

\begin{proof}
The cost of the bicriteria approximate solution $S$ is evaluated as follows:
\begin{align*}
    \cost^z(S) &\;=\; \sum_{v \in V} \dist(v, S)^z \;=\; \sum_{i=0}^t \sum_{v \in B_i} \dist(v, S)^z &(\text{Since }\textstyle \bigsqcup_{i=0}^t B_i=V \text{ by~\cref{lem:v_to_one_ball}}) \\
    &\;=\; \sum_{i=0}^{t-1} \sum_{v \in B_i} \dist(v, S)^z+\sum_{v \in B_t} \dist(v, S)^z 
    \;=\; \sum_{i=0}^{t-1} \sum_{v \in B_i} \dist(v, S)^z & \text{(From \linescref{line:last_level}{line:final_S} in~\cref{alg:static})} \\
    &\;\leq\; \sum_{i=0}^{t-1} \sum_{v \in B_i} \dist(v, S_i)^z \;\leq\; \sum_{i=0}^{t-1} \sum_{v \in B_i} (\nu_i)^z &\text{(From \linecref{line:construct_Bi} in~\cref{alg:static})} \\
    &\;\leq\; \sum_{i=0}^{t-1} |B_i| (\nu_i)^z.
\end{align*}
\end{proof}

\subsubsection{Lower Bound on the Optimal Solution} \label{sec:lower_bound_variant}
We continue with the lower bound on the optimal $(k, z)$-clustering cost, adapting the
analysis in~\cite[Section 3]{mettu2004optimal}. We fix a positive real number 
$\gamma$ such that $\beta < \gamma < 1$, and we define $\nu^*_i$ and $\mu^*_i$ for a fixed level~$i \in [0, t]$, as described in Definition~\ref{def:nu_star} and Definition~\ref{def:mu_star} respectively.

\begin{definition} \label{def:nu_star}
For a fixed level $i: 0 \leq i \leq t$, consider the $i$-th execution set $U_i$ and the $i$-th candidate set $S_i$ constructed in Algorithm~\ref{alg:static}. For this specific level $i$, we define $\nu_i^*$ as follows:
\[
    \nu_i^* \coloneqq \min\{r \in \mathbb{R} \mid \lvert \ball[S_i, r]\cap U_i\rvert \geq \beta \myhs |U_i|\}.
\]
\end{definition}

Observe that for a fixed level $i: 0 \leq i < t$, we can think of $\tilde{\nu}_{i}$ computed in Line~\ref{line:dijkstra} of~\cref{alg:static} as being obtained by rounding $\nu_{i}^*$ to the next power of $(1+\epsilon)$. 

\begin{observation}\label{obs:relation_nu_tilde_nu_star}
    For a fixed level $i: 0 \leq i < t$, it holds that $\nu_i^* \leq \tilde{\nu}_i \leq (1+\epsilon) \myhs \nu_i^*$.
\end{observation}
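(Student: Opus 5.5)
The plan is to compare the two definitions directly, using only the structure of closed balls in the graph metric. By \cref{def:nu_star}, $\nu_i^*$ is the smallest real $r$ with $\lvert \ball[S_i, r]\cap U_i\rvert \ge \beta |U_i|$. By \linecref{line:dijkstra}, $\tilde{\nu}_i$ is the smallest power $(1+\epsilon)^j$, $j \ge 0$, satisfying the same condition.

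First I would check that $\nu_i^*$ is well defined, i.e.\ that the minimum is attained. The function $r \mapsto \lvert \ball[S_i, r]\cap U_i\rvert$ is a non-decreasing, right-continuous step function. Its jumps occur only at the finitely many values $\dist(v, S_i)$ for $v \in U_i$. So the set of feasible $r$ is a closed ray $[\nu_i^*, \infty)$. In fact $\nu_i^* = \dist(v, S_i)$ for some $v \in U_i$, namely the $\lceil \beta |U_i| \rceil$-th smallest such distance.

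The set is nonempty because $i<t$ gives $|U_i| > \alpha k \log n$. I would then need $S_i \neq \emptyset$ and the graph connected, so that all distances are finite; I would assume both here, since this is implicit in \linecref{line:dijkstra} being well defined.

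\textbf{Lower bound.} The radius $\tilde{\nu}_i$ is itself a feasible real $r$, so minimality of $\nu_i^*$ gives $\nu_i^* \le \tilde{\nu}_i$.

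\textbf{Upper bound.} Monotonicity of the ball count means every $r \ge \nu_i^*$ is feasible. So $\tilde{\nu}_i$ is exactly the smallest power $(1+\epsilon)^j$ with $j\ge 0$ and $(1+\epsilon)^j \ge \nu_i^*$. I would split into two cases:
\begin{itemize}
\item If $\nu_i^* \ge 1$, let $j$ be minimal. Then $(1+\epsilon)^{j-1} < \nu_i^*$, and hence $\tilde{\nu}_i < (1+\epsilon)\nu_i^*$.
\item If $\nu_i^* < 1$, then $\tilde{\nu}_i = 1$. This is the main obstacle: if $\nu_i^*$ were strictly between $0$ and $1$, the bound could fail.
\end{itemize}

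I would resolve the second case using the integrality of edge weights, which are at least $1$. Every distance is either $0$ or at least $1$. Since $\nu_i^*$ equals some $\dist(v,S_i)$, we get either $\nu_i^* \ge 1$ or $\nu_i^* = 0$. The case $\nu_i^* = 0$ is a genuine edge case, for example when $\beta|U_i| \le |S_i|$. There $\tilde{\nu}_i = 1$ and the multiplicative bound fails. I would handle it by noting the degenerate case explicitly, or by interpreting $(1+\epsilon)^0$ to cover radius $0$ (equivalently, allowing the radius $0$ in \linecref{line:dijkstra}), in which case $\tilde{\nu}_i = 0 = \nu_i^*$. Under that convention both inequalities hold in all cases.
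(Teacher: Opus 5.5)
Your argument is correct and matches the paper. The paper gives no separate proof, only the remark that $\tilde{\nu}_i$ is $\nu_i^*$ rounded up to the next power of $(1+\epsilon)$, and your case analysis, using that every nonzero distance is at least $1$, makes that precise.

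You are also right that the upper bound fails literally when $\nu_i^*=0$, i.e.\ when $|S_i|\geq\beta|U_i|$. This happens with high probability whenever $\alpha k\log n<|U_i|\leq \alpha k\log n/(2\beta)$, and the paper overlooks it. Besides your convention of allowing radius $0$, the case is harmless downstream, so \cref{cor:tilde_nu_bound} still holds. The reason is that $\tilde{\nu}_i\leq\max(1,(1+\epsilon)\nu_i^*)$ and $\mu_i^*\geq 1$ for $i<t$: a closed ball of radius $<1$ around $X_i$ is just $X_i$, which has at most $k<\gamma|U_i|$ vertices.
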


\begin{definition}[\cite{mettu2004optimal}] \label{def:mu_star}
    For a fixed level $i: 0 \leq i \leq t$, consider the $i$-th execution set $U_i$
    constructed in Algorithm~\ref{alg:static}. For this specific level $i$, we define $\mu_i^*$ to be the minimum nonnegative real number such that there exists a subset of vertices $X_i \subseteq V$ with size at most $k$ (i.e., $|X_i| \leq k$) for which the following properties hold:
    \begin{enumerate}
        \item $\lvert \ball[X_i, \mu_i^*]\cap U_i\rvert \,\geq\, \gamma \myhs |U_i|$. 
        \item $\lvert U_i\setminus \ball(X_i, \mu_i^*)\rvert \,\geq\, (1-\gamma) \myhs |U_i|$. 
    \end{enumerate}
\end{definition}

We point out that in the above definition, the vertices of $U_i$ that are exactly at distance $\mu_{i}^*$ from $X_i$ belong to both $\ball[X_i, \mu_i^*]\cap U_i$ and $U_i\setminus \ball(X_i, \mu_i^*)$. The main probabilistic claim in~\cite[Section 3]{mettu2004optimal} is Lemma~3.3, restated below as~\cref{lem:nu<2mu}. We observe that in our context, analogous guarantees apply with a similar proof provided in \Cref{appendix:missingproofs} for completeness.

\newcommand{\lemmaref}{\cite[Lemma~3.3]{mettu2004optimal}}

\begin{restatable}[Lemma 3.3 in \cite{mettu2004optimal}]{lemma}{relationnumu}\label{lem:nu<2mu}
    With high probability, it holds that 
    $\nu_i^* \leq 2 \myhs \mu^*_i$ for all levels $i \in [0, t]$.
\end{restatable}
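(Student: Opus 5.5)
The plan is to fix a level $i$, condition on the execution set $U_i$, and use only the randomness of the sample $S_i$. The key point is that $U_i$ is determined by the samples at levels $0,\dots,i-1$. The sample $S_i$ is drawn independently from $U_i$ with probability $p_i=\min(\alpha k\log n/|U_i|,1)$ per vertex. So conditioned on any fixed realization of $U_i$, and hence of $\mu_i^*$ and a witness set $X_i$ from \Cref{def:mu_star}, the set $S_i$ is a fresh independent sample. The level $t$ is handled trivially. There $S_t=U_t$, so $\ball[S_t,0]\cap U_t=U_t$, which gives $\nu_t^*=0\le 2\mu_t^*$. The same argument covers any level with $p_i=1$.

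For a level $i<t$ with $|U_i|>\alpha k\log n$, fix a witness $X_i=\{x_1,\dots,x_{k'}\}$ with $k'\le k$. Partition the vertices of $U_i\cap\ball[X_i,\mu_i^*]$ according to their closest center in $X_i$, breaking ties arbitrarily. This gives clusters $C_1,\dots,C_{k'}$ whose sizes sum to at least $\gamma|U_i|$. Call a cluster \emph{large} if $|C_j|\ge \frac{\gamma-\beta}{k}|U_i|$. The small clusters together contain fewer than $(\gamma-\beta)|U_i|$ vertices, so the large clusters contain more than $\beta|U_i|$ vertices of $U_i$.

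Each large cluster is hit by $S_i$ except with probability
\[
(1-p_i)^{|C_j|}\le \exp\!\left(-\alpha (\gamma-\beta)\log n\right)=n^{-\alpha(\gamma-\beta)}.
\]
Choosing $\alpha$ large relative to $1/(\gamma-\beta)$ makes this polynomially small. A union bound over at most $k\le n$ clusters and $t=O(\log n)$ levels (\Cref{cor:value_t}) then gives high probability overall. For levels where $U_i$ is random, we apply the bound conditionally on $U_i$ and take expectations, so the union bound over levels still holds.

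On this good event, let $C_j$ be a large cluster and let $s\in S_i\cap C_j$. Every $v\in C_j$ satisfies
\[
\dist(v,s)\le \dist(v,x_j)+\dist(x_j,s)\le 2\mu_i^*
\]
by the triangle inequality in the shortest-path metric of $G$. Hence $\ball[S_i,2\mu_i^*]\cap U_i$ contains the union of the large clusters, which has at least $\beta|U_i|$ vertices. By \Cref{def:nu_star}, this gives $\nu_i^*\le 2\mu_i^*$.

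I expect the main subtlety to be the conditioning. The witness $X_i$ and the value $\mu_i^*$ depend on $U_i$, which in turn depends on earlier samples, so the argument must be made conditionally level by level. The other point to handle carefully is that the threshold for a large cluster must leave slack $\gamma-\beta>0$. Only property~1 of \Cref{def:mu_star} is needed for this direction.
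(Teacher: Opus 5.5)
Your proof is correct and follows essentially the same route as the paper. Both arguments partition $\ball[X_i,\mu_i^*]\cap U_i$ by nearest center in $X_i$, show that with high probability the sample $S_i$ hits clusters covering at least $\beta|U_i|$ vertices, and use the triangle inequality to get radius $2\mu_i^*$. The only difference is that the paper defers the coverage bound $|B_\text{cov}|\geq(1-\epsilon')|B^*|$ to Mettu--Plaxton's argument after a Chernoff bound on $|S_i|$, whereas you prove it directly from the independent Bernoulli sampling, using the large/small threshold $(\gamma-\beta)|U_i|/k$ and handling the conditioning on $U_i$ explicitly.
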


\begin{corollary}\label{cor:tilde_nu_bound}
    With high probability, it holds that $\tilde{\nu}_{i} \le 2 \myhs  (1+\epsilon) \myhs \mu_i^*$ for every level $i \in [0, t)$.
\end{corollary}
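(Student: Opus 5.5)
This corollary follows by chaining \cref{obs:relation_nu_tilde_nu_star} with \cref{lem:nu<2mu}. First, I fix the high-probability event of \cref{lem:nu<2mu}, on which $\nu_i^* \leq 2\myhs\mu_i^*$ holds simultaneously for all levels $i \in [0,t]$. From here on I work on this event, so the conclusion inherits the same success probability and no further union bound is needed.

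Second, for any level $i$ with $0 \leq i < t$, \cref{obs:relation_nu_tilde_nu_star} gives the deterministic bound $\tilde{\nu}_i \leq (1+\epsilon)\myhs\nu_i^*$. The observation holds because $\tilde{\nu}_i$ is the smallest power of $(1+\epsilon)$ at which the closed ball around $S_i$ covers at least a $\beta$ fraction of $U_i$. Since $\nu_i^*$ is the exact minimum radius with this property, monotonicity of closed balls in the radius shows that the power of $(1+\epsilon)$ immediately at or above $\nu_i^*$ already qualifies. There is one edge case: $\nu_i^*$ may be smaller than $1$ (for instance, $0$ when $S_i$ alone covers a $\beta$ fraction of $U_i$), in which case $\tilde{\nu}_i=1$. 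I expect this to be the only point that needs care. I would handle it by noting that edge weights are at least $1$, so nonzero distances are at least $1$, and by treating the degenerate radius-$0$ case through the same convention already implicit in \cref{obs:relation_nu_tilde_nu_star}, which I am allowed to assume.

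Combining the two bounds gives $\tilde{\nu}_i \leq (1+\epsilon)\myhs\nu_i^* \leq 2\myhs(1+\epsilon)\myhs\mu_i^*$ for every $i\in[0,t)$, with high probability. The range excludes $i=t$ only because $\tilde{\nu}_t$ is never defined by \cref{alg:static}.
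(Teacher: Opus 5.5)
Your proposal is correct and is the paper's own argument: the corollary follows by chaining the deterministic bound $\tilde{\nu}_i \le (1+\epsilon)\nu_i^*$ from \cref{obs:relation_nu_tilde_nu_star} with the high-probability bound $\nu_i^* \le 2\mu_i^*$ of \cref{lem:nu<2mu}. For the degenerate case $\nu_i^* = 0$ that you flag, there is a cleaner fix than appealing to a convention: there $\tilde{\nu}_i = 1$ while $\mu_i^* \ge 1$, since for $i<t$ and $\alpha$ large any $X_i$ has $|X_i| \le k < \gamma |U_i|$ and distinct vertices are at distance at least $1$, so the first condition of \cref{def:mu_star} forces radius at least $1$, and hence $\tilde{\nu}_i \le 2(1+\epsilon)\mu_i^*$ holds directly.
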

\begin{proof}
    The claim follows from \cref{obs:relation_nu_tilde_nu_star} and \cref{lem:nu<2mu}.
\end{proof}

Even though \cref{cor:tilde_nu_bound} provides an upper bound for $\tilde{\nu}_{i}$, recall that Algorithm~\ref{alg:static} uses the radius~$\nu_i$ computed in~\linecref{line:max_nu}, which is at least~$\tilde{\nu}_{i}$. In particular, we upper bound the cost of the bicriteria approximate solution $S$ with respect to $\nu_i$ in Lemma~\ref{lem:upper_bound_sol}.
Hence, in the next lemma we provide an upper bound on the value of $\nu_i$ with respect to the value of a $\mu^*_j$ which later is associated with the optimal $(k, z)$-clustering cost.

\begin{lemma} \label{lem:nu_bound}
For a fixed level $i: 0 \leq i < t$, the value of the $i$-th radius $\nu_i$ can be upper bounded as follows:
\[
    \nu_i \le 2 \myhs (1+\epsilon) \myhs \mu_j^*, \;\text{where }j = \max \{\ell \le i \mid \tilde{\nu}_\ell \ge \nu_{\ell-1}\}.
\]
\end{lemma}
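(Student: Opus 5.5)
The plan is to first show that $\nu_i = \tilde{\nu}_j$ for $j = \max\{\ell \le i \mid \tilde{\nu}_\ell \ge \nu_{\ell-1}\}$, and then apply \cref{cor:tilde_nu_bound} at level $j$.

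Well-definedness comes first. Since $\nu_{-1} = 0$ and every $\tilde{\nu}_0 \ge 1 > 0$ (it is a power of $(1+\epsilon)$ with nonnegative exponent), the index $\ell = 0$ always satisfies $\tilde{\nu}_0 \ge \nu_{-1}$. So the set over which the maximum is taken is nonempty, and $j \in [0, i]$ exists.

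Next, I show $\nu_i = \tilde{\nu}_j$. At level $j$, Line~\ref{line:max_nu} gives $\nu_j = \max(\tilde{\nu}_j, \nu_{j-1}) = \tilde{\nu}_j$, because $\tilde{\nu}_j \ge \nu_{j-1}$. For every level $\ell$ with $j < \ell \le i$, maximality of $j$ gives $\tilde{\nu}_\ell < \nu_{\ell-1}$. Hence $\nu_\ell = \max(\tilde{\nu}_\ell, \nu_{\ell-1}) = \nu_{\ell-1}$. A short induction on $\ell$ from $j+1$ to $i$ then yields $\nu_i = \nu_{i-1} = \cdots = \nu_j = \tilde{\nu}_j$. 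This step is consistent with \cref{obs:order radii nondecreasing}: the radius stays constant on the stretch $j, \dots, i$.

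Finally, since $j \le i < t$, \cref{cor:tilde_nu_bound} applies at level $j$ and gives, with high probability, $\tilde{\nu}_j \le 2(1+\epsilon)\mu_j^*$. Therefore $\nu_i \le 2(1+\epsilon)\mu_j^*$.

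There is no real obstacle. The only point needing care is that this is a high-probability statement inherited from \cref{lem:nu<2mu}. The event in \cref{lem:nu<2mu} holds simultaneously for all levels, so the bound holds for every $i < t$ at once.
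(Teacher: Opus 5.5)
Your proposal is correct and matches the paper's proof: unwind $\nu_\ell = \max(\tilde{\nu}_\ell, \nu_{\ell-1})$ to get $\nu_i = \tilde{\nu}_j$, then apply \cref{cor:tilde_nu_bound} at level $j \le i < t$. Your check that $\ell = 0$ always belongs to the index set, and your explicit induction over $j < \ell \le i$, fill in what the paper compresses into ``by applying such reasoning.''
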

\begin{proof}
By construction in~\linecref{line:max_nu}, we have 
$\nu_i = \max(\tilde{\nu}_{i}, \nu_{i-1})$.
Hence, if $\tilde{\nu}_{i} \ge \nu_{i-1}$ then $\nu_i = \tilde{\nu}_{i}$, and otherwise if~$\tilde{\nu}_{i} < \nu_{i-1}$ then $\nu_i =  \nu_{i-1}$. By applying such reasoning and by the definition of $j = \max \{\ell \le i \mid \tilde{\nu}_\ell \ge \nu_{\ell-1}\}$, we can infer that $\nu_i = \tilde{\nu}_j$.
Therefore by~\cref{cor:tilde_nu_bound}, it follows that $\nu_i = \tilde{\nu}_j \leq 2 (1+\epsilon) \mu_j^*$.
\end{proof}

In the analysis in~\cite[Section 3]{mettu2004optimal}, Mettu and Plaxton define a parameter~$r \coloneqq \ceil{\log_{(1-\beta)}((1-\gamma)/3)}$ and the sets $F_i$
and $F_i^r$, for every level $i \in [0, t]$.
Afterwards, the authors utilize these sets in the proofs of their Lemmas~$3.6, 3.7, 3.8, 3.9$, and~$3.10$, in order to conclude their Lemma~$3.11$.
We remark that for our analysis we can use the same definitions for the aforementioned sets, and in turn the corresponding lemmas hold here as well. Specifically, we use a version of their Lemma~$3.11$
whose proof ends on the penultimate line and the $(k, z)$-clustering cost is used, restated as Lemma~\ref{lem:lower_bound_opt}.\footnote{In the proof of Lemma~$3.11$ in \cite{mettu2004optimal}, there is a small typo in the definition of $\ell$. Specifically, the parameter $\mu_i$ is missing from the inner sum $\sum_{i \in G_{j, r}}$.}
For completeness, we provide a more detailed discussion of these in~\cref{appendix:missingproofs}.

\begin{restatable}{lemma}{lowerboundOPT} \label{lem:lower_bound_opt}
    For any subset of vertices $X \subseteq V$ with size at most $k$ (i.e., $|X| \leq k$), it holds that:
    \[
    \cost^z(X) \geq \frac{1-\gamma}{2r} \sum_{i=0}^t |U_i| \myhs (\mu^*_i)^z,
    \]
    where $r = \ceil{\log_{(1-\beta)}((1-\gamma)/3)}$.
\end{restatable}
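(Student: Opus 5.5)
We follow the Mettu--Plaxton charging scheme, adapted to the $z$-th power cost. Fix $X\subseteq V$ with $|X|\le k$. For each level $i\in[0,t]$, let $F_i\coloneqq U_i\setminus \ball(X,\mu_i^*)$, the vertices of $U_i$ at distance at least $\mu_i^*$ from $X$. The first step is to show $|F_i|\ge (1-\gamma)|U_i|$. Suppose instead that $|U_i\setminus\ball(X,\mu_i^*)|<(1-\gamma)|U_i|$. Then $\ball(X,\mu_i^*)$ contains more than $\gamma|U_i|$ vertices of $U_i$. Since the graph is finite, we can decrease the radius slightly to some $\mu'<\mu_i^*$ at which both properties of \cref{def:mu_star} still hold. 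This contradicts the minimality of $\mu_i^*$. Consequently, every $v\in F_i$ contributes at least $(\mu_i^*)^z$ to $\cost^z(X)$.

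The difficulty is that the sets $F_i$ overlap across levels, so we cannot simply sum these contributions. To handle this, we use $r=\lceil\log_{(1-\beta)}((1-\gamma)/3)\rceil$. By \cref{lem:exec_decr_beta}, $|U_{i+r}|\le(1-\beta)^r|U_i|\le \frac{1-\gamma}{3}|U_i|$. We then define $F_i^r\coloneqq F_i\setminus\bigcup_{j>i,\ j\equiv i \pmod r}F_j$. Since $U_j\subseteq U_{i+r}$ for all such $j$, and the sizes $|U_j|$ along this residue class decrease geometrically, their total is at most $\frac{1-\gamma}{3}\cdot\frac{1}{1-(1-\gamma)/3}|U_i|\le\frac{1-\gamma}{2}|U_i|$. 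Hence $|F_i^r|\ge\frac{1-\gamma}{2}|U_i|$, and within each residue class mod $r$ the sets $F_i^r$ are pairwise disjoint.

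Fix a residue class $c\in\{0,\dots,r-1\}$. Disjointness gives
\[
\cost^z(X)\ \ge\ \sum_{i\equiv c}\sum_{v\in F_i^r}\dist(v,X)^z\ \ge\ \sum_{i\equiv c}\frac{1-\gamma}{2}|U_i|(\mu_i^*)^z.
\]
Summing this bound over all $r$ classes counts $\cost^z(X)$ $r$ times, which yields $r\cdot\cost^z(X)\ge\frac{1-\gamma}{2}\sum_{i=0}^t|U_i|(\mu_i^*)^z$. This is exactly the claimed inequality.

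The main obstacle is the overlap bound for $F_i^r$. Two points need care there. First, the subtraction must be restricted to later levels in the same residue class, so that the geometric decay starts only after $r$ steps. Second, the argument must use the bound $|U_j|\le(1-\beta)|U_{j-1}|$, which \cref{lem:exec_decr_beta} guarantees only for $j\le t$; the final level $t$ needs no special treatment beyond containment $U_t\subseteq U_{t-1}$. The constant $(1-\gamma)/3$ in the definition of $r$ is chosen precisely so that the geometric sum stays at most $\frac{1-\gamma}{2}|U_i|$. Nothing about the radii $\nu_i$ or the modification in \linecref{line:max_nu} enters this argument, because the statement depends only on the nested sets $U_i$ and their geometric shrinkage, so the original proof transfers verbatim.
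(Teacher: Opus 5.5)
Your proposal is correct and follows essentially the paper's route: the same sets $F_i$ and $F_i^r$, disjointness within residue classes mod $r$, and averaging over the $r$ classes. The only cosmetic difference is that you bound $\sum_{s\ge 1}|U_{i+sr}|$ directly by $\frac{1-\gamma}{2}|U_i|$, whereas the paper first proves $|F_{i+r}|\le\frac{1}{3}|F_i|$ and then $|F_i^r|\ge|F_i|/2$.

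One local repair is needed in showing $|F_i|\ge(1-\gamma)|U_i|$. Decreasing the radius only ``slightly'' below $\mu_i^*$ leaves $\ball(X,\cdot)\cap U_i$ unchanged, so property~2 still fails there. Instead, take $\mu'$ to be the $\lceil\gamma|U_i|\rceil$-th smallest value of $\dist(v,X)$ over $v\in U_i$. Under your assumption this $\mu'$ is $<\mu_i^*$, and it satisfies both properties, which gives the contradiction.
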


\subsubsection{Finishing the Proof of~\cref{thm:variant_approx}} \label{sec:proof_variant}

\begin{observation} \label{obs:bound_sum_Uj_Ui}
    For a fixed level $i: 0 \leq i < t$, it holds that $\sum_{i \leq j < t} |U_j \setminus U_{j+1}| \leq |U_i|$.
\end{observation}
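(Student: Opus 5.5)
The sum telescopes, because the execution sets are nested. By \linecref{line:new_exec_set} of \cref{alg:static}, for every level $j < t$ we have $U_{j+1} = U_j \setminus B_j \subseteq U_j$. Hence $|U_j \setminus U_{j+1}| = |U_j| - |U_{j+1}|$.

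Summing this identity over $j = i, \dots, t-1$, the intermediate terms cancel, and we get
\[
\sum_{i \le j < t} |U_j \setminus U_{j+1}| \;=\; |U_i| - |U_t| \;\le\; |U_i|,
\]
since $|U_t| \ge 0$.

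An equivalent way to see this, without computing, is that the sets $U_j \setminus U_{j+1} = B_j$ for $j \in [i, t)$ are pairwise disjoint (this also follows from \cref{lem:v_to_one_ball}). Each of them is contained in $U_j \subseteq U_i$, because the execution sets are nested. So their total size is at most $|U_i|$.

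The only point that needs checking is the nestedness $U_{j+1} \subseteq U_j$, and this is immediate from the construction in the algorithm.
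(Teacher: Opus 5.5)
Your proposal is correct and takes essentially the same approach as the paper, which uses the nestedness $U_{j+1} \subseteq U_j$ to conclude that the sets $U_j \setminus U_{j+1}$ for $i \le j < t$ are pairwise disjoint subsets of $U_i$. Your telescoping computation is an equivalent reformulation that also gives the exact value $|U_i| - |U_t|$.
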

\begin{proof}
    For every level $j$, it holds that $U_{j+1} \subseteq U_j$ due to~\linecref{line:new_exec_set} in~\cref{alg:static}. Moreover, notice that the subsets $\{U_j \setminus U_{j+1} \}_{i \leq j < t}$ are pairwise disjoint and are all contained in the $i$-th execution set $U_i$. Therefore, the sum of their sizes is at most the size of $U_i$, as needed.
\end{proof}

By combining the previous observations and lemmas, we can now finish the proof of \cref{thm:variant_approx} which we restate for convenience.

\variantapprox*
\begin{proof}
    From~\cref{lem:size_of_S}, we have $|S|=O(k\log^2n)$ with high probability. Hence, it remains to upper bound the approximation ratio, which we establish in the rest of the proof. Based on \cref{lem:upper_bound_sol}, it holds that $\cost^z(S) \leq \sum_{i=0}^{t-1} |B_i| (\nu_i)^z$. By~\linescref{line:construct_Bi}{line:new_exec_set} in~\cref{alg:static}, we have $B_i \subseteq U_i$, $U_{i+1} \subseteq U_i$, and  
    $|B_i| = |U_i| - |U_{i+1}| = |U_i \setminus U_{i+1}|$, and thus it follows that $\cost^z(S) \leq \sum_{i=0}^{t-1} |U_i \setminus U_{i+1}| \, (\nu_i)^z$.
    Based on \cref{lem:nu_bound}, each radius $\nu_i$ is bounded by some $\mu^*_j$, where $i \in [0, t)$ and $j \in [0, i]$. For a fixed level~$j: 0 \leq j \leq t$, let $A_j$ be the set of all indices~$i: j \leq i < t$ that use $\mu^*_j$ in their bound in \cref{lem:nu_bound}.
    Then the value of $\cost^z(S)$ can be upper bounded as follows:
    \begin{align*}
        \cost^z(S) &\;\leq\; \sum_{i=0}^{t-1} |U_i \setminus U_{i+1}| (\nu_i)^z \;=\; \sum_{j=0}^t \sum_{i \in A_j} |U_i \setminus U_{i+1}| (\nu_i)^z \\
        &\;\leq\; \sum_{j=0}^t \sum_{i \in A_j} |U_i \setminus U_{i+1}| \myhs (2(1+\epsilon)\mu^*_j)^z \\
        &\;=\; (2(1+\epsilon))^z \sum_{j=0}^t (\mu^*_j)^z \sum_{i \in A_j} |U_i \setminus U_{i+1}|.
    \end{align*}
    Based on~\cref{obs:bound_sum_Uj_Ui}, it holds that 
    $\sum_{i \in A_j} |U_i \setminus U_{i+1}| \leq |U_j|$, and recall that if $i \in A_j$ then $j \leq i < t$.
    Thus, we can conclude that:
    \[
        \cost^z(S) \;\leq\; (2(1+\epsilon))^z \sum_{j=0}^t (\mu^*_j)^z \myhs |U_j|.
    \]
    Based on \cref{lem:lower_bound_opt}, we have:
    \[ \sum_{j=0}^t (\mu^*_j)^z |U_j| \;\leq\; \frac{2r}{1-\gamma} \myhs \OPT,\]
    which implies that:
    \[
        \cost^z(S) \; \leq \; \frac{2r (2(1+\epsilon))^z}{1-\gamma} \myhs \OPT.
    \]
    Therefore as~$z,\epsilon,r,\gamma=~O(1)$, it follows that the approximation ratio of \cref{alg:static} is upper bounded by~$O(1)$, as required.
\end{proof}
    \section{Incremental Bicriteria Approximation for $(k, z)$-Clustering on Graphs} \label{sec:incr_bicr}
In this section, we develop an incremental bicriteria approximation algorithm for the $(k, z)$-clustering problem on graphs undergoing adversarial edge insertions. Our incremental bicriteria approximation algorithm, with high probability, achieves an amortized update time of $n^{o(1)}$ and maintains a constant-factor bicriteria approximate solution of size at most $O(k \, \log^3 n \, \log_{1+\epsilon}nW)$. 
A pseudocode of our incremental bicriteria approximation algorithm is provided in~\cref{alg:incremental}, and the result is demonstrated in the following theorem.

\incralg*

\antonis{Our algorithm maybe works for the weighted (on vertices) $k$-median as well.}

Our incremental bicriteria approximation algorithm is an incremental version of the static Algorithm~\ref{alg:static}, which itself is a variant of the MP-bi algorithm. Throughout the description and analysis of the incremental bicriteria approximation algorithm in~\cref{sec:incr_bicr}, we fix a sufficiently large constant $\alpha \geq 1$ and a small constant $\beta \in (0, 1)$.  
In the next two paragraphs, we first describe the information maintained by our incremental bicriteria approximation algorithm, and then outline the properties satisfied by the radii~$\nu_i$. 

\paragraph{State of the incremental bicriteria approximation algorithm.}
Our incremental bicriteria approximation~\cref{alg:incremental} is structured into multiple \emph{levels} starting from $0$. The last level of the algorithm is denoted by~$t$, and the value of $t$ is upper bounded by $O(\log n)$ throughout the algorithm (see~\cref{cor:number_of_levels}). At each level $i \in [0, t]$, the  bicriteria approximation algorithm maintains an \emph{execution set} $U_i$, a \emph{candidate set} $S_i$, a \emph{radius} $\nu_i$, an incremental $(1+\epsilon)$-approximate SSSP algorithm $\mathcal{A}_i$ with a super-source attached to $S_i$, an \emph{approximate ball} $B_i$, and a \emph{leaking set} $Z_i$. 

The incremental algorithm then maintains
the bicriteria approximate solution $S \coloneq \bigcup_{i=0}^t S_i$, and an \emph{assignment} $\sigma: V \to S$ that maps each vertex to a \emph{candidate center} in~$S$ at a specified distance.
The algorithm also uses the temporary leaking set $Z$, an auxiliary set that essentially stores the \emph{pending leaked} vertices to be assigned to some leaking sets~$Z_i$ (see Lines~\ref{algline:leaking_Z},~\ref{algline:count_new_for_Zi},~\ref{algline:leaking_Zi}, and~\ref{algline:rem_Zi_from_Z} in~\cref{alg:incremental}).

\paragraph{Properties of the radii.}
For efficiency reasons, the values of the radii~$\nu_i$ (where $0 \leq i \leq t$) satisfy three properties over the course of the algorithm. 
(1) The first property is that the value of each radius $\nu_i$ is a power of $(1+\epsilon)$.
Hence, the number of possible values of a radius $\nu_i$ is reduced to $\log_{1+\epsilon}nW$,\footnote{Recall that $W$ is the maximum weight of an edge, and thus $nW$ is the maximum possible distance between two vertices in~$G$.}
at the cost of an extra factor of $(1+\epsilon)$ in the approximation ratio.
(2) The second property is that the value of each radius~$\nu_i$ is non-increasing over time, which means that after an adversarial edge insertion, the value of a specific~$\nu_i$ either decreases or remains the same. 
As a result, the number of distinct sequences of $\nu_i$ is reduced from $(\log_{1+\epsilon}nW)^t$ to $\log_{1+\epsilon}nW \cdot t$.
We note that the value of the last level $t$ may increase due to adversarial edge insertions, resulting in more radii $\nu_i$. However, the incremental algorithm ensures that~$t = O(\log n)$ throughout its execution.

Furthermore, in order to upper bound the cost of the bicriteria approximate solution, the values of the radii~$\nu_i$ (where $0 \leq i \leq t$) satisfy also a third property over the course of the algorithm. (3) The third property is that the order of the values of $\nu_i$ is non-decreasing (i.e., $\nu_0 \leq \nu_1 \leq \cdots \leq \nu_t)$. 
Let us provide a high-level justification for the third property: Consider a vertex $v \in U_i$ that is within a distance of $\nu_i$ from the candidate set $S_i$. 
Assume that after an adversarial edge insertion, the vertex $v$ moves to the next execution set $U_{i+1}$ and is not within a distance of $\nu_{i+1}$ from the next candidate set $S_{i+1}$. Since under edge insertions the distance between $v$ and $S_i$ remains at most $\nu_i$ and $\nu_i \leq \nu_{i+1}$ by the third property, the vertex $v$ can be charged a cost of $\nu_{i+1}$. Therefore using the third property, we can argue that a (leaked) vertex of any execution set $U_j$ can be charged a cost of $\nu_j$ (see also~\cref{obs:incr_nu_monoton,lem:cost_of_vert}), which in turn helps us to upper bound the approximation ratio of our incremental algorithm. To sum up, our incremental bicriteria approximation algorithm ensures that the following three properties are satisfied throughout its execution:
\begin{enumerate}[label=\textbf{Property \arabic*}, itemindent=3em, itemsep=-0.2em]
    \item\label{property_rad_1} \hspace{-0.4em}: The value of each radius $\nu_i$ is a power of $(1+\epsilon)$.
    \item\label{property_rad_2}\hspace{-0.4em}: The value of each radius $\nu_i$ is non-increasing over time. 
    \item\label{property_rad_3}\hspace{-0.4em}: The order of the values of the radii $\nu_i$ is non-decreasing (i.e., $\nu_0 \leq \nu_1 \leq \cdots \leq \nu_t$).
\end{enumerate}

\subsection{Incremental Bicriteria Approximation Algorithm on Graphs}

First we describe the preprocessing phase, and afterwards we describe how to handle adversarial edge insertions. A pseudocode of our incremental bicriteria approximation algorithm is provided in~\cref{alg:init-incremental,alg:incremental}.

\begin{definition}[Retain] \label{def:retain}
    Given a set $A$ and a nonnegative integer $\lambda$, the set $\Retain(A, \lambda)$ denotes an arbitrary subset of $A$ with size $\min(\lambda, |A|)$.
\end{definition}

\subsubsection{Preprocessing Phase}
The initialization of our incremental bicriteria approximation algorithm mimics the structure of our MP-bi variant~\cref{alg:static}. During the preprocessing phase, dynamic data structures are also employed, as described in \cref{alg:init-incremental}.

\begin{algorithm}[H]\footnotesize
\algnewcommand{\LineComment}[1]{\State \(\triangleright\) #1}
\caption{Initialization of Incremental Bicriteria Approximation Algorithm}\label{alg:init-incremental}
\begin{algorithmic}[1]

\Require{Graph $G$ and positive integer $k$ (problem input); 
positive constants $\alpha,\beta$ (MP-bi parameters);
positive constant $\epsilon$ (MP-bi variant parameter, used also as SSSP approximation parameter)}
\Ensure{Bicriteria approximate solution $S$ and bicriteria approximate assignment $\sigma$}

\LineComment{The algorithm has global access to all $U_i, S_i, \nu_i, \mathcal{A}_i, B_i, Z_i, Z, \sigma(\cdot)$}
\vspace{0.5em}

\Function{Initialize}{$G, k$}

\State $\nu_{-1} \gets 0$
\State $i \gets 0$
\State $U_0 \gets V$

\While{$|U_i| > \alpha \myhs k \log n$}
    \State Construct $S_i$ by sampling each $v\in U_i$ independently with probability $\min\Big(\frac{\alpha \myhs k \log n}{|U_i|}, 1\Big)$
    \State $\mathcal{A}_i.\textit{initialize}(G, S_i)$ \Comment{$\mathcal{A}_i$ is an incremental $(1+\epsilon)$-approximate SSSP algorithm, providing distance estimates $\delta_{S_i}(\cdot)$}
    \State $\tilde{\nu}_i \gets \min_{j \in \mathbb{Z}^{\geq 0}} \{(1+\epsilon)^j\mid \lvert\ball[S_i, (1+\epsilon)^j, \delta_{S_i}]\cap U_i\rvert \geq \beta \myhs |U_i|\}$ \label{algline:incr_dijkstra}
    \State $\nu_i \gets \max(\tilde{\nu}_i, \nu_{i-1})$ \label{algline:incr_max_nu}
    \State $B_i \gets  \Retain\bigl(\ball[S_i, \nu_i, \delta_{S_i}] \cap U_i,\, \ceil{\beta \myhs |U_i|}\bigr)$
    \label{algline:incr_construct_Bi}
    \State For every $v \in B_i:$ $\sigma(v) \gets \argmin_{u \in S_i} \delta_u(v)$\label{line:algassignment}
    \State $Z_i \gets \emptyset$
    \State $U_{i+1} \gets U_i \setminus B_i$ \label{algline:exec_assign_without_ball_prepro}
    \State $i \gets i + 1$
\EndWhile

\State $t \gets i$
\State $S_t \gets U_t$, $B_t \gets U_t$, $Z_t \gets \emptyset$, $Z \gets \emptyset$, $\nu_t \gets \nu_{t-1}$, $\forall v \in U_t: \sigma(v) \gets v$
\State $S \gets \bigcup_{i=0}^{t} S_i$
\EndFunction 
\end{algorithmic}
\end{algorithm}

\noindent Since we have to deal with adversarial edge insertions, an incremental $(1+\epsilon)$-approximate SSSP algorithm~$\mathcal{A}_i$ (from~\cref{lem:incr_appr_sssp}) with a super-source attached to $S_i$ is initialized for every level $i \in [0, t)$.\footnote{Namely, we introduce a fake root $s$ and add an edge $(s, v)$ of zero weight, for every candidate center $v \in S_i$. Then, the incremental $(1+\epsilon)$-approximate SSSP algorithm is executed with source $s$.} In Line~\ref{line:algassignment} of~\cref{alg:init-incremental}, we denote by~$\delta_u(v)$ the distance estimate from a candidate center $u \in S_i$ to a vertex~$v \in V$. We remark that even though we cannot access all $\delta_u(v)$ directly, we can efficiently compute the closest candidate center $u \in S_i$ to a vertex $v \in V$ with respect to $\delta_{S_i}(\cdot)$. This follows from the fact that the incremental $(1+\epsilon)$-approximate SSSP algorithm~$\mathcal{A}_i$ has access to the first edge of the corresponding approximate shortest path in~$\tilde{O}(1)$ time; such an incremental SSSP $\mathcal{A}_i$ can be either the deterministic algorithm from~\cref{lem:incr_appr_sssp} or a randomized algorithm against an oblivious adversary (implicitly given in ~\cite{HKN2014hopsets,AnconaHRWW19,LackiN22}). Thus, the assignment in Line~\ref{line:algassignment} can be computed properly and efficiently.

The main reason we introduce $\Retain(\cdot, \cdot)$ (see~\cref{def:retain}) in~\cref{alg:init-incremental} and~\cref{alg:incremental} is because we need
the following lemma and~\cref{lem:size_rel_exec_sets}. In turn, these are required for the correctness analysis of the lower bound in~\cref{lem:lower_bound_opt}, as discussed in~\cref{sec:incr_lower_bound}.

\begin{lemma} \label{lem:size_rel_exec_sets_prepro}
    Immediately after the preprocessing phase, it holds that $|U_{i+1}| = |U_i| - \ceil{\beta \myhs |U_i|}$ for every level~$i \in [0, t)$.
\end{lemma}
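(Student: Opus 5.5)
The plan is to read the identity directly off \cref{alg:init-incremental}, level by level. Fix a level $i \in [0, t)$. Since $t$ is the value of the counter when the while-loop stops, the loop body was executed at level $i$, and in particular $|U_i| > \alpha k \log n \ge 1$, so $U_i \neq \emptyset$. In that iteration, \linecref{algline:exec_assign_without_ball_prepro} sets $U_{i+1} = U_i \setminus B_i$. Moreover, $B_i \subseteq U_i$ by \linecref{algline:incr_construct_Bi}, since $\Retain$ returns a subset of $\ball[S_i,\nu_i,\delta_{S_i}] \cap U_i$. Hence $|U_{i+1}| = |U_i| - |B_i|$, and the whole task reduces to showing $|B_i| = \ceil{\beta |U_i|}$.

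By \cref{def:retain}, $|B_i| = \min\bigl(\ceil{\beta|U_i|},\, |\ball[S_i,\nu_i,\delta_{S_i}] \cap U_i|\bigr)$, so it suffices to prove $|\ball[S_i,\nu_i,\delta_{S_i}] \cap U_i| \ge \ceil{\beta|U_i|}$. I will do this in three short steps.
\begin{enumerate}
\item The minimum in \linecref{algline:incr_dijkstra} is attained by a finite power of $(1+\epsilon)$. For $(1+\epsilon)^j$ at least the maximum estimate $\max_{v \in U_i} \delta_{S_i}(v)$, the closed ball contains all of $U_i$. This maximum is finite once $G$ is connected, or if unreachable vertices are treated in the usual way. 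It is also nonempty-ball-friendly, since $\beta < 1$.
\item Hence $|\ball[S_i,\tilde\nu_i,\delta_{S_i}] \cap U_i| \ge \beta|U_i|$. Because the left-hand side is an integer, it is in fact at least $\ceil{\beta|U_i|}$.
\item By \linecref{algline:incr_max_nu}, $\nu_i \ge \tilde\nu_i$. Closed balls are monotone in the radius, so $|\ball[S_i,\nu_i,\delta_{S_i}] \cap U_i| \ge \ceil{\beta|U_i|}$.
\end{enumerate}
This gives $|B_i| = \ceil{\beta|U_i|}$ and therefore $|U_{i+1}| = |U_i| - \ceil{\beta|U_i|}$.

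No step here is a real obstacle. The only point needing care is the well-definedness of $\tilde\nu_i$ in step 1, i.e.\ that the set over which the minimum is taken is nonempty. It is covered by the convention that distances (and hence the estimates $\delta_{S_i}$) are bounded by $nW$ up to the factor $(1+\epsilon)$.

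Finally, I will note that the statement concerns only the moment right after preprocessing, so no interaction with later updates needs to be addressed. In particular, the level $t$ is excluded, where $B_t = U_t$ is set differently.
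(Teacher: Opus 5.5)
Your proposal is correct and matches the paper's proof step for step. In both, $\nu_i \ge \tilde\nu_i$ and integrality upgrade $\beta|U_i|$ to $\lceil \beta|U_i| \rceil$, the definition of $\Retain$ then gives $|B_i| = \lceil \beta |U_i| \rceil$, and $B_i \subseteq U_i$ with $U_{i+1} = U_i \setminus B_i$ gives $|U_{i+1}| = |U_i| - \lceil \beta |U_i| \rceil$. Your extra remark on the well-definedness of $\tilde\nu_i$ covers a point the paper takes for granted here (it treats disconnected graphs separately at the end), though the phrase ``nonempty-ball-friendly'' carries no content and should be dropped.
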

\begin{proof}   
    For a fixed level $i \in [0, t)$, the $i$-th radius $\nu_i$ is assigned the value of $\max(\tilde{\nu}_i, \nu_{i-1})$ in~\linecref{algline:incr_max_nu} of~\cref{alg:init-incremental}, which implies that $\nu_i \geq \tilde{\nu}_i$.
    By construction in~\linecref{algline:incr_dijkstra} of~\cref{alg:init-incremental},  
    there are at least $\beta \myhs |U_i|$ vertices in the $i$-th execution set $U_i$ whose distance estimates $\delta_{S_i}(\cdot)$ are at most~$\tilde{\nu}_i$.
    Hence, it holds that $\lvert \ball[S_i, \nu_i, \delta_{S_i}] \cap U_i\rvert \geq \beta \myhs |U_i|$, and because~$\lvert \ball[S_i, \nu_i, \delta_{S_i}] \cap U_i\rvert$ is an integer we deduce that~$\lvert \ball[S_i, \nu_i, \delta_{S_i}] \cap U_i\rvert \geq \ceil{\beta \myhs |U_i|}$. Thus by~\linecref{algline:incr_construct_Bi} of~\cref{alg:init-incremental} and~\cref{def:retain}, it follows that~$|B_i| = \lvert\Retain\bigl(\ball[S_i, \nu_i, \delta_{S_i}] \cap U_i,\, \ceil{\beta \myhs |U_i|}\bigr)\rvert = \ceil{\beta \myhs |U_i|}$. 
    
    Observe that the $(i+1)$-th execution set $U_{i+1}$ is set to $U_i \setminus B_i$ in~\linecref{algline:exec_assign_without_ball_prepro} of~\cref{alg:init-incremental} and that the $i$-th approximate ball $B_i$ is a subset of the $i$-th execution set $U_i$ (due to \linecref{algline:incr_construct_Bi} of~\cref{alg:init-incremental}).
    Therefore, we conclude that $|U_{i+1}| \,=\, |U_i| - |B_i| \,=\, |U_i| - \ceil{\beta \myhs |U_i|}$, as desired.
\end{proof}

\subsubsection{Adversarial Edge Insertions}\label{sec:edge insertions}
Before describing how our incremental bicriteria approximation algorithm for the $(k, z)$-clustering problem on graphs handles adversarial edge insertions, we define the \emph{$i$-th valid radius} in order to guarantee that~\ref{property_rad_3} is satisfied. Recall that \(\ball[{S_i}, r, \delta_{S_i}] \coloneqq \{v \in V \mid \delta_{S_i}(v) \leq r\}\) denotes the closed ball of radius \(r\) around \({S_i}\) using the distance estimates \(\delta_{S_i}(\cdot)\). 

\begin{definition}[$i$-th valid radius] \label{def:ith_valid_rad}
    For a fixed level $i \in [0, t]$, consider the $i$-th execution set $U_i$, the $i$-th candidate set $S_i$, the distance estimates $\delta_{S_i}(\cdot)$, and the $(i-1)$-th radius $\nu_{i-1}$.\footnote{For the definition of the $0$-th valid radius, we set $\nu_{-1} = 0$.}
    Let $\tilde{\nu}_i$ be the smallest power of~$(1+\epsilon)$ such that \(\lvert\ball[S_i, \tilde{\nu}_i, \delta_{S_i}] \cap U_i\rvert \geq \beta \myhs |U_i|\). Then
    $\hat{\nu}_i \coloneqq \max(\tilde{\nu}_i, \nu_{i-1})$ denotes the $i$-th valid radius.
\end{definition}

When an edge $(u^+, v^+)$ is inserted into the graph $G = (V, E, w)$ from the adversary, the incremental bicriteria approximation algorithm proceeds as follows. Initially, this adversarial edge insertion $(u^+, v^+)$ is forwarded to all the incremental $(1+\epsilon)$-approximate SSSP algorithms $\mathcal{A}_i$, where $0 \leq i < t$. The algorithm then finds the smallest level $\ell \in [0, t]$ for which the (updated) $\ell$-th valid radius $\hat{\nu}_\ell$ is less than the $\ell$-th radius~$\nu_\ell$ (i.e., $\hat{\nu}_\ell < \nu_\ell$).\footnote{If no such level $\ell$ exists, then the incremental bicriteria approximation algorithm does nothing.} This step is implemented through the function \textsc{First-Level-Decrease}$()$ (see \linescref{algline:func_first_lvl_decr}{algline:first_level_decr} in~\cref{alg:incremental}).

Next, the function \textsc{Update-DS-Rad-Decr}($\ell$) is called within the function \textsc{First-Level-Decrease}$()$, which updates the data structures at level $\ell$. In the function \textsc{Update-DS-Rad-Decr}($\ell$), the incremental algorithm updates the $\ell$-th radius $\nu_\ell$ to $\hat{\nu}_\ell$ and the $\ell$-th approximate ball $B_\ell$ to an arbitrary subset of~$\ball[S_\ell, \hat{\nu}_\ell, \delta_{S_\ell}] \cap U_\ell$ with size~$\ceil{\beta \myhs |U_\ell|}$
(in~\linescref{algline:set_nu}{algline:incr_construct_Bi_update} of~\cref{alg:incremental}).\footnote{In the analysis (see~\cref{lem:nu_decr_prop}) we show that $\lvert\ball[S_\ell, \hat{\nu}_\ell, \delta_{S_\ell}] \cap U_\ell\rvert \geq \ceil{\beta \myhs |U_\ell|}$, and thus there exists such a subset of size~$\ceil{\beta \myhs |U_\ell|}$.} For each vertex $v \in B_\ell$, the algorithm updates its assignment~$\sigma(v)$ to its nearest candidate center in $S_\ell$ with respect to $\delta_{S_\ell}(\cdot)$ (in~\linecref{algline:v_to_sigma} of~\cref{alg:incremental}). 
The incremental algorithm then assigns the temporary leaking set $Z$ as the union of the $\ell$-th leaking set $Z_\ell$ and the vertices that were in the old $\ell$-th approximate ball but are not in the updated $\ell$-th approximate ball $B_\ell$ (in~\linecref{algline:leaking_Z} of~\cref{alg:incremental}). 
Eventually, the algorithm empties the $\ell$-th leaking set~$Z_\ell$ (in~\linecref{algline:leaking_Zi_empty} of~\cref{alg:incremental}). 

Once the function \textsc{First-Level-Decrease}$()$ has finished, the algorithm updates the $(\ell+1)$-th execution set $U_{\ell+1}$ to $U_\ell \setminus B_\ell$. In turn, the incremental algorithm must verify whether the value of the $(\ell+1)$-th radius $\nu_{\ell+1}$ should also decrease. Note that as the $(\ell+1)$-th execution set $U_{\ell+1}$ may change, the current $(\ell+1)$-th candidate set $S_{\ell+1}$ which was sampled from the old $(\ell+1)$-th execution set, may no longer be a good representative in the new $(\ell+1)$-th execution set $U_{\ell+1}$. To that end, the algorithm proceeds with the \emph{Resampling Phase} in the updated $(\ell+1)$-th execution set $U_{\ell+1}$, as follows. Since $\ell$ denotes the smallest level at which the corresponding $\ell$-th radius $\nu_\ell$ is decreased, let $i \coloneqq \ell + 1$ represent the subsequent levels starting at level $\ell + 1$.

\paragraph{Resampling Phase.} 
First, a supporting candidate set $\tilde{S}_i$ is constructed by sampling each vertex of the new $i$-th execution set $U_i$ independently with probability $\min\Big(\frac{\alpha \myhs k \log n}{|U_i|}, 1\Big)$. Second, the incremental algorithm adds the $i$-th supporting candidate set $\tilde{S}_i$ to the $i$-th candidate set $S_i$, and restarts the $i$-th incremental $(1+\epsilon)$-approximate SSSP algorithm $\mathcal{A}_i$ with a super-source attached to the updated $i$-th candidate set~$S_i$.\footnote{When we say in~\linecref{algline:sssp_update} of~\cref{alg:incremental} ``$\mathcal{A}_i$ provides distance estimates $\delta_u(\cdot)$ for all $u \in S_i$'', we mean the nearest candidate center $u \in S_i$ to a vertex $v \in V$ with respect to $\delta_{S_i}(v)$.} 
Next, the algorithm computes the $i$-th valid radius $\hat{\nu}_i$, using the updated $i$-th candidate set $S_i$ and
the updated distance estimates $\delta_{S_i}(\cdot)$ (see~\linescref{algline:check_if_nu_decreases_2}{algline:set_hat_nu_2} in~\cref{alg:incremental}).
The incremental algorithm then proceeds based on whether the (updated) $i$-th valid radius $\hat{\nu}_i$ is less than the $i$-th radius $\nu_i$, as follows:
\begin{itemize}
    \item If \(\hat{\nu}_i < \nu_i\), then the algorithm 
    updates the data structures as described previously with further modifications (see the function \textsc{Update-DS-Rad-Decr}($i$) in~\linescref{algline:func_update_ds_rad_decr}{algline:update_ds_rad_decr_2} of~\cref{alg:incremental}). In particular, the incremental algorithm updates the $i$-th radius $\nu_i$ to $\hat{\nu}_i$ and the $i$-th approximate ball~$B_i$
    to an arbitrary subset of~$\ball[S_i, \hat{\nu}_i, \delta_{S_i}] \cap U_i$ with size~$\ceil{\beta \myhs |U_i|}$ (in~\linescref{algline:set_nu}{algline:incr_construct_Bi_update} of~\cref{alg:incremental}).\footnote{In the analysis (see~\cref{lem:nu_decr_prop}) we show that $\lvert\ball[S_i, \hat{\nu}_i, \delta_{S_i}] \cap U_i\rvert \geq \ceil{\beta \myhs |U_i|}$, and thus there exists such a subset of size~$\ceil{\beta \myhs |U_i|}$.}
    For each vertex $v \in B_i$, the algorithm updates its assignment $\sigma(v)$ to its nearest candidate center in $S_i$ with respect to $\delta_{S_i}(\cdot)$.
    Subsequently, the incremental algorithm adds to the temporary leaking set $Z$ both the $i$-th leaking set $Z_i$ and the vertices that were in the old $i$-th approximate ball but are not in the updated $i$-th approximate ball $B_i$ (see Figure~\ref{fig:example}). The algorithm then empties the $i$-th leaking set~$Z_i$ and updates the $(i+1)$-th execution set $U_{i+1}$ to $U_i \setminus B_i$.
    
    \item Otherwise if \(\hat{\nu}_i \geq \nu_i\), then the algorithm removes from the $i$-th approximate ball $B_i$ all vertices that no longer belong to the $i$-th execution set $U_i$.\footnote{A vertex that belonged to the old $i$-th execution set is removed from the current $i$-th execution set $U_i$ when it enters a previous $j$-th approximate ball $B_j$ at some level $j < i$.} Observe also that for a level $j: 0 \leq j < i$, some vertices that belonged to the old $j$-th approximate ball before the adversarial edge insertion, may no longer belong to the new $j$-th approximate ball $B_j$
    after the adversarial edge insertion. Note that some of these vertices have been added to the temporary leaking set $Z$. Subsequently, the incremental algorithm  
    adds to the $i$-th leaking set~$Z_i$ an arbitrary subset of the temporary leaking set~$Z$, ensuring that $Z_i \subseteq U_i$ and $|Z_i| = \ceil{\beta \myhs |U_i|} - |B_i|$ (see~\linesscref{algline:Bi_subset_Ui}{algline:count_new_for_Zi}{algline:leaking_Zi} in~\cref{alg:incremental}).\footnote{In the analysis (see~\cref{lem:size_rel_exec_sets,clm:new_vert_in_Z}) we show that such a subset of the temporary leaking set $Z$ exists.}
    The algorithm then removes from the temporary leaking set $Z$ the subset of $Z$ that entered the $i$-th leaking set $Z_i$ (see~\linecref{algline:rem_Zi_from_Z} in~\cref{alg:incremental}) and updates the $(i+1)$-th execution set $U_{i+1}$ to~$U_i \setminus (B_i \cup Z_i)$.
\end{itemize}

\noindent Afterwards, the incremental algorithm increases $i$ by one (i.e., moves to the next level), and continues with the \emph{Resampling Phase} in the updated $i$-th execution set $U_i$ (for the updated level $i$) as described earlier. 
The Resampling Phase terminates when the size of the current $i$-th execution set $U_i$ is at most $\alpha \myhs k \log n$. Then, the algorithm updates the value of the last level $t$ to the final value of $i$ (see~\linecref{algline:t_last_level} in~\cref{alg:incremental}). Finally, as a concluding step for the adversarial edge insertion $(u^+, v^+)$ into the graph $G$, the incremental bicriteria approximation algorithm trivially updates the data structures at level $t$ (in~\linecref{algline:last_level_assign} of~\cref{alg:incremental}) and assigns the maintained bicriteria approximate solution $S$ as $\bigcup_{i=0}^t S_i$ (in~\linecref{algline:assign_solution} of~\cref{alg:incremental}). 

\vspace{5em}
\begin{figure}[H]
    \centering
    \begin{subfigure}{.45\textwidth}
      \centering
      \includegraphics[width=0.7\linewidth]{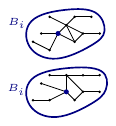}
    \end{subfigure}%
    \begin{subfigure}{.45\textwidth}
      \centering
      \includegraphics[width=0.7\linewidth]{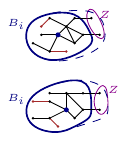}
    \end{subfigure}
    \vspace{1em}
    \caption{The blue thick regions depict the $i$-th approximate ball $B_i$ for a level $i \in [0, t]$, and the larger blue vertices represent the $i$-th candidate set $S_i$. The black lines indicate distances. In the right figure, 
     the brown lines and dots depict new shorter distances of vertices that enter $B_i$ due to the adversarial edge insertion.
     \\[0.5em]
     \underline{Left figure}: The $i$-th approximate ball before the adversarial edge insertion. \\[0.5em]
     \underline{Right figure}: After the adversarial edge insertion, some vertices enter the $i$-th approximate ball $B_i$ (brown vertices). Since the $i$-th radius $\nu_i$ is decreased (the dashed blue region indicates the old radius), some vertices enter the temporary leaking set $Z$.}
    \label{fig:example}
\end{figure}

\begin{algorithm}[H]\footnotesize
\algnewcommand{\LineComment}[1]{\State \(\triangleright\) #1}
\algdef{SE}[DOWHILE]{Do}{doWhile}{\algorithmicdo}[1]{\algorithmicwhile\ #1}%
\caption{Incremental Bicriteria Approximation Algorithm}\label{alg:incremental}

\begin{algorithmic}[1]

\Require{Graph $G$ and positive integer $k$ (problem input); 
positive constants $\alpha,\beta$ (MP-bi parameters);
positive constant $\epsilon$ (MP-bi variant parameter, used also as SSSP approximation parameter)}
\Ensure{Bicriteria approximate solution $S$ and bicriteria approximate assignment $\sigma$}

\LineComment{The algorithm has global access to all $U_i, S_i, \nu_i, \mathcal{A}_i, B_i, Z_i, Z, \sigma(\cdot)$}

\Statex
\Function{Update-DS-Rad-Decr}{$i$} \label{algline:func_update_ds_rad_decr}
    \State $\nu_i \gets \hat{\nu}_i$ \label{algline:set_nu}
    
    \State $B'_i \gets B_i$ \label{algline:prev_ball}
    
    \State $B_i \gets \Retain\bigl(\ball[S_i, \nu_i, \delta_{S_i}] \cap U_i,\, \ceil{\beta \myhs |U_i|}\bigr)$ \label{algline:incr_construct_Bi_update}
    
    \State For every $v \in B_i:$ $\sigma(v) \gets \argmin_{u \in S_i} \delta_u(v)$ \label{algline:v_to_sigma}
    
    \vspace{0.2em}
    \State $Z \gets Z \cup Z_i  \cup (B'_i \setminus B_i) $ \label{algline:leaking_Z}

    \State $Z_i \gets \emptyset$ \label{algline:leaking_Zi_empty}
\EndFunction

\Statex
\Function{First-Level-Decrease}{\hspace{0.5pt}} \label{algline:func_first_lvl_decr}
    \State $i \gets -1$
    \vspace{0.2em}
    \Do
        \State $i \gets i + 1$ \label{algline:incr_level}

        \State $\tilde{\nu}_i \gets \min_{j \in \mathbb{Z}^{\geq 0}} \{(1+\epsilon)^j \mid \lvert \ball[S_i, (1+\epsilon)^j, \delta_{S_i}] \cap U_i\rvert \geq \beta \myhs |U_{i}|\}$ \label{algline:check_if_nu_decreases_1}
    
        \State $\hat{\nu}_i \gets \max(\tilde{\nu}_i, \nu_{i-1})$ \label{algline:set_hat_nu_1}
    \doWhile{{$i < t \textbf{ and }\hat{\nu}_i \geq \nu_i$} \label{algline:while_find_small_nu}}

    \vspace{0.2em}
    \If{$i < t$} \label{algline:nu_decreases_first}
        \State $Z \gets \emptyset$
        \State \Call{Update-DS-Rad-Decr}{$i$} \label{algline:update_ds_rad_decr_1}
    \EndIf

    \vspace{0.2em}
    \State \Return $i$
\EndFunction

\Statex
\Function{Edge-Insertion}{$u^+, v^+$}
\vspace{0.1em}
\LineComment{If a $\nu_i$ is undefined, then assume that $\nu_i = nW + 1$} \label{algline:comment_nu_largest}
\LineComment{If a set is undefined, then assume that it is the empty set}

\vspace{0.2em}
\State Insert $(u^+, v^+)$ into $G$
\vspace{0.2em}

\For{$i: 0 \leq i < t$}
    \State $\mathcal{A}_i.\textit{insert}(u^+, v^+)$
\EndFor

\vspace{0.2em}
\State $i \gets \Call{First-Level-Decrease}$ \label{algline:first_level_decr}

\vspace{0.2em}
\State $U_{i+1} \gets U_i \setminus B_i$ \label{algline:next_exec_set_1}
    
\State $i \gets i + 1$ \label{algline:level_incr_1}
\vspace{0.2em}
\If{$i > t$}
    \State \textbf{exit} from \Call{Edge-Insertion}{\hspace{0.5pt}}
\EndIf
\vspace{0.2em}

\LineComment{Entering the Resampling Phase}\label{algline:enteringresamplingphase}
\vspace{0.2em}
\While{$|U_i| > \alpha k \log n$} \label{algline:beg_resample}
    \State Construct $\tilde{S}_i$ by sampling each $v \in U_i$ independently with probability $\min\Big(\frac{\alpha \myhs k \log n}{|U_i|}, 1\Big)$ \label{algline:construct_support_candidate_set}

    \State $S_i \gets S_i \cup \tilde{S}_i$ \label{algline:extend_candidate_set}
    \State $\mathcal{A}_i.\textit{initialize}(G, S_i)$ \label{algline:sssp_update} 
    \Comment{$\mathcal{A}_i$ provides distance estimates $\delta_{S_i}(\cdot)$ and $\delta_u(\cdot)$ for all $u \in S_i$}
    \vspace{0.2em}

    \State $\tilde{\nu}_i \gets \min_{j \in \mathbb{Z}^{\geq 0}} \{(1+\epsilon)^j\mid \lvert \ball[S_i, (1+\epsilon)^j, \delta_{S_i}]\cap U_i\rvert \geq \beta \myhs |U_i|\}$ \label{algline:check_if_nu_decreases_2}
    
    \State $\hat{\nu}_i \gets \max(\tilde{\nu}_i, \nu_{i-1})$ \label{algline:set_hat_nu_2}
    \vspace{0.2em}
        
    \If{$\hat{\nu}_i < \nu_i$} \label{algline:nu_decreases}
        \State \Call{Update-DS-Rad-Decr}{$i$} \label{algline:update_ds_rad_decr_2}

    \Else
        \State $B_i \gets B_i \cap U_i,\, Z_i \gets Z_i \cap U_i$
        \label{algline:Bi_subset_Ui}
        \State $Z' \gets \Retain\bigl(Z \cap U_i,\, \ceil{\beta \myhs |U_i|} - |B_i| - |Z_i|\bigr)$ \label{algline:count_new_for_Zi}
        \State $Z_i \gets (Z_i \cup Z') \cap U_i$ \label{algline:leaking_Zi}
        \State $Z \gets Z \setminus Z'$ \label{algline:rem_Zi_from_Z}
    \EndIf

    \vspace{0.2em}
    \State $U_{i+1} \gets U_i \setminus (B_i \cup Z_i)$ \label{algline:next_exec_set_2}
    \State $Z \gets Z \cap U_{i+1}$ \label{algline:Z_intersect_subseq_Ui}
    
    \State $i \gets i + 1$ \label{algline:level_incr}
\EndWhile

\vspace{0.2em}
\State $t \gets i$ \label{algline:t_last_level}
\State $S_t \gets U_t$, $B_t \gets U_t$, $Z_t \gets \emptyset$, $Z \gets \emptyset$, $\nu_t \gets \nu_{t-1}, \forall v \in U_t: \sigma(v) \gets v$ \label{algline:last_level_assign}
\State $S \gets \bigcup_{i=0}^{t} S_i$ \label{algline:assign_solution}

\EndFunction
\end{algorithmic}
\end{algorithm}

\paragraph{On the computation of $\tilde{\nu}_i$.}
We emphasize that the value $\tilde{\nu}_i$ in~\linescref{algline:check_if_nu_decreases_1}{algline:check_if_nu_decreases_2} of~\cref{alg:incremental} can be computed during the update procedure of the $i$-th incremental $(1+\epsilon)$-approximate SSSP algorithm $\mathcal{A}_i$. This is achieved by extending $\mathcal{A}_i$ to track the necessary quantities within its update procedure, without incurring additional asymptotic cost. The reason this is possible is that, based on~\cref{lem:incr_appr_sssp}, $\mathcal{A}_i$ detects and reports a change whenever the distance estimate $\delta_{S_i}(v)$ is updated for a vertex $v \in V$.
As a result, the cost of maintaining $\tilde{\nu}_i$ is included in the update time of $\mathcal{A}_i$.\footnote{Observe that in~\linecref{algline:check_if_nu_decreases_1} of~\cref{alg:incremental}, the value $\tilde{\nu}_t$ at the last level $t$ is trivially zero.
}

\subsection{Analysis of the Incremental Bicriteria Approximation Algorithm} \label{sec:incr_bicr_analysis}
When an edge is inserted into the graph $G = (V, E, w)$ from the adversary, multiple pairwise distances between vertices may change significantly, requiring the incremental algorithm to update the maintained data structures. To distinguish between the states of the input graph $G$ and our incremental algorithm before and after an adversarial edge insertion, we use $\old$ as a superscript for the state before the edge insertion, and we omit the $\old$ superscript for the state after the edge insertion.\footnote{Whenever we say ``after an (adversarial) edge insertion'', we mean after the algorithm has completed all updates to its data structures.}

Our goal in this section is to prove Theorem~\ref{thm:incr_alg}, by analyzing the incremental bicriteria approximation~\cref{alg:incremental}. The analysis is divided into four subsections, as outlined below.
First, in~\cref{sec:size_exec_sets} we analyze the amortized update time of our incremental bicriteria approximation algorithm. Next, in~\cref{sec:incr_upper_bound} we derive an upper bound on the cost of the maintained bicriteria approximate solution $S$.
Then, in~\cref{sec:incr_lower_bound} we establish results concerning a lower bound on the optimal $(k, z)$-clustering cost which is associated with the upper bound. Finally, in~\cref{sec:size_finish_proof} we provide an upper bound on the size of the bicriteria approximate solution and we
conclude the proof of Theorem~\ref{thm:incr_alg}.

\subsubsection{Analysis of the Update Time} \label{sec:size_exec_sets}
\begin{lemma} \label{lem:Bi_Zi_subsets_Ui}
    After an adversarial edge insertion, it holds that $B_i \subseteq U_i$ and $Z_i \subseteq U_i$ for every level~$i \in [0, t]$. In other words, both the $i$-th approximate ball and the $i$-th leaking set are subsets of the $i$-th execution set.
\end{lemma}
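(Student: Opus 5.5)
We argue by induction over the sequence of adversarial edge insertions. The base case is the state right after \textsc{Initialize}. There, \linecref{algline:incr_construct_Bi} of~\cref{alg:init-incremental} defines $B_i$ as a subset of $\ball[S_i,\nu_i,\delta_{S_i}]\cap U_i$, so $B_i\subseteq U_i$. Every $Z_i$ is set to $\emptyset$, and at the last level $B_t=U_t$. For the inductive step we assume the claim holds before an insertion (the $\old$ state) and trace the call \textsc{Edge-Insertion}$(u^+,v^+)$. We split the levels into three groups: levels $i<\ell$, where $\ell$ is the index returned by \textsc{First-Level-Decrease}; level $\ell$ itself; and levels $i>\ell$, which are handled inside the Resampling Phase together with the final level $t$.

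For levels $i<\ell$ (and for $i=\ell$ before \textsc{Update-DS-Rad-Decr} is called), nothing is modified. The loop in \textsc{First-Level-Decrease} only computes $\tilde\nu_i$ and $\hat\nu_i$. Also, $U_i$, $B_i$ and $Z_i$ for $i\le\ell$ are never written afterwards: \linecref{algline:next_exec_set_1} writes $U_{\ell+1}$ and everything later has index at least $\ell+1$. So the inductive hypothesis carries over for $i<\ell$. At level $\ell$, \textsc{Update-DS-Rad-Decr}$(\ell)$ sets $B_\ell$ via $\Retain(\cdot\cap U_\ell,\cdot)$, which by~\cref{def:retain} is a subset of $U_\ell$, and it sets $Z_\ell=\emptyset$. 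Hence both inclusions hold at level $\ell$. If $\ell=t$, we exit and nothing else changes.

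For a level $i\ge \ell+1$ processed in the Resampling Phase, $U_i$ is fixed once \linecref{algline:next_exec_set_1} or \linecref{algline:next_exec_set_2} assigns it in the previous iteration, and it is not changed later. There are two branches. In the first branch, \textsc{Update-DS-Rad-Decr}$(i)$ again gives $B_i\subseteq U_i$ and $Z_i=\emptyset$. In the else branch, \linecref{algline:Bi_subset_Ui} intersects both $B_i$ and $Z_i$ with $U_i$, and \linecref{algline:leaking_Zi} sets $Z_i\gets(Z_i\cup Z')\cap U_i$. So both inclusions hold right after iteration $i$. We then check that later iterations $i'>i$ never touch $B_i$, $Z_i$ or $U_i$, since they write only objects indexed by $i'$ or $i'+1$. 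The last level $t$ is handled by \linecref{algline:last_level_assign}, which sets $B_t=U_t$ and $Z_t=\emptyset$.

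The main subtlety is bookkeeping rather than mathematics. Levels above the new $t$ may keep stale sets from a previous state, but the statement only concerns $i\in[0,t]$, so they can be ignored. If $t$ grows, the comment on \linecref{algline:comment_nu_largest} treats undefined $B_i$ and $Z_i$ as empty, and the intersections in \linecref{algline:Bi_subset_Ui} then make the claim hold trivially. We also need that within a single insertion no $U_i$ is modified after $B_i$ and $Z_i$ have been finalized. This follows because levels are processed in strictly increasing order.
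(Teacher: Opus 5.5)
Your proof is correct and follows essentially the same route as the paper: induction over edge insertions; levels below the first decreased level $\ell$ are untouched; at each level $\ell \le i < t$ you split on whether \textsc{Update-DS-Rad-Decr} runs (giving a $\Retain(\cdot \cap U_i,\cdot)$ ball and an emptied leaking set) or the intersection branch runs; and level $t$ is handled by the final assignment $B_t \gets U_t$, $Z_t \gets \emptyset$. Your extra bookkeeping only makes explicit what the paper leaves implicit: the base case, the fact that levels are processed in increasing order so no $U_i$ changes after $B_i, Z_i$ are fixed, and stale levels above the new $t$.
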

\begin{proof}
    Let $\ell$ be the smallest level for which the corresponding $\ell$-th radius $\nu_\ell$ is decreased, as returned in~\linecref{algline:first_level_decr}.\footnote{Note that for the corner case where $\ell$ equals $t$, nothing changes.\label{footnote:last_level_t}} For every level $i: 0 \leq i \leq \ell-1$, the corresponding
    approximate balls, leaking sets, and execution sets are not modified, and thus the statement holds using an induction argument on the number of edge insertions. For the last level $t$, the statement holds trivially due to~\linecref{algline:last_level_assign}. Regarding a fixed level~$i: \ell \leq i < t$, consider the two cases depending on whether the value of the $i$-th radius $\nu_i$ is decreased:
    \begin{itemize}
        \item Assume that after the adversarial edge insertion, the value of the $i$-th radius decreases (i.e., $\nu_i = \hat{\nu}_i < \nu_i^\old$). In this case, observe that for $i = \ell$ the function \textsc{Update-DS-Rad-Decr}($i$) is executed in~\linecref{algline:update_ds_rad_decr_1}, and for $i \neq \ell$
        it is executed in~\linecref{algline:update_ds_rad_decr_2}. In both situations, the $i$-th approximate ball $B_i$ is updated only in~\linecref{algline:incr_construct_Bi_update} to contain vertices solely from the $i$-th execution set $U_i$. Also, the $i$-th leaking set $Z_i$ becomes empty in~\linecref{algline:leaking_Zi_empty}.

        \item Assume that after the adversarial edge insertion, the value of the $i$-th radius
        does not decrease (i.e., $\hat{\nu}_i \geq \nu_i^\old = \nu_i$). In this case, the $i$-th approximate ball $B_i$ is updated only in~\linecref{algline:Bi_subset_Ui}, and
        the $i$-th leaking set $Z_i$ is updated only in~\linescref{algline:Bi_subset_Ui}{algline:leaking_Zi}, ensuring that both $B_i$ and $Z_i$ remain subsets of the $i$-th execution set $U_i$.
    \end{itemize}
    
    The $(i+1)$-th execution set $U_{i+1}$ is updated only in~\linescref{algline:next_exec_set_1}{algline:next_exec_set_2}, and since $i < t$, either \linescref{algline:incr_construct_Bi_update}{algline:leaking_Zi_empty} or \linescref{algline:Bi_subset_Ui}{algline:leaking_Zi} are executed to assign $B_{i+1}$ and $Z_{i+1}$ respectively (or~\linecref{algline:last_level_assign} if $i + 1$ equals $t$). Therefore by applying this argument for all levels $i$, we can conclude that both the $i$-th approximate ball and the
    $i$-th leaking set are always subsets of the $i$-th execution set.
\end{proof}

\begin{lemma} \label{lem:v_Bi_Zi}
    After an adversarial edge insertion, consider a fixed level $i: 0 \leq i < t$ and an arbitrary vertex~$v \in U_i \setminus U_{i+1}$. Then the vertex $v$ belongs either to the $i$-th approximate ball $B_i$ or to the $i$-th leaking set~$Z_i$.
    Moreover, if the vertex $v$ belongs to the $i$-th leaking set $Z_i$, then before the edge insertion it belonged to a set~$U_j^\old \setminus U_{j+1}^\old$ with~$j \leq i$.
\end{lemma}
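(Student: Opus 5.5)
The plan is a case analysis on how level $i$ was last processed during the current edge insertion, combined with an induction on the number of edge insertions for the levels that are untouched. Let $\ell$ be the level returned by \textsc{First-Level-Decrease} in~\linecref{algline:first_level_decr}.

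For levels $i < \ell$, none of $U_i, U_{i+1}, B_i, Z_i$ is modified by the current insertion. The claim therefore carries over from the previous state by induction; the base case is right after preprocessing, where $Z_i = \emptyset$ and $U_{i+1} = U_i \setminus B_i$. For the second part, the same induction gives a level $j \le i$ with respect to the state before the previous insertion. This must be upgraded to the immediately preceding state. Since at level $i$ nothing changed, $v \in Z_i = Z_i^\old$ and $v \in U_i^\old \setminus U_{i+1}^\old$, so we may simply take $j = i$.

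For levels $\ell \le i < t$, the first part follows from how $U_{i+1}$ is assigned:
\begin{itemize}
\item in~\linecref{algline:next_exec_set_1} we have $U_{i+1} = U_i \setminus B_i$;
\item in~\linecref{algline:next_exec_set_2} we have $U_{i+1} = U_i \setminus (B_i \cup Z_i)$, where $B_i, Z_i$ are their final values at this level.
\end{itemize}
By \cref{lem:Bi_Zi_subsets_Ui}, $B_i, Z_i \subseteq U_i$, so $U_i \setminus U_{i+1} = B_i \cup Z_i$ in either case.

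For the second part, fix $v \in Z_i$ with $i \ge \ell$. If the radius at level $i$ decreased, then $Z_i = \emptyset$ by~\linecref{algline:leaking_Zi_empty}, so there is nothing to show. Otherwise $Z_i$ was formed in~\linescref{algline:Bi_subset_Ui}{algline:leaking_Zi} as $(Z_i^\old \cap U_i) \cup Z'$. There are two subcases.

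\begin{itemize}
\item If $v \in Z_i^\old$, then $v \in U_i^\old \setminus U_{i+1}^\old$, because the first part of the lemma holds for the old state. We take $j = i$.
\item If $v \in Z' \subseteq Z$, we must trace how $v$ entered the temporary set $Z$ during this insertion. This happens only in~\linecref{algline:leaking_Z}, at some level $j' \le i$ where the radius decreased. There $v \in Z_{j'}^\old$ or $v \in B_{j'}^\old \setminus B_{j'}$. In both cases $v \in B_{j'}^\old \cup Z_{j'}^\old$. Here we need that $B'_{j'}$ and $Z_{j'}$ at that moment are still the old sets, which holds since each level is processed at most once per insertion and before~\linecref{algline:leaking_Z} they were only possibly intersected with $U_{j'}$. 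Hence $v \in U_{j'}^\old \setminus U_{j'+1}^\old$ by the first part applied to the old state. Since $v$ is taken into $Z'$ at level $i$ and $Z$ is only fed at levels processed before the current one, we get $j' < i$. More precisely, $j' \le i$, since if the radius at level $i$ decreased then $Z_i = \emptyset$.
\end{itemize}

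The last level $t$ is excluded by hypothesis.

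The main obstacle I expect is making the bookkeeping on $Z$ rigorous. Concretely, one must check that every vertex in $Z$ at the moment~\linecref{algline:count_new_for_Zi} is executed at level $i$ was inserted at some earlier level $j' \le i$ during the same edge insertion. This requires that $Z$ is reset to $\emptyset$ in \textsc{First-Level-Decrease} before any additions, so that no stale vertices from previous insertions remain. It also requires that the old-state membership $v \in U_{j'}^\old \setminus U_{j'+1}^\old$ is certified by applying the first part to the previous state. I would handle this by stating an explicit invariant: every vertex of $Z$ lies in $B_{j'}^\old \cup Z_{j'}^\old$ for some already processed level $j'$. This invariant is checked line by line through the Resampling Phase, where the intersections in~\linecref{algline:Z_intersect_subseq_Ui} only shrink $Z$ and so preserve it.
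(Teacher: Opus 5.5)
Your proposal is correct and follows essentially the same route as the paper. The first part comes from the update $U_{i+1} \gets U_i \setminus (B_i \cup Z_i)$ together with \cref{lem:Bi_Zi_subsets_Ui}. The second part splits into $v \in Z_i^\old$ (take $j = i$) versus $v$ having entered the temporary set $Z$ in \textsc{Update-DS-Rad-Decr} at an earlier level $j$, which gives $v \in B_j^\old \cup Z_j^\old$. One small point: where you write ``by the first part applied to the old state'', you actually need the reverse inclusion $B_j^\old \cup Z_j^\old \subseteq U_j^\old \setminus U_{j+1}^\old$. This follows from the equality you derived (using $Z_\ell = \emptyset$ after \linecref{algline:leaking_Zi_empty}), or directly from \cref{lem:Bi_Zi_subsets_Ui} and the update rule for $U_{j+1}^\old$, as the paper argues it.
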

\begin{proof}
    The $(i+1)$-th execution set $U_{i+1}$ is updated only in~\linescref{algline:next_exec_set_1}{algline:next_exec_set_2}, where it is set to $U_i \setminus (B_i \cup Z_i)$.\footnote{Notice that in~\linecref{algline:next_exec_set_1}, we have $Z_i = \emptyset$.\label{footnote:Zi_empty}} Hence, any vertex $v \in U_i \setminus U_{i+1}$ must belong to $B_i \cup Z_i$, as required. For the second part of the statement, consider an arbitrary vertex $v \in Z_i$. If the vertex $v$ was part of the old $i$-th leaking set $Z_i^\old$ before the adversarial edge insertion, then by~\cref{lem:Bi_Zi_subsets_Ui} it holds that $v \in U_i^\old$. Thus, based on the way $U_{i+1}^\old$ was updated in~\linecref{algline:next_exec_set_2} it follows that $v \in U_i^\old \setminus U_{i+1}^\old$, as needed.
    
    For this reason, assume that the vertex $v$ was not in the old $i$-th leaking set $Z^\old_i$ but enters the new $i$-th leaking set $Z_i$ after the adversarial edge insertion. The vertex $v$ can enter the $i$-th leaking set $Z_i$ only in~\linecref{algline:leaking_Zi}, which means that the vertex $v$ must have entered the temporary leaking set $Z$. Observe that the temporary leaking set $Z$ incorporates additional vertices only in~\linecref{algline:leaking_Z} within \textsc{Update-DS-Rad-Decr}($\cdot$). Since the levels are scanned in an increasing way (see~\linescref{algline:level_incr_1}{algline:level_incr}),  
    the vertex $v$ should belong before the adversarial edge insertion to a set $B^\old_j \cup Z^\old_j$ with $j \leq i$ (according to~\linecref{algline:leaking_Z}). 
    By~\cref{lem:Bi_Zi_subsets_Ui} it holds that~$v \in U^\old_j$, and by construction the $(j+1)$-th execution set $U^\old_{j+1}$ was updated only in~\linescref{algline:next_exec_set_1}{algline:next_exec_set_2} where it was set to $U^\old_j \setminus (B^\old_j \cup Z^\old_j)$. Therefore, it follows that $v \in U_j^\old \setminus U_{j+1}^\old$ concluding the claim.
\end{proof}

\begin{lemma} \label{lem:Bi_Zi_disjoint}
    After an adversarial edge insertion, it holds that $B_i \cap Z_i = \emptyset$ for every level $i \in [0, t]$. In other words, the $i$-th approximate ball and the $i$-th leaking set are disjoint. 
\end{lemma}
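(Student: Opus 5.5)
We argue by induction on the number of adversarial edge insertions, mirroring the proof of \cref{lem:Bi_Zi_subsets_Ui}. Immediately after the preprocessing phase every leaking set $Z_i$ is empty (see \cref{alg:init-incremental}), so the claim holds trivially. Assume it holds before an edge insertion, and let $\ell$ be the level returned by \textsc{First-Level-Decrease}$()$ in~\linecref{algline:first_level_decr}. For levels $i < \ell$ neither $B_i$ nor $Z_i$ is modified, so disjointness carries over from the induction hypothesis. For the last level $t$, \linecref{algline:last_level_assign} sets $Z_t \gets \emptyset$, so the claim is immediate.

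For a level $i$ with $\ell \le i < t$ we split into the same two cases as before. If the radius decreases, i.e., \textsc{Update-DS-Rad-Decr}($i$) is executed (in~\linecref{algline:update_ds_rad_decr_1} or~\linecref{algline:update_ds_rad_decr_2}), then $Z_i$ is emptied in~\linecref{algline:leaking_Zi_empty}. After that point it is not touched again during this insertion, because levels are scanned in increasing order and $Z_i$ is otherwise only updated in the else-branch at level $i$. Hence $B_i \cap Z_i = \emptyset$.

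If the radius does not decrease, the else-branch runs. Line~\ref{algline:Bi_subset_Ui} only intersects both sets with $U_i$, so by the induction hypothesis (applied to $B_i^\old, Z_i^\old$) they remain disjoint. The new vertices added to $Z_i$ in~\linecref{algline:leaking_Zi} form $Z' \subseteq Z \cap U_i$, so it suffices to show $Z \cap B_i = \emptyset$ at that moment. This is the key step. We will show that every vertex of $Z \cap U_i$ lies in $U_i \setminus (B_i^\old \cup Z_i^\old)$, i.e.\ in $U_{i+1}^\old$.

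To establish this, note that a vertex $v$ enters $Z$ only in~\linecref{algline:leaking_Z}, at some level $j < i$, and $v \in B_j^\old \cup Z_j^\old \subseteq U_j^\old \setminus U_{j+1}^\old$. Since execution sets are nested, $U_i^\old \subseteq U_{j+1}^\old$, so $v \notin U_i^\old$ and therefore $v \notin B_i^\old \cup Z_i^\old$. Alternatively, $Z$ may already have been nonempty when it came from $Z_\ell$-style additions; these are handled the same way via \cref{lem:v_Bi_Zi}. Since in the else-branch $B_i = B_i^\old \cap U_i$, we get $Z \cap B_i = \emptyset$.

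The main obstacle is making sure $Z$ does not contain vertices inserted at level $i$ itself. This follows because \textsc{Update-DS-Rad-Decr}($i$) and the else-branch are mutually exclusive at level $i$. We also need that $B_i$ is not re-enlarged after~\linecref{algline:leaking_Zi} within the same insertion, which holds by inspection of the pseudocode.
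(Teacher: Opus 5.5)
Your proof is correct and follows essentially the same route as the paper. It uses the same decrease/no-decrease case split and the same observation that, in the no-decrease case, $Z$ only holds vertices of $B_j^{\old}\cup Z_j^{\old}$ for levels $j<i$, which lie outside $U_i^{\old}\supseteq B_i^{\old}\cup Z_i^{\old}$ because the execution sets are nested; it then concludes via $B_i\subseteq B_i^{\old}$ together with the induction hypothesis $B_i^{\old}\cap Z_i^{\old}=\emptyset$. There are two small expository slips, neither used in the argument: the parenthetical ``i.e.\ in $U_{i+1}^{\old}$'' is wrong, since those vertices are not in $U_i^{\old}$ at all, and the remark about ``$Z_\ell$-style additions'' is superfluous, since $Z$ is reset to $\emptyset$ in \textsc{First-Level-Decrease}$()$ and grows only in \linecref{algline:leaking_Z}, which already covers the $Z_j$ contributions.
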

\begin{proof}
    Consider a fixed level $i \in [0, t)$, since at the last level $t$ the claim holds trivially due to~\linecref{algline:last_level_assign}. If the value of the $i$-th radius $\nu_i$ decreases (i.e., $\nu_i = \hat{\nu}_i < \nu_i^\old$) then the claim holds trivially, because 
    the $i$-th leaking set $Z_i$ becomes empty in~\linecref{algline:leaking_Zi_empty} within \textsc{Update-DS-Rad-Decr}($i$). Otherwise, assume that the value of the $i$-th radius does not decrease (i.e., $\hat{\nu}_i \geq \nu_i^\old = \nu_i$).
    In this case, the $i$-th approximate ball $B_i$ is updated to $B_i^\old \cap U_i$ in~\linecref{algline:Bi_subset_Ui}, and
    the updated $i$-th leaking set $Z_i$ receives new vertices only through the temporary leaking set $Z$ in~\linecref{algline:leaking_Zi} (see also~\linecref{algline:count_new_for_Zi}). 
    
    As described in the proof of~\cref{lem:v_Bi_Zi}, the temporary leaking set $Z$ incorporates additional vertices only in~\linecref{algline:leaking_Z} within \textsc{Update-DS-Rad-Decr}($\cdot$), and the levels are scanned in an increasing way (see~\linescref{algline:level_incr_1}{algline:level_incr}). In addition, since the function \textsc{Update-DS-Rad-Decr}($j$) is invoked only when the $j$-th radius $\nu_j$ decreases, and the $i$-th radius $\nu_i$ does not decrease, the
    temporary leaking set $Z$ incorporates additional vertices only at levels~$j < i$. Notice that $(B_j^\old \cup Z_j^\old) \cap B_i^\old = \emptyset$ for any levels $j < i$, by construction in~\linescref{algline:next_exec_set_1}{algline:next_exec_set_2} and~\cref{lem:Bi_Zi_subsets_Ui}.
    In turn, this implies that $B_i^\old \cap Z = \emptyset$, and so the $i$-th leaking set $Z_i$ receives new vertices that do not belong to $B_i^\old$. Using an induction argument on the number of adversarial edge insertions, we can infer that $B_i^\old \cap Z_i^\old = \emptyset$, and thus $B_i^\old \cap Z_i = \emptyset$. Therefore as $B_i \subseteq B_i^\old$, it follows that $B_i \cap Z_i = \emptyset$, as required.
\end{proof}

\begin{lemma} \label{lem:nu_decr_prop}
    After an adversarial edge insertion, consider a level $i: 0 \leq i < t$ for which the algorithm decreases the $i$-th radius (i.e., $\nu_i < \nu_i^\old$). Then it holds that~$|B_i| = \ceil{\beta \myhs |U_i|}$ and~$|U_{i+1}| = |U_i| - |B_i|$.
\end{lemma}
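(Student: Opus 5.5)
The plan is to mirror the proof of \cref{lem:size_rel_exec_sets_prepro}, with one extra ingredient: the decrease happens only because the valid radius satisfies $\hat{\nu}_i < \nu_i^\old$, and $\hat{\nu}_i$ is defined through the current distance estimates and the current execution set $U_i$.

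First, identify where the decrease takes place. The radius $\nu_i$ is decreased only inside \textsc{Update-DS-Rad-Decr}($i$), which is called either from \linecref{algline:update_ds_rad_decr_1} (when $i=\ell$) or from \linecref{algline:update_ds_rad_decr_2} (when $i>\ell$). In both calls, \linecref{algline:set_nu} sets $\nu_i \gets \hat{\nu}_i$. Here $\hat{\nu}_i = \max(\tilde{\nu}_i, \nu_{i-1})$ was just computed in \linecref{algline:set_hat_nu_1} or \linecref{algline:set_hat_nu_2}, using the current $U_i$, the current $S_i$, and the current estimates $\delta_{S_i}$. I will check that $U_i$, $S_i$ and $\delta_{S_i}$ are not modified between that computation and \linecref{algline:incr_construct_Bi_update}. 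For $i=\ell$: no $U_j$ with $j \le \ell$ changed during this update, because earlier levels are untouched. For $i>\ell$: $U_i$ was set just before, in \linecref{algline:next_exec_set_1} or \linecref{algline:next_exec_set_2} of the previous iteration, and $S_i$ and $\mathcal{A}_i$ were reset in \linescref{algline:extend_candidate_set}{algline:sssp_update} before computing $\tilde{\nu}_i$.

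Second, show the ball is large enough. By the definition of $\tilde{\nu}_i$ (\cref{def:ith_valid_rad}), we have $|\ball[S_i,\tilde{\nu}_i,\delta_{S_i}]\cap U_i| \ge \beta|U_i|$. Since $\nu_i = \hat{\nu}_i \ge \tilde{\nu}_i$, closed balls are monotone in the radius, so $|\ball[S_i,\nu_i,\delta_{S_i}]\cap U_i| \ge \beta|U_i|$. Integrality then upgrades this to $\ge \ceil{\beta|U_i|}$. Hence the $\Retain$ in \linecref{algline:incr_construct_Bi_update} returns a set of size exactly $\ceil{\beta|U_i|}$ by \cref{def:retain}, giving $|B_i| = \ceil{\beta|U_i|}$. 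I also need that $B_i$ is not modified later in the same update: \linecref{algline:Bi_subset_Ui} executes only in the else-branch at level $i$, and no later level touches $B_i$.

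Third, compute $|U_{i+1}|$. After the call, $Z_i=\emptyset$ by \linecref{algline:leaking_Zi_empty}. Then $U_{i+1}$ is set either to $U_i\setminus B_i$ in \linecref{algline:next_exec_set_1}, or to $U_i\setminus(B_i\cup Z_i) = U_i\setminus B_i$ in \linecref{algline:next_exec_set_2}. By \cref{lem:Bi_Zi_subsets_Ui} (or directly from \linecref{algline:incr_construct_Bi_update}), $B_i\subseteq U_i$, so $|U_{i+1}| = |U_i|-|B_i|$.

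The main obstacle is the bookkeeping in the first step: confirming that $U_i$ used in $\hat{\nu}_i$ coincides with the final $U_i$. In particular, no later level's processing may alter $U_i$, $B_i$ or $Z_i$ once level $i$ has been handled. This follows because levels are scanned strictly increasingly and each iteration writes only to index $i$ and $i+1$.
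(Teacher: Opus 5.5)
Your proposal is correct and follows the paper's proof: since $\nu_i=\hat{\nu}_i\ge\tilde{\nu}_i$, integrality makes the $\Retain$ in \linecref{algline:incr_construct_Bi_update} return exactly $\ceil{\beta|U_i|}$ vertices, and $Z_i=\emptyset$ together with $B_i\subseteq U_i$ (\cref{lem:Bi_Zi_subsets_Ui}) gives $|U_{i+1}|=|U_i|-|B_i|$. Your additional check that $U_i$, $S_i$, and $\delta_{S_i}$ do not change between the computation of $\hat{\nu}_i$ and the $\Retain$ step is left implicit in the paper, and it is a sound addition.
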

\begin{proof}
    Let $\ell$ be the smallest level for which the corresponding $\ell$-th radius $\nu_\ell$ is decreased, as returned in~\linecref{algline:first_level_decr} of~\cref{alg:incremental}.
    Note that for $i = \ell$ the function \textsc{Update-DS-Rad-Decr}($i$) is executed in~\linecref{algline:update_ds_rad_decr_1}, and for~$i \neq \ell$ it is executed in~\linecref{algline:update_ds_rad_decr_2}. In both cases, the $i$-th radius $\nu_i$ is updated to the $i$-th valid radius~$\hat{\nu}_i$ in~\linecref{algline:set_nu}, which in turn is determined in~\linecref{algline:set_hat_nu_1} or in~\linecref{algline:set_hat_nu_2}. Notice that the value of the $i$-th valid radius $\hat{\nu}_i$ is at least the value of $\tilde{\nu}_i$ determined in~\linecref{algline:check_if_nu_decreases_1} or in~\linecref{algline:check_if_nu_decreases_2} respectively. By construction in both~\linescref{algline:check_if_nu_decreases_1}{algline:check_if_nu_decreases_2},  
    there are at least $\beta \myhs |U_i|$ vertices in the $i$-th execution set $U_i$ whose distance estimates~$\delta_{S_i}(\cdot)$ are at most~$\tilde{\nu}_i$. As a result, since $\nu_i = \hat{\nu}_i \geq \tilde{\nu}_i$ it holds that $\lvert \ball[S_i, \hat{\nu}_i, \delta_{S_i}] \cap U_i\rvert \geq \beta \myhs |U_i|$, and because~$\lvert \ball[S_i, \hat{\nu}_i, \delta_{S_i}] \cap U_i\rvert$ is an integer we deduce that $\lvert \ball[S_i, \hat{\nu}_i, \delta_{S_i}] \cap U_i\rvert \geq \ceil{\beta \myhs |U_i|}$.
    Hence based on~\linecref{algline:incr_construct_Bi_update} and~\cref{def:retain}, it follows that~$|B_i| = \lvert\Retain\bigl(\ball[S_i, \nu_i, \delta_{S_i}] \cap U_i,\, \ceil{\beta \myhs |U_i|}\bigr)\rvert = \ceil{\beta \myhs |U_i|}$.
   
    Regarding the second part of the statement, observe that the execution set $U_{i+1}$ is updated only in~\linescref{algline:next_exec_set_1}{algline:next_exec_set_2}, where it is set to $U_i \setminus (B_i \cup Z_i)$.\footref{footnote:Zi_empty} 
    As the $i$-th radius is decreased,
    the $i$-th leaking set $Z_i$ becomes empty due to~\linecref{algline:leaking_Zi_empty}. Moreover by~\cref{lem:Bi_Zi_subsets_Ui}, the $i$-th approximate ball $B_i$ is a subset of the $i$-th execution set $U_i$, and thus
    we can conclude that $|U_{i+1}|$ is equal to $|U_i| - |B_i|$, as desired.
\end{proof}

\begin{lemma} \label{lem:size_rel_exec_sets}
    After an adversarial edge insertion, it holds that $|U_i| = |U_i^\old|$ and $|U_i| = |U_{i-1}| - \ceil{\beta \myhs |U_{i-1}|}$ for every level $i \in [0, t]$.\footnote{\label{ftnote:def_U_-1} For $i=0$, we choose $|U_{-1}|$ so that $|U_0| = |U_{-1}| - \ceil{\beta \myhs |U_{-1}|}$.}
\end{lemma}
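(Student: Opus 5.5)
We prove \cref{lem:size_rel_exec_sets} by a double induction: an outer induction on the number of adversarial edge insertions and, within one insertion, an inner induction on the level $i$ in increasing order. For the outer base case we use \cref{lem:size_rel_exec_sets_prepro}: immediately after preprocessing, $|U_{i+1}| = |U_i| - \ceil{\beta |U_i|}$ for all $i \in [0,t)$. Since $|U_0| = n$ never changes, the sizes $|U_i|$ are determined recursively by $n$ alone. Hence the first claim $|U_i| = |U_i^\old|$ is a consequence of the second, provided we also show that the last level $t$ is unchanged. Indeed, the while condition $|U_i| > \alpha k \log n$ depends only on $|U_i|$, so the resampling loop stops at the same index.

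Fix an insertion and let $\ell$ be the level returned by \textsc{First-Level-Decrease}. For levels $i \le \ell$, the sets $U_i$ are unmodified, so the claim is inherited from the outer hypothesis. (If $\ell = t$, nothing changes at all; see \cref{footnote:last_level_t}.) For $i \ge \ell$, assume $|U_i| = |U_i^\old|$ and derive $|U_{i+1}| = |U_i| - \ceil{\beta|U_i|}$. This splits into two cases.

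\emph{Radius decreases} ($i = \ell$, or \linecref{algline:nu_decreases} succeeds). Here \cref{lem:nu_decr_prop} directly gives $|B_i| = \ceil{\beta|U_i|}$ and $|U_{i+1}| = |U_i| - |B_i|$.

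\emph{Radius does not decrease.} Then \linescref{algline:Bi_subset_Ui}{algline:leaking_Zi} set $B_i \gets B_i \cap U_i$ and $Z_i \gets (Z_i \cup Z') \cap U_i$, where $Z'$ is a subset of $Z \cap U_i$ of size $\min\bigl(|Z\cap U_i|, \ceil{\beta|U_i|} - |B_i| - |Z_i|\bigr)$. Since $B_i$ and $Z_i$ are disjoint subsets of $U_i$ (\cref{lem:Bi_Zi_disjoint,lem:Bi_Zi_subsets_Ui}), we get $|U_{i+1}| = |U_i| - |B_i| - |Z_i|$. So it suffices to show $|B_i| + |Z_i| = \ceil{\beta|U_i|}$. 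This needs two facts. First, $Z'$ is disjoint from the retained $B_i\cup Z_i$; this follows by the argument of \cref{lem:Bi_Zi_disjoint}, since $Z$ only contains vertices from $B_j^\old\cup Z_j^\old$ with $j<i$ and hence not from $B_i^\old \cup Z_i^\old$. Second, and this is the crux, $|Z \cap U_i| \ge \ceil{\beta|U_i|} - |B_i \cap U_i| - |Z_i^\old \cap U_i|$, so the $\min$ in the size of $Z'$ is attained by the second term. This is presumably the content of \cref{clm:new_vert_in_Z}.

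To prove the availability claim, I would do a counting argument. Since $|U_i| = |U_i^\old|$ and, by the outer hypothesis, $|B_i^\old| + |Z_i^\old| = \ceil{\beta|U_i^\old|}$, it suffices to show that every vertex of $U_i$ that is not in $U_i^\old$ can be matched to a vertex lost from $(B_i^\old \cup Z_i^\old)\cap U_i$, and conversely that enough vertices of $Z\cap U_i$ are available. Concretely, $U_i$ and $U_i^\old$ have equal size, so
\[
|U_i \setminus U_i^\old| = |U_i^\old \setminus U_i|.
\]
The vertices in $U_i^\old \setminus U_i$ were absorbed by balls $B_j$ with $j<i$ and account for the loss in $B_i^\old\cup Z_i^\old$. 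The vertices in $U_i \setminus U_i^\old$ were in some $B_j^\old \cup Z_j^\old$ with $j<i$ and left it. Each such vertex entered $Z$ at \linecref{algline:leaking_Z}, and none of them has been consumed by levels between $j$ and $i$; the ones consumed were removed from $U_i$, or kept in $Z$ by \linecref{algline:Z_intersect_subseq_Ui} only if still in $U_{i+1}$. The delicate bookkeeping is to show that the vertices lost from $B_i^\old\cup Z_i^\old$ are no more numerous than the vertices of $Z\cap U_i$ that are not in $B_i^\old\cup Z_i^\old$. I expect this invariant, maintained across the levels of the resampling loop, to be the main obstacle. I would handle it by strengthening the inner induction hypothesis to the statement $|U_{i}\setminus U_i^{\old}| \le |Z\cap U_i|$, including all vertices currently in $Z$, at the moment level $i$ is processed.
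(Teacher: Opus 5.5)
Your outline follows the paper's proof up to its crux. It uses the same outer/inner induction and \cref{lem:nu_decr_prop} at a decreasing level. At a non-decreasing level it makes the same reduction, via $|B_i^\old|+|Z_i^\old|=\ceil{\beta|U_i^\old|}$, $|U_i|=|U_i^\old|$ and disjointness of $Z'$ from $Z_i^\old$, to $|Z\cap U_i|\ge\ceil{\beta|U_i|}-|B_i|-|Z_i^\old\cap U_i|$, which follows from $|U_i\setminus U_i^\old|\le|Z\cap U_i|$.

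The gap is that you do not prove this last inequality. Your proposed device, carrying the count $|U_i\setminus U_i^\old|\le|Z\cap U_i|$ as an inner induction hypothesis, is not inductive on its own. Going from level $j$ to $j+1$, vertices leave $Z\cap U_j$ either because the new ball $B_j$ absorbs them (decreasing branch) or because they are consumed into $Z'$ (non-decreasing branch). To conclude $|U_{j+1}\setminus U_{j+1}^\old|\le|Z\cap U_{j+1}|$ you must know these are the same vertices that leave $U_j\setminus U_j^\old$, and a size comparison carries no such information.

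What you need, and what the paper proves as \cref{clm:new_vert_in_Z}, is the set inclusion $U_i\setminus U_i^\old\subseteq Z$ at the moment $Z\cap U_i$ is used in \linecref{algline:count_new_for_Zi}. It needs no induction over levels; it is your sentence about vertices entering $Z$ and not being consumed, made precise:
\begin{itemize}
\item Take $v\in U_i\setminus U_i^\old$ and $j<i$ with $v\in U_j^\old\setminus U_{j+1}^\old$. Then $v\in B_j^\old\cup Z_j^\old$ by \cref{lem:v_Bi_Zi}, whereas $v\in U_i\subseteq U_{j+1}$ forces $v\in U_j\setminus(B_j\cup Z_j)$.
\item This is impossible if level $j$ was untouched ($j<\ell$) or took the non-decreasing branch, where $B_j=B_j^\old\cap U_j$ and $Z_j\supseteq Z_j^\old\cap U_j$. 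Hence \textsc{Update-DS-Rad-Decr}$(j)$ ran and inserted $v$ into $Z$ at \linecref{algline:leaking_Z}.
\item For $j\le j'<i$, removal at \linecref{algline:rem_Zi_from_Z} would put $v$ into $Z_{j'}$, hence outside $U_{j'+1}\supseteq U_i$. Meanwhile \linecref{algline:Z_intersect_subseq_Ui} keeps $v$, since $v\in U_i\subseteq U_{j'+1}$.
\end{itemize}
Every vertex ever put into $Z$ lies in some $B_j^\old\cup Z_j^\old$ with $j<i$, so also $Z\cap U_i^\old=\emptyset$, giving in fact $Z\cap U_i=U_i\setminus U_i^\old$. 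Your counting invariant becomes inductive once coupled with this reverse inclusion, at which point it is just this set equality.
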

\begin{proof}
    The proof consists of two inductions; the outer induction is on the number of adversarial edge insertions,
    and the nested induction is on the number of levels. The base case of the outer induction holds due to the preprocessing phase (and~\cref{ftnote:def_U_-1}), as established in~\cref{lem:size_rel_exec_sets_prepro}. The base case of the nested induction is for~$i = 0$, and the first part of the statement $|U_0| = |U_0^\old| = n$ holds by the construction of~\cref{alg:incremental} (with the second part of the statement  from~\cref{ftnote:def_U_-1}).

    Regarding the induction step of this double induction, we prove the
    statement for the $(i+1)$-th level after the adversarial edge insertion (where $i \in [0, t)$), assuming that:
    (1) the statement is true for all levels before the edge insertion (outer induction hypothesis),
    and (2) the statement is true for the $i$-th level after the edge insertion (nested induction
    hypothesis). To analyze the $(i+1)$-th level after the adversarial edge insertion, we consider two cases based on whether the value of the $i$-th radius $\nu_i$ is decreased:
    \begin{itemize}
        \item Assume that after the edge insertion, the value of the $i$-th radius
        decreases (i.e., $\nu_i < \nu_i^\old$). In this case, the algorithm ensures that $|B_i| = \ceil{\beta \myhs |U_i|}$ and $|U_{i+1}| = |U_i| - |B_i|$, as stated in~\cref{lem:nu_decr_prop}. This implies that $|U_{i+1}| = |U_i| - \ceil{\beta \myhs |U_i|}$, as needed for the second part of the statement. Using the nested induction hypothesis for the first part of the statement, we deduce that $|U_{i+1}| =|U^\old_i| - \ceil{\beta \myhs |U^\old_i|}$. Using the outer induction hypothesis for the second part of the statement, we conclude that $|U_{i+1}| \,=\,|U^\old_i| - \ceil{\beta \myhs |U^\old_i|} \,=\, |U^\old_{i+1}|$, as required for the first part of the statement.

        \item Assume that after the edge insertion, the value of the $i$-th radius $\nu_i$ does not decrease (i.e., $\nu_i \geq \nu_i^\old$). In this case, the $i$-th approximate ball $B_i$ is updated to $B_i^\old \cap U_i$ in~\linecref{algline:Bi_subset_Ui}, and the $i$-th leaking set~$Z_i$ is updated to $(Z_i^\old \cup Z') \cap U_i$ where
        $Z' \coloneqq \Retain\bigl(Z \cap U_i,\, \ceil{\beta \myhs |U_i|} - |B_i| - |Z_i^\old \cap U_i|\bigr)$
        according to~\linesscref{algline:Bi_subset_Ui}{algline:count_new_for_Zi}{algline:leaking_Zi}. Our aim is to show that $|Z'| \,=\, \ceil{\beta \myhs |U_i|} - |B_i| - |Z_i^\old \cap U_i|$ in the next paragraph, where we use the following auxiliary claim.\footnote{Here, $Z$ refers to the current temporary leaking set at level $i$, rather than the updated set obtained after~\linecref{algline:rem_Zi_from_Z} in~\cref{alg:incremental}. More precisely, it refers to $Z^\old$, but we write $Z$ for ease of reading.}
        
        \begin{claim} \label{clm:new_vert_in_Z}
            Every vertex $v \in U_i \setminus U^\old_i$ belongs to the temporary leaking set $Z$ at the moment the $i$-th execution set $U_i$ is updated in~\linescref{algline:next_exec_set_1}{algline:next_exec_set_2} of~\cref{alg:incremental}, and $v$ still belongs to $Z$ at the moment~$Z \cap U_i$ is used within~$\Retain\bigl(Z \cap U_i,\, \ceil{\beta \myhs |U_i|} - |B_i| - |Z_i^\old \cap U_i|\bigr)$ in~\linecref{algline:count_new_for_Zi} of~\cref{alg:incremental}.
        \end{claim}
        \begin{proof}
            Consider a vertex $v \in V$ such that $v \in U_i$ and $v \notin U^\old_i$. Since the vertex $v$ was not part of the old $i$-th execution set $U^\old_i$, there must exist a level $j$ such that $v \in U^\old_j \setminus U^\old_{j+1}$ and $j < i$, as dictated by the construction of the execution sets. Based on~\cref{lem:v_Bi_Zi}, the vertex $v$ belonged to the set~$B^\old_j \cup Z^\old_j$.
            Since the vertex $v$ enters the updated $i$-th execution set $U_i$ after the adversarial edge insertion,
            based on~\linescref{algline:next_exec_set_1}{algline:next_exec_set_2} it must have left the union of the $j$-th approximate ball and the $j$-th leaking set. In other words it holds that $v \notin B_j \cup Z_j$ and $v \in U_i$, which occurs (based on the construction of the execution sets) only when the $j$-th radius decreases
            and the function \textsc{Update-DS-Rad-Decr}($j$) is invoked. In turn due to~\linecref{algline:leaking_Z}, the vertex $v$ must have entered the temporary leaking set $Z$ at level $j < i$.
            Since the vertex $v$ belongs to the updated $i$-th execution set $U_i$, the construction of~\cref{alg:incremental} in~\linesscref{algline:rem_Zi_from_Z}{algline:next_exec_set_2}{algline:Z_intersect_subseq_Ui} (i.e., each subsequent execution set is a subset of the previous one and previous leaking sets are disjoint from subsequent execution sets)
            ensures that~$v$ remains in $Z$ when $Z \cap U_i$ is utilized in~\linecref{algline:count_new_for_Zi}.
        \end{proof}
    
        \subparagraph{Equality for $|Z'|$.}
        Since~$|U_i| = |U_i^\old|$ by the nested induction hypothesis for the first part of the statement, the number of vertices that leave the union of the $i$-th approximate ball and the $i$-th leaking set must be at most the number of 
        new vertices that enter the $i$-th execution set $U_i$ (using~\cref{lem:Bi_Zi_subsets_Ui}). Based on~\cref{clm:new_vert_in_Z}, any new vertex entering the updated $i$-th execution set $U_i$ belongs to the temporary leaking set $Z$ at the moment~$U_i$ is updated and $Z \cap U_i$ is used.
        Therefore, by~\cref{clm:new_vert_in_Z,lem:Bi_Zi_subsets_Ui} it holds that
        $\lvert(B_i^\old \setminus B_i) \cup (Z_i^\old \setminus Z_i)\rvert \,\leq\, |Z \cap U_i|$. Specifically, using~\cref{lem:Bi_Zi_disjoint} it follows that~$\lvert B_i^\old \setminus B_i\rvert + \lvert Z_i^\old \setminus Z_i\rvert \,\leq\, |Z|$, where $Z \subseteq U_i$ due to~\linecref{algline:Z_intersect_subseq_Ui}.

        By the outer induction hypothesis for the second part of the statement, we have $|U_{i+1}^\old| = |U_i^\old| - \ceil{\beta \myhs |U_i^\old|}$. Hence, by~\cref{lem:v_Bi_Zi,lem:Bi_Zi_subsets_Ui} and as $U_{i+1}^\old \subseteq U_i^\old$ by construction in~\linescref{algline:next_exec_set_1}{algline:next_exec_set_2}, we can infer that $|B_i^\old \cup Z_i^\old| = \ceil{\beta \myhs |U_i^\old|}$. Specifically, by~\cref{lem:Bi_Zi_disjoint} we deduce that $|B_i^\old| + |Z_i^\old| = \ceil{\beta \myhs |U_i^\old|}$. Notice that
        $|B_i^\old| = |B_i^\old \setminus B_i| + |B_i|$ and $|Z_i^\old| = |Z_i^\old \setminus Z_i| + |Z_i^\old \cap U_i|$ by the construction of~\cref{alg:incremental} in~\linescref{algline:Bi_subset_Ui}{algline:leaking_Zi}, and combining this with the two expressions:
        \begin{align*}
            \lvert B_i^\old \setminus B_i\rvert \,+\, \lvert Z_i^\old \setminus Z_i\rvert \;&\leq\; |Z|\;\; \text{and} \\
            |B_i^\old| \,+\, |Z_i^\old| \;&=\; \ceil{\beta \myhs |U_i^\old|},
        \end{align*}
        implies that:
        \begin{align*}
            |Z| &\;\geq\; \lvert B_i^\old \setminus B_i\rvert \,+\, \lvert Z_i^\old \setminus Z_i\rvert  \\
            &\;=\; \lvert B_i^\old \rvert \,-\, |B_i| \,+\, |Z_i^\old| \,-\, |Z_i^\old \cap U_i| \\
            &\;=\; \ceil{\beta \myhs |U_i^\old|} \,-\, |B_i| \,-\, |Z_i^\old \cap U_i|.
        \end{align*}  
        Using the nested induction hypothesis for the first part of the statement, we obtain $|Z| \,\geq\, \ceil{\beta \myhs |U_i|} \,-\, |B_i| \,-\, |Z_i^\old \cap U_i|$. 
        Thus, since $Z' = \Retain\bigl(Z \cap U_i,\, \ceil{\beta \myhs |U_i|} - |B_i| - |Z_i^\old \cap U_i|\bigr)$ and $Z \subseteq U_i$, according to~\cref{def:retain} we get that $|Z'| = \ceil{\beta \myhs |U_i|} - |B_i| - |Z_i^\old \cap U_i|$, as needed for the equality for $|Z'|$.
        \medskip

        With the desired equality for $|Z'|$ established, observe that the $i$-th leaking set~$Z_i$ is updated to $(Z_i^\old \cup Z') \cap U_i$ in~\linecref{algline:leaking_Zi}. Since the $i$-th radius $\nu_i$ does not decrease, the temporary leaking set $Z$ has incorporated additional vertices only at levels $j < i$ (see~\linecref{algline:leaking_Z}). Notice that $(B_j^\old \cup Z_j^\old) \cap Z_i^\old = \emptyset$ for any levels $j < i$, by construction in~\linescref{algline:next_exec_set_1}{algline:next_exec_set_2} and~\cref{lem:Bi_Zi_subsets_Ui}. Hence as $Z' \subseteq Z$ by~\linecref{algline:count_new_for_Zi}, the set $Z'$ is disjoint from the set $Z_i^\old \cap U_i$, and thus it holds that $|Z_i| = |Z_i^\old \cap U_i| + |Z'|$. In turn, it follows that $|Z_i| = \ceil{\beta \myhs |U_i|} -  |B_i|$. 
        
        The $(i+1)$-th execution set~$U_{i+1}$ is set to $U_i \setminus (B_i \cup Z_i)$ in~\linecref{algline:next_exec_set_2}, and since~$B_i \cup Z_i \subseteq U_i$ by~\cref{lem:Bi_Zi_subsets_Ui} and $B_i \cap Z_i = \emptyset$ by~\cref{lem:Bi_Zi_disjoint} it holds that~$|U_{i+1}| = |U_i| - |B_i \cup Z_i| = |U_i| - (|B_i| + |Z_i|)$. Therefore, using the induction hypotheses it follows that $|U_{i+1}| = |U_i| - \ceil{\beta \myhs |U_i|} = |U^\old_i| - \ceil{\beta \myhs |U^\old_i|} = |U^\old_{i+1}|$, as required for both parts of the statement.
    \end{itemize}
\end{proof}

The next corollary establishes that the incremental bicriteria approximation~\cref{alg:incremental} maintains~$O(\log n)$ levels throughout its execution.

\begin{corollary} \label{cor:number_of_levels}
    After an adversarial edge insertion, the value of the last level $t$ is at most $O(\log n)$.
\end{corollary}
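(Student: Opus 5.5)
The plan is to derive the bound directly from \cref{lem:size_rel_exec_sets}, mirroring the proof of \cref{cor:value_t} in the static setting. By \cref{lem:size_rel_exec_sets}, after any adversarial edge insertion we have $|U_i| = |U_{i-1}| - \ceil{\beta \myhs |U_{i-1}|}$ for every level $i \in [0, t]$. Since $\ceil{\beta \myhs |U_{i-1}|} \geq \beta \myhs |U_{i-1}|$, this gives $|U_i| \leq (1-\beta)\myhs |U_{i-1}|$ for every $i \in [1, t]$. Unrolling from $|U_0| = n$ then yields $|U_i| \leq (1-\beta)^i \myhs n$.

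Next I use the termination condition of the loop. The Resampling Phase in \cref{alg:incremental}, or the while loop of \cref{alg:init-incremental} if nothing changes at deeper levels, only creates level $i+1$ while $|U_i| > \alpha \myhs k \log n$. Hence at the final value of $t$ we have $|U_{t-1}| > \alpha \myhs k \log n \geq 1$. For the levels below the first decreased level $\ell$ nothing changes, so they still satisfy the same invariant from earlier insertions. Combining this with the geometric decay gives $1 \leq |U_{t-1}| \leq (1-\beta)^{t-1} n$. Therefore $t \leq \frac{\log n}{\log(1/(1-\beta))} + 1 = O(\log n)$, because $\beta$ is a fixed constant.

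The main subtlety is that the equality in \cref{lem:size_rel_exec_sets} holds for all levels $i \in [0,t]$ after the insertion, including levels where the radius did not decrease. At those levels the execution set loses $B_i \cup Z_i$ rather than just $B_i$. This is exactly what the lemma certifies, so it can be used as a black box. A second point to check is that $t$ itself is defined through the loop guard, so some $|U_{t-1}|$ exceeds $\alpha k \log n$. If $t = 0$ the claim is trivial.
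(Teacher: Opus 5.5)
Your proposal is correct and follows the paper's proof essentially verbatim. Both apply \cref{lem:size_rel_exec_sets} to get $|U_i| \le (1-\beta)\,|U_{i-1}|$, unroll this to $|U_{t-1}| \le (1-\beta)^{t-1} n$, and combine it with $|U_{t-1}| \ge 1$ to obtain $t \le \log n/\log(1/(1-\beta)) + 1 = O(\log n)$. Your loop-guard justification of $|U_{t-1}| \ge 1$ fills in a detail the paper only asserts.
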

\begin{proof}
    Based on~\cref{lem:size_rel_exec_sets}, after an adversarial edge insertion we have $|U_i| \,=\, |U_{i-1}| - \ceil{\beta \myhs |U_{i-1}|} \,\leq\, |U_{i-1}| - \beta \myhs |U_{i-1}| \,=\, (1 - \beta)|U_{i-1}|$ for every level $i \in [1, t]$. In turn, we always have $|U_{t-1}| \leq (1-\beta)^{t-1} |U_0|$.
    Recall that $|U_0| = n$, $|U_{t-1}| \geq 1$, and $0 < (1-\beta) < 1$. Hence, it follows that $(1-\beta)^{t-1} \cdot n \geq 1 \; \iff \; t~\leq~\log_{(1-\beta)} \frac{1}{n} + 1 \; \iff \; t \leq \frac{-\log n}{\log (1-\beta)} + 1$. 
    Therefore as $\beta$ is a fixed constant,  the value of $t$ is always at most $O(\frac{\log n}{\beta}) \,=\, O(\log n)$.\footnote{Note that we can assume $n > 1$, as otherwise the $(k, z)$-clustering instance can be trivially solved.}
\end{proof}

\begin{observation} \label{obs:nu_only_decr_power_(1+eps)}
    After an adversarial edge insertion, the value of the $i$-th radius $\nu_i$ either decreases or remains the same. In other words, it holds that $\nu_i \leq \nu^\old_i$ for every level $i \in [0, t]$ (i.e., \ref{property_rad_2} is satisfied). Moreover, the value of each $i$-th radius $\nu_i$ is a power of $(1+\epsilon)$ (i.e., \ref{property_rad_1} is satisfied).
\end{observation}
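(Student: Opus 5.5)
The plan is to trace every place in \cref{alg:incremental} where a radius $\nu_i$ is written and check both properties at each such assignment. The assignments are: \linecref{algline:set_nu} inside \textsc{Update-DS-Rad-Decr}($i$), the assignment $\nu_t \gets \nu_{t-1}$ in \linecref{algline:last_level_assign}, and the preprocessing assignments in \cref{alg:init-incremental}. All other radii are untouched. Both properties will then follow by induction on the number of adversarial edge insertions, with the preprocessing phase as the base case.

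For \ref{property_rad_2}, I treat each assignment in turn.
\begin{itemize}
\item \textsc{Update-DS-Rad-Decr}($i$) is called only in \linecref{algline:update_ds_rad_decr_1} and \linecref{algline:update_ds_rad_decr_2}. Each call is guarded by the condition $\hat{\nu}_i < \nu_i$: for \linecref{algline:update_ds_rad_decr_1} this comes from the exit condition of the do-while loop in \linecref{algline:while_find_small_nu} together with $i<t$, and for \linecref{algline:update_ds_rad_decr_2} it is \linecref{algline:nu_decreases}. Since $\nu_i$ at that moment still equals $\nu_i^\old$ (no level is processed twice in one insertion, as levels are scanned increasingly), the new value satisfies $\nu_i = \hat{\nu}_i < \nu_i^\old$.
\item For a level at which the else-branch is taken, $\nu_i$ is not modified, so $\nu_i = \nu_i^\old$.
\item For levels newly created in the resampling phase, i.e.\ levels beyond the old $t$, the comment in \linecref{algline:comment_nu_largest} sets $\nu_i^\old = nW+1$ by convention. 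Any assigned value is a power of $(1+\epsilon)$ at most roughly $nW$, so the inequality again holds.
\end{itemize}

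The last level requires more care, since $\nu_t \gets \nu_{t-1}$ in \linecref{algline:last_level_assign}. There are two cases, depending on how $t$ relates to the old last level.
\begin{itemize}
\item If $t$ is unchanged, then $\nu_t = \nu_{t-1} \le \nu_{t-1}^\old = \nu_t^\old$ by the case already handled.
\item If the old last level $t^\old$ is now a regular level, then $\nu_{t^\old}$ is set via \textsc{Update-DS-Rad-Decr} when $\hat{\nu} < \nu^\old$. Otherwise it is untouched, with its old value $\nu_{t^\old-1}^\old$.
\item By \cref{lem:size_rel_exec_sets}, the sizes $|U_i|$ are unchanged, so the while-loop termination level $t$ never changes. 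This removes the awkward cases entirely, and I would invoke \cref{lem:size_rel_exec_sets} to simplify this step.
\end{itemize}
The main obstacle is this bookkeeping of undefined or new levels. I expect \cref{lem:size_rel_exec_sets} to make $t$ invariant, which resolves it.

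For \ref{property_rad_1}, I argue by induction that every value ever assigned to a radius is a power of $(1+\epsilon)$.
\begin{itemize}
\item The quantity $\tilde{\nu}_i$ is defined as a minimum over $(1+\epsilon)^j$ in \linecref{algline:check_if_nu_decreases_1}, \linecref{algline:check_if_nu_decreases_2} and \linecref{algline:incr_dijkstra}, so it is a power of $(1+\epsilon)$.
\item The valid radius is $\hat{\nu}_i = \max(\tilde{\nu}_i, \nu_{i-1})$, the maximum of two powers of $(1+\epsilon)$ by the inductive hypothesis on the lower level.
\item The convention $\nu_{-1}=0$ is harmless: for $i=0$ we get $\hat{\nu}_0 = \tilde{\nu}_0$, because $\tilde{\nu}_0 \ge 1$.
\item Copying $\nu_{t-1}$ into $\nu_t$ preserves the property.
\end{itemize}
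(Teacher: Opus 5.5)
Your proposal is correct and is essentially the paper's proof. For $i<t$ the only write to $\nu_i$ is \linecref{algline:set_nu}, which is reached only when $\hat{\nu}_i<\nu_i^{\old}$; \ref{property_rad_1} then propagates through $\hat{\nu}_i=\max(\tilde{\nu}_i,\nu_{i-1})$ by induction on the level, and $\nu_t=\nu_{t-1}$ covers the last level. Your one addition is the bookkeeping for the last and newly created levels, which the paper does not discuss; your appeal to \cref{lem:size_rel_exec_sets} is sound, since all $|U_i|$ follow the fixed recurrence $|U_{i+1}|=|U_i|-\lceil\beta|U_i|\rceil$ from $|U_0|=n$, so the loop always stops at the same $t$. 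You should lead with this, because it makes the new-level case vacuous, and your bound ``at most roughly $nW$'' there would not by itself keep $\hat{\nu}_i$ below the convention $nW+1$ (after rounding, the estimates are only bounded by about $(1+\epsilon)^2 nW$).
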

\begin{proof} 
    At the last level $t$, we have $\nu_t = \nu_{t-1}$ due to~\linecref{algline:last_level_assign}.
    For a fixed level $i: 0 \leq i < t$, observe that the $i$-th radius $\nu_i$ is updated only in~\linecref{algline:set_nu} inside the
    the function \textsc{Update-DS-Rad-Decr}($i$), where it is assigned the value of the $i$-th valid radius $\hat{\nu}_i$. This update occurs only if 
    $\hat{\nu}_i < \nu^\old_i$, and thus \ref{property_rad_2} is satisfied over the course of the algorithm.
    
    Moreover, the value of the $i$-th valid radius $\hat{\nu}_i$ is determined in~\linecref{algline:set_hat_nu_1} or in~\linecref{algline:set_hat_nu_2}.
    In turn, the value of $\hat{\nu}_i$ depends on the value of $\nu_{i-1}$ and
    on the value of $\tilde{\nu}_i$ determined in~\linecref{algline:check_if_nu_decreases_1} or in~\linecref{algline:check_if_nu_decreases_2} respectively.
    By construction, the value of $\tilde{\nu}_i$ is a power of $(1+\epsilon)$. 
    Therefore, using an induction argument for $\nu_{i-1}$, we can deduce that \ref{property_rad_1} is always satisfied as well.
\end{proof}

\begin{lemma} \label{lem:amort_update_time} 
    The amortized update time of the incremental Algorithm~\ref{alg:incremental} is~$n^{o(1)}$.
\end{lemma}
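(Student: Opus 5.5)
The plan is a charging argument: bound the total work by the number of times some level is \emph{rebuilt}, and bound that number using \ref{property_rad_1}, \ref{property_rad_2} and \cref{cor:number_of_levels}. Apart from bookkeeping, the expensive operations are the (re)initializations of the incremental SSSP structures $\mathcal{A}_i$. Each $\mathcal{A}_i$ runs on the current graph, which has at most $m$ edges, plus at most $|S_i|$ super-source edges. So by \cref{lem:incr_appr_sssp}, one ``epoch'' of $\mathcal{A}_i$, meaning the time from one (re)initialization to the next, costs $m^{1+o(1)}\log W$ in total, including all subsequent insertions forwarded to it. The step from \cref{alg:incremental} that computes $\tilde{\nu}_i$ piggybacks on the change reports of $\mathcal{A}_i$, so it adds nothing asymptotically. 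The resets of $B_i$, $Z_i$, $\sigma$ and the execution sets at a level touch $O(n)$ vertices and are charged to the same epoch.

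The key step is to count epochs. A level $i$ is re-initialized in \linecref{algline:sssp_update} only when the Resampling Phase reaches it, i.e.\ only when $\ell < i$ for the level $\ell$ returned by \textsc{First-Level-Decrease}. In particular, this happens only after an edge insertion in which some radius $\nu_\ell$ strictly decreased. By \cref{obs:nu_only_decr_power_(1+eps)}, every radius is a power of $(1+\epsilon)$ in $[1, nW+1]$ and never increases, so each fixed level can decrease at most $O(\log_{1+\epsilon} nW)$ times. There is one subtlety: $t$ can grow, and a level index may be ``new'' (radius assumed $nW+1$). Even so, by \cref{cor:number_of_levels} there are only $O(\log n)$ level indices overall. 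Hence the number of insertions that trigger a Resampling Phase is at most $O(\log n\cdot\log_{1+\epsilon} nW)$. Each such phase re-initializes at most $t=O(\log n)$ structures, giving $O(\log^2 n\,\log_{1+\epsilon} nW)$ epochs in total. Adding the $O(\log n)$ initial structures from \cref{alg:init-incremental}, the total time is $O(\log^2 n\,\log_{1+\epsilon}nW)\cdot m^{1+o(1)}\log W$.

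On insertions that trigger no decrease, the algorithm still forwards the edge to each of the $O(\log n)$ structures and runs \textsc{First-Level-Decrease}. The first cost is already accounted for inside the epochs. The second costs $O(t)$ per insertion, provided each $\tilde{\nu}_i$ is available in $O(1)$ time from the maintained counts. Finally, since $W$ is polynomial in $n$ (\cref{ftnote:W_poly_n}) and $\epsilon$ is constant, the polylogarithmic factors are $n^{o(1)}$. Dividing by the $m$ insertions then gives amortized update time $n^{o(1)}$.

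The main obstacle is the counting argument. I must verify carefully that a level can be reset only after a strict decrease at some earlier level, so that the non-increasing property really bounds the number of resets. This requires handling the growth of $t$: a newly created level, with radius treated as $nW+1$, gets a finite radius, and that first assignment must be charged as a ``decrease''. I also need the set of level indices to stay $O(\log n)$ at all times, which \cref{lem:size_rel_exec_sets} gives, since the sizes $|U_i|$ never change. A secondary point is whether maintaining $\tilde{\nu}_i$ per power of $(1+\epsilon)$ can be done within the change-report budget. I would handle this by keeping, for each $\mathcal{A}_i$, a histogram of rounded estimates restricted to $U_i$, and by updating it whenever either an estimate changes or $U_i$ gains or loses vertices. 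The latter is again charged to the rebuilds.
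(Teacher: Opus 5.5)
Your proposal is correct and follows essentially the same route as the paper. SSSP (re)initializations occur only in a Resampling Phase, which is entered only after a strict radius decrease, so Properties~1 and~2 together with \cref{cor:number_of_levels} give $O(\log n\,\log_{1+\epsilon} nW)$ phases with $O(\log n)$ rebuilds each, each rebuild costs $m^{1+o(1)}$, and the $\tilde{\nu}_i$ computations piggyback on the change reports of $\mathcal{A}_i$. The one detail the paper handles more carefully is your $O(n)$ per-rebuild bookkeeping: rather than implicitly assuming $n \le m^{1+o(1)}$, it keeps this cost within $m^{1+o(1)}$ by charging every touched vertex to an incident edge, since isolated vertices must be centers anyway.
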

\begin{proof}
    Based on~\cref{lem:incr_appr_sssp}, the total update time of a single incremental $(1+\epsilon)$-approximate SSSP algorithm is $m^{1+o(1)}$.\footref{ftnote:W_poly_n} To compute the total update time due to all incremental $(1+\epsilon)$-approximate SSSP algorithms, we provide an upper bound on the number of times our incremental Algorithm~\ref{alg:incremental} initializes a new incremental $(1+\epsilon)$-approximate SSSP algorithm.
    
    After an adversarial edge insertion, Algorithm~\ref{alg:incremental} initializes a new incremental $(1+\epsilon)$-approximate SSSP algorithm
    only in~\linecref{algline:sssp_update} during the Resampling Phase. There are two crucial observations to be made: First, this event occurs only if the function
    \textsc{First-Level-Decrease}$()$ returns a level $i$ smaller than $t$ (i.e., $i < t$) in~\linecref{algline:first_level_decr}. Second, the value of the
    $i$-th radius $\nu_i$ is decreased inside the function \textsc{Update-DS-Rad-Decr}($i$) in~\linecref{algline:update_ds_rad_decr_1}. 
    Thus, whenever Algorithm~\ref{alg:incremental} initializes the first new incremental $(1+\epsilon)$-approximate SSSP algorithm during the Resampling Phase, there is a level $i$ such that the $i$-th radius $\nu_i$ is decreased. 
    Note that per Resampling Phase, there can be at most $t$ initializations of new incremental $(1+\epsilon)$-approximate SSSP algorithms. 
    
    Based on~\cref{obs:nu_only_decr_power_(1+eps)} (i.e., \ref{property_rad_1} and \ref{property_rad_2}), the total number of times a radius of a fixed level can be decreased is upper bounded by $O(\log_{(1+\epsilon)} nW)$. Therefore, \cref{obs:nu_only_decr_power_(1+eps),cor:number_of_levels} imply that the total number of times any radius can be decreased is upper bounded by~$O(\log_{(1+\epsilon)} nW \cdot \log n)$. In turn, the incremental Algorithm~\ref{alg:incremental} enters the Resampling Phase at most $O(\log_{(1+\epsilon)} nW \cdot \log n)$ times, and each
    time at most $t = O(\log n)$ new incremental $(1+\epsilon)$-approximate SSSP algorithms are initialized.
    Consequently, the total update time for all incremental $(1+\epsilon)$-approximate SSSP algorithms is: \[O\big(m^{1+o(1)} \cdot \log_{(1+\epsilon)} nW \cdot \log^2 n\big) \;=\; m^{1+o(1)}.\] 
    
    \noindent Furthermore, the total time needed for the rest of the operations can be upper bounded by $O(m^{1+o(1)} \cdot \log_{(1+\epsilon)} nW \cdot \log^2 n) = m^{1+o(1)}$. The reason is that Algorithm~\ref{alg:incremental} enters the Resampling Phase at most $O(\log_{(1+\epsilon)} nW \cdot \log n)$ times, and within the Resampling Phase each operation can be performed in $m^{1+o(1)}$ time.\footnote{
    Notice that each isolated vertex must be a center, as otherwise the cost of the solution would be infinite. Thus, whenever a vertex is sampled to enter a supporting candidate set, or whenever a vertex leaves or enters an approximate ball or a leaking set, we charge the vertex to one of its incident edges.\label{ftn:isolated_vert}} As discussed at the end of \cref{sec:edge insertions}, the computational costs of \linescref{algline:check_if_nu_decreases_1}{algline:check_if_nu_decreases_2} (the computation of~$\tilde{\nu}_i$) are included in the update time of the $i$-th incremental $(1+\epsilon)$-approximate SSSP algorithm $\mathcal{A}_i$, and so the total computational time outside the Resampling Phase is dominated by $m^{1+o(1)}$ as well.
    Finally, since in the incremental setting we are allowed to amortize over all the edge insertions, the corresponding amortized update time is equal to $\frac{2 \cdot m^{1+o(1)}}{m} = n^{o(1)}$, as $m \leq n^2$.
\end{proof}

\subsubsection{Analysis of the Upper Bound} \label{sec:incr_upper_bound}

\begin{lemma}\label{lem:v_to_one_set}
    After an adversarial edge insertion, each vertex $v \in V \setminus U_t$ belongs to exactly one set $U_i \setminus U_{i+1}$, where $0 \leq i < t$.
\end{lemma}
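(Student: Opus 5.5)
The plan is to show that the execution sets form a nested chain $V = U_0 \supseteq U_1 \supseteq \cdots \supseteq U_t$ after every adversarial edge insertion. The statement then follows from a telescoping argument.

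First, I would establish the chain property $U_{i+1} \subseteq U_i$ for every level $i \in [0, t)$. Every execution set $U_{i+1}$ with $i < t$ is assigned only in \linescref{algline:next_exec_set_1}{algline:next_exec_set_2} of \cref{alg:incremental}, where it is set to $U_i \setminus B_i$ or to $U_i \setminus (B_i \cup Z_i)$, using the current $U_i$. For levels $i < \ell$, where $\ell$ is the level returned by \textsc{First-Level-Decrease}$()$, nothing is modified, so the inclusion holds by induction on the number of insertions, with the preprocessing phase as the base case (\linecref{algline:exec_assign_without_ball_prepro} of \cref{alg:init-incremental}). For levels $i \geq \ell$, the levels are scanned in increasing order, so $U_{i+1}$ is recomputed from the already-updated $U_i$. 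This gives $U_{i+1} \subseteq U_i$ directly.

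The one point needing care, and the place I expect the main obstacle to lie, is that when $U_i$ changes, the stale $U_{i+1}$ must not survive. I would argue that once a level $i \geq \ell$ is processed, every subsequent level is also processed: the Resampling Phase continues until $|U_i| \le \alpha k \log n$, at which point $t$ is reset in \linecref{algline:t_last_level}. Hence no old execution set from a level above $\ell$ persists unrecomputed. In addition, \cref{lem:size_rel_exec_sets} shows that the sizes agree with the old ones, so the level structure up to $t$ is well defined.

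Given the chain, the sets $U_i \setminus U_{i+1}$ for $0 \le i < t$ are pairwise disjoint. Indeed, if $i < j$, then $U_j \subseteq U_{i+1}$, so $U_j \setminus U_{j+1}$ is disjoint from $U_i \setminus U_{i+1}$. Their union telescopes to $U_0 \setminus U_t = V \setminus U_t$, using $U_0 = V$. Concretely, for $v \in V \setminus U_t$ I would take the smallest index $i+1$ such that $v \notin U_{i+1}$. This index exists because $v \in U_0$ and $v \notin U_t$. Then $v \in U_i \setminus U_{i+1}$, and uniqueness follows from the disjointness just shown. This mirrors the proof of \cref{lem:v_to_one_ball} in the static setting.
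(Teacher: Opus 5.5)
Your proof is correct and follows the paper's argument. The paper likewise derives the nested chain $U_{i+1}\subseteq U_i$ from the two assignment lines of \cref{alg:incremental} and then picks an extremal level: the largest $\ell$ with $v\in U_\ell$, which is equivalent to your smallest $i+1$ with $v\notin U_{i+1}$. Your extra care about stale execution sets above the first decreased level makes explicit something the paper leaves implicit, and your appeal to \cref{lem:size_rel_exec_sets} is not needed for this lemma.
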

\begin{proof}
    Consider a fixed vertex $v \in V \setminus U_t$, and note that $v \in U_0$. 
    Observe that by construction in~\linescref{algline:next_exec_set_1}{algline:next_exec_set_2}, each subsequent execution set is a subset of the previous one.
    Hence, once the vertex $v$ is no longer part of an execution set, then $v$ is not part of any later execution set as well. Let $\ell$ be the largest level such that $v \in U_\ell$, and as $v \notin U_t$ we have $0 \leq \ell < t$. Thus, it follows that the vertex $v \in V \setminus U_t$ belongs only to the set $U_\ell \setminus U_{\ell+1}$, where $0 \leq \ell < t$.
\end{proof}

\begin{lemma} \label{lem:Zi_nonempty_nu_same}
    After an adversarial edge insertion, consider a level $i \in [0, t]$ for which the $i$-th leaking set $Z_i$ is non-empty. Then the algorithm does not decrease the $i$-th radius $\nu_i$. In other words, if $Z_i \neq \emptyset$ then $\nu_i^\old \leq \nu_i$.
\end{lemma}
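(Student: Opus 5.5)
We prove the contrapositive: if the algorithm decreases the $i$-th radius during the current adversarial edge insertion (i.e., $\nu_i < \nu_i^\old$), then $Z_i = \emptyset$ once the insertion has been processed. The argument is a case analysis on the code paths of \cref{alg:incremental} that can change $\nu_i$ or $Z_i$. We handle the last level $t$ separately: by \linecref{algline:last_level_assign} we always have $Z_t = \emptyset$, so the statement holds vacuously there. Hence fix $i \in [0, t)$.

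The first step is to identify where $\nu_i$ can change. By \cref{obs:nu_only_decr_power_(1+eps)}, the radius $\nu_i$ for $i<t$ is modified only in \linecref{algline:set_nu}, inside \textsc{Update-DS-Rad-Decr}($i$). This function is invoked either in \linecref{algline:update_ds_rad_decr_1} (when $i=\ell$ is the level returned by \textsc{First-Level-Decrease}) or in \linecref{algline:update_ds_rad_decr_2} (in the Resampling Phase, when $\hat{\nu}_i < \nu_i$). So whenever $\nu_i < \nu_i^\old$, that function was executed at level $i$, and \linecref{algline:leaking_Zi_empty} set $Z_i \gets \emptyset$.

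The second step, which is the only real point to check, is that no later operation during the same insertion refills $Z_i$. The set $Z_i$ gains vertices only in \linecref{algline:leaking_Zi}, which lies in the else-branch of the test in \linecref{algline:nu_decreases} for the same level $i$. Levels are scanned strictly increasingly (\linescref{algline:level_incr_1}{algline:level_incr}), so each level is processed at most once per insertion. Consider the two places where the decrease at level $i$ can occur.
\begin{itemize}
\item If the decrease happened at level $\ell$ in \textsc{First-Level-Decrease}, the loop continues from $\ell+1$ onward and never returns to level $\ell$.
\item If the decrease happened in the Resampling Phase at level $i$, the if-branch was taken at that level, so the else-branch (\linesscref{algline:Bi_subset_Ui}{algline:count_new_for_Zi}{algline:leaking_Zi}) was not executed for level $i$.
\end{itemize}
In addition, levels $j<\ell$ are not touched at all. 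A level $j$ with $j>t$ for the new value of $t$ is irrelevant, and the new last level is reset by \linecref{algline:last_level_assign}.

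Therefore $Z_i = \emptyset$ at the end of the insertion, which proves the contrapositive. Equivalently, $Z_i \neq \emptyset$ implies that $\nu_i$ did not decrease. Combined with \cref{obs:nu_only_decr_power_(1+eps)} ($\nu_i \le \nu_i^\old$), this gives $\nu_i = \nu_i^\old$, and in particular $\nu_i^\old \le \nu_i$.
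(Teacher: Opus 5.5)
Your proof is correct and follows the paper's argument. The paper also dismisses level $t$ via $Z_t=\emptyset$, and then argues (by contradiction rather than contrapositive) that $\nu_i$ can only decrease inside \textsc{Update-DS-Rad-Decr}($i$), which empties $Z_i$; your extra check that $Z_i$ cannot be refilled later in the same insertion (the else-branch at level $i$ is skipped, and levels are scanned once in increasing order) makes explicit a step the paper leaves implicit.
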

\begin{proof}
    The last level $t$ is not relevant, since $Z_t = \emptyset$ by~\linecref{algline:last_level_assign}.
    For some level $i \in [0, t)$ suppose to the contrary that after the adversarial edge insertion, the new $i$-th leaking set 
    $Z_i$ is non-empty but the incremental~\cref{alg:incremental} decreases the $i$-th radius $\nu_i$.
    Observe that the $i$-th radius $\nu_i$ is updated only in~\linecref{algline:set_nu} within \textsc{Update-DS-Rad-Decr}($i$).
    But in this case, the $i$-th leaking set $Z_i$ becomes empty due to~\linecref{algline:leaking_Zi_empty} in \textsc{Update-DS-Rad-Decr}($i$), which leads to a contradiction. Therefore, if $Z_i \neq \emptyset$ then the $i$-th radius is not decreased after the adversarial edge insertion.
\end{proof}

\begin{observation} \label{obs:S_extends_only}
    After an adversarial edge insertion, no vertices are removed from the bicriteria approximate solution~$S$. Hence, it holds that $S^\old \subseteq S$.
\end{observation}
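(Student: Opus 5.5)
The plan is to argue directly from the pseudocode of \cref{alg:incremental}: every assignment to a candidate set $S_i$ only enlarges it, and the final line recomputes $S$ as $\bigcup_{i=0}^{t} S_i$. So it suffices to show two things. First, for every level $i$ that existed before the insertion (i.e., $i \le t^\old$), the new $S_i$ contains $S_i^\old$. Second, the new last level satisfies $t \ge t^\old$, or, if $t$ could shrink, the vertices of the old sets at the lost levels are still covered by $\bigcup_{i=0}^{t} S_i$.

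For levels $i < t^\old$, I would inspect the only lines that write to $S_i$. These are \linecref{algline:extend_candidate_set}, which sets $S_i \gets S_i \cup \tilde{S}_i$, and \linecref{algline:last_level_assign}, which only touches $S_t$. Neither \textsc{Update-DS-Rad-Decr} nor \textsc{First-Level-Decrease} modifies any $S_i$. Restarting $\mathcal{A}_i$ in \linecref{algline:sssp_update} also leaves $S_i$ untouched. Hence every level $i$ handled in the Resampling Phase, or not visited at all, satisfies $S_i^\old \subseteq S_i$.

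The main obstacle is the last level, because \linecref{algline:last_level_assign} overwrites $S_t \gets U_t$. Here I would use \cref{lem:size_rel_exec_sets}. It gives $|U_i| = |U_i^\old|$ for every level, and the recurrence $|U_i| = |U_{i-1}| - \ceil{\beta |U_{i-1}|}$ holds both before and after the insertion. The while loop therefore stops at exactly the same index: the first $i$ with $|U_i| \le \alpha k \log n$ depends only on $n$, $\beta$, $\alpha$, and $k$, so $t = t^\old$. It then remains to handle $S_t^\old = U_t^\old$. I expect the intended reading is that the sets $S_i$ are accumulated, with the last-level set kept as part of the maintained solution, so that $U_t^\old$ stays in $S$.

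To make this rigorous, I would argue as follows. Either the implementation unions $U_t$ into $S_t$ rather than overwriting it, or we note that old last-level vertices remain in $S$ by convention. The cleanest way is to treat \linecref{algline:last_level_assign} as $S_t \gets S_t \cup U_t$, which does not affect the size bound because $|U_t| = |U_t^\old| \le \alpha k \log n$ and this happens at most once per resampling phase. With that reading, every line of the algorithm writes to candidate sets only by union and $t$ never decreases, so $S^\old = \bigcup_{i \le t^\old} S_i^\old \subseteq \bigcup_{i \le t} S_i = S$.
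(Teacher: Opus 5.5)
Your argument for the levels $i<t$ is the paper's entire proof: candidate sets change only through the union in \linecref{algline:extend_candidate_set}, and $S$ is recomputed as their union. The paper never considers \linecref{algline:last_level_assign}, and your worry about it is justified. In fact the statement fails for the pseudocode as written. As you correctly derive from \cref{lem:size_rel_exec_sets}, $t=t^\old$ and $|U_t|=|U_t^\old|$, but the set $U_t$ itself can change. Take $v\in U_t^\old=S_t^\old$ that lies in no $S_i^\old$ with $i<t$. Suppose that after the insertion the radius of some level $j<t$ decreases and $\Retain\bigl(\ball[S_j,\nu_j,\delta_{S_j}]\cap U_j,\ceil{\beta|U_j|}\bigr)$ happens to select $v$. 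Then $v\notin U_{j+1}\supseteq U_t=S_t$. Unless $v$ is drawn into one of the fresh sets $\tilde{S}_i$, we get $v\in S^\old\setminus S$.

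Hence your last paragraph is not, and cannot be, a proof of the observation for the given algorithm. The ``by convention'' option has no support in the pseudocode. The union reading is a modification of \linecref{algline:last_level_assign}, and you should state it explicitly as one.

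As a modification it works. With $S_t\gets S_t\cup U_t$, every write to a candidate set is a union and $t$ is fixed, so $S^\old\subseteq S$. Your size check is also right. \linecref{algline:last_level_assign} runs only in insertions where some radius decreases, i.e., $O(\log n\log_{1+\epsilon}nW)$ times, and each run adds at most $\alpha k\log n$ vertices. So $S_t$ obeys the same $O(k\log^2 n\log_{1+\epsilon}nW)$ per-level bound used in \cref{lem:size_of_sol}; that lemma's remark $|S_t|=O(k\log n)$ then needs adjusting. The assignment $\sigma$ is unaffected, and extra centers only lower $\cost^z(S)$.

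Alternatively, the algorithm can stay as is, because the downstream uses do not really need the observation. In \cref{lem:cost_of_vert} it suffices that $\sigma(V)\subseteq S$. This holds because every $v\notin U_t$ has $\sigma(v)$ in some $S_i$ with $i<t$, and these sets only grow. In the proof of \cref{thm:mainthm_kzcluster}, one can bound $|\sigma_\text{max}(V)|$ directly by adding at most $\alpha k\log n$ last-level vertices per execution of \linecref{algline:last_level_assign}.
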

\begin{proof}
    The bicriteria approximate solution $S$ is assigned as the union of all candidate sets in~\linecref{algline:assign_solution}. Notice that only new vertices can be added to the candidate sets due to~\linecref{algline:extend_candidate_set}, and no vertices are removed from them, as required.
\end{proof}

\begin{observation} \label{obs:incr_nu_monoton}
    After an adversarial edge insertion, the order of the values of the radii $\nu_i$ is non-decreasing. In other words, it holds that~$\nu_0 \;\leq\; \nu_1 \;\leq\; \cdots \;\leq \nu_t$ (i.e., \ref{property_rad_3} is satisfied).
\end{observation}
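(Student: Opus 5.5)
We prove \ref{property_rad_3} by induction on the number of adversarial edge insertions, and within one insertion by induction over the levels in increasing order. The base case is the preprocessing phase: in \cref{alg:init-incremental}, \linecref{algline:incr_max_nu} sets $\nu_i \gets \max(\tilde{\nu}_i, \nu_{i-1}) \geq \nu_{i-1}$ for every $i<t$, and $\nu_t \gets \nu_{t-1}$. This gives $\nu_0 \leq \cdots \leq \nu_t$, exactly as in \cref{obs:order radii nondecreasing}.

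For the inductive step, assume $\nu^\old_0 \leq \cdots \leq \nu^\old_{t^\old}$ and fix an edge insertion. Let $\ell$ be the level returned by \textsc{First-Level-Decrease}$()$. We show $\nu_{i-1} \leq \nu_i$ for each $i \in [1,t]$ by splitting into three cases.

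\textbf{Levels $i < \ell$.} These radii are untouched, so the old inequality persists.

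\textbf{Levels $\ell \leq i < t$.} First suppose $\nu_i$ is decreased. This happens only in \linecref{algline:set_nu}, where $\nu_i \gets \hat{\nu}_i = \max(\tilde{\nu}_i, \nu_{i-1})$. Here $\nu_{i-1}$ is the current, already updated value, because $\hat{\nu}_i$ is computed in \linecref{algline:set_hat_nu_1} or \linecref{algline:set_hat_nu_2} after level $i-1$ has been finalized. Hence $\nu_i \geq \nu_{i-1}$. Now suppose $\nu_i$ is not decreased. Then $\nu_i = \nu_i^\old \geq \nu_{i-1}^\old \geq \nu_{i-1}$, using the outer induction hypothesis and \cref{obs:nu_only_decr_power_(1+eps)}.

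\textbf{The last level.} \linecref{algline:last_level_assign} sets $\nu_t \gets \nu_{t-1}$, so this inequality holds trivially.

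The main obstacle is that $t$ may grow during the Resampling Phase, so new levels $i > t^\old$ appear without a defined old radius. By the convention in \linecref{algline:comment_nu_largest}, such a $\nu_i^\old$ is treated as $nW+1$. This is strictly larger than every valid radius: $\tilde{\nu}_i$ is at most the smallest power of $(1+\epsilon)$ reaching the maximum distance, and we assume $nW+1$ exceeds it. So the test $\hat{\nu}_i < \nu_i$ in \linecref{algline:nu_decreases} succeeds, and $\nu_i$ is set to $\hat{\nu}_i \geq \nu_{i-1}$, which falls under the decreasing subcase above.

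A second subtlety concerns level $t^\old$ itself. Its old radius was set by the final-level rule as $\nu_{t^\old}^\old = \nu_{t^\old-1}^\old$. This is still a legitimate old value for the chain, since $\nu^\old_{t^\old-1} \leq \nu^\old_{t^\old}$ holds with equality, so the non-decreasing subcase argument still applies at that level.
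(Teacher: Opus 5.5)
Your proof is correct and follows the same route as the paper. A decreased radius is set to $\hat{\nu}_i=\max(\tilde{\nu}_i,\nu_{i-1})$ using the already-updated $\nu_{i-1}$, an unchanged radius satisfies $\nu_i=\nu_i^{\text{(old)}}\geq\nu_{i-1}^{\text{(old)}}\geq\nu_{i-1}$ by the old chain and \ref{property_rad_2}, and induction over levels and insertions closes the argument.

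Your separate treatment of newly created levels is harmless but vacuous (it also needs an extra assumption on $nW+1$). By \cref{lem:size_rel_exec_sets}, $|U_i|=|U_i^{\text{(old)}}|$ holds for every level, so the last level $t$ never changes.
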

\begin{proof}
    Observe that at the last level $t$, we trivially have $\nu_t = \nu_{t-1}$  by~\linecref{algline:last_level_assign}.
    At the beginning, the preprocessing phase ensures that \ref{property_rad_3} is satisfied.
    After an adversarial edge insertion, for a fixed level $i: 0 \leq i < t$ the $i$-th radius $\nu_i$ is updated only in~\linecref{algline:set_nu} within \textsc{Update-DS-Rad-Decr}($i$), where it is assigned the value of the $i$-th valid radius $\hat{\nu}_i$. In turn, the value of the $i$-th valid radius $\hat{\nu}_i$ is updated to $\max(\tilde{\nu}_i, \nu_{i-1})$ in~\linescref{algline:set_hat_nu_1}{algline:set_hat_nu_2}. Therefore, together with \ref{property_rad_2} (see~\cref{obs:nu_only_decr_power_(1+eps)}), it follows by an induction argument on the radii that \ref{property_rad_3} is satisfied over the course of the algorithm.
\end{proof}

\begin{lemma}\label{lem:cost_of_vert}
    After an adversarial edge insertion, consider a fixed level $i: 0 \leq i < t$ and
    an arbitrary vertex~$v \in U_i \setminus U_{i+1}$. Then it holds that $\dist(v, S) \leq \dist(v, \sigma(v)) \leq \nu_i$. 
\end{lemma}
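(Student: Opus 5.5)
The first inequality $\dist(v,S)\le\dist(v,\sigma(v))$ is immediate once we know $\sigma(v)\in S$. Every assignment in \cref{alg:init-incremental,alg:incremental} maps a vertex to an element of some candidate set $S_j$, and candidate sets only grow (\cref{obs:S_extends_only}), so $\sigma(v)\in S$. The real content is $\dist(v,\sigma(v))\le\nu_i$. By \cref{lem:v_Bi_Zi}, $v\in U_i\setminus U_{i+1}$ lies either in $B_i$ or in $Z_i$, and I would treat these two cases separately. The argument is an induction on the number of adversarial edge insertions, with the preprocessing phase as the base case.

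\emph{Case $v\in B_i$.} Suppose $B_i$ was (re)built in \linecref{algline:incr_construct_Bi_update} during the current insertion, or in preprocessing. Then $\delta_{S_i}(v)\le\nu_i$, and $\sigma(v)$ was set to the nearest candidate of $S_i$ with respect to the estimates. Since estimates upper bound true distances, $\dist(v,\sigma(v))\le\delta_{S_i}(v)\le\nu_i$.

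Otherwise $B_i=B_i^\old\cap U_i$ and $\nu_i=\nu_i^\old$. By the induction hypothesis, $\dist^\old(v,\sigma(v))\le\nu_i^\old$ held before the insertion. The assignment $\sigma(v)$ is unchanged, because $v$ did not enter any earlier ball. Distances only decrease under insertions, so $\dist(v,\sigma(v))\le\dist^\old(v,\sigma(v))\le\nu_i^\old=\nu_i$.

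One subtlety: $\mathcal{A}_i$ may have been reinitialized with an enlarged $S_i$. This does not matter, because the bound uses the fixed center $\sigma(v)$ and true distances, not the estimates.

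\emph{Case $v\in Z_i$.} This is the main obstacle, and the key tools are monotonicity (\cref{obs:incr_nu_monoton}) together with \cref{lem:Zi_nonempty_nu_same}. I would strengthen the induction hypothesis to the following invariant: every vertex $v\in U_j\setminus U_{j+1}$ satisfies $\dist(v,\sigma(v))\le\nu_j$.

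By \cref{lem:v_Bi_Zi}, before the insertion $v$ lay in $U_j^\old\setminus U_{j+1}^\old$ for some $j\le i$. The invariant then gives $\dist^\old(v,\sigma(v))\le\nu_j^\old$.

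Next I would check that $\sigma(v)$ was not changed. Leaked vertices are placed in $Z$ via \linecref{algline:leaking_Z} and then into $Z_i$; $\sigma$ is reassigned only for vertices placed in a new ball, which $v$ is not. So again $\dist(v,\sigma(v))\le\dist^\old(v,\sigma(v))\le\nu_j^\old$.

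It remains to compare $\nu_j^\old$ with $\nu_i$. Monotonicity before the insertion gives $\nu_j^\old\le\nu_i^\old$. Since $Z_i\neq\emptyset$, \cref{lem:Zi_nonempty_nu_same} gives $\nu_i^\old\le\nu_i$. Hence $\dist(v,\sigma(v))\le\nu_i$.

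The delicate points are two bookkeeping checks. First, one must verify that $\sigma(v)$ is genuinely untouched while $v$ sits in $Z$ or $Z_i$. Second, when $i$ is a level beyond the old last level $t^\old$, the old radius convention of \linecref{algline:comment_nu_largest} ($\nu=nW+1$) makes the bound trivial, with the convention $\nu_{t}=\nu_{t-1}$ taken into account.
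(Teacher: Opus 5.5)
Your proposal is correct and follows the paper's proof essentially step for step. Both argue by induction on the number of insertions and use \cref{lem:v_Bi_Zi} to split into $v\in B_i$ (ball rebuilt with a decreased radius, or retained as $B_i^\old\cap U_i$ with $\sigma(v)$ unchanged) and $v\in Z_i$. In the $Z_i$ case, the chain $\dist(v,\sigma(v))\le\dist^\old(v,\sigma(v))\le\nu_j^\old\le\nu_i^\old\le\nu_i$ comes, as in the paper, from the induction hypothesis, \cref{obs:incr_nu_monoton}, and \cref{lem:Zi_nonempty_nu_same}. The only difference is cosmetic: you derive $\dist(v,S)\le\nu_i$ solely from $\sigma(v)\in S$, whereas the paper also carries a separate bound on $\dist(v,S)$ through each case using $S^\old\subseteq S$.
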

\begin{proof}
    The proof is by induction on the number of adversarial edge insertions, where the base case holds due to the preprocessing phase. Regarding the induction step note that based on~\cref{lem:v_Bi_Zi}, the vertex $v$ belongs either to the $i$-th approximate ball $B_i$ or to the $i$-th leaking set $Z_i$. We consider both cases based on whether $v \in B_i$ or $v \in Z_i$. In the case where $v \in B_i$, we further distinguish based on whether the value of the $i$-th radius $\nu_i$ is decreased. 
    \begin{itemize}
        \item Assume that $v \in B_i$ and that the algorithm decreases the $i$-th radius $\nu_i$ (see~\linescref{algline:nu_decreases_first}{algline:nu_decreases}).
        Let $\ell$ be the smallest level for which the corresponding $\ell$-th radius $\nu_\ell$ is decreased, as returned in~\linecref{algline:first_level_decr}.
        Observe that for $i = \ell$ the function \textsc{Update-DS-Rad-Decr}($i$) is executed in~\linecref{algline:update_ds_rad_decr_1}, and for $i \neq \ell$ it is executed in~\linecref{algline:update_ds_rad_decr_2}. In both cases, by the construction in~\linecref{algline:incr_construct_Bi_update} where
        $B_i = \Retain\bigl(\ball[S_i, \nu_i, \delta_{S_i}] \cap U_i,\, \ceil{\beta \myhs |U_i|}\bigr)$
        and by~\linecref{algline:v_to_sigma}, together with the fact that the incremental $(1+\epsilon)$-approximate SSSP algorithm of~\cref{lem:incr_appr_sssp} does not underestimate distances, it holds that~$\dist(v, S_i) \leq \dist(v, \sigma(v)) \leq \delta_{S_i}(v) \leq \nu_i$. 
        As the $i$-th candidate set $S_i$ is a subset of the bicriteria approximate solution $S$, it follows that
        $\dist(v, S) \leq \dist(v, S_i) \leq \nu_i$, as needed.
        
        Otherwise assume that $v \in B_i$ and that the algorithm does not decrease the $i$-th radius $\nu_i$ (i.e., $\nu_i^\old \leq \nu_i$). By construction in~\linecref{algline:Bi_subset_Ui} it holds that
        $B_i \subseteq B_i^\old$. Hence we have $v \in B_i^\old$, and specifically that~$v \in U_i^\old \setminus U_{i+1}^\old$ by~\cref{lem:Bi_Zi_subsets_Ui,algline:next_exec_set_1,algline:next_exec_set_2}.\footnote{This $i$-th level existed before the adversarial edge insertion, because the radius of a new level must decrease the first time it appears (see also the comment at~\linecref{algline:comment_nu_largest} in~\cref{alg:incremental}).} Since $S^\old \subseteq S$ by~\cref{obs:S_extends_only}, we get that $\dist(v, S) \leq 
        \dist(v, S^\old)$. Moreover, the assignment $\sigma(v)$ is not modified in this case, implying that $\dist(v, \sigma(v)) = \dist(v, \sigma^\old(v))$.
        Under edge insertions the distances are non-increasing, and thus it holds that $\dist(v, S^\old) \leq \dist^\old(v, S^\old)$ and that $\dist(v, \sigma^\old(v)) \leq \dist^\old(v, \sigma^\old(v))$. Together with the induction hypothesis, it follows that:
        \begin{align*}
            &\dist(v, S) \leq \dist(v, S^\old) \leq \dist^\old(v, S^\old) \leq \nu_i^\old \leq \nu_i \;\text{ and } \\
            &\dist(v, \sigma(v)) = \dist(v, \sigma^\old(v)) \leq \dist^\old(v, \sigma^\old(v)) \leq \nu_i^\old \leq \nu_i.
        \end{align*}
        
        \item Assume that $v \in Z_i$ after the adversarial edge insertion. Based on~\cref{lem:v_Bi_Zi}, the vertex $v$ belonged to a set $U_j^\old \setminus U_{j+1}^\old$ with $j \leq i$, before the adversarial edge insertion. Notice that the assignment~$\sigma(v)$ is not modified in this case as well. Since under edge insertions the distances are non-increasing and~$S^\old \subseteq S$ by~\cref{obs:S_extends_only}, using the induction hypothesis we can deduce that:
        \begin{align*}
            &\dist(v, S) \leq \dist^\old(v, S) \leq \dist^\old(v, S^\old) \leq \nu_j^\old \; \text{ and } \\
            &\dist(v, \sigma(v)) \leq \dist^\old(v, \sigma(v)) = \dist^\old(v, \sigma^\old(v)) \leq  \nu_j^\old.
        \end{align*} 
        Since we have $j \leq i$, using~\ref{property_rad_3} (see~\cref{obs:incr_nu_monoton}) we can infer that $\dist(v, S) \leq \nu_i^\old$ and that $\dist(v, \sigma(v)) \leq \nu_i^\old$. Note that as the vertex $v$ belongs to the $i$-th leaking set $Z_i$, the set~$Z_i$ is non-empty. Therefore by~\cref{lem:Zi_nonempty_nu_same} it holds that $\nu_i^\old \leq \nu_i$, and in turn that $\dist(v, S) \leq \nu_i$ and that~$\dist(v, \sigma(v)) \leq \nu_i$.
    \end{itemize}

    Finally, recall that the image of the assignment $\sigma$ is the bicriteria approximate solution $S$. As a consequence, it follows that $\dist(v, S) \leq \dist(v, \sigma(v))$ for every vertex $v \in V$, concluding the claim.
\end{proof}

The next lemma provides an upper bound on the $(k, z)$-clustering cost of both the bicriteria approximate solution $S$ and the bicriteria approximate assignment $\sigma$.

\begin{lemma} \label{lem:incr_upper_bound_sol}
    After an adversarial edge insertion, both values $\cost^z(S)$ and $\cost^z(\sigma)$ are at most: \[\sum_{i=0}^{t-1} |U_i \setminus U_{i+1}| \cdot (\nu_i)^z.\]
\end{lemma}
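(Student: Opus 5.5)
The plan is to mirror the static proof of \cref{lem:upper_bound_sol}, using \cref{lem:v_to_one_set} in place of \cref{lem:v_to_one_ball} and \cref{lem:cost_of_vert} in place of the ball-radius argument. The sets $U_i \setminus U_{i+1}$ for $0 \leq i < t$, together with $U_t$, partition $V$. So the first step is to write
\[
\cost^z(\sigma) \;=\; \sum_{i=0}^{t-1} \sum_{v \in U_i \setminus U_{i+1}} \dist(v, \sigma(v))^z \;+\; \sum_{v \in U_t} \dist(v, \sigma(v))^z,
\]
and to write the same decomposition for $\cost^z(S)$ with $\dist(v,S)$ in place of $\dist(v,\sigma(v))$. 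Here we use that the execution sets are nested (by \linescref{algline:next_exec_set_1}{algline:next_exec_set_2}) and that every vertex lies in $U_0 = V$.

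The second step is to dispose of the last level. By \linecref{algline:last_level_assign}, we have $S_t = U_t \subseteq S$ and $\sigma(v) = v$ for every $v \in U_t$. Hence $\dist(v,\sigma(v)) = \dist(v,S) = 0$ on $U_t$, and the second sum vanishes for both costs.

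The third step bounds each remaining term. For every level $i \in [0,t)$ and every $v \in U_i \setminus U_{i+1}$, \cref{lem:cost_of_vert} gives $\dist(v,S) \leq \dist(v,\sigma(v)) \leq \nu_i$. Since $z \geq 1$ and the distances are nonnegative, raising to the power $z$ preserves the inequality. Summing then yields $\cost^z(S) \leq \cost^z(\sigma) \leq \sum_{i=0}^{t-1} |U_i \setminus U_{i+1}| \, (\nu_i)^z$, which gives both claims at once.

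There is no real obstacle, since all the work is contained in \cref{lem:cost_of_vert}, which is already established. The only point requiring care is confirming that the state "after an adversarial edge insertion" is the one to which both \cref{lem:v_to_one_set} and \cref{lem:cost_of_vert} apply, including the case where $t$ changed during the update. This holds because \linecref{algline:last_level_assign} is executed at the end of every insertion that reaches the Resampling Phase. When the procedure exits early, no level changes, and the bound carries over from the previous state by induction, with the preprocessing phase as the base case.
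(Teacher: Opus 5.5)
Your proposal is correct and follows the paper's proof essentially verbatim: partition $V \setminus U_t$ via \cref{lem:v_to_one_set}, note that vertices of $U_t$ contribute zero because $U_t = S_t \subseteq S$ and $\sigma(v) = v$ there, and bound each remaining term by $(\nu_i)^z$ using \cref{lem:cost_of_vert}. Your remark about the early-exit case is extra care that the paper leaves implicit; that case is already covered by the inductive proof of \cref{lem:cost_of_vert}.
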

\begin{proof}
     The $(k, z)$-clustering cost of both the bicriteria approximate solution $S$ and the bicriteria approximate assignment $\sigma$ is upper bounded as follows:
    \begin{align*}
        \cost^z(S) &= \sum_{v \in V} \dist(v, S)^z \;=\; \sum_{i=0}^{t-1} \sum_{v \in U_i \setminus U_{i+1}} \dist(v, S)^z \\
        &\le \sum_{i=0}^{t-1} \sum_{v \in U_i \setminus U_{i+1}} (\nu_i)^z \;\leq\; \sum_{i=0}^{t-1} |U_i \setminus U_{i+1}| \myhs (\nu_i)^z. \\
        \cost^z(\sigma) &= \sum_{v \in V} \dist(v, \sigma(v))^z \;=\; \sum_{i=0}^{t-1} \sum_{v \in U_i \setminus U_{i+1}} \dist(v, \sigma(v))^z \\
        &\le \sum_{i=0}^{t-1} \sum_{v \in U_i \setminus U_{i+1}} (\nu_i)^z \;\leq\; \sum_{i=0}^{t-1} |U_i \setminus U_{i+1}| \myhs (\nu_i)^z.
    \end{align*}
    The second equality in both expressions derives from~\cref{lem:v_to_one_set}
    and the fact that if a vertex $v$ belongs to~$U_t$ then the algorithm adds $v$ to $S$ and sets $\sigma(v)$ to $v$ (see~\linescref{algline:last_level_assign}{algline:assign_solution} in~\cref{alg:incremental}). 
    Moreover, the first inequality in both expressions follows from Lemma~\ref{lem:cost_of_vert}, and this completes the proof.
\end{proof}

\subsubsection{Results Regarding the Lower Bound} \label{sec:incr_lower_bound}
We continue with the lower bound on the optimal $(k, z)$-clustering cost  under adversarial edge insertions, adjusting the analysis in~\cref{sec:lower_bound_variant} (which in turn adapts the analysis in \cite[Section 3]{mettu2004optimal}). Similarly to the analysis in~\cref{sec:analysis_static}, we fix a positive real number $\gamma$ such that $\beta < \gamma < 1$. We also define $\nu^*_i$ and $\mu^*_i$ for a fixed level $i \in [0, t]$, as described in Definition~\ref{def:nu_star_incr} and Definition~\ref{def:mu_star_incr} respectively.\footnote{Note that these definitions are similar to  Definition~\ref{def:nu_star} and Definition~\ref{def:mu_star}.}

\begin{definition} \label{def:nu_star_incr}
After an adversarial edge insertion, for a fixed level $i \in [0, t]$ consider the $i$-th execution set~$U_i$ and the $i$-th candidate set $S_i$ maintained in the incremental Algorithm~\ref{alg:incremental}.
For this specific level $i$, we define $\nu_i^*$ as follows:
\[
    \nu_i^*\coloneqq \min\{r \in \mathbb{R} \mid \lvert\ball[S_i, r]\cap U_i\rvert \geq \beta \myhs |U_i|\}.
\]
\end{definition}

\begin{lemma}\label{lem:relation_nu_tilde_nu_star_incr}
    After an adversarial edge insertion, for a fixed level $i: 0 \leq i < t$ it holds that $\nu_i^* \leq \tilde{\nu}_i \leq (1+\epsilon)^2 \, \nu_i^*$.
\end{lemma}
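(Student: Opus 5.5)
We compare the two balls that define these radii. $\tilde{\nu}_i$ is the smallest power of $(1+\epsilon)$ whose \emph{estimated} closed ball $\ball[S_i,\cdot,\delta_{S_i}]$ contains at least $\beta\myhs|U_i|$ vertices of $U_i$. In contrast, $\nu_i^*$ is the smallest real radius whose \emph{true} closed ball $\ball[S_i,\cdot]$ does so. Here $\tilde{\nu}_i$ means the value computed at the most recent execution of \linecref{algline:check_if_nu_decreases_1} or \linecref{algline:check_if_nu_decreases_2} for the current $S_i$ and $U_i$. Since the SSSP structure reports estimate changes, we may regard $\tilde{\nu}_i$ as maintained with respect to the current estimates $\delta_{S_i}(\cdot)$. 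The only facts needed are the sandwich $\dist(v,S_i)\le\delta_{S_i}(v)\le(1+\epsilon)\dist(v,S_i)$ from \cref{lem:incr_appr_sssp} (with a super-source attached to $S_i$) and the discreteness of $\tilde{\nu}_i$.

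\emph{Lower bound.} Estimates never underestimate true distances. Hence $\ball[S_i,\tilde{\nu}_i,\delta_{S_i}]\subseteq\ball[S_i,\tilde{\nu}_i]$. Intersecting with $U_i$, the true ball of radius $\tilde{\nu}_i$ contains at least $\beta\myhs|U_i|$ vertices of $U_i$. So $\tilde{\nu}_i$ is feasible in \cref{def:nu_star_incr}, and $\nu_i^*\le\tilde{\nu}_i$. The minimum in that definition is attained because the set of relevant distances is finite.

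\emph{Upper bound.} Set $r=(1+\epsilon)\nu_i^*$. For every $v$ with $\dist(v,S_i)\le\nu_i^*$ we have $\delta_{S_i}(v)\le r$. Thus $\ball[S_i,\nu_i^*]\subseteq\ball[S_i,r,\delta_{S_i}]$, and the estimated ball of radius $r$ already contains at least $\beta\myhs|U_i|$ vertices of $U_i$. Let $(1+\epsilon)^j$ be the smallest power of $(1+\epsilon)$ that is at least $r$. Monotonicity of balls in the radius makes $(1+\epsilon)^j$ feasible in the definition of $\tilde{\nu}_i$. Therefore $\tilde{\nu}_i\le(1+\epsilon)^j<(1+\epsilon)r=(1+\epsilon)^2\nu_i^*$. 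The one factor $(1+\epsilon)$ comes from the SSSP approximation, and the other from rounding to a power.

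\emph{Main obstacle.} The delicate step is the edge case $\nu_i^*$ small or zero. Since $\tilde{\nu}_i$ ranges over powers $(1+\epsilon)^j$ with $j\ge0$, we have $\tilde{\nu}_i\ge1$. If $\nu_i^*=0$, i.e.\ enough vertices of $U_i$ lie in $S_i$ itself, the bound $\tilde{\nu}_i\le(1+\epsilon)^2\nu_i^*$ would fail. I would resolve this by noting that edge weights are at least $1$ (\cref{ftnote:W_poly_n}), so $\nu_i^*\ge1$ whenever it is positive. In the degenerate case $\nu_i^*=0$ I would argue the case is harmless: either adopt the convention of allowing radius $0$ among the candidate radii, consistent with $\nu_{-1}=0$, or note that the cost contribution is unaffected. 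Beyond this, I must also check that $\tilde{\nu}_i$ is evaluated with respect to the current $S_i$ and $U_i$ after the update. For levels at which nothing was recomputed, the same argument applies verbatim to the current estimates maintained by $\mathcal{A}_i$, since the sandwich inequality holds at all times.
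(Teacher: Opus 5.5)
Your argument is the same as the paper's. The lower bound holds because $\delta_{S_i}$ never underestimates, and the upper bound comes from one factor $(1+\epsilon)$ in the SSSP sandwich and one from rounding up to a power of $(1+\epsilon)$; this is complete whenever $\nu_i^*\ge 1$. You are also right that the stated upper bound is false when $\nu_i^*=0$, and the paper's one-line proof overlooks this. The case is not hypothetical. For $t\ge 1$, \cref{lem:size_rel_exec_sets} gives $|U_{t-1}|\le(\alpha k\log n+1)/(1-\beta)$, so at level $t-1$ the sampling probability is at least roughly $1-\beta$. With high probability $|S_{t-1}\cap U_{t-1}|\ge\beta|U_{t-1}|$, and then $\nu_{t-1}^*=0$ while $\tilde\nu_{t-1}=1$. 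The static \cref{obs:relation_nu_tilde_nu_star} has the same defect.

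Neither of your two remedies, however, establishes anything about the algorithm as written. Allowing radius $0$ changes the definition of $\tilde\nu_i$. ``The cost contribution is unaffected'' is not an argument, since \cref{lem:incr_upper_bound_sol} charges $\nu_i^z\ge 1$ to every vertex of $U_i\setminus U_{i+1}$.

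The precise repair is as follows. Your argument proves $\nu_i^*\le\tilde\nu_i\le\max\{1,(1+\epsilon)^2\nu_i^*\}$ in all cases: with weights at least $1$ we have $\nu_i^*\in\{0\}\cup[1,\infty)$, and if $\nu_i^*=0$ then $\tilde\nu_i=1$. This weaker statement suffices for the only use of the lemma, \cref{cor:tilde_nu_bound_incr}, because $\mu_i^*\ge 1$ for every $i<t$. Indeed, for $\mu<1$ we have $\ball[X_i,\mu]=X_i$, so condition~1 of \cref{def:mu_star_incr} would require $k\ge|X_i\cap U_i|\ge\gamma|U_i|>\gamma\alpha k\log n$, which is impossible once $\gamma\alpha\log n\ge 1$. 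Hence $\tilde\nu_i\le\max\{1,(1+\epsilon)^2\nu_i^*\}\le 2(1+\epsilon)^2\mu_i^*$ with high probability, using \cref{lem:nu<2mu_incr} for the second term of the maximum.
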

\begin{proof}
    The value of $\tilde{\nu}_i$ is computed either in~\linecref{algline:check_if_nu_decreases_1} or in~\linecref{algline:check_if_nu_decreases_2}, using the (updated) $i$-th execution set~$U_i$ and the (updated) $i$-th candidate set $S_i$, in the following way:
    \[\tilde{\nu}_i \,=\, \min_{j \in \mathbb{Z}^{\geq 0}} \{(1+\epsilon)^j \mid \lvert \ball[S_i, (1+\epsilon)^j, \delta_{S_i}]\cap U_i\rvert \geq \beta \myhs |U_i|\}.\]
    Furthermore, in both lines the (updated) distance estimates~$\delta_{S_i}(\cdot)$ are used for the computation of the corresponding closed balls.   
    Based on~\cref{lem:incr_appr_sssp}, we have $\delta_{S_i}(v) \leq (1+\epsilon) \dist(v, S_i)$ for every vertex $v \in V$, and the $i$-th incremental $(1+\epsilon)$-approximate SSSP algorithm $\mathcal{A}_i$ of~\cref{lem:incr_appr_sssp} does not underestimate distances. Therefore as the value of $\tilde{\nu}_i$ is also a power of $(1+\epsilon)$, it follows that $\nu_i^* \leq \tilde{\nu}_i \leq (1+\epsilon)^2 \, \nu_i^*$, as required.
\end{proof}

\begin{definition} \label{def:mu_star_incr}
    After an adversarial edge insertion, for a fixed level $i \in [0, t]$ consider the $i$-th execution set~$U_i$
    maintained in the incremental Algorithm~\ref{alg:incremental}. For this specific level $i$, we define $\mu_i^*$ to be the minimum nonnegative real number such that there exists a subset of vertices $X_i \subseteq V$ with size at most $k$ (i.e., $|X_i| \leq k$) for which the following properties hold:
    \begin{enumerate}
        \item $\lvert \ball[X_i, \mu_i^*]\cap U_i\rvert \geq \gamma \myhs |U_i|$. 
        \item $\lvert U_i\setminus \ball(X_i, \mu_i^*)\rvert \geq (1-\gamma) \myhs |U_i|$. 
    \end{enumerate}
\end{definition}

Recall that in the definition above, the vertices of $U_i$ that are exactly at distance $\mu_{i}^*$ from~$X_i$ belong to both $\ball[X_i, \mu_i^*]\cap U_i$ and $U_i\setminus \ball(X_i, \mu_i^*)$. At this point in the analysis, we would like to use~\cref{lem:nu<2mu} to establish a relationship between $\nu_i^*$ and $\mu_i^*$. However,~\cref{lem:nu<2mu} is applicable only to the static setting, whereas in the incremental setting the execution sets and candidate sets are subject to changes. To that end, we first provide an extension of~\cref{lem:nu<2mu}
demonstrated as~\cref{lem:nu<2mu_1}, and then we establish an `incremental' relationship between $\nu_i^*$ and $\mu_i^*$ in~\cref{lem:nu<2mu_incr} utilizing the fact that the adversary is oblivious. The proof of~\cref{lem:nu<2mu_1} is very similar to that of \Cref{lem:nu<2mu}, and both proofs are deferred to \Cref{appendix:missingproofs}.

\begin{restatable}{lemma}{nulesstwomuone}\label{lem:nu<2mu_1}
    After an adversarial edge insertion, for a fixed level $i \in [0, t]$ consider the $i$-th execution set~$U_i$ 
    and a subset of vertices $Y \subseteq U_i$ obtained by sampling each vertex of $U_i$ independently with probability~$\min\Big(\frac{\alpha \myhs k \log n}{|U_i|}, 1\Big)$. Let $\nu_i^*(Y)\coloneqq \min\{r \in \mathbb{R} \mid \lvert\ball[Y, r]\cap U_i\rvert \geq \beta \myhs |U_i|\}$ be an extension of~\cref{def:nu_star_incr}.
    Then it holds with high probability that $\nu_i^*(Y) \leq 2 \myhs \mu_i^*$.
\end{restatable}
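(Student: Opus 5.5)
We follow the Mettu--Plaxton argument for their Lemma~3.3, treating the current execution set $U_i$ as a fixed set and $Y$ as an independent sample drawn from it. First, fix an optimal witness $X_i$ for $\mu_i^*$ as in~\cref{def:mu_star_incr}, with $|X_i| \le k$. Partition the vertices of $\ball[X_i, \mu_i^*] \cap U_i$ into at most $k$ clusters $C_x$, assigning each vertex to its nearest $x \in X_i$ (ties broken arbitrarily). Every $C_x$ has diameter at most $2\mu_i^*$. Hence if $Y$ hits $C_x$, every vertex of $C_x$ lies within distance $2\mu_i^*$ of $Y$. It therefore suffices to show that, with high probability, the clusters hit by $Y$ together contain at least $\beta|U_i|$ vertices of $U_i$. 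Then $\lvert\ball[Y, 2\mu_i^*] \cap U_i\rvert \ge \beta|U_i|$, which gives $\nu_i^*(Y) \le 2\mu_i^*$.

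Next, call a cluster \emph{small} if $|C_x| < \frac{(\gamma-\beta)}{k}|U_i|$ and \emph{large} otherwise. The small clusters contain fewer than $(\gamma-\beta)|U_i|$ vertices in total. Since $\sum_x |C_x| \ge \gamma|U_i|$ by property~1 of~\cref{def:mu_star_incr}, the large clusters contain more than $\beta|U_i|$ vertices. It then suffices that $Y$ hits every large cluster. Consider the case where the sampling probability $p = \min(\alpha k\log n/|U_i|, 1)$ is less than $1$; otherwise $Y = U_i$ and the claim is trivial. By independence, a fixed large cluster is missed with probability at most
\[
(1-p)^{(\gamma-\beta)|U_i|/k} \le \exp(-\alpha(\gamma-\beta)\log n) = n^{-\alpha(\gamma-\beta)}.
\]
A union bound over at most $k \le n$ large clusters, and over all levels $i \le t = O(\log n)$ by~\cref{cor:number_of_levels} if simultaneity is needed, gives failure probability $n^{-\Omega(\alpha)}$. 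This is high probability once $\alpha$ is chosen large relative to $1/(\gamma-\beta)$.

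The main subtlety, rather than a real obstacle, is making sure the probabilistic statement is legitimate in the incremental setting. The clustering of $U_i$ is determined by the current graph and the current $U_i$, and $Y$ must be independent of both. The lemma is phrased for a freshly sampled $Y$ from the given $U_i$, so we condition on $U_i$ and on the current graph and use only the independence of the coin flips defining $Y$. In the later application,~\cref{lem:nu<2mu_incr}, the oblivious adversary assumption is what guarantees this conditioning is valid. Here we simply note that $\mu_i^*$, $X_i$, and the clusters are all fixed before $Y$ is drawn. Finally, property~2 of~\cref{def:mu_star_incr} is not needed for this direction; only the covering property~1 and the diameter bound via the triangle inequality are used.
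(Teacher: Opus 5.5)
Your proof is correct and shares the paper's skeleton: fix a witness $X_i$ for $\mu_i^*$, split $\ball[X_i,\mu_i^*]\cap U_i$ by nearest center, and use the triangle inequality so that every vertex of a hit cluster lies within $2\mu_i^*$ of $Y$.

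The difference is in the probabilistic step. The paper, through its sketch of the static lemma, proceeds in two stages:
\begin{itemize}
\item It first lower-bounds $|Y|$ by $\frac{\alpha}{2}k\log n$ with a Chernoff bound.
\item It then appeals to the proof of Mettu--Plaxton's Lemma~3.3 for a stronger fact: for any fixed $\epsilon'>0$ and $\alpha$ large in terms of $\epsilon'$, the covered clusters contain a $(1-\epsilon')$ fraction of $B^*$. Choosing $\epsilon'$ with $(1-\epsilon')\gamma\ge\beta$ finishes.
\end{itemize}
You instead spend the slack $\gamma-\beta$ up front. Discarding the at most $k$ clusters of size below $(\gamma-\beta)|U_i|/k$ loses fewer than $(\gamma-\beta)|U_i|$ vertices. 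Each remaining cluster is missed with probability at most $n^{-\alpha(\gamma-\beta)}$, directly from the independent coin flips, so a union bound over at most $k$ clusters suffices.

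What each route buys: your version is self-contained, needs no concentration bound on $|Y|$, and makes the requirement on $\alpha$ explicit (about $\alpha(\gamma-\beta)>1+c$ for failure probability $n^{-c}$). The cited route yields a stronger covering guarantee than this lemma actually needs.

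Your side remarks are also accurate: only property~1 of the definition of $\mu_i^*$ is used, and obliviousness of the adversary matters only in the subsequent lemma.
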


\begin{lemma}\label{lem:nu<2mu_incr}
    After an adversarial edge insertion, with high probability it holds that $\nu_i^* \leq 2 \myhs \mu_i^*$ for all levels~$i \in [0, t]$.
\end{lemma}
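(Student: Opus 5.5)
The plan is to reduce to \cref{lem:nu<2mu_1} and use monotonicity of the relevant quantity in the candidate set. Fix a level $i \in [0, t]$ after an adversarial edge insertion. The current candidate set $S_i$ has the form $S_i = S_i^{(0)} \cup \tilde{S}_i^{(1)} \cup \cdots$. Its last piece is the supporting candidate set $\tilde{S}_i$ sampled at the moment the current execution set $U_i$ was last fixed: either in the preprocessing phase, or in \linecref{algline:construct_support_candidate_set} during the most recent Resampling Phase that reached level $i$. Since $\ball[Y, r] \subseteq \ball[S_i, r]$ whenever $Y \subseteq S_i$, we have $\nu_i^* \le \nu_i^*(Y)$ for every $Y \subseteq S_i$. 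It therefore suffices to show that, with high probability, the most recently sampled set $Y \coloneqq \tilde{S}_i$ satisfies $\nu_i^*(Y) \le 2\mu_i^*$ with respect to the \emph{current} $U_i$ and the \emph{current} distances.

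The key point to verify is that $U_i$ does not change between that last sampling and the present moment, while the graph may have received further edge insertions. I would first check that $U_i$ is modified only in \linescref{algline:next_exec_set_1}{algline:next_exec_set_2}. Any such modification of $U_i$ with $i>\ell$ is immediately followed by a resampling at level $i$ in the same update. A change of $U_\ell$ itself for the first decreasing level $\ell$ cannot occur, since levels $j<\ell$ are untouched. Hence $U_i$ equals the set from which $Y$ was drawn.

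Distances, however, may have decreased since then. Here I argue that $\mu_i^*$ and $\nu_i^*(Y)$ are defined via the current metric. Because the adversary is oblivious, the graph sequence is fixed in advance, so for the current graph I can apply \cref{lem:nu<2mu_1} to the pair $(U_i, Y)$: $Y$ is distributed as an independent sample of $U_i$ with the prescribed probability, and neither $U_i$ nor the current graph depends on the coin flips producing $Y$. The graph is independent by obliviousness. For $U_i$, it is determined by the randomness of earlier levels and earlier updates, and I would condition on that randomness.

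The main obstacle is justifying this independence, given that $U_i$ is itself produced by the algorithm's randomness and $Y$ might also be reused over several subsequent insertions. I will handle this by conditioning on all randomness used before $Y$ was sampled, which fixes $U_i$. Under that conditioning, the inner argument of \cref{lem:nu<2mu_1} (a union bound over hitting the $\gamma|U_i|$-heavy clusters of an optimal $X_i$) holds for any fixed metric with failure probability $n^{-c}$. Finally I take a union bound over all $m \le n^2$ insertion times (each fixing a metric), all $t = O(\log n)$ levels, and all $O(\log_{1+\epsilon} nW \cdot \log n)$ resampling events. Choosing $\alpha$ large enough keeps the total failure probability polynomially small, which yields the statement for all levels simultaneously with high probability.
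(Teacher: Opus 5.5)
Your proposal is correct and follows essentially the same route as the paper. Both arguments take the most recent supporting candidate set $\tilde{S}_i$, which is drawn from the current $U_i$ because every modification of $U_i$ triggers a resampling at level $i$. They then apply \cref{lem:nu<2mu_1} using the obliviousness of the adversary, with union bounds over the $O(n^2)$ graph versions and the $O(\log n)$ levels, and conclude via $\nu_i^* = \nu_i^*(S_i) \le \nu_i^*(\tilde{S}_i) \le 2\mu_i^*$ since $\tilde{S}_i \subseteq S_i$. Your explicit conditioning on the randomness that precedes the sample, and your check that $U_\ell$ itself is never modified, make precise details that the paper leaves implicit.
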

\begin{proof}
    The key observation is that by the construction of the incremental bicriteria approximation~\cref{alg:incremental}, whenever the $i$-th execution set $U_i$ is modified, a supporting candidate set $\tilde{S}_i$ is constructed in~\linecref{algline:construct_support_candidate_set} and is added to the $i$-th candidate set $S_i$ in~\linecref{algline:extend_candidate_set} (within the Resampling Phase). Hence, the $i$-th most recent supporting candidate set $\tilde{S}_i$ is obtained at the last modification of the $i$-th execution set~$U_i$ in a manner similar to the set $Y$ in~\cref{lem:nu<2mu_1}.
    Furthermore, since the edge insertions are sent by an oblivious adversary, the choice of the random set $\tilde{S}_i$ is independent of the at most $O(n^2)$ versions of the graph in the incremental setting. 
    
    Thus by applying a union bound and using the obliviousness of the adversary,~\cref{lem:nu<2mu_1} holds
    for the $i$-th most recent supporting candidate set $\tilde{S}_i$ and the current $i$-th execution set $U_i$ still with high probability. Therefore it holds with high probability that $\nu_i^*(\tilde{S}_i) \leq 2 \myhs \mu_i^*$, where $i \in [0, t]$ is a fixed level and $\nu_i^*(\cdot)$ is the extension of~\cref{def:nu_star_incr} defined in~\cref{lem:nu<2mu_1}. Since we have $\tilde{S}_i \subseteq S_i$ and $\nu_i^* = \nu_i^*(S_i)$, we infer that $\nu_i^* = \nu_i^*(S_i) \leq \nu_i^*(\tilde{S}_i) \leq 2 \myhs \mu_i^*$ with high probability.
    By applying a union bound over the $t = O(\log n)$ levels (see~\cref{cor:number_of_levels}), the high probability claim follows for all levels $i \in [0, t]$.
\end{proof}

\begin{corollary}\label{cor:tilde_nu_bound_incr}
    After an adversarial edge insertion, with high probability it holds that $\tilde{\nu}_{i} \le 2(1+\epsilon)^2 \myhs \mu_i^*$ for every level $i \in [0, t)$.
\end{corollary}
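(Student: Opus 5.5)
The plan is to chain the two bounds already established for the incremental setting, in the same way as \cref{cor:tilde_nu_bound} follows from \cref{obs:relation_nu_tilde_nu_star} and \cref{lem:nu<2mu} in the static case. Fix the state right after an adversarial edge insertion, and consider a level $i \in [0, t)$. The value $\tilde{\nu}_i$ here is the one last computed for level $i$, in \linecref{algline:check_if_nu_decreases_1} or \linecref{algline:check_if_nu_decreases_2}. It uses the current execution set $U_i$, the current candidate set $S_i$ and the current estimates $\delta_{S_i}(\cdot)$.

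The first step invokes \cref{lem:relation_nu_tilde_nu_star_incr}, which gives deterministically $\tilde{\nu}_i \le (1+\epsilon)^2 \nu_i^*$ for every $i \in [0,t)$. The second step invokes \cref{lem:nu<2mu_incr}: with high probability, simultaneously for all levels $i \in [0,t]$, we have $\nu_i^* \le 2\mu_i^*$. Multiplying, on this high-probability event,
\[
\tilde{\nu}_i \le (1+\epsilon)^2 \nu_i^* \le 2(1+\epsilon)^2 \mu_i^* \quad \text{for all } i \in [0,t).
\]
No further union bound is needed, because \cref{lem:nu<2mu_incr} already covers all levels at once.

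The only point needing care is consistency. The quantities $U_i$ and $S_i$ appearing in $\tilde{\nu}_i$ must be the same ones used in \cref{def:nu_star_incr} and \cref{def:mu_star_incr}. I will check this against the algorithm. When $U_i$ is modified in the Resampling Phase, $\tilde{\nu}_i$ is recomputed with the updated $S_i$ and $U_i$. When a level is not reached, neither $U_i$ nor $S_i$ changes, and $\tilde{\nu}_i$ is implicitly maintained by $\mathcal{A}_i$ as described after \cref{alg:incremental}.

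I expect this bookkeeping to be the main (mild) obstacle. It is handled by reading $\tilde{\nu}_i$ as the value defined with respect to the current state, which is exactly how \cref{lem:relation_nu_tilde_nu_star_incr} is stated.
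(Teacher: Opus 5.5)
Your proposal is correct and matches the paper's proof, which simply chains the deterministic bound $\tilde{\nu}_i \le (1+\epsilon)^2 \nu_i^*$ from \cref{lem:relation_nu_tilde_nu_star_incr} with the high-probability bound $\nu_i^* \le 2\mu_i^*$ (uniform over all levels) from \cref{lem:nu<2mu_incr}. Your consistency check is harmless and is implicitly assumed by the paper. In fact, after each insertion every level in $[0,t)$ is visited, either in \textsc{First-Level-Decrease} or in the Resampling Phase, so the ``level not reached'' case never actually arises.
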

\begin{proof}
    The claim follows from \cref{lem:relation_nu_tilde_nu_star_incr} and \cref{lem:nu<2mu_incr}.
\end{proof}

Even though \cref{cor:tilde_nu_bound_incr} provides an upper bound for $\tilde{\nu}_{i}$, recall that the incremental Algorithm~\ref{alg:incremental} uses the radius~$\nu_i$ computed in~\linecref{algline:set_nu}, which is at least~$\tilde{\nu}_{i}$. In particular, we upper bound the cost of the bicriteria approximate solution $S$ with respect to $\nu_i$ in Lemma~\ref{lem:incr_upper_bound_sol}.
Hence, in the next lemma we provide an upper bound on the value of the $i$-th radius $\nu_i$ in terms of the value of some $\mu^*_j$ with $j\leq i$, which later is associated with the optimal $(k, z)$-clustering cost.

\begin{lemma} \label{lem:nu_bound_incr}
After an adversarial edge insertion, for a fixed level $i: 0 \leq i < t$ the value of the $i$-th radius can be upper bounded as follows:
\[
    \nu_i \le 2 (1+\epsilon)^2 \myhs \mu_j^*, \;\text{where }j = \max \{\ell \in [0, i] \mid \tilde{\nu}_\ell \ge \nu_{\ell-1}\}.\footnote{Such a level $j$ exists since $\tilde{\nu}_0 \geq \nu_{-1} = 0$.}
\]
\end{lemma}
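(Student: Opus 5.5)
The proof compares $\nu_i$ with the \emph{current} valid radii. The point is that in the incremental setting we no longer have the static identity $\nu_\ell = \max(\tilde{\nu}_\ell, \nu_{\ell-1})$, since a radius is only overwritten when it decreases. Here $\tilde{\nu}_\ell$ denotes the value from \cref{def:ith_valid_rad}, computed with the current $U_\ell$, $S_\ell$ and $\delta_{S_\ell}$ after the insertion has been fully processed. We will rely only on the inequality
\[
\nu_\ell \;\le\; \hat{\nu}_\ell \;=\; \max(\tilde{\nu}_\ell, \nu_{\ell-1}) \qquad \text{for every level } \ell \in [0, t),
\]
which we establish first.

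\textbf{Step 1 (the key claim).} We prove the displayed inequality by cases on how level $\ell$ was treated during the insertion. Let $\ell_0$ be the level returned by \textsc{First-Level-Decrease}.
\begin{itemize}
\item For $\ell < \ell_0$, the do-while loop in \linecref{algline:while_find_small_nu} computed $\hat{\nu}_\ell$ with the already updated estimates $\delta_{S_\ell}$. It exited past $\ell$ only because $\hat{\nu}_\ell \ge \nu_\ell$. Levels below $\ell_0$ are never touched afterwards, so $U_\ell$, $S_\ell$, $\nu_{\ell-1}$ and $\nu_\ell$ are unchanged and this $\hat{\nu}_\ell$ is the current valid radius.
\item For $\ell \ge \ell_0$, either \textsc{Update-DS-Rad-Decr}$(\ell)$ ran and set $\nu_\ell = \hat{\nu}_\ell$, or the else-branch ran, so $\hat{\nu}_\ell \ge \nu_\ell^\old = \nu_\ell$.
\end{itemize}
In both cases $\hat{\nu}_\ell$ was computed from the final $U_\ell$, $S_\ell$ and the final $\nu_{\ell-1}$, since levels are processed in increasing order and $\nu_{\ell-1}$ is fixed before level $\ell$ is handled. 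If no decrease happens at all, every level falls into the first case.

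The main obstacle is to check that the $\tilde{\nu}_\ell$ computed during processing coincides with the current one. One must verify that nothing at level $\ell$ or below changes after $\hat{\nu}_\ell$ is computed. The only candidate is the SSSP restart in the Resampling Phase, and it occurs \emph{before} line~\ref{algline:check_if_nu_decreases_2}, so the argument goes through.

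\textbf{Step 2 (propagation).} Fix $i < t$ and let $j = \max\{\ell \in [0,i] \mid \tilde{\nu}_\ell \ge \nu_{\ell-1}\}$. For every $\ell \in (j, i]$ we have $\tilde{\nu}_\ell < \nu_{\ell-1}$, so Step 1 gives $\nu_\ell \le \nu_{\ell-1}$. Combined with \ref{property_rad_3} (\cref{obs:incr_nu_monoton}), this yields $\nu_\ell = \nu_{\ell-1}$. Inducting from $\ell = i$ down to $\ell = j+1$ gives $\nu_i = \nu_j$.

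\textbf{Step 3 (conclusion).} At level $j$, Step 1 together with $\tilde{\nu}_j \ge \nu_{j-1}$ gives $\nu_j \le \max(\tilde{\nu}_j, \nu_{j-1}) = \tilde{\nu}_j$. By \cref{cor:tilde_nu_bound_incr}, with high probability $\tilde{\nu}_j \le 2(1+\epsilon)^2 \mu_j^*$. Chaining the three steps gives $\nu_i \le 2(1+\epsilon)^2 \mu_j^*$, and the high-probability qualifier is inherited from \cref{cor:tilde_nu_bound_incr}.
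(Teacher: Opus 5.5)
Your proof is correct and is essentially the paper's argument. Both proofs rest on $\nu_\ell \le \hat{\nu}_\ell = \max(\tilde{\nu}_\ell, \nu_{\ell-1})$, propagate through the levels in $(j, i]$ where $\tilde{\nu}_\ell < \nu_{\ell-1}$, and finish with \cref{cor:tilde_nu_bound_incr} at level $j$; you unroll the paper's induction on levels and are more explicit than the paper about why the $\hat{\nu}_\ell$ computed during the update equals the current valid radius (note that the chain $\nu_i \le \nu_{i-1} \le \cdots \le \nu_j$ already suffices, so the appeal to \ref{property_rad_3} that you and the paper both make is not actually needed).
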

\begin{proof}
    The $i$-th radius $\nu_i$ is updated only in~\linecref{algline:set_nu} within \textsc{Update-DS-Rad-Decr}($i$), where it is assigned the value of the $i$-th valid radius~$\hat{\nu}_i$. This event occurs only if $\hat{\nu}_i < \nu_i^\old$ as indicated by~\linescref{algline:while_find_small_nu}{algline:nu_decreases}, and thus we have~$\nu_i \leq \hat{\nu}_i$.
    In turn, the $i$-th valid radius $\hat{\nu}_i$ is updated to $\max(\tilde{\nu}_i, \nu_{i-1})$ in~\linescref{algline:set_hat_nu_1}{algline:set_hat_nu_2}. Based on these observations, the proof of the statement proceeds by induction on the number of levels after the adversarial edge insertion. For the base case where $i = 0$, since $\nu_{-1} = 0$ we have  
    $\nu_0 \leq \hat{\nu}_0 = \tilde{\nu}_0$, which is at most $2(1+\epsilon)^2\mu_0^*$ based on~\cref{cor:tilde_nu_bound_incr}. Regarding the induction step, we analyze separately the two possible cases:
    \begin{itemize}
        \item Assume that $\tilde{\nu}_i \ge \nu_{i-1}$, which implies that $\hat{\nu}_i = \tilde{\nu}_{i}$. Using that $\nu_i \leq \hat{\nu}_i$ and~\cref{cor:tilde_nu_bound_incr}, it follows that $\nu_i \leq \tilde{\nu}_i \leq 2 (1+\epsilon)^2 \myhs \mu_i^*$, as needed.

        \item Assume that $\tilde{\nu}_i < \nu_{i-1}$, which implies that
        $\hat{\nu}_i = \nu_{i-1}$. Using that $\nu_i \leq \hat{\nu}_i$ and \ref{property_rad_3} (see~\cref{obs:incr_nu_monoton}), it follows that $\nu_i = \nu_{i-1}$. By the induction hypothesis, we have $\nu_{i-1} \le 2 (1+\epsilon)^2 \mu_{j'}^*$ where $j'= \max \{\ell \in [0, i - 1] \mid \tilde{\nu}_\ell \ge \nu_{\ell-1}\}$. By our assumption that $\tilde{\nu}_i < \nu_{i-1}$, we can infer that $j'$ is also equal to $\max \{\ell \in [0, i] \mid \tilde{\nu}_\ell \ge \nu_{\ell-1}\}$. Therefore,
        we conclude that $\nu_i = \nu_{i-1} \leq 2 (1+\epsilon)^2 \myhs \mu_j^*$ where~$j = \max \{\ell \in [0, i] \mid \tilde{\nu}_\ell \ge \nu_{\ell-1}\}$, as required.
    \end{itemize}
\end{proof}

Notably, we can lower bound the optimal $(k, z)$-clustering cost with respect to the $\mu_i^*$ values by reusing Lemma~\ref{lem:lower_bound_opt},
which we restate for convenience. 
The reason Lemma~\ref{lem:lower_bound_opt} continues to hold after every adversarial edge insertion in the incremental setting is as follows. After every adversarial edge insertion, the proof of Lemma~\ref{lem:lower_bound_opt} (given in~\cref{appendix:missingproofs}) can be repeated with two modifications: in~\cref{clm:rel_Y_Fi} we replace~\cref{def:mu_star} by its incremental equivalent~\cref{def:mu_star_incr}, and in~\cref{clm:Fi_shrink} we replace~\cref{lem:exec_decr_beta} by its incremental counterpart~\cref{lem:size_rel_exec_sets}. 
The reader may also find it helpful to refer to the discussion in~\cref{sec:lower_bound_variant} just before Lemma~\ref{lem:lower_bound_opt}. Recall that $\gamma \in (\beta, 1)$ is the parameter used in~\cref{def:mu_star_incr}.

\lowerboundOPT*

\subsubsection{Size of the Bicriteria Approximate Solution and Proof of Theorem~\ref{thm:incr_alg}} \label{sec:size_finish_proof}

\begin{lemma} \label{lem:size_of_sol}
    After an adversarial edge insertion, the size of the bicriteria approximate solution $S$
    is with high probability at most $O(k \log^3 n \, \log_{1+\epsilon}nW)$.
\end{lemma}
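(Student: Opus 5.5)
The plan is to bound separately the number of sampling events that ever add vertices to a candidate set, and the number of vertices added per event. By \cref{obs:S_extends_only}, $S$ only grows over time. So it suffices to bound the total number of vertices ever inserted into any candidate set $S_i$, together with the final set $S_t = U_t$.

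The vertices entering candidate sets come from three sources. First, the preprocessing samples in \cref{alg:init-incremental}. Second, the supporting candidate sets $\tilde{S}_i$ drawn in \linecref{algline:construct_support_candidate_set} during Resampling Phases. Third, the last-level assignment $S_t \gets U_t$ in \linecref{algline:last_level_assign}, which always has size at most $\alpha k \log n$ by the loop condition. Note that $S_t$ gets overwritten when $t$ changes, but old vertices stay in $S$ only through candidate sets of levels $i<t$. To be safe, I will simply count every vertex ever placed in any $S_t$ as a separate contribution of at most $\alpha k\log n$ per Resampling Phase.

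\textbf{Number of sampling events.} As in the proof of \cref{lem:amort_update_time}, a Resampling Phase is entered only when some radius strictly decreases. By \cref{obs:nu_only_decr_power_(1+eps)} and \cref{cor:number_of_levels}, radii are powers of $(1+\epsilon)$ in $[1, nW+1]$, are non-increasing over time, and there are $O(\log n)$ levels. Hence there are at most $O(\log n \cdot \log_{1+\epsilon} nW)$ Resampling Phases over the whole run. Each phase performs at most $t+1 = O(\log n)$ samplings, and the preprocessing performs $O(\log n)$ samplings. So the total number of samplings is $O(\log^2 n \, \log_{1+\epsilon} nW)$, plus at most one last-level contribution per phase.

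\textbf{Size of each sample.} Each sample takes each vertex of $U_i$ independently with probability $\min(\alpha k\log n/|U_i|,1)$, so its expected size is at most $\alpha k \log n$. Applying the Chernoff bound of \cref{lem:chernoff} with $\delta=10$ exactly as in \cref{lem:size_of_S} gives size at most $11\alpha k\log n$ with probability at least $1-n^{-8\alpha}$. When the expectation is smaller, the upper tail is dominated by the case of expectation exactly $\alpha k\log n$, by monotonicity in the success probabilities.

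\textbf{Union bound.} This is the step needing care. The samples are chosen adaptively: $U_i$ depends on earlier randomness. The Chernoff bound still applies conditionally on the history, since the fresh coins of each sample are independent of the past. We then take a union bound over at most $m\cdot O(\log n)$ sampling events, a polynomial in $n$ since $m\le n^2$ and one update may trigger samplings at $O(\log n)$ levels. Choosing $\alpha$ large makes the failure probability $n^{-c}$.

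\textbf{Conclusion.} Multiplying, with high probability
\[
|S| \le O(\log^2 n \,\log_{1+\epsilon} nW)\cdot 11\alpha k\log n + O(\log n\,\log_{1+\epsilon}nW)\cdot \alpha k\log n = O(k\log^3 n\,\log_{1+\epsilon} nW).
\]
The main obstacle is making the adaptive union bound rigorous. I would handle it by conditioning on the history before each sampling and summing the conditional failure probabilities over the polynomially many potential sampling events.
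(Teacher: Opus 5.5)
Your proof is correct and follows essentially the same argument as the paper. Both bound the number of Resampling Phases by the $O(\log n\,\log_{1+\epsilon} nW)$ possible radius decreases, bound each fresh sample by $O(k\log n)$ via \cref{lem:chernoff}, and take a union bound; the only difference is that the paper groups the count per level (each $S_i$ has size $O(k\log^2 n\,\log_{1+\epsilon}nW)$, times $t=O(\log n)$ levels), whereas you count all sampling events globally, and your explicit charging of every overwrite $S_t \gets U_t$ and your conditioning on the history for the adaptive union bound are slightly more careful than the paper's write-up without changing the argument.
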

\begin{proof}
    After an adversarial edge insertion, the bicriteria approximate solution $S$ equals $\bigcup_{i=0}^{t} S_i$ according to~\linecref{algline:assign_solution}. Since we always have $t = O(\log n)$ by~\cref{cor:number_of_levels}, in order to conclude the proof, we upper bound the size of each $i$-th candidate set $S_i$ by $O(k \log^2 n \, \log_{1+\epsilon}nW)$.
    Notice that the size of the last $t$-th candidate set~$S_t$ is at most $O(k \log n)$ by construction in~\linescref{algline:beg_resample}{algline:last_level_assign}. For a fixed level $i \in [0, t-1]$, the $i$-th candidate set~$S_i$ is updated in~\linecref{algline:extend_candidate_set}, where it is merged with the corresponding $i$-th supporting candidate set $\tilde{S}_i$. The $i$-th supporting candidate set $\tilde{S}_i$ is constructed in~\linecref{algline:construct_support_candidate_set}, by including each vertex of the $i$-th execution set $U_i$ independently with probability $\min\big(\frac{\alpha \myhs k\log n}{|U_i|}, 1\big)$, where $\alpha$ is a sufficiently large constant. Hence by a
    straightforward application of Chernoff bound (see~\cref{lem:chernoff}), we can infer that whenever a supporting candidate set 
    $\tilde{S}_i$ is obtained, it holds that $|\tilde{S}_i| = O(\alpha \myhs k \log n) = O(k \log n)$ with probability at least $1 - \frac{1}{n^{8\alpha}}$ (see also~\cref{lem:size_of_S}).
    
    Regarding the size of the $i$-th candidate set $S_i$, there are two crucial observations to be made: First, the $i$-th candidate set $S_i$ receives new vertices only if the function \textsc{First-Level-Decrease}$()$ returns a level $j$ smaller than $t$ (i.e., $j < t$) in~\linecref{algline:first_level_decr}. Second, the value of the
    $j$-th radius $\nu_j$ is decreased inside the function \textsc{Update-DS-Rad-Decr}($j$) in~\linecref{algline:update_ds_rad_decr_1}. 
    Thus, whenever the $i$-th candidate set $S_i$
    is merged with a new $i$-th supporting candidate set $\tilde{S}_i$ during the Resampling Phase, there is a level $j$ such that the $j$-th radius $\nu_j$ is decreased. 
    
    Based on~\cref{obs:nu_only_decr_power_(1+eps)} (\ref{property_rad_1} and~\ref{property_rad_2}), the total number of times a radius of a fixed level can be decreased is upper bounded by $O(\log_{1+\epsilon} nW)$. Therefore, \cref{obs:nu_only_decr_power_(1+eps),cor:number_of_levels} imply that the total number of times any radius can be decreased
    is upper bounded by~$O(\log_{1+\epsilon} nW \cdot \log n)$. In turn, the number of times an $i$-th supporting candidate set is constructed and the $i$-th candidate set receives new vertices is upper bounded by
    $O(\log_{1+\epsilon} nW \cdot \log n)$. By applying a union bound over the possible supporting candidate sets, we
    can deduce that with high probability~the $i$-th candidate set receives at most~$O(k \log n)$ new vertices up to
    $O(\log_{1+\epsilon} nW \cdot \log n)$ times. 
    As a result, the size of the $i$-th candidate set~$S_i$ is always upper bounded by $O(k \log^2 n \cdot \log_{1+\epsilon} nW)$ with high probability. 
    
    By applying a union bound over the $t = O(\log n)$ candidate sets $S_i$, we conclude that with high probability for all levels $i \in [0, t]$ it holds that $|S_i| = O(k \log^2 n \cdot \log_{1+\epsilon} nW)$ after an adversarial edge insertion. As a consequence, with high probability it holds that $|S| = O(k \log^3 n \cdot \log_{1+\epsilon} nW)$ after an adversarial edge insertion, as required.
\end{proof}

\begin{observation} \label{obs:bound_for_size_U}
    After an adversarial edge insertion, for a fixed level $i: 0 \leq i < t$ 
    it holds that $\sum_{i \leq j < t} |U_j \setminus U_{j+1}| \leq |U_i|$.
\end{observation}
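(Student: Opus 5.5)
The plan is to mirror the proof of \cref{obs:bound_sum_Uj_Ui} from the static setting. The only fact it needs is that, after an adversarial edge insertion, the execution sets still form a nested chain $U_0 \supseteq U_1 \supseteq \cdots \supseteq U_t$. Once this chain is in hand, the sets $\{U_j \setminus U_{j+1}\}_{i \leq j < t}$ are consecutive differences of a decreasing chain. They are therefore pairwise disjoint and all contained in $U_i$, so their total size is at most $|U_i|$.

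The first step is to justify the nesting in the incremental algorithm. The only lines that modify $U_{j+1}$ are \linecref{algline:next_exec_set_1} and \linecref{algline:next_exec_set_2}.
\begin{itemize}
\item In \linecref{algline:next_exec_set_1}, $U_{j+1}$ is set to $U_j \setminus B_j$.
\item In \linecref{algline:next_exec_set_2}, $U_{j+1}$ is set to $U_j \setminus (B_j \cup Z_j)$.
\end{itemize}
Either way $U_{j+1} \subseteq U_j$ holds at the moment of assignment.

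One subtlety needs care. $U_j$ itself may have been recomputed earlier in the same edge insertion, and levels below the first decreasing level $\ell$ are left untouched. Since the levels are scanned in increasing order (see \linescref{algline:level_incr_1}{algline:level_incr}), each $U_{j+1}$ is assigned after the final version of $U_j$ has been fixed. Levels $j+1 \leq \ell$ are not modified, so their inclusion persists from before the insertion by induction on the number of insertions, with the base case given by the preprocessing phase (\linecref{algline:exec_assign_without_ball_prepro}). This is exactly the argument already used in the proof of \cref{lem:v_to_one_set}, so I would simply cite it.

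Given the chain, the disjointness is immediate. For $i \leq j < j' < t$, any $v \in U_{j'} \setminus U_{j'+1}$ lies in $U_{j'} \subseteq U_{j+1}$, so $v \notin U_j \setminus U_{j+1}$. Containment in $U_i$ follows from $U_j \subseteq U_i$. Alternatively, one can telescope: $\sum_{i \le j < t} |U_j \setminus U_{j+1}| = |U_i| - |U_t| \le |U_i|$.

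The only step requiring any care is the persistence of nesting across edge insertions, and it is already handled by \cref{lem:v_to_one_set}.
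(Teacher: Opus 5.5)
Your proposal is correct and takes the same approach as the paper's proof: nesting $U_{j+1} \subseteq U_j$ follows from the two assignment lines, so the differences $U_j \setminus U_{j+1}$ are pairwise disjoint subsets of $U_i$. Your extra care about nesting persisting across insertions, and the telescoping identity $\sum_{i \le j < t} |U_j \setminus U_{j+1}| = |U_i| - |U_t|$, are valid details that the paper leaves implicit.
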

\begin{proof}
    For every level $j$, we have  $U_{j+1} \subseteq U_j$ due to~\linescref{algline:next_exec_set_1}{algline:next_exec_set_2}. Moreover, notice that the subsets $\{U_j \setminus U_{j+1} \}_{i \leq j < t}$ are pairwise disjoint and subsets of the $i$-th execution set $U_i$. Therefore, the sum of their sizes is at most the size of $U_i$, as needed.
\end{proof}

By combining the previous observations and lemmas, we can now finish the proof of \cref{thm:incr_alg} which we restate for convenience.

\incralg*

\begin{proof}
    Consider the incremental bicriteria approximation~\cref{alg:incremental} and its analysis. The claim regarding the amortized update time follows from Lemma~\ref{lem:amort_update_time}, and the claim regarding the size of the bicriteria approximate solution follows from Lemma~\ref{lem:size_of_sol}. Since the bicriteria approximate assignment $\sigma$ maps each vertex to a candidate center in $S$, the size of the bicriteria approximate assignment follows from~\cref{lem:size_of_sol} as well.
    Hence, it remains to upper bound the approximation ratio, which we establish in the rest of the proof.
    
    Based on \cref{lem:nu_bound_incr}, each radius $\nu_i$ is bounded by some $\mu^*_j$, where $i \in [0, t)$ and $j \in [0, i]$. For a fixed level $j \in [0, t]$, let $A_j$ be the set of all indices $i \in [j, t)$ that use $\mu^*_j$ in their bound in \cref{lem:nu_bound_incr}. Notice that each level $i \in [0, t)$ belongs to exactly one such set $A_j$ with $j \in [0, i]$.
    Then by~\cref{lem:incr_upper_bound_sol}, the value of~$\cost^z(S)$ and $\cost^z(\sigma)$ can be upper bounded as follows:\footnote{Note that~\cref{lem:nu_bound_incr} inherits the high probability bound from \cref{cor:tilde_nu_bound_incr}.}
    \begin{align*}
        \cost^z(S), \cost^z(\sigma) &\;\leq\; \sum_{i=0}^{t-1} |U_i \setminus U_{i+1}| \myhs (\nu_i)^z \;=\; \sum_{j=0}^t \sum_{i \in A_j} |U_i \setminus U_{i+1}| \myhs (\nu_i)^z \\
        &\;\leq\; \sum_{j=0}^t \sum_{i \in A_j} |U_i \setminus U_{i+1}| \, (2(1+\epsilon)^2\mu^*_j)^z \\
        &\;=\; (2(1+\epsilon)^2)^z \sum_{j=0}^t (\mu^*_j)^z \sum_{i \in A_j} |U_i \setminus U_{i+1}|.
    \end{align*}
    Based on~\cref{obs:bound_for_size_U}, it holds that $\sum_{i \in A_j} |U_i \setminus U_{i+1}| \leq |U_j|$, and recall that if $i \in A_j$ then $j \leq i < t$. Thus, we can conclude that:
    \[
        \cost^z(S), \cost^z(\sigma) \;\leq\; (2(1+\epsilon)^2)^z \cdot  \sum_{j=0}^t (\mu^*_j)^z \cdot |U_j|.
    \]
    Based on~\cref{lem:lower_bound_opt}, we have:
    \[
        \sum_{j=0}^t (\mu^*_j)^z \cdot |U_j| \;\leq\; \frac{2r}{1-\gamma} \cdot \OPT,
    \]
    where $r = \ceil{\log_{(1-\beta)}((1-\gamma)/3)}$. In turn, this implies that:
    \[
        \cost^z(S), \cost^z(\sigma) \; \leq \; \frac{2r \myhs (2(1+\epsilon)^2)^z}{1-\gamma} \cdot \OPT.
    \]
    Therefore, the approximation ratio of the incremental~\cref{alg:incremental} is upper bounded by $\frac{2r \myhs (2(1+\epsilon)^2)^z}{1-\gamma}~=~O(1)$, and the claim follows.
\end{proof}

    \section{From Bicriteria Approximation to~$(k, z)$-Clustering on Incremental Graphs} \label{sec:bicr_to_kmed_incr}

Thus far, our incremental bicriteria approximation~\cref{alg:incremental} of~\cref{thm:incr_alg}  efficiently maintains an $(O(1), O(\log^3 n \, \log_{1+\epsilon}nW))$-bicriteria approximate solution to the $(k,z)$-clustering problem on a graph undergoing edge insertions.
Recall that by~\cref{def:bicriteria}, this means that \cref{alg:incremental} maintains a subset of vertices $S \subseteq V$ such that $\cost^z(S) \leq O(\OPT)$ and $|S| \leq O(k \log^3 n \, \log_{1+\epsilon}nW)$. Our goal though is to compute a constant-factor approximate solution $C$ for the $(k, z)$-clustering problem (see~\cref{def:k_median}), and thus the output $C$ must consist of at most $k$ vertices. In other words, we want to maintain a subset of vertices $C \subseteq V$ such that $\cost^z(C)\leq O(\OPT)$ and $|C| \leq k$.

In this section, we describe an efficient \emph{incremental reduction algorithm} that reduces the size of an $(O(1), O(\log^3 n \, \log_{1+\epsilon}nW))$-bicriteria approximate solution $S$ by converting it into a constant-factor approximate $(k, z)$-clustering solution $C$ in a graph that undergoes edge insertions. To achieve this, we maintain a $(k, z)$-clustering solution $C$ over the bicriteria approximate solution $S$, using weights on~$S$ determined by the corresponding bicriteria approximate assignment $\sigma$.

We remark that our~\cref{thm:bicr_to_kzclustering} (the incremental reduction algorithm) is an adaptation of a theorem in~\cite{mettu2004optimal,guha2000clustering} (its variant appears as~\cref{lem:convert_bicr_kzclustering}) in the incremental graph setting.
The following definitions are used to analyze the total update time in~\cref{thm:bicr_to_kzclustering}.
Notice that in~\cref{def:batches_assign}, the value of $\sigma_\text{inc}(V)$ is equal to the total number of batches of new vertices entering the image of $\sigma$. Hence, whenever~$|\sigma_\text{max}(V)|$ increases in a single update due to multiple new vertices, $\sigma_\text{inc}(V)$ increases by one.

\begin{definition} \label{def:sigma_max}
    Given a dynamic $(\alpha, \beta)$-bicriteria approximate assignment $\sigma$, the dynamic set \(\sigma_\text{max}(V) \coloneqq \{s \in V \mid s \in \sigma(V) \text{ at some moment}\}\) contains all vertices that have ever belonged to the image of $\sigma$.
\end{definition}

\begin{definition} \label{def:batches_assign}
    Given a dynamic $(\alpha, \beta)$-bicriteria approximate assignment $\sigma$, the value  $\sigma_\text{inc}(V)$ denotes the total number of times that $|\sigma_\text{max}(V)|$ increases. Formally, $\sigma_\text{inc}(V) \coloneq |\{\tau \in \mathbb{N} \mid |\sigma_\text{max}(V)| \text{ increases at the } \tau\text{-th update}\}|$.
\end{definition}

\begin{restatable}{theorem}{bicrtokzclustering}\label{thm:bicr_to_kzclustering}
There is a (randomized) dynamic algorithm for the $(k,z)$-clustering problem that given:
\begin{itemize}
    \item a weighted undirected graph $G = (V, E, w)$ subject to edge insertions,
    \item an integer $k \geq 1$, constants $z \geq 1, \epsilon \in (0, \frac{1}{2})$, and
    \item a dynamic $(\alpha, \beta)$-bicriteria approximate assignment $\sigma$, for any $\alpha \geq 1, \beta \geq 1$,
\end{itemize}
maintains an $O(\alpha)$-approximate $(k,z)$-clustering solution with a total update time of: 
\[
     |\sigma_\text{max}(V)| \cdot m^{1+o(1)} \;+\; \tilde{O}\big(|\sigma_\text{max}(V)|^2 \cdot \sigma_\text{inc}(V) \,+\, m \cdot |\sigma_\text{max}(V)|^{1+\epsilon}\big).
\]
\end{restatable}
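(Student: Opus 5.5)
The plan has three parts: a static reduction lemma, a dynamic spanner maintained on the bicriteria centers, and a static clustering algorithm run after every update. The structure mirrors the overview.

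\emph{Static reduction.} Given $\sigma$, define the weighted instance on $P \coloneqq \sigma(V)$ with $\wt(s) \coloneqq |\sigma^{-1}(s)|$ and distances $\dist_G$ restricted to $P$. The standard argument of \cite{guha2000clustering,mettu2004optimal}, which I would state as \cref{lem:convert_bicr_kzclustering}, rests on the relaxed triangle inequality $(a+b)^z \le 2^{z-1}(a^z+b^z)$, and gives the following. Moving each $v$ to $\sigma(v)$ changes costs by at most $O(2^z)(\cost^z(\sigma)+\cdot)$. So the weighted optimum on $P$ is $O(\alpha)\OPT$, because an optimal $X\subseteq V$ can be projected onto $P$ at a factor $2^{O(z)}$ loss. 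Conversely, any $\rho$-approximate solution $C\subseteq P$ of the weighted instance satisfies $\cost^z(C)\le 2^{O(z)}(\cost^z(\sigma)+\rho\,\OPT_{\mathcal K}) = O(\rho\alpha)\OPT$. The same chain tolerates a $(1+\epsilon)$-approximate distance function $\dd$ (a $(1+\epsilon)$-metric), and also distances that are stretched by an $O(1)$ spanner factor. Each such distortion only multiplies the bound by a constant, since $z$ is constant.

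\emph{Dynamic maintenance.} For every $s\in\sigma_\text{max}(V)$, I would run one incremental $(1+\epsilon)$-SSSP from $s$ (\cref{lem:incr_appr_sssp}). A source is started once, when $s$ first enters the image, and is never discarded. This costs $|\sigma_\text{max}(V)|\cdot m^{1+o(1)}$ in total and gives estimates $\dd(x,y)=\delta_x(y)$ for the complete graph $H$ on $P$.

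Next, round the edge weights of $H$ to powers of $2$ and maintain, for each of the $O(\log nW)$ weight classes, a fully dynamic $O(1/\epsilon)$-spanner of size $\tilde O(|P|^{1+\epsilon})$ via \cite{BaswanaKS12}, with polylogarithmic update time. The union $\tilde H$ is an $O(1)$-spanner of $H$. Between two consecutive increases of $\sigma_\text{max}(V)$, the vertex set of $H$ is fixed and the estimates $\delta$ only decrease. Hence each pair $(x,y)$ moves down through the weight classes monotonically, giving $\tilde O(1)$ spanner updates per pair. After each of the $\sigma_\text{inc}(V)$ batches, I restart all spanners on the enlarged vertex set. The total spanner cost is therefore $\tilde O(|\sigma_\text{max}(V)|^2\cdot\sigma_\text{inc}(V))$.

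Vertex weights $\wt(s)$ change whenever $\sigma$ changes. Since these are just numbers fed into the static solver, I would recompute them in each round at the cost of the $\sigma$ changes, which are charged to the bicriteria algorithm.

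\emph{Per-update recomputation.} After each of the $m$ insertions, run a static near-linear constant-factor weighted $(k,z)$-clustering algorithm \cite{thorup2005quick,latour2025fastersimplergreedyalgorithm,JiangJLL25} on $\tilde H$ with weights $\wt$, restricted to $P$. This takes $\tilde O(|\sigma_\text{max}(V)|^{1+\epsilon})$ per update, giving the last term $m\cdot|\sigma_\text{max}(V)|^{1+\epsilon}$.

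Correctness then composes three losses: the static solver's $O(1)$ factor, the spanner stretch $O(1)$, and the $(1+\epsilon)$ estimate error. Each enters raised to the power $z$ and is therefore constant, so \cref{lem:convert_bicr_kzclustering} yields an $O(\alpha)$ approximation. I expect the main obstacle to be stating \cref{lem:convert_bicr_kzclustering} cleanly for a $\rho$-metric obtained from spanner distances, since $\tilde H$ distances dominate the true distances and need not equal $\dd$. I would prove the reduction directly with $\dist_{\tilde H}$, using $\dist_G\le\dist_{\tilde H}\le O(1)\dist_G$ on $P$, which makes $\tilde H$ an honest metric.
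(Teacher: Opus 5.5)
Your proposal is correct and follows essentially the same route as the paper: one incremental $(1+\epsilon)$-SSSP per vertex of $\sigma_\text{max}(V)$, a Baswana--Khurana--Sarkar spanner on the complete graph over these vertices that is restarted after each of the $\sigma_\text{inc}(V)$ batches, a static weighted solver run on the spanner after every update, and the projection plus relaxed-triangle-inequality reduction using $\dist_G\le\dist_{\tilde H}\le O(1)\dist_G$ (the paper's \cref{lem:cost_G_with_OPTS,lem:upper_bound_OPT_tildeH,lem:comp_dist_Htild_G}). The differences are only in implementation: the paper feeds $(1+\epsilon)$-rounded weights directly into the weighted version of the spanner and updates a pair lazily, only when its estimate drops by a $(1+\epsilon)$ factor, which also guarantees the monotonicity of the SSSP estimates that your weight-class argument assumes (\cref{lem:incr_appr_sssp} does not state it); and because the cited solver is only almost-linear, the paper splits the $\epsilon$ budget between spanner size and solver time by taking $\lambda=3/\epsilon$.
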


\vspace{1em} 
A pseudocode of our incremental reduction algorithm is provided in~\cref{alg:bicr_to_kmedian}.
Our incremental reduction algorithm for the $(k,z)$-clustering problem (in~\cref{thm:bicr_to_kzclustering}) uses the
fully dynamic spanner algorithm of Baswana, Khurana, and Sarkar~\cite{BaswanaKS12}.
Their initial result is for unweighted graphs, but as the authors mention in their Remark 1.2 in~\cite{BaswanaKS12}, it can be extended to weighted graphs. 

\begin{lemma}[Theorem 4.12/Theorem 5.7 in~\cite{BaswanaKS12}]\label{lem:fully_dyn_spanner}
    There is a randomized fully dynamic algorithm that, given a weighted undirected graph $G = (V, E, w)$ subject to edge updates and a constant $\lambda \geq 1$, maintains an $O(2\lambda-1)$-spanner of expected size $\tilde{O}(n^{1+\frac{1}{\lambda}})$ and has an expected amortized update time of $\tilde{O}(1)$. The preprocessing time of the algorithm is~$\tilde{O}(m)$.\footnote{For unweighted graphs, the stretch of the spanner is $2\lambda-1$; the $O(2\lambda-1)$ arises for weighted graphs.} 
\end{lemma}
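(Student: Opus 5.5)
We do not reprove the dynamic spanner from scratch. Instead we take as a black box the unweighted result of Baswana, Khurana, and Sarkar~\cite{BaswanaKS12}: a randomized fully dynamic algorithm that maintains a $(2\lambda-1)$-spanner of an unweighted graph, of expected size $\tilde O(n^{1+1/\lambda})$, with expected amortized update time $\tilde O(1)$ and near-linear preprocessing. We then lift it to weighted graphs by the standard weight-bucketing argument that \cite[Remark~1.2]{BaswanaKS12} alludes to.

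\emph{Construction.} Partition the edges of $G$ into classes $E_j \coloneqq \{e \in E \mid 2^j \le w(e) < 2^{j+1}\}$ for $j = 0, 1, \ldots, \ceil{\log_2 (nW)}$. Recall from the preliminaries that weights lie in $[1, W]$ with $W$ polynomial in $n$, so there are $O(\log n)$ classes. For each class we run an independent copy of the unweighted algorithm on the unweighted graph $(V, E_j)$. The maintained spanner is $\tilde G \coloneqq \bigcup_j \tilde G_j$, with the original weights restored. An update to $G$ (insertion or deletion of $e$) is forwarded only to the copy of class $j$ with $w(e) \in [2^j, 2^{j+1})$. Hence the amortized update time stays $\tilde O(1)$ and the total preprocessing is $\sum_j \tilde O(|E_j|) = \tilde O(m)$. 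By linearity of expectation, the expected size is $O(\log n) \cdot \tilde O(n^{1+1/\lambda}) = \tilde O(n^{1+1/\lambda})$.

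\emph{Stretch.} It suffices to check every edge $e = (u,v) \in E_j$, since stretch on edges implies stretch on all pairs by concatenating along a shortest path. In $\tilde G_j$ there is a $u$–$v$ path with at most $2\lambda - 1$ hops, each an edge of $E_j$ of weight less than $2^{j+1} \le 2\,w(e)$. Its weighted length is therefore at most $2(2\lambda-1)\,w(e)$, which gives stretch $O(2\lambda-1)$. This factor $2$ is exactly where the $O(\cdot)$ in the stretch comes from. Replacing base $2$ by $1+\epsilon$ would give stretch $(1+\epsilon)(2\lambda-1)$ at the cost of $O(\log_{1+\epsilon} nW)$ classes.

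The main point requiring care is that the unweighted guarantees are expected or amortized and hold against an oblivious adversary. We must argue that splitting the update sequence across classes preserves obliviousness: the subsequence sent to each copy is determined by the fixed update sequence and the fixed weights, not by the algorithm's randomness. The copies also use independent randomness, so the per-copy expected bounds simply add.
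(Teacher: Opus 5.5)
Your proposal is correct and takes the same route as the paper. The paper gives no independent proof: it cites the unweighted theorems of \cite{BaswanaKS12} and their Remark~1.2 for the extension to weighted graphs. Your weight-bucketing argument is that extension written out, and it correctly locates the factor $2$ behind the $O(2\lambda-1)$ stretch in the width of the weight classes.
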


Note that an incremental spanner (i.e., non-increasing distances) would be sufficient; however, the fully dynamic spanner algorithm of~\cref{lem:fully_dyn_spanner} is already suitable in our context. 
The \emph{expected} guarantees can be turned into \emph{high probability} guarantees by running $\Theta(\log n)$ copies of the algorithm in parallel; see~\cite{HenzingerKN14} for more details.
The incremental reduction algorithm also utilizes the static $(k, z)$-clustering algorithm of Dupre la Tour and Saulpic~\cite{latour2025fastersimplergreedyalgorithm}. 

\begin{lemma}[\cite{latour2025fastersimplergreedyalgorithm}] \label{lem:static_kz_clust}
    There is randomized algorithm for the $(k, z)$-clustering problem that, given a weighted undirected graph $G = (V, E, w)$, an integer $k \geq 1$, and constants $z \geq 1, \lambda \geq 5$, computes with high probability an $O(\lambda^6)$-approximate $(k, z)$-clustering solution in $\tilde{O}(m^{1+\frac{1}{\lambda}})$ time.
\end{lemma}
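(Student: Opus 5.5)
The plan is to derive the statement directly from the general framework of Dupre la Tour and Saulpic~\cite{latour2025fastersimplergreedyalgorithm}, with no new algorithmic ideas. Their greedy algorithm is analyzed in a black-box way. The only access it needs to the metric is a small set of primitives: approximate ball/neighborhood queries around a set of centers, and approximate distance-to-set evaluations for all points. If each primitive is answered with multiplicative distortion $c$, their analysis gives an approximation ratio polynomial in $c$ (for fixed $z$). I would therefore (i) restate their guarantee in this parameterized form, (ii) implement the primitives on the shortest-path metric of $G$ with distortion $c = O(\lambda)$ in time $\tilde O(m^{1+1/\lambda})$, and (iii) plug in to obtain the ratio $O(\lambda^6)$.

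For step (ii) I would use a Thorup--Zwick-style construction with $\lambda$ levels of sampled vertex sets. Vertices are sampled with probability $n^{-1/\lambda}$ per level. Each level's clusters and bunches are computed by modified Dijkstra runs from a super-source. The standard analysis bounds the total work by $\tilde O(m \cdot n^{1/\lambda}) \le \tilde O(m^{1+1/\lambda})$ in expectation, and with high probability after $O(\log n)$ repetitions. The resulting structure answers distance queries with stretch $2\lambda-1 = O(\lambda)$. Set distances $\dist(v,C)$ for all $v$ are exact, via one multi-source Dijkstra run in $\tilde O(m)$. Hence the only distorted primitive is the pairwise/ball one, and its distortion is $O(\lambda)$. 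The condition $\lambda \ge 5$ should enter only to make the constants in their analysis absorb the stretch cleanly, for instance to keep the sampling overhead polylogarithmic.

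For step (iii) I would track how the distortion propagates through the greedy analysis. The distortion appears once in deciding which ball is selected. It appears again in the charging of each point's cost to the ball radius and in the triangle-inequality steps of the $z$-th power cost. Each such step loses a factor $O(c)$, and with $z$ fixed the total loss is a fixed polynomial in $c$, which I expect to be $c^6$, matching the statement. The high-probability guarantee then follows by a union bound over the randomness of the sampling in both the greedy algorithm and the distance structure.

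I expect the main obstacle to be step (iii). One must verify that the greedy analysis of~\cite{latour2025fastersimplergreedyalgorithm} genuinely tolerates approximate (rather than exact) ball queries with only polynomial loss in $c$, and pin down the exponent $6$. My first attempt would be to check whether their paper already states the graph/approximate-oracle version explicitly. If it does, the lemma is a direct citation with parameters instantiated. Otherwise I would redo their key charging lemma with every radius inflated by a factor $c$.
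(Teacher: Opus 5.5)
\textbf{The paper's route is a citation, and only your fallback matches it.} The paper gives no proof of this lemma. It is quoted from Dupre la Tour and Saulpic. The only related material is Appendix~\ref{appendix:vertex_weights}, written to justify vertex weights. It records the interface their algorithm relies on: a value oracle $\text{Value}(\ball[x,r])$ for $|\ball[x,r]|\cdot r^z$ satisfying $|\ball[x,r]|\, r^z/3 \le \text{Value}(\ball[x,r]) \le 3\,|\ball[x,\lambda r]|\, r^z$ with $\lambda \ge 5$, evaluated using Cohen's near-linear size-estimation algorithm. Your fallback (check that their paper states the graph version, then cite it with parameters instantiated) is exactly what the paper does. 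It is also the only part of your plan that actually establishes the statement.

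\textbf{Your self-contained reconstruction in steps (ii)--(iii) has a genuine gap.} First, the primitive is misidentified. A Thorup--Zwick oracle answers pairwise queries $\dist(u,v)$ with stretch $2\lambda-1$. The greedy algorithm instead needs, for every vertex $x$ and every relevant scale $r$, an estimate of the \emph{number} of vertices within distance $r$ of $x$, with constant-factor accuracy up to slack in the radius. A pairwise oracle gives this only by querying $\dist(x,v)$ for all $v$, i.e.\ $\Theta(n)$ queries per $(x,r)$ and $\Omega(n^2)$ overall. On sparse graphs that already exceeds $\tilde O(m^{1+1/\lambda})$. The sampled levels of the Thorup--Zwick hierarchy only certify ball sizes to within a factor of roughly $n^{1/\lambda}$, not a constant. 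What is needed is a neighborhood-size estimator, which is what Cohen's sketches provide. In their interface, $\lambda$ enters as slack on the radius of the value oracle, not as the stretch of a distance oracle.

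Second, the ratio is not derived. The sentence ``a fixed polynomial in $c$, which I expect to be $c^6$'' is a guess. Naive tracking would not even give a $z$-independent exponent: each inflation of a radius by $c$ enters the cost through a $z$-th power, giving $c^{O(z)}$. The exponent $6$ can only come from the actual structure of their charging argument. Likewise, $\lambda \ge 5$ is a hypothesis of their analysis, not an artifact of sampling overhead. Without their theorem in hand, steps (ii)--(iii) neither produce the right oracle in the stated time nor pin down the $O(\lambda^6)$ approximation.
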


At a high level, we maintain the $(k, z)$-clustering solution $C$ over the bicriteria approximate solution~$S$, using weights on $S$ derived from the bicriteria approximate assignment $\sigma$. Hence, the input graph instance contains both edge and vertex weights, and the static $(k, z)$-clustering algorithm should be able to work with vertex weights in our context. The algorithm by Dupre la Tour and Saulpic~\cite{latour2025fastersimplergreedyalgorithm} is a simplification of the algorithm by Mettu and Plaxton~\cite{MettuP03online}.\footnote{The two algorithms by Mettu and Plaxton in~\cite{MettuP03online} and~\cite{mettu2004optimal} (which includes the MP-bi algorithm) are different.} Even though the authors in~\cite{latour2025fastersimplergreedyalgorithm} present a version without vertex weights, the authors in~\cite{MettuP03online} present a version with weights on points. Thus, it is not surprising that the static $(k, z)$-clustering algorithm of~\cref{lem:static_kz_clust} can be generalized to handle vertex weights. For completeness, we provide a more
detailed discussion of this in~\cref{appendix:vertex_weights}. 

\weightedkzclustering*

Based on the discussion in~\cref{appendix:vertex_weights}, we can safely assume for the rest of the paper that the static $(k, z)$-clustering algorithm of~\cref{lem:static_kz_clust} works on weighted $(k, z)$-clustering instances, as defined in~\cref{def:wgt_k_median}.
The randomization in~\cref{thm:bicr_to_kzclustering} arises  from the randomization in~\cref{lem:fully_dyn_spanner,lem:static_kz_clust}; replacing the randomized subroutines with deterministic counterparts would yield a deterministic version of~\cref{thm:bicr_to_kzclustering}.


\paragraph{Adversarial updates.}
Our aim is to combine the incremental bicriteria approximation~\cref{alg:incremental} with the incremental reduction~\cref{alg:bicr_to_kmedian}.
Hence, the output $(\alpha, \beta)$-bicriteria approximate assignment~$\sigma$ of~\cref{alg:incremental}
becomes the input $(\alpha, \beta)$-bicriteria approximate assignment $\sigma$ of~\cref{alg:bicr_to_kmedian}. 
The assignment~$\sigma$ in~\cref{alg:incremental} can be modified due to an adversarial edge insertion into the graph $G$, making it a \emph{dynamic assignment}
in~\cref{thm:bicr_to_kzclustering}.

Since our incremental reduction algorithm is given as input a dynamic bicriteria approximate assignment, one type of adversarial update consists of modifications to the assignment $\sigma$. The other type of adversarial update involves edge insertions into the input graph $G$, where each adversarial edge insertion can trigger a batch of distance decreases between vertices in $G$. 

Note that an efficient combination of \cref{alg:incremental} and \cref{alg:bicr_to_kmedian}, requires that both (1) the total number of vertices that have ever belonged to the image of the dynamic assignment $\sigma$ (i.e., $|\sigma_\text{max}(V)|$) 
and (2) the total number of batches of new vertices entering $\sigma_\text{max}(V)$ (i.e., $\sigma_\text{inc}(V)$) remain bounded.
For efficiency reasons, we employ the approximate distance estimates from incremental $(1+\epsilon)$-approximate SSSP algorithms and perform lazy updates by maintaining distances as powers of $ (1+\epsilon)$.

\paragraph{State of the reduction algorithm.}
Our incremental reduction~\cref{alg:bicr_to_kmedian} maintains a dynamic $(\alpha, \beta)$-bicriteria approximate solution $S \coloneqq \sigma(V)$, which is the image of the dynamic $(\alpha, \beta)$-bicriteria approximate assignment $\sigma$ from the input.\footnote{Since the algorithm never removes vertices from $S$, we actually have $S = \sigma_\text{max}(V)$.} For every vertex $s \in S$, the reduction algorithm maintains an incremental $(1+\epsilon)$-approximate SSSP algorithm $\mathcal{A}_s$ providing distance estimates $\delta_s(\cdot)$. In turn, the reduction algorithm maintains a complete subgraph $H \coloneqq (S, S \times S, w_H)$ with edge and vertex weights, specified as follows:  
\begin{itemize}
    \item Each vertex $s \in S$ belongs to the vertex set of $H$, and the vertex weight
    $\wt(s)$ is equal to $|\sigma^{-1}(s)|$. In other words, the vertex weight of $s$ in $H$ is the number of vertices mapped to $s$ under the assignment~$\sigma(\cdot)$.
    
    \item For every pair of vertices $x, y \in S$, the edge $(x, y)$ belongs to the edge set of $H$ with an edge weight~$w_H(x, y)$ equal to $\min(\delta_x(y), \delta_y(x))$, rounded up to the nearest power of $(1 + \epsilon)$.
\end{itemize}
\noindent
The incremental reduction algorithm then applies the dynamic spanner algorithm $\mathcal{SP}$ of~\cref{lem:fully_dyn_spanner} on the subgraph $H$ to obtain an edge-sparsified graph $\tilde{H}$. Finally, the reduction algorithm runs the static $(k, z)$-clustering algorithm~$\mathcal{CL}$ of~\cref{lem:static_kz_clust} on the weighted $(k, z)$-clustering instance $(\tilde{H}, k, z, \wt)$, producing a $(k, z)$-clustering solution~$C$.

\subsection{Adversarial Edge Insertions and Assignment Modifications}
In the next two paragraphs, we first describe how our incremental reduction algorithm processes an adversarial edge insertion into the graph $G$, and then
how it processes a batch of modifications to the $(\alpha, \beta)$-bicriteria approximate assignment $\sigma$. We remark that in fact, each adversarial edge insertion into the graph $G$
can trigger a batch of distance decreases between vertices in $G$ alongside a batch of modifications to~$\sigma$.

\paragraph{Edge insertion into $G$.}
When an edge $(u, v)$ is inserted into the graph $G = (V, E, w)$ from the adversary, the incremental reduction algorithm proceeds as follows. Initially, this edge insertion $(u, v)$ is forwarded to all the incremental $(1+\epsilon)$-approximate SSSP algorithms $\mathcal{A}_s$, where $s \in S$. Based on~\cref{lem:incr_appr_sssp}, the incremental $(1+\epsilon)$-approximate SSSP algorithm $\mathcal{A}_s$ reports a change whenever the distance estimate $\delta_s(x)$ is updated for a vertex $x \in V$. After the adversarial insertion of the edge $(u, v)$ into the graph $G$, let $\Delta_S$ be the set containing all the pairs of vertices $(x, y) \in S \times S$ such that the value of the distance estimate $\delta_x(y)$ becomes less than $\frac{w_H(x, y)}{1+\epsilon}$.

Next, for every pair of vertices $(x, y) \in \Delta_S$, the incremental reduction algorithm updates the edge weights~$w_H(x, y)$ and $w_H(y, x)$ to $\min(\delta_{x}(y),\delta_{y}(x))$, rounding them up to the nearest power of $(1+\epsilon)$. Let $B \coloneqq \big\{\big(x, y, w_H(x, y)\big) \mid (x, y) \in \Delta_S \text{ or } (y, x) \in \Delta_S\big\}$ be the batch of distance decreases. 
The reduction algorithm then forwards the batch $B$ of distance decreases to the dynamic spanner algorithm~$\mathcal{SP}$ from~\cref{lem:fully_dyn_spanner},\footnote{The dynamic spanner algorithm~$\mathcal{SP}$ processes each distance decrease in the batch $B$ separately.} which runs on $H$ and maintains the edge-sparsified graph $\tilde{H}$.

\paragraph{Modifications to the assignment $\sigma$.}
Let $R_S$ be the set containing all vertices $s \in V$ for which the preimage $\sigma^{-1}(s)$ is modified. For every vertex $s \in R_S$ that already belongs to the dynamic $(\alpha, \beta)$-bicriteria approximate solution $S$, the reduction algorithm updates only its vertex weight $\wt(s)$ to $|\sigma^{-1}(s)|$. 

For every vertex $s \in R_S$ that does not belong to the dynamic $(\alpha, \beta)$-bicriteria approximate solution $S$, the incremental reduction algorithm proceeds as follows:
\begin{enumerate}
    \item First, the reduction algorithm adds $s$ to $S$ and to the vertex set of $H$, and sets its vertex weight~$\wt(s)$ to $|\sigma^{-1}(s)|$.
    
    \item Second, the reduction algorithm initializes an incremental $(1+\epsilon)$-approximate SSSP algorithm~$\mathcal{A}_s$ providing distance estimates $\delta_s(\cdot)$. Then for every vertex $x \in S$, the edge weights $w_H(s, x)$ and~$w_H(x, s)$ are set to $\min(\delta_s(x), \delta_x(s))$, rounded up to the nearest power of $(1+\epsilon)$.
\end{enumerate}

\noindent
Next, if at least one new vertex is added to $S$ (i.e., $|\sigma_\text{max}(V)|$ of~\cref{def:sigma_max} increases) then the reduction algorithm restarts the dynamic spanner algorithm~$\mathcal{SP}$ on the subgraph~$H$, producing a new edge-sparsified graph~$\tilde{H}$. Otherwise all vertices in $R_S$ already belong to $S$, and the dynamic spanner algorithm $\mathcal{SP}$ maintains the same edge-sparsified graph~$\tilde{H}$.

\vspace{1em}
At the end of each adversarial update, the incremental reduction algorithm runs
the static $(k, z)$-clustering algorithm $\mathcal{CL}$ of~\cref{lem:static_kz_clust} on the updated weighted $(k, z)$-clustering instance $(\tilde{H}, k, z, \wt)$, producing the maintained $(k, z)$-clustering solution $C$.

\begin{algorithm}\caption{Incremental Reduction Algorithm}
  \label{alg:bicr_to_kmedian}

\begin{algorithmic}[1]
\Require{Graph $G$ and positive integer $k$ (problem input); 
Bicriteria approximate solution $S$ and bicriteria approximate assignment $\sigma$}
\Ensure{$(k,z)$-Clustering solution $C$}
\vspace{0.5em}

\Function{Edge-Insertion}{$u, v$}
\State Insert $(u, v)$ into $G$
\vspace{0.2em}

\For{$s \in S$}
    \State $\mathcal{A}_s.\textit{insert}(u, v)$ 
\EndFor

\State $\Delta_S \gets \{(x, y) \in S \times S \mid \delta_x(y) < \frac{w_H(x, y)}{1+\epsilon} \}$ \label{algline:Delta_S}\Comment{Set of affected distances}

\vspace{0.2em}
\For{$(x, y) \in \Delta_S$}
    \State $w_H(x, y) \gets \min(\delta_x(y), \delta_y(x))$

    \State $w_H(y, x) \gets \min(\delta_x(y), \delta_y(x))$ 

    \State Round $w_H(y, x)$ and $w_H(x, y)$ up to the nearest power of $(1+\epsilon)$
\EndFor

\vspace{0.2em}

\State $B \gets \big\{\big(x, y, w_H(x, y)\big) \mid (x, y) \in \Delta_S \text{ or } (y, x) \in \Delta_S\big\}$ \label{algline:def_of_batch}

\State $\mathcal{SP}.\textit{update}(B)$
\vspace{0.2em}
\State $\tilde{H} \gets \mathcal{SP}.\textit{output}(H)$ \Comment{$\mathcal{SP}$ is applied on $H$}
\State $C \gets \mathcal{CL}.\textit{output}(\tilde{H})$ \Comment{$\mathcal{CL}$ is applied on $\tilde{H}$}
\EndFunction

\Statex
\Function{Assignment-Modifications}{$\hat{\sigma}$}

\State $R_S \gets \{s \in V \mid \sigma^{-1}(s) \neq \hat{\sigma}^{-1}(s)\}$ \label{algline:addto_Rs}
\State $\sigma \gets \hat{\sigma}$
\vspace{0.2em}

\For{$s \in R_S$}
    \State $\wt(s) \gets |\sigma^{-1}(s)|$ \label{algline:update_wt}
    \vspace{0.2em}
    
    \If{$s \notin S$}
        \State Add $s$ to $S$ and to the vertex set of $H$
        \State $\mathcal{A}_s.\textit{initialize}(G, s)$ \Comment{$\mathcal{A}_s$ provides distance estimates $\delta_s(\cdot)$}

        \vspace{0.2em}
        \For{$x \in S$}
            \State $w_H(s, x) \gets \min(\delta_s(x), \delta_x(s))$
        
            \State $w_H(x, s) \gets \min(\delta_s(x), \delta_x(s))$
    
            \State Round $w_H(s, x)$ and $w_H(x, s)$ up to the nearest power of $(1+\epsilon)$
        \EndFor
    \EndIf
\EndFor

\vspace{0.2em}
\If{$|\sigma_\text{max}(V)|$ increases} \Comment{Some new vertices have been added to $S$ (see~\cref{def:sigma_max})} \label{algline:if_sigma_max_inc}
    \State $\mathcal{SP}.\textit{initialize}(H)$ \label{algline:rest_spanner}
\EndIf
\vspace{0.2em}
\State $\tilde{H} \gets \mathcal{SP}.\textit{output}(H)$ \Comment{$\mathcal{SP}$ is applied on $H$}
\State $C \gets \mathcal{CL}.\textit{output}(\tilde{H})$ \Comment{$\mathcal{CL}$ is applied on $\tilde{H}$}
\EndFunction
    
\end{algorithmic}
\end{algorithm}
\paragraph{Implementation of~\linecref{algline:Delta_S}.}
We emphasize that the set $\Delta_S$ in~\linecref{algline:Delta_S} of~\cref{alg:bicr_to_kmedian} can be computed during the update procedure of the incremental $(1+\epsilon)$-approximate SSSP algorithm $\mathcal{A}_x$. This is achieved by extending~$\mathcal{A}_x$ to track the necessary quantities within its update procedure, without incurring additional asymptotic cost.

\subsection{Analysis of the Incremental Reduction Algorithm}
Our goal in this section is to prove Theorem~\ref{thm:bicr_to_kzclustering}, by analyzing the incremental reduction~\cref{alg:bicr_to_kmedian}.
We remind the reader that the reduction algorithm maintains weighted $(k, z)$-clustering instances 
$\mathcal{K} \coloneqq (H, k, z, \wt)$ and
$\mathcal{\tilde{K}} \coloneqq (\tilde{H}, k, z, \wt)$ (edge-sparsified) induced by the dynamic bicriteria approximate solution~$S = \sigma(V)$ in the incremental graph $G$. To distinguish the cost of a $(k, z)$-clustering solution $C$ in $\mathcal{K}$ from its cost in $G$, we use the notation $\cost^z_{\wt}(C, \mathcal{K})$ and $\cost^z(C, G)$ respectively. 

\subsubsection{Approximation Ratio}
Let $\rho_1 = O(1)$ denote the constant in the approximation ratio of the dynamic spanner algorithm $\mathcal{SP}$ from~\cref{lem:fully_dyn_spanner}, and $\rho_2 = O(1)$ denote the constant in the approximation ratio of the static $(k, z)$-clustering algorithm~$\mathcal{CL}$ from~\cref{lem:static_kz_clust}.

\begin{lemma} \label{lem:comp_dist_Htild_G}
    For every pair of vertices $x, y \in S$, it always holds that: \[
    \dist_G(x, y) \,\leq\, \dist_{\tilde{H}}(x, y) \,\leq\, (1+\epsilon)^2 \myhs \rho_1 \cdot \dist_G(x, y).\]
\end{lemma}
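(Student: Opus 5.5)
The plan is to sandwich $\dist_{\tilde{H}}$ between $\dist_H$ and $\rho_1 \cdot \dist_H$ using the spanner guarantee of \cref{lem:fully_dyn_spanner}. Then, separately, we compare every edge weight $w_H(x,y)$ with $\dist_G(x,y)$.

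The first step is to compare edge weights. We establish the invariant that, at every moment and for all $x,y \in S$,
\[
\dist_G(x,y) \le w_H(x,y) \le (1+\epsilon)^2 \dist_G(x,y).
\]
For the lower bound: whenever $w_H(x,y)$ is (re)set, it equals $\min(\delta_x(y),\delta_y(x))$ rounded up. By \cref{lem:incr_appr_sssp} the estimates never underestimate, so the value set is at least the current $\dist_G(x,y)$. Distances only decrease under insertions, so the bound persists after later insertions as well. For the upper bound: right after a (re)set, $w_H(x,y) \le (1+\epsilon)\min(\delta_x(y),\delta_y(x)) \le (1+\epsilon)^2\dist_G(x,y)$, where the factor $(1+\epsilon)$ comes from rounding up to a power of $(1+\epsilon)$. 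Between resets, the test in \linecref{algline:Delta_S} guarantees $\delta_x(y) \ge w_H(x,y)/(1+\epsilon)$. Hence $w_H(x,y) \le (1+\epsilon)\delta_x(y) \le (1+\epsilon)^2 \dist_G(x,y)$, since $\delta_x(y)\le(1+\epsilon)\dist_G(x,y)$. The same holds for newly added vertices $s$, whose weights are set directly. This case analysis is the main subtle point, since the lazy-update rule must be shown to still give the $(1+\epsilon)^2$ bound between updates. Note also that the condition is checked in both orientations $(x,y)$ and $(y,x)$, so the symmetric weight stays valid.

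The second step lifts the bound to distances in $H$. For the lower bound, any path $x=p_0,\dots,p_\ell=y$ in $H$ has weight $\sum w_H(p_{j},p_{j+1}) \ge \sum \dist_G(p_j,p_{j+1}) \ge \dist_G(x,y)$ by the triangle inequality in $G$, so $\dist_H(x,y)\ge\dist_G(x,y)$. For the upper bound, the direct edge gives $\dist_H(x,y) \le w_H(x,y)\le(1+\epsilon)^2\dist_G(x,y)$, because $H$ is complete on $S$.

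The third step applies the spanner. $\tilde{H}$ is a subgraph of $H$ maintained by $\mathcal{SP}$, which has received every weight change: either through the batch $B$, or by reinitialization when $S$ grows. Since $\tilde{H}$ is a subgraph, $\dist_{\tilde{H}}(x,y)\ge \dist_H(x,y)\ge\dist_G(x,y)$. By \cref{lem:fully_dyn_spanner} (with the high-probability version obtained via parallel copies), $\dist_{\tilde{H}}(x,y) \le \rho_1 \dist_H(x,y) \le (1+\epsilon)^2\rho_1\dist_G(x,y)$. Combining the two bounds gives the claim.
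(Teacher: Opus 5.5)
Your proposal is correct and takes the same route as the paper. Both first bound each edge weight $w_H(x,y)$ between $\dist_G(x,y)$ and $(1+\epsilon)^2\dist_G(x,y)$, using the SSSP guarantees and rounding, and then pass to $\tilde{H}$ via the subgraph property and the $\rho_1$-stretch of \cref{lem:fully_dyn_spanner}. Your version is more careful than the paper's, which only asserts that the weights are ``maintained as'' the rounded minimum: you use the test in \linecref{algline:Delta_S} to get $w_H(x,y)\le(1+\epsilon)\delta_x(y)$ between resets, and you prove the lower bound path by path with the triangle inequality, which supplies the steps the paper skips.
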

\begin{proof} 
    For any two vertices $x, y \in S$, the edge weight $w_H(x, y)$ in the subgraph $H$ is set and maintained as~$\min(\delta_x(y), \delta_y(x))$, rounded up to the nearest power of $(1+\epsilon)$. In turn, this implies that $w_H(x, y) \leq (1+\epsilon) \min(\delta_x(y), \delta_y(x))$. Based on~\cref{lem:incr_appr_sssp}, the edge weight $w_H(x, y)$ is upper bounded by $(1+\epsilon)^2 \dist_G(x, y)$.
    Since $\dist_H(x, y) \leq w_H(x, y)$, by~\cref{lem:fully_dyn_spanner} it follows that $\dist_{\tilde{H}}(x, y) \leq \rho_1 \cdot \dist_H(x, y) \leq \rho_1 \myhs (1+\epsilon)^2 \cdot \dist_G(x, y)$, as needed. 
    Furthermore, the incremental $(1+\epsilon)$-approximate SSSP algorithm of~\cref{lem:incr_appr_sssp} and the dynamic spanner algorithm of~\cref{lem:fully_dyn_spanner} do not underestimate distances.
    Since the edge weights in~$H$ are also rounded up, it follows that $\dist_G(x, y) \leq \dist_{\tilde{H}}(x, y) \leq (1+\epsilon)^2 \myhs \rho_1 \cdot \dist_G(x, y)$, as required.
\end{proof}

\begin{lemma} \label{lem:wt_of_H} 
    For every vertex $s \in S$, its vertex weight $\wt(s)$ in $H$ and $\tilde{H}$ is always equal to~$|\sigma^{-1}(s)|$.
\end{lemma}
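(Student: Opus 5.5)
The plan is to prove the statement by induction on the sequence of updates processed by \cref{alg:bicr_to_kmedian}, tracking exactly where the vertex weights are written. The invariant is that after each update, every $s \in S$ satisfies $\wt(s) = |\sigma^{-1}(s)|$, where $\sigma$ is the current assignment. Since $\tilde{H}$ is produced by the spanner algorithm $\mathcal{SP}$ and changes only edges of $H$, while the instance $(\tilde{H}, k, z, \wt)$ uses the same weight function $\wt$ on the same vertex set $S$, the claim for $\tilde{H}$ follows immediately once it holds for $H$.

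For the base case, I take the initialization to set $\wt(s) \gets |\sigma^{-1}(s)|$ for every $s$ in the initial image $S = \sigma(V)$; this is implicit in the description of the state of the reduction algorithm.

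For the inductive step, I split by update type. \emph{Edge insertions} (function \textsc{Edge-Insertion}) modify only the SSSP structures, the edge weights $w_H$, and the spanner. Neither $\sigma$ nor $\wt$ is touched, so the invariant is preserved. \emph{Assignment modifications} (function \textsc{Assignment-Modifications}) replace $\sigma$ with $\hat{\sigma}$. Every $s$ with $\sigma^{-1}(s) \neq \hat{\sigma}^{-1}(s)$ lies in $R_S$ by \linecref{algline:addto_Rs}, and \linecref{algline:update_wt} resets $\wt(s) \gets |\sigma^{-1}(s)|$ after $\sigma$ has been updated. Newly added vertices of $S$ also belong to $R_S$, because their preimage changed from empty to nonempty, so they receive the correct weight as well. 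Every $s \in S \setminus R_S$ has an unchanged preimage, so by the induction hypothesis $\wt(s)$ is still correct.

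The only subtle point is a vertex $s \in S$ whose preimage becomes empty. Such a vertex remains in $S = \sigma_{\text{max}}(V)$ because the algorithm never removes vertices. It is still in $R_S$, however, so its weight is set to $0 = |\sigma^{-1}(s)|$, and the invariant holds. I expect no real obstacle here: the argument is bookkeeping, and the main care is in checking that the weight update in \linecref{algline:update_wt} happens after the assignment $\sigma \gets \hat{\sigma}$.
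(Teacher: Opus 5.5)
Your proposal is correct and follows the same argument as the paper: any change to a preimage $\sigma^{-1}(s)$ places $s$ in $R_S$, which triggers the update $\wt(s) \gets |\sigma^{-1}(s)|$ after $\sigma$ has been replaced. You make this explicit as an induction over updates, and you additionally spell out the base case, the fact that edge insertions never touch $\wt$, and the zero-weight case for vertices that stay in $S = \sigma_{\text{max}}(V)$ after their preimage empties; the paper leaves these points implicit.
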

\begin{proof}
    Whenever the preimage $\sigma^{-1}(s)$ of a vertex $s \in V$ is modified,
    the incremental reduction algorithm adds the vertex $s$ to the set $R_S$ in~\linecref{algline:addto_Rs} of~\cref{alg:bicr_to_kmedian}. In turn, its vertex weight $\wt(s)$ is updated to~$|\sigma^{-1}(s)|$ in~\linecref{algline:update_wt} of~\cref{alg:bicr_to_kmedian}.
\end{proof}

The following lemma is a variant of a theorem from~\cite{mettu2004optimal,guha2000clustering} usually stated for point sets in metric spaces; the same statement holds naturally for graphs, and we extend it to apply to our edge-sparsified graph $\tilde{H}$.
Essentially, we use it to argue that it suffices to solve the weighted $(k, z)$-clustering instance induced by the dynamic bicriteria approximate assignment $\sigma$. The proof of~\cref{lem:convert_bicr_kzclustering} follows from~\cref{lem:cost_G_with_OPTS,lem:upper_bound_OPT_tildeH}.

\begin{lemma}[see also~\cite{mettu2004optimal,guha2000clustering,bhattacharya2023fully}] \label{lem:convert_bicr_kzclustering} 
    Any $\rho$-approximate solution to the weighted $(k, z)$-clustering instance $\tilde{\mathcal{K}} = (\tilde{H}, k, z, \wt)$ is also an $O(\alpha \myhs \rho)$-approximate solution to the $(k, z)$-clustering instance~$(G, k, z)$. 
\end{lemma}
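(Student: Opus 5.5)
The plan is to prove the two inequalities that the paper announces as \cref{lem:cost_G_with_OPTS,lem:upper_bound_OPT_tildeH}. First, any $C\subseteq S$ with $|C|\le k$ has $\cost^z(C,G)$ bounded by $O(\alpha\,\OPT)$ plus $O(\cost^z_{\wt}(C,\tilde{\mathcal{K}}))$. Second, $\OPT_{\tilde{\mathcal{K}}} = O(\alpha\,\OPT)$. Combining them, a $\rho$-approximate $C$ for $\tilde{\mathcal{K}}$ satisfies
\[
\cost^z(C,G)\le O(\alpha\,\OPT)+O(\rho\,\OPT_{\tilde{\mathcal{K}}})= O(\alpha\rho\,\OPT).
\]
Throughout I will use the relaxed triangle inequality $(a+b)^z\le 2^{z-1}(a^z+b^z)$. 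I will also use \cref{lem:comp_dist_Htild_G}, which says $\dist_G\le\dist_{\tilde H}\le(1+\epsilon)^2\rho_1\dist_G$ on $S$, and \cref{lem:wt_of_H}, which gives $\wt(s)=|\sigma^{-1}(s)|$. Since $z,\epsilon,\rho_1$ are constants, all factors depending on them are $O(1)$.

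For the upper bound on $\cost^z(C,G)$, I fix $v\in V$ and let $s=\sigma(v)$. The triangle inequality in $G$ gives $\dist_G(v,C)\le\dist_G(v,s)+\dist_G(s,C)$. Since $\dist_G(s,C)\le\dist_{\tilde H}(s,C)$, this yields
\[
\dist_G(v,C)^z\le 2^{z-1}\bigl(\dist_G(v,\sigma(v))^z+\dist_{\tilde H}(\sigma(v),C)^z\bigr).
\]
Summing over $v$ and grouping the second term by preimages gives $\cost^z(C,G)\le 2^{z-1}(\cost^z(\sigma)+\cost^z_{\wt}(C,\tilde{\mathcal K}))$. Here $\cost^z(\sigma)\le\alpha\,\OPT$ by \cref{def:assignment}.

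For the bound on $\OPT_{\tilde{\mathcal K}}$, I take an optimal solution $X^*$ of $(G,k,z)$. Its centers need not lie in $S$, so I project them: for each $x\in X^*$, let $\pi(x)\in S$ be a vertex of $S$ nearest to $x$ in $G$, and set $C'=\pi(X^*)$, which has $|C'|\le k$. For $s\in S$, let $x_s$ be the center of $X^*$ nearest to $s$. Then
\[
\dist_G(s,C')\le\dist_G(s,x_s)+\dist_G(x_s,\pi(x_s))\le 2\dist_G(s,X^*),
\]
because $s\in S$ is a candidate for $\pi(x_s)$. Applying \cref{lem:comp_dist_Htild_G} gives $\dist_{\tilde H}(s,C')^z\le((1+\epsilon)^2\rho_1)^z 2^z\dist_G(s,X^*)^z$. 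I then weight by $\wt(s)=|\sigma^{-1}(s)|$ and expand each term over $v\in\sigma^{-1}(s)$. Using $\dist_G(s,X^*)\le\dist_G(s,v)+\dist_G(v,X^*)$ and the relaxed triangle inequality, I get
\[
\sum_s\wt(s)\dist_G(s,X^*)^z\le 2^{z-1}\bigl(\cost^z(\sigma)+\OPT\bigr)\le 2^{z-1}(\alpha+1)\OPT.
\]
Hence $\OPT_{\tilde{\mathcal K}}\le\cost^z_{\wt}(C',\tilde{\mathcal K})=O(\alpha\,\OPT)$.

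The main subtlety is in this second step. The projection must be measured in $G$, since $\tilde H$ is only defined on $S$, and the distortion of $\tilde H$ is then applied only between vertices of $S$, which \cref{lem:comp_dist_Htild_G} covers. I also have to check that a solution of $\tilde{\mathcal K}$ consists of vertices of $S$, so that the first step applies, and that the weighted cost is the one from \cref{def:wgt_k_median}. The remaining bookkeeping is routine constant-chasing.
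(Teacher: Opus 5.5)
Your proposal is correct and follows the paper's proof essentially step for step. Your first bound is exactly \cref{lem:cost_G_with_OPTS}, and your second is \cref{lem:upper_bound_OPT_tildeH}, where the paper likewise projects each optimal center to its nearest vertex of $S$ and uses $\dist_G(\pi(s),\sigma^*(\pi(s)))\le\dist_G(\pi(s),s)$ to get your factor-$2$ bound (you apply the plain triangle inequality before raising to the $z$-th power, which gives the same $2^z$ constant).
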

\begin{proof}
By~\cref{lem:cost_G_with_OPTS} and~\cref{lem:upper_bound_OPT_tildeH}, we have:
\begin{align*}
    \cost^z(C, G) &\;\leq\; 2^{z-1} \cdot \big(\alpha \myhs\myhs \OPT + \rho \myhs\myhs \OPT_{\tilde{\mathcal{K}}}\big),\;\text{and} \\
    \OPT_{\tilde{\mathcal{K}}} &\;\leq\; \big(4 \myhs (1+\epsilon)^2 \myhs \rho_1\big)^z \cdot \big(\alpha \myhs\myhs \OPT + \OPT\big).
\end{align*}
By combining these two, we deduce that:
\begin{align*}
    \cost^z(C, G) &\;\leq\; 2^{z-1} \cdot \Big(\alpha \myhs\myhs \OPT + \rho \, \big(4 \myhs (1+\epsilon)^2 \myhs \rho_1\big)^z \cdot \big(\alpha \myhs\myhs \OPT + \OPT\big)\Big) \\
    &\;\leq\; 2^z \cdot \Big(\alpha \cdot \big(\rho \, \big(4 \myhs (1+\epsilon)^2 \myhs \rho_1\big)^z + 1\big)\Big) \cdot \OPT \; \leq \; \alpha \myhs \rho \, \big(16 \myhs (1+\epsilon)^2 \myhs \rho_1\big)^z \cdot \OPT.
\end{align*}
Therefore as $\rho_1, z = O(1)$ and $\epsilon \in (0, \frac{1}{2})$, the $(k, z)$-clustering cost of $C$ in $(G, k, z)$ is upper bounded by~$O(\alpha \rho \cdot \OPT)$, as required.
\end{proof}

\begin{corollary} \label{cor:reduct_approx}
    After an adversarial update, the cost of the $(k, z)$-clustering solution $C$ is at most $O(\alpha \cdot \OPT)$.
\end{corollary}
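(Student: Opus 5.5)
The corollary follows by combining \cref{lem:convert_bicr_kzclustering} with the guarantee of the static algorithm $\mathcal{CL}$. After any adversarial update, edge insertion or assignment modification, \cref{alg:bicr_to_kmedian} finishes by running $\mathcal{CL}$ on the current weighted instance $\tilde{\mathcal{K}} = (\tilde{H}, k, z, \wt)$. So I would first check that this instance is exactly the one to which \cref{lem:convert_bicr_kzclustering} applies, and then plug in $\rho = \rho_2$.

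\textbf{Step 1: the instance is correct after every update.}
\begin{itemize}
\item By \cref{lem:wt_of_H}, the vertex weights always equal $|\sigma^{-1}(s)|$ for the current assignment $\sigma$.
\item By \cref{lem:comp_dist_Htild_G}, the distances in $\tilde{H}$ satisfy $\dist_G \le \dist_{\tilde{H}} \le (1+\epsilon)^2\rho_1 \dist_G$ on $S \times S$. These are precisely the hypotheses used in \cref{lem:cost_G_with_OPTS,lem:upper_bound_OPT_tildeH}, which underlie \cref{lem:convert_bicr_kzclustering}.
\item $\tilde{H}$ is always a valid spanner of the current $H$. This holds because $\mathcal{SP}$ receives every weight decrease in the batch $B$. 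It is also restarted on $H$ whenever new vertices enter $S$, so the new vertices and their incident edges are included.
\item The input $\sigma$ is by assumption an $(\alpha,\beta)$-bicriteria approximate assignment at every moment. Hence \cref{lem:convert_bicr_kzclustering} applies to the current state.
\end{itemize}

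\textbf{Step 2: conclude.} By \cref{lem:static_kz_clust} (with vertex weights, as justified in \cref{appendix:vertex_weights}), $\mathcal{CL}$ returns with high probability a $\rho_2$-approximate solution $C$ to $\tilde{\mathcal{K}}$, where $\rho_2 = O(1)$ for a fixed constant $\lambda$. Then \cref{lem:convert_bicr_kzclustering} gives $\cost^z(C,G) \le O(\alpha\rho_2)\cdot\OPT = O(\alpha)\cdot\OPT$.

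\textbf{Main obstacle.} The only delicate point is the probabilistic bookkeeping, since both $\mathcal{SP}$ and $\mathcal{CL}$ are randomized.
\begin{itemize}
\item The stretch guarantee of $\mathcal{SP}$ in \cref{lem:fully_dyn_spanner} is deterministic; only its size and running time hold in expectation. So \cref{lem:comp_dist_Htild_G} holds always.
\item For $\mathcal{CL}$, I would union bound its failure probability over the at most $\mathrm{poly}(n)$ updates, which keeps the guarantee with high probability at every update.
\end{itemize}
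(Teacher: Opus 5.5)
Your proposal is correct and follows the paper's proof. The paper likewise takes the $\rho_2$-approximation guarantee of $\mathcal{CL}$ on $\tilde{\mathcal{K}} = (\tilde{H}, k, z, \wt)$ from \cref{lem:static_kz_clust} and substitutes $\rho = \rho_2$ into \cref{lem:convert_bicr_kzclustering}; your Step~1 checks (already built into \cref{lem:wt_of_H,lem:comp_dist_Htild_G}) and your union bound over the runs of $\mathcal{CL}$ are sound bookkeeping that the paper leaves implicit.
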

\begin{proof} 
    The incremental reduction algorithm applies the dynamic spanner algorithm $\mathcal{SP}$ of~\cref{lem:fully_dyn_spanner} on the subgraph $H$ to obtain the subgraph $\tilde{H}$. The maintained $(k, z)$-clustering solution $C$ is the output of the static $(k, z)$-clustering algorithm~$\mathcal{CL}$ of~\cref{lem:static_kz_clust} on the weighted $(k, z)$-clustering instance $\tilde{\mathcal{K}} = (\tilde{H}, k, z, \wt)$.\footnote{Recall that based on the discussion in~\cref{appendix:vertex_weights}, we can safely assume that the static $(k, z)$-clustering algorithm of~\cref{lem:static_kz_clust} works on weighted $(k, z)$-clustering instances.} Based on~\cref{lem:static_kz_clust}, we have $\cost^z_{\wt}(C, \tilde{\mathcal{K}}) \leq \rho_2 \cdot \OPT_{\tilde{\mathcal{K}}}$, where $\rho_2 = O(1)$. Thus by~\cref{lem:convert_bicr_kzclustering}, it follows that $\cost^z(C, G) \leq O(\alpha \myhs \rho_2 \cdot \OPT) = O(\alpha \cdot \OPT)$, as needed.
\end{proof}

\begin{lemma} \label{lem:cost_G_with_OPTS}
    Consider a $\rho$-approximate solution $C$ to the weighted $(k, z)$-clustering instance $\tilde{\mathcal{K}} = (\tilde{H}, k, z, \wt)$. Then it holds that:
    \[
        \cost^z(C, G) \;\leq\; 2^{z-1} \cdot \big(\alpha \myhs\myhs \OPT + \rho \myhs\myhs \OPT_{\tilde{\mathcal{K}}}\big).
    \]
\end{lemma}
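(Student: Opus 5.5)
We bound $\cost^z(C,G)=\sum_{v\in V}\dist_G(v,C)^z$ vertex by vertex, routing each $v$ through its assigned center $\sigma(v)\in S$. The triangle inequality in $G$ gives
\[
\dist_G(v,C)\;\le\;\dist_G(v,\sigma(v))+\dist_G(\sigma(v),C).
\]
We then apply the elementary inequality $(a+b)^z\le 2^{z-1}(a^z+b^z)$, which follows from convexity of $x\mapsto x^z$ for $z\ge 1$. Summing over $v\in V$ yields
\[
\cost^z(C,G)\;\le\;2^{z-1}\Big(\sum_{v\in V}\dist_G(v,\sigma(v))^z+\sum_{v\in V}\dist_G(\sigma(v),C)^z\Big).
\]

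The first sum is exactly $\cost^z(\sigma)$. Since $\sigma$ is an $(\alpha,\beta)$-bicriteria approximate assignment (\cref{def:assignment}), this sum is at most $\alpha\,\OPT$.

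For the second sum, we group the vertices $v$ according to $s=\sigma(v)$. This gives $\sum_{s\in S}|\sigma^{-1}(s)|\cdot\dist_G(s,C)^z$, and by \cref{lem:wt_of_H} this equals $\sum_{s\in S}\wt(s)\,\dist_G(s,C)^z$.

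We now pass from $G$ to $\tilde{H}$. Note that $C\subseteq S$, since $C$ is a solution of $\tilde{\mathcal{K}}$ and hence a subset of the vertex set $S$ of $\tilde{H}$. Let $c\in C$ be the center attaining $\dist_{\tilde{H}}(s,C)$. The lower bound in \cref{lem:comp_dist_Htild_G} gives $\dist_G(s,c)\le\dist_{\tilde{H}}(s,c)$, and therefore $\dist_G(s,C)\le\dist_{\tilde{H}}(s,C)$. It follows that the second sum is at most $\cost^z_{\wt}(C,\tilde{\mathcal{K}})$. Because $C$ is $\rho$-approximate for $\tilde{\mathcal{K}}$, this is at most $\rho\,\OPT_{\tilde{\mathcal{K}}}$. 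Combining the two bounds gives the claimed inequality.

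No step here is a real obstacle. The only points needing care are that $C$ consists of vertices of $\tilde{H}$, so that the comparison in \cref{lem:comp_dist_Htild_G} applies, and that distances in $\tilde{H}$ never underestimate distances in $G$. Both facts are already recorded in the lemmas cited above.
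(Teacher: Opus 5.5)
Your proof is correct and matches the paper's argument essentially line for line. Both route each vertex through $\sigma(v)$ via the triangle inequality, apply $(a+b)^z \le 2^{z-1}(a^z+b^z)$, regroup by preimages using \cref{lem:wt_of_H}, and use the non-underestimation direction of \cref{lem:comp_dist_Htild_G}; your explicit observation that $C \subseteq S$, which that lemma needs, is a detail the paper leaves implicit.
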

\begin{proof}
Let $C$ be a $\rho$-approximate solution to the weighted $(k, z)$-clustering instance $\tilde{\mathcal{K}} = (\tilde{H}, k, z, \wt)$.  Since the vertex set of $\tilde{H}$ is the bicriteria approximate solution~$S = \sigma(V)$, by~\cref{def:wgt_k_median} we have:
\[\cost_{\wt}^z(C, \tilde{\mathcal{K}}) \;=\; \sum_{s \in S} \wt(s) \cdot \dist_{\tilde{H}}(s, C)^z \;\leq\; \rho \cdot \OPT_{\tilde{\mathcal{K}}}. \]
The $(k, z)$-clustering cost of $C$ in the $(k, z)$-clustering instance $(G, k, z)$ (i.e., in the graph $G$) is evaluated as follows: 
\begin{align*}
    \cost^z(C, G) &\;=\; \sum_{v \in V} \dist_G(v, C)^z \;\leq\; \sum_{v\in V} \big(\dist_G(v, \sigma(v)) +\dist_G(\sigma(v),C)\big)^z \;\;\;\; \text{(triangle inequality)} \\
    &\;\leq\; 2^{z-1} \cdot \sum_{v\in V} \big(\dist_G(v, \sigma(v))^z +\dist_G(\sigma(v),C)^z\big) \\
    &\;=\;  2^{z-1} \cdot \Big(\cost^z(\sigma) + \sum_{s \in \sigma(V)} |\sigma^{-1}(s)| \cdot \dist_G(s,C)^z\Big) \;\;\;\; (\text{see~\cref{def:assignment}}) \\
    &\;=\;  2^{z-1} \cdot \Big(\cost^z(\sigma) + \sum_{s \in \sigma(V)} \wt(s) \cdot \dist_G(s,C)^z\Big) \;\;\;\; (\text{based on~\cref{lem:wt_of_H}}) \\
    &\;\leq\; 2^{z-1} \cdot \Big(\cost^z(\sigma) + \sum_{s \in \sigma(V)} \wt(s) \cdot \dist_{\tilde{H}}(s,C)^z\Big) \;\;\;\; (\text{based on~\cref{lem:comp_dist_Htild_G}})\\
    &\;=\; 2^{z-1} \cdot \big(\cost^z(\sigma) + \cost_{\wt}^z(C, \tilde{\mathcal{K}})\big) \;\leq\; 2^{z-1} \cdot \big(\alpha \myhs\myhs \OPT + \rho \myhs\myhs \OPT_{\tilde{\mathcal{K}}}\big) \;\;\;(\text{$\sigma$ is $(\alpha, \beta)$-bicriteria approximate}).
\end{align*}
\end{proof}

\begin{lemma} \label{lem:upper_bound_OPT_tildeH}
    For the optimal $(k, z)$-clustering cost of $\tilde{\mathcal{K}} = (\tilde{H}, k, z, \wt)$, it holds that:
    \[
        \OPT_{\tilde{\mathcal{K}}} \;\leq\; \big(4 \myhs (1+\epsilon)^2 \myhs \rho_1\big)^z \cdot \big(\alpha \myhs\myhs \OPT + \OPT\big).
    \]
\end{lemma}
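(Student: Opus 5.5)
We will exhibit a feasible solution of $\tilde{\mathcal{K}}$ with cost bounded by the right-hand side, which suffices since $\OPT_{\tilde{\mathcal{K}}}$ is a minimum. Let $X^* \subseteq V$ be an optimal $(k,z)$-clustering solution in $G$, so $\cost^z(X^*, G) = \OPT$. Since $X^*$ need not lie in $S$, we project it: for each $x \in X^*$ let $\pi(x) \in S$ be a vertex of $S$ closest to $x$ in $G$. The candidate solution is $C' \coloneqq \pi(X^*) \subseteq S$, which has $|C'| \leq k$ and is therefore feasible for $\tilde{\mathcal{K}}$. It remains to bound $\cost^z_{\wt}(C', \tilde{\mathcal{K}})$.

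Fix $s \in S$ and let $x_s \in X^*$ be the closest optimal center to $s$. By \cref{lem:comp_dist_Htild_G},
\[
\dist_{\tilde{H}}(s, C') \le \dist_{\tilde{H}}(s, \pi(x_s)) \le (1+\epsilon)^2\rho_1\, \dist_G(s, \pi(x_s)).
\]
By the triangle inequality in $G$ together with $\dist_G(x_s, \pi(x_s)) \le \dist_G(x_s, s)$, we get $\dist_G(s,\pi(x_s)) \le 2\,\dist_G(s, X^*)$. Next, use the weights $\wt(s) = |\sigma^{-1}(s)|$ from \cref{lem:wt_of_H} and expand the sum over $v \in V$ with $\sigma(v) = s$. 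For each such $v$ we have $\dist_G(s, X^*) \le \dist_G(s, v) + \dist_G(v, X^*)$, hence
\[
\dist_G(s, X^*)^z \le 2^{z-1}\bigl(\dist_G(v,\sigma(v))^z + \dist_G(v, X^*)^z\bigr).
\]
Summing over all $v$ gives
\[
\cost^z_{\wt}(C',\tilde{\mathcal{K}}) \le \bigl(2(1+\epsilon)^2\rho_1\bigr)^z\, 2^{z-1}\bigl(\cost^z(\sigma) + \OPT\bigr).
\]

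Finally, plug in $\cost^z(\sigma) \le \alpha\,\OPT$ (\cref{def:assignment}) and use $2^{z-1} \le 2^z$, so the constant is at most $(4(1+\epsilon)^2\rho_1)^z$, matching the claim. The one delicate step is the projection: the factor-$2$ bound on $\dist_G(s,\pi(x_s))$ must go through $\dist_G(s,X^*)$ itself (not through an arbitrary $v$), which is why $x_s$ is chosen as the center nearest to $s$. All distance comparisons are made in $G$, and $\tilde{H}$ enters only through the single application of \cref{lem:comp_dist_Htild_G}. This is needed because $\tilde{H}$ only contains vertices of $S$, so its metric is not available for $X^*$.
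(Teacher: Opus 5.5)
Your proof is correct and is essentially the paper's: project an optimal solution onto $S$ via nearest vertices, compare $\tilde{H}$-distances to $G$-distances with \cref{lem:comp_dist_Htild_G}, and route each $s$ through its nearest optimal center and then through the vertices of $\sigma^{-1}(s)$. The only difference is cosmetic. You apply $\dist_G(s,\pi(x_s)) \le 2\,\dist_G(s,X^*)$ before raising to the $z$-th power, whereas the paper first splits with $2^{z-1}(a^z+b^z)$ and then uses $b \le a$; both give the same constant.
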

\begin{proof}
Let $C^*$ be an optimal $(k, z)$-clustering solution to $(G, k, z)$,
let $\sigma^*: C^* \to \sigma(V)$ be the assignment that maps each center in $C^*$ to its closest vertex in $\sigma(V) = S$ with respect to $\dist_G(\cdot, \cdot)$, and let
$\pi: \sigma(V) \to C^*$ be the assignment that maps each vertex in $\sigma(V)$ to its closest center in $C^*$ with respect to $\dist_G(\cdot, \cdot)$. Consider the subset of vertices $Y \coloneqq \sigma^*(C^*) \subseteq \sigma(V)$. Then using~\cref{lem:comp_dist_Htild_G}, we obtain that:
\begin{align*}
    \cost_{\wt}^z(Y, \tilde{\mathcal{K}}) &\;=\; \sum_{s \in \sigma(V)} \wt(s) \cdot \dist_{\tilde{H}}(s, Y)^z \;\leq\; \big((1+\epsilon)^2 \myhs \rho_1\big)^z \cdot \sum_{s \in \sigma(V)} \wt(s) \cdot \dist_G(s, Y)^z \\
    &\;\leq\; \big((1+\epsilon)^2 \myhs \rho_1\big)^z \cdot \sum_{s \in \sigma(V)} \wt(s) \cdot \big(\dist_G(s, \pi(s)) + \dist_G(\pi(s), Y)\big)^z \\
    &\;\leq\; 2^{z-1} \myhs \big((1+\epsilon)^2 \myhs \rho_1\big)^z \cdot \sum_{s \in \sigma(V)} \wt(s) \cdot \big(\dist_G(s, \pi(s))^z + \dist_G(\pi(s), \sigma^*(\pi(s)))^z\big).
\end{align*}
By the construction of $\sigma^*$, we have $\dist_G(\pi(s), \sigma^*(\pi(s))) \leq \dist_G(\pi(s), s)$, and thus it follows that: 
\[
    \cost_{\wt}^z(Y, \tilde{\mathcal{K}}) \leq \big(2 \myhs (1+\epsilon)^2 \myhs \rho_1\big)^z \cdot \sum_{s \in \sigma(V)} \wt(s) \cdot \dist_G(s, \pi(s))^z.
\]
Since $\OPT_{\tilde{\mathcal{K}}} \leq \cost_{\wt}^z(Y, \tilde{\mathcal{K}})$,
it holds that $\OPT_{\tilde{\mathcal{K}}} \leq \big(2 \myhs (1+\epsilon)^2 \myhs \rho_1\big)^z \cdot \sum_{s \in \sigma(V)} \wt(s) \cdot \dist_G(s, \pi(s))^z$.
 An upper bound on $\sum_{s \in \sigma(V)} \wt(s) \cdot \dist_G(s, \pi(s))^z$ is given by:
\begin{align*}
    \sum_{s \in \sigma(V)} \wt(s) \cdot \dist_G(s, \pi(s))^z &\;=\; \sum_{s \in \sigma(V)} \wt(s) \cdot \dist_G(s, C^*)^z  \\
    &\;\leq\; \sum_{s \in \sigma(V)} \sum_{v \in \sigma^{-1}(s)} \big(\dist_G(\sigma(v), v) + \dist_G(v, C^*)\big)^z \;\;\; \text{(based on~\cref{lem:wt_of_H})} \\
    &\;\leq\; 2^{z-1} \cdot \sum_{v \in V} \big(\dist_G(\sigma(v), v)^z + \dist_G(v, C^*)^z\big) \\
    &\;\leq\; 2^{z-1} \cdot \big(\alpha \myhs\myhs \OPT + \OPT\big).
\end{align*}
Therefore, we conclude that $\OPT_{\tilde{\mathcal{K}}} \leq \big(2 \myhs (1+\epsilon)^2 \myhs \rho_1\big)^z \cdot 2^{z-1} \cdot \big(\alpha \myhs\myhs \OPT + \OPT\big) \leq \big(4 \myhs (1+\epsilon)^2 \myhs \rho_1\big)^z \cdot \big(\alpha \myhs\myhs \OPT + \OPT\big)$.
\end{proof}

\subsubsection{Total Update Time and Proof Completion of~\cref{thm:bicr_to_kzclustering}}
\begin{lemma} \label{lem:reduct_time}
    The total update time of the incremental reduction~\cref{alg:bicr_to_kmedian} is: 
    \[
        |\sigma_\text{max}(V)| \cdot m^{1+o(1)} \;+\; \tilde{O}\big(|\sigma_\text{max}(V)|^2 \cdot \sigma_\text{inc}(V) \,+\, m \cdot |\sigma_\text{max}(V)|^{1+\epsilon}\big).
    \]
\end{lemma}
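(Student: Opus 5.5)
We bound the total work of \cref{alg:bicr_to_kmedian} by splitting it into four components and charging each to one term of the claimed bound. Throughout, write $N \coloneqq |\sigma_\text{max}(V)|$. Since the reduction never removes vertices from $S$, we have $S \subseteq \sigma_\text{max}(V)$ at all times, so at most $N$ SSSP instances $\mathcal{A}_s$ are ever created.

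\textbf{SSSP instances.} Each $\mathcal{A}_s$ is initialized once, when $s$ enters $S$, and is never restarted. By \cref{lem:incr_appr_sssp}, its total update time over the remaining edge insertions is $m^{1+o(1)}\log W$. Summed over all instances, this gives $N\cdot m^{1+o(1)}$. As argued in the implementation paragraph, the computation of $\Delta_S$ in \linecref{algline:Delta_S} is piggybacked on the reported estimate changes, so it adds no asymptotic cost.

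\textbf{Initializing new vertices in $H$.} When a new $s$ joins $S$, we set $|S|\le N$ edge weights. Updating $\wt(\cdot)$ costs $O(|R_S|)$, and each modification of $\sigma$ is already paid for by whoever produces $\hat{\sigma}$. Over all new vertices, this costs $O(N^2)$ in total.

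\textbf{Spanner maintenance.} Split time into phases separated by restarts of $\mathcal{SP}$ in \linecref{algline:rest_spanner}. There are at most $\sigma_\text{inc}(V)$ restarts. Each restart runs preprocessing on $H$, which has $O(N^2)$ edges, in $\tilde{O}(N^2)$ time by \cref{lem:fully_dyn_spanner}; this gives $\tilde{O}(N^2\,\sigma_\text{inc}(V))$ over all restarts. Within a phase, each edge weight $w_H(x,y)$ is a power of $(1+\epsilon)$ and only ever decreases. It is changed only when $\delta_x(y)$ drops below $w_H(x,y)/(1+\epsilon)$, so every change is a genuine decrease by at least one power. Since weights lie in $[1,(1+\epsilon)nW]$, each pair is updated at most $O(\log_{1+\epsilon} nW)=\tilde{O}(1)$ times over the whole run, not just per phase. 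This yields $\tilde{O}(N^2)$ spanner updates overall, each costing $\tilde{O}(1)$ amortized (expected, or w.h.p.\ after running $\Theta(\log n)$ copies). The total is $\tilde{O}(N^2)$, which is absorbed by the restart term.

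\textbf{Static clustering calls.} After each of the $m$ edge insertions, and after each assignment modification (at most one per edge insertion when combined with \cref{alg:incremental}), we run $\mathcal{CL}$ on $\tilde{H}$. By \cref{lem:fully_dyn_spanner}, $\tilde{H}$ has $\tilde{O}(N^{1+1/\lambda'})$ edges. By \cref{lem:static_kz_clust} with parameter $\lambda''$, each call takes $\tilde{O}(|E(\tilde{H})|^{1+1/\lambda''})$ time. Choosing the constants $\lambda',\lambda''$ large enough in terms of $\epsilon$ makes each call cost $\tilde{O}(N^{1+\epsilon})$, so the total over all calls is $\tilde{O}(m\,N^{1+\epsilon})$. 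Reading off the output $\tilde{H}$ must also be paid for; we either use the maintained spanner directly as the input to $\mathcal{CL}$, or note that its size is within the same bound.

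\textbf{Main obstacle.} The delicate step is the spanner accounting. We must make sure restarts are triggered only when $N$ actually grows, which is exactly what \linecref{algline:if_sigma_max_inc} enforces. We must also ensure that each batch $B$ contains only genuine decreases, so the per-pair update count stays $\tilde{O}(1)$ and not proportional to $m$.
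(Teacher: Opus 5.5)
Your proposal is correct and uses essentially the same decomposition as the paper's proof: $|\sigma_\text{max}(V)|$ SSSP instances at $m^{1+o(1)}$ each, $\tilde{O}(|\sigma_\text{max}(V)|^2)$ weight decreases fed to $\mathcal{SP}$ plus $\sigma_\text{inc}(V)$ restarts at $\tilde{O}(|\sigma_\text{max}(V)|^2)$ each, and one run of $\mathcal{CL}$ per update on a spanner with both parameters of order $1/\epsilon$ (the paper takes $\lambda = 3/\epsilon$ for both). Your explicit points fill in details the paper only asserts: each change to $w_H(x,y)$ lowers it by at least one power of $(1+\epsilon)$, these counts persist across restarts, and initializing new rows of $H$ costs $O(|\sigma_\text{max}(V)|^2)$.
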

\begin{proof}
    Regarding the update time of~\cref{alg:bicr_to_kmedian}, we need to provide an upper bound on the total time due to all the incremental $(1+\epsilon)$-approximate SSSP algorithms. In addition, we need to provide an upper bound on the total time due to the 
    dynamic spanner algorithm $\mathcal{SP}$ and the static $(k, z)$-clustering algorithm~$\mathcal{CL}$. 
 
    Based on~\cref{lem:incr_appr_sssp}, the total update time of a single incremental $(1+\epsilon)$-approximate SSSP algorithm is $m^{1+o(1)}$.\footref{ftnote:W_poly_n} Notice that the reduction Algorithm~\ref{alg:bicr_to_kmedian} initializes a new incremental $(1+\epsilon)$-approximate SSSP algorithm only when a new vertex is added to the $(\alpha, \beta)$-bicriteria approximate solution $S$.  Hence, the total update time due to all incremental $(1+\epsilon)$-approximate SSSP algorithms is upper bounded by~$|\sigma_\text{max}(V)| \cdot m^{1+o(1)}$. Regarding the total update time due to the dynamic spanner algorithm $\mathcal{SP}$, we consider
    separately the two types of adversarial updates:
    \begin{itemize}
        \item Consider an adversarial edge insertion into the graph $G$. For every pair of vertices $x, y \in S$ such that the distance estimate $\delta_x(y)$ decreases by a factor of at least $(1+\epsilon)$, the reduction Algorithm~\ref{alg:bicr_to_kmedian} adds the pair $(x, y)$ to the set~$\Delta_S$. In turn, a batch $B$ of distance decreases (see~\linecref{algline:def_of_batch} in~\cref{alg:bicr_to_kmedian}) is passed to the dynamic spanner algorithm $\mathcal{SP}$. Hence, the total number of distance decreases passed to $\mathcal{SP}$ is $\tilde{O}(|\sigma_\text{max}(V)|^2)$. Thus based on~\cref{lem:fully_dyn_spanner}, the total update time of $\mathcal{SP}$ under adversarial edge insertions is $\tilde{O}(|\sigma_\text{max}(V)|^2) \cdot \tilde{O}(1) = \tilde{O}(|\sigma_\text{max}(V)|^2)$.

        \item Consider a batch of modifications to the $(\alpha, \beta)$-bicriteria approximate assignment $\sigma$. The dynamic spanner algorithm $\mathcal{SP}$ is restarted whenever a batch of new vertices enters the bicriteria approximate solution~$S$ (see~\linescref{algline:if_sigma_max_inc}{algline:rest_spanner} of~\cref{alg:bicr_to_kmedian}), which occurs $\sigma_\text{inc}(V)$ times over all assignment modifications (see~\cref{def:batches_assign}). Since the subgraph $H$ 
        consists of $|S|^2$ edges, by~\cref{lem:fully_dyn_spanner} the total time incurred by $\mathcal{SP}$ under assignment modifications is at most $\tilde{O}(|S|^2 \cdot \sigma_\text{inc}(V))$.
    \end{itemize}

    Therefore, the total update time due to the dynamic spanner algorithm is $\tilde{O}(|S|^2 \cdot \sigma_\text{inc}(V))$; we have $S = \sigma_{\text{max}}(V)$, since no vertices are ever removed from the bicriteria approximate solution~$S$.
    After every adversarial update, the static $(k, z)$-clustering algorithm $\mathcal{CL}$ runs from scratch on the edge-sparsified graph~$\tilde{H}$. Since the subgraph $H$ consists of $|S|$ vertices, the edge-sparsified subgraph $\tilde{H}$ contains at most~$\tilde{O}(|S|^{1+\frac{1}{\lambda}})$ edges according to~\cref{lem:fully_dyn_spanner}, where~$\lambda = \frac{3}{\epsilon}$.
    Consequently, using~\cref{lem:static_kz_clust} with $\lambda = \frac{3}{\epsilon}$, the running time of $\mathcal{CL}$ per adversarial update is~$\tilde{O}(|S|^{1+\epsilon})$. The $\tilde{O}(|\sigma_\text{max}(V)|^{1+\epsilon})$ time of $\mathcal{CL}$ per adversarial update must be multiplied by the number of adversarial updates, which we can safely assume  is the number of adversarial edge insertions $m$, since any modification to $\sigma$ is triggered by an adversarial edge insertion.
    By combining all these arguments, the claim on the total update time of the incremental reduction~\cref{alg:bicr_to_kmedian} follows.
\end{proof}

\begin{proof}[Proof of \cref{thm:bicr_to_kzclustering}]
Consider the incremental reduction~\cref{alg:bicr_to_kmedian} and its analysis. The result then follows by~\cref{cor:reduct_approx,lem:reduct_time}.
\end{proof}

\subsection{Putting Everything Together}
In this section, we combine the tools we developed in order to obtain our result for the $(k,z)$-clustering problem on incremental graphs. The next lemma shows that the dynamic spanner algorithm $\mathcal{SP}$ is restarted only~$\tilde{O}(1)$ times in total.

\begin{lemma} \label{lem.}
    Consider the $(\alpha, \beta)$-bicriteria approximate assignment $\sigma$ which is the output of~\cref{alg:incremental} and the input of~\cref{alg:bicr_to_kmedian}. Then it holds that $\sigma_\text{inc}(V) = O(\log n \, \log_{1+\epsilon} nW)$.
\end{lemma}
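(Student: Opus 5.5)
The plan is to charge every increase of $|\sigma_\text{max}(V)|$ to a radius decrease in \cref{alg:incremental}. By \cref{def:batches_assign}, $\sigma_\text{inc}(V)$ counts the adversarial updates in which some vertex enters the image of $\sigma$ for the first time. So it suffices to show two things. First, in an update where no radius decreases, $\sigma$ gains no new image vertex at all. Second, the total number of updates in which some radius decreases is $O(\log n\,\log_{1+\epsilon} nW)$.

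\textbf{Step 1: updates without a radius decrease.} Suppose \textsc{First-Level-Decrease}() returns $\ell = t$, i.e., no level $i<t$ has $\hat{\nu}_i<\nu_i$. By the footnote in the proof of \cref{lem:Bi_Zi_subsets_Ui}, nothing changes in this case. The algorithm executes \linecref{algline:next_exec_set_1}, increments $i$ beyond $t$, and exits before the Resampling Phase. Consequently no candidate set is extended in \linecref{algline:extend_candidate_set}, and $\sigma$ is not reassigned in \linecref{algline:v_to_sigma} or \linecref{algline:last_level_assign}. Hence $\sigma$ is unchanged and $|\sigma_\text{max}(V)|$ does not increase.

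\textbf{Step 2: counting updates with a radius decrease.} Whenever \textsc{First-Level-Decrease}() returns $\ell<t$, the radius $\nu_\ell$ strictly decreases in \textsc{Update-DS-Rad-Decr}($\ell$). By \cref{obs:nu_only_decr_power_(1+eps)}, radii are powers of $(1+\epsilon)$ and are non-increasing over time (\ref{property_rad_1} and \ref{property_rad_2}). The distances lie in $[1,nW]$, and an undefined radius is treated as $nW+1$. So a fixed level's radius can decrease at most $O(\log_{1+\epsilon} nW)$ times. By \cref{cor:number_of_levels} there are $O(\log n)$ levels, so the total number of such updates is $O(\log n\,\log_{1+\epsilon} nW)$. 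This is the same counting as in the proof of \cref{lem:amort_update_time}.

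\textbf{Combining.} Each update counted by $\sigma_\text{inc}(V)$ contributes exactly one to it, however many vertices enter, and by Step 1 each such update is among those counted in Step 2. This gives the bound $O(\log n\,\log_{1+\epsilon} nW)$.

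\textbf{Main obstacle.} The delicate point is rigorously verifying that an update with $\ell=t$ changes nothing. There is a corner case where the number of levels changes, for example a new level appearing. I would handle it with the convention at \linecref{algline:comment_nu_largest}: the radius of a newly created level starts at $nW+1$, so creating it is itself charged as a decrease, and new levels arise only inside a Resampling Phase triggered by a genuine decrease at some $\ell<t$.

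I would also check that level-$t$ assignments do not break Step 1. The assignment $\sigma(v)=v$ for $v\in U_t$ occurs only in \linecref{algline:last_level_assign}, and that line is reached only after a Resampling Phase. Changes to $U_t$ likewise happen only after such a phase. So no new image vertex can appear without a radius decrease.
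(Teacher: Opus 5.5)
Your proposal is correct and follows the paper's proof. In both, $\sigma$ changes only when \textsc{First-Level-Decrease}() returns a level $i<t$, which forces a radius decrease, and by \ref{property_rad_1}, \ref{property_rad_2}, and \cref{cor:number_of_levels} there are only $O(\log n \cdot \log_{1+\epsilon} nW)$ such decreases overall (the new-level corner case you flag does not actually arise, since \cref{lem:size_rel_exec_sets} fixes every $|U_i|$ and hence $t$, and your charging would cover it anyway).
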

\begin{proof}
    The dynamic assignment $\sigma$ in the incremental bicriteria approximation~\cref{alg:incremental} is modified in~\linescref{algline:v_to_sigma}{algline:last_level_assign}, and this occurs only when \textsc{First-Level-Decrease()} in~\linecref{algline:first_level_decr} returns a level~$i < t$. In such cases, there exists a level $i$ for which the corresponding $i$-th radius $\nu_i$ is decreased. As argued in the proofs of~\cref{lem:size_of_sol,lem:amort_update_time} (see also~\cref{obs:nu_only_decr_power_(1+eps),cor:number_of_levels}), this can happen at most~$O(\log n \cdot \log_{1+\epsilon} nW)$ times overall.

    The $(\alpha, \beta)$-bicriteria approximate assignment $\sigma$, which is the output of~\cref{alg:incremental}, is the input $(\alpha, \beta)$-bicriteria approximate assignment $\sigma$ in~\cref{alg:bicr_to_kmedian}. 
    Since $\sigma$ is modified at most $O(\log n \cdot \log_{1+\epsilon} nW)$ times, it follows by~\cref{def:batches_assign} that $\sigma_\text{inc}(V) = O(\log n \, \log_{1+\epsilon} nW)$, as needed.
\end{proof}

\mainthmkzclust*
\begin{proof}
    We choose $\epsilon$ to be $\frac{0.5}{\lambda}$.
    Based on~\cref{thm:incr_alg}, the total update time required to maintain a $(O(1), O(\log^3 n \log_{1+\epsilon} n W))$-bicriteria approximate assignment $\sigma$ under adversarial edge insertions is~$m^{1+o(1)}$. Based on~\cref{obs:S_extends_only} and the fact that the incremental reduction~\cref{alg:bicr_to_kmedian} does not remove vertices from the bicriteria approximate solution $S$, it follows that $|\sigma_\text{max}(V)| = |S| = O(k \log^3 n \, \log_{1+\epsilon}nW)$. 
    Also by~\cref{lem.}, we have~$\sigma_\text{inc}(V) = O(\log n \, \log_{1+\epsilon} nW)$.
    
    Therefore based on~\cref{thm:bicr_to_kzclustering}, the total update time to maintain the $O(1)$-approximate $(k,z)$-clustering solution is $\tilde{O}(k \myhs m^{1+o(1)} + k^2 + k^{1+\epsilon} \myhs m) = \tilde{O}(k \myhs m^{1+o(1)} + k^2 + k^{1+\frac{1}{2\lambda}} \myhs m)$. Notice that we can safely assume that $m \geq k$, since otherwise we can run a static $(k, z)$-clustering algorithm~\cite{JiangJLL25,latour2025fastersimplergreedyalgorithm} from scratch after each adversarial edge insertion (see~\cref{lem:static_kz_clust}). As a result, the stated bounds on the total and amortized update times follow.
\end{proof}

\paragraph{Disconnected graphs.}
Although our arguments are simpler when the input graph $G$ is connected, the same guarantees hold for disconnected graphs as well. In particular, the input graph $G$ may be initially empty (i.e., without any edges). In this paragraph, we explain in more detail how our incremental algorithm handles disconnected graphs. 

First, observe that in our incremental bicriteria approximation~\cref{alg:incremental} the $i$-th valid radius~$\hat{\nu}_i$ depends on the value of $\tilde{\nu}_i$ determined at~\linecref{algline:check_if_nu_decreases_1} or~\linecref{algline:check_if_nu_decreases_2} in~\cref{alg:incremental} (see also~\cref{def:ith_valid_rad}), where~$i \in [0, t)$ is a fixed level. When the graph $G$ is disconnected, it is possible that $\tilde{\nu}_i$ is undefined, since there may be no such ball of size at least $\beta |U_i|$ (even if $\tilde{\nu} = \infty$). In this case, based on~\cref{cor:tilde_nu_bound_incr} and~\cref{def:mu_star_incr}, the corresponding $\mu_i^*$ is also $\infty$.\footnote{Using the convention that the minimum of an empty set is $\infty$.} Since~\cref{lem:lower_bound_opt} allows us to lower bound the optimal $(k, z)$-clustering cost using the $\mu_i^*$ values after each adversarial edge insertion in the incremental setting, we conclude that the optimal $(k, z)$-clustering cost is infinite. Therefore, whenever the value of some $\tilde{\nu}_i$ is $\infty$, the incremental algorithm can safely report that $\OPT = \infty$.

Another subtle situation concerns the $i$-th supporting candidate set $\tilde{S}_i$, the $i$-th candidate set $S_i$, and the restart of the $i$-th incremental $(1+\epsilon)$-approximate SSSP algorithm~$\mathcal{A}_i$ at~\linecref{algline:sssp_update} in~\cref{alg:incremental}, where~$i \in [0, t)$ is a fixed level. As explained in~\cref{ftn:isolated_vert}, each isolated vertex must be a center, as otherwise the cost of the solution would be infinite. Consequently, each vertex can be given the opportunity to be sampled once it has an incident edge. To avoid restarting~$\mathcal{A}_i$ every time a new vertex is sampled in this ``lazy'' way, we simply extend the super-source attached to $S_i$ by adding the corresponding new zero-weight edges. In other words, we simply add a zero-weight edge from the super-source to each newly sampled candidate center.\footnote{During the first stage in which we use our incremental bicriteria approximation~\cref{alg:incremental}, the maintained solution is of size $\tilde{O}(k)$. Thus whenever $m < k$, an alternative approach is to extend the solution by adding the endpoints of each inserted edge; once~$m \geq k$, the algorithm begins.}

Finally, the static $(k, z)$-clustering algorithm in~\cite{latour2025fastersimplergreedyalgorithm} can be adapted to work with disconnected graphs. Another way to transform any static $(k, z)$-clustering algorithm for connected graphs to handle disconnected graphs is as follows. There must be at most $k$ connected components, since otherwise the cost of the solution is infinite. 
Hence, we can connect the $O(k)$ components with sufficiently large-weight edges (e.g., weight equal to $n \cdot (nW) + 1$).\footnote{During the second stage in which we use our incremental reduction~\cref{alg:bicr_to_kmedian}, the amortized update time is~$\tilde{O}(k^{1+\epsilon})$ for some $\epsilon \in (0, 1)$.} 
Specifically, we order the connected components and add an edge of sufficiently large weight between the $i$-th and the $(i+1)$-th connected component.
Then the new instance is a connected graph whose set of ($O(1)$-approximate) solutions remains unchanged.
As a result, our incremental algorithm can handle disconnected graphs as well.

    \subsection*{Acknowledgments}
We would like to thank Aditi Dudeja for taking the time to meet and engage in thoughtful conversations in Salzburg.

    \newpage
    \printbibliography[heading=bibintoc] 

    \newpage
    \appendix
    \section*{Appendix}
    \section{Missing Proofs}
\label{appendix:missingproofs}
In this section, we prove the lemmas whose proofs were omitted from the main body of the paper.

\relationnumu*
\begin{proof}[Proof sketch]
For a fixed level $i \in [0, t]$, let $X_i$ and $\mu_i^*$ be as in~\cref{def:mu_star}, and let $B^* \coloneqq \ball[X_i,\mu_i^*]\cap U_i$. We partition the set $B^*$ by assigning each vertex in $B^*$ to its closest point in $X_i$. Let $B_x$ denote the set of vertices in $B^*$ that are assigned to $x\in X_i$. We say that the $i$-th candidate set $S_i$ covers $B_x$ iff $B_x\cap S_i\neq\emptyset$, where $x \in X_i$. Similarly, we say that $y\in B^*$ is covered by $S_i$ iff $y\in B_x$ and $B_x \cap S_i \neq \emptyset$, where $x \in X_i$. A fixed set $B_x$ is covered by $S_i$ iff a vertex in $B_x$ is sampled in~\linecref{line:sample_Si} of~\cref{alg:static}. Let $B_\text{cov}\subseteq B^*$ denote the set of vertices covered by $S_i$. By construction, the following statements hold:
\begin{enumerate}
    \item By the definition of $\mu^*_{i}$, we have $\card{B^*} \geq \gamma \myhs |U_i|$.

    \item By the triangle inequality, any point $y \in B_\text{cov}$ is within distance $2 \myhs \mu_i^*$ from the $i$-th candidate set $S_i$.
    
    \item The factor $\beta$ appearing in the definition of $\nu_i^*$ (\cref{def:nu_star}) is less than the factor $\gamma$ appearing in the definition of $\mu_i^*$ (\cref{def:mu_star}).
\end{enumerate}
Since $\mathbb{E}[|S_i|] = \alpha \myhs k \log n$, by applying a Chernoff bound (\Cref{lem:chernoff}), we can infer that $|S_i| \geq \frac{\alpha}{2} k \log n$ with high probability. Based on the proof of Lemma 3.3 in~\cite{mettu2004optimal}, it follows that $|B_\text{cov}|\geq (1-\epsilon') \myhs \card{B^*}$ with high probability, for a suitably large choice of the constant $\alpha$ (see~\linecref{line:sample_Si} of~\cref{alg:static}) and any positive real $\epsilon'$. Therefore, it holds that $\lvert\ball[S_i, 2 \myhs \mu_i^*] \cap U_i\rvert$ is at least $|B_\text{cov}| \geq (1-\epsilon')\gamma \myhs |U_i| \geq \beta \myhs |U_i|$, which in turn implies that $\nu_i^* \leq 2 \myhs \mu_i^*$, and the claim follows.
\end{proof}

\nulesstwomuone*
\begin{proof}
    By an application of Chernoff bound (\cref{lem:chernoff}), it holds that $\card{Y} \geq \frac{\alpha}{2} k\log n$ with high probability.
    The proof is similar to that of \Cref{lem:nu<2mu}, with $S_i$ replaced by $Y$,~\cref{def:mu_star} replaced by~\cref{def:mu_star_incr}, and~$\nu_i^*$ replaced by~$\nu_i^*(Y)$.
\end{proof}

\subsection{Proof of \Cref{lem:lower_bound_opt}}

We first prove a series of claims, and then use them to establish~\Cref{lem:lower_bound_opt}. The analysis follows a similar structure to that presented in~\cite[Section 3]{mettu2004optimal}.

\lowerboundOPT*

Consider an arbitrary fixed subset of vertices $X \subseteq V$ of size at most $k$. For each level $i \in [0, t]$, we define $F_i \coloneqq \{v \in U_i \mid \dist(v, X) \geq \mu^*_i\}$, $F_i^\lambda \coloneqq F_i \setminus \bigcup_{j>0} F_{i+j\lambda}$, and $M_{i,\lambda} \coloneqq \{j\in [0, t]\mid j \equiv i\mod \lambda\}$. Recall that $t$ denotes the last level of the algorithm.

\begin{claim}
    Let $l, \lambda, i, j$ be integers such that
    $l \in [0, t]$, $\lambda > 0$, $i \in M_{l, \lambda}, j \in M_{l, \lambda}$, and $i \neq j$.
    Then it holds that~$F_i^\lambda \cap F_j^\lambda = \emptyset$.
\end{claim}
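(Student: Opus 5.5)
Since $i$ and $j$ are distinct elements of $M_{l,\lambda}$, they are congruent modulo $\lambda$. Without loss of generality assume $i<j$. Then $j = i + s\lambda$ for some integer $s>0$. The plan is to show that every vertex of $F_j^\lambda$ lies in $F_j$, and that $F_j$ is one of the sets removed in the definition of $F_i^\lambda$.

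First, by definition $F_j^\lambda = F_j \setminus \bigcup_{j'>0} F_{j+j'\lambda} \subseteq F_j$. Second, by definition
\[
F_i^\lambda = F_i \setminus \bigcup_{j'>0} F_{i+j'\lambda}.
\]
Choosing $j'=s>0$ gives $F_{i+s\lambda} = F_j$. Hence $F_i^\lambda \cap F_j = \emptyset$, and therefore $F_i^\lambda \cap F_j^\lambda \subseteq F_i^\lambda \cap F_j = \emptyset$.

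The only points needing care are bookkeeping conventions, and there is no real obstacle. One is the case where some index $i+j'\lambda$ exceeds $t$. I will adopt the convention that $F_\ell=\emptyset$ for $\ell>t$, or equivalently restrict the union to indices in $[0,t]$. This does not affect the argument, because $j\le t$ is itself a valid index appearing in the union. The other point is that the symmetric case $j<i$ follows by swapping the roles of $i$ and $j$.

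Notably, the argument uses no property of $\mu_i^*$ or $X$. The disjointness is purely set-theoretic, coming from the telescoping removal in the definition of $F_i^\lambda$.
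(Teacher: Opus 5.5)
Your proof is correct and follows the same argument as the paper. Both write $j=i+s\lambda$ with $s>0$, observe that $F_i^\lambda\subseteq F_i\setminus F_j$ and $F_j^\lambda\subseteq F_j$, and conclude that the two sets are disjoint. Your remark on indices beyond $t$ is a harmless addition that the paper leaves implicit.
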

\begin{proof}
    Assume without loss of generality that $i<j$. By the definition of $F_i^\lambda$, we have $F_i^\lambda \subseteq F_i \setminus F_j$ as~$j=i + s \myhs \lambda$ for some $s>0$. 
    Since $F_{j}^\lambda \subseteq F_j$, it follows that $F_{i}^\lambda\cap F_{j}^\lambda=\emptyset$, as needed.
\end{proof}

We extend the definition of $\cost^z(\cdot)$ to pairs of vertex sets by defining $\cost^z(X, Y) \coloneqq \sum_{v \in Y} \dist(v, X)^z$, so that it gives the cost of $X$ evaluated on $Y$.

\begin{claim} \label{clm:rel_Y_Fi}
    For a fixed level $i \in [0, t]$, let $Y \subseteq F_i$ be a subset of $F_i$. Then it holds that $|F_i| \geq (1-\gamma) \myhs |U_i|$ and~$\cost^z(X,Y) \geq |Y| \myhs (\mu^*_{i})^z$.
\end{claim}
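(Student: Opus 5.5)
The second inequality is immediate: every $v \in Y \subseteq F_i$ satisfies $\dist(v, X) \geq \mu^*_i$ by the definition of $F_i$. Since $z \geq 1$ and distances are nonnegative, $\dist(v,X)^z \geq (\mu^*_i)^z$. Summing over $v \in Y$ gives $\cost^z(X,Y) = \sum_{v \in Y} \dist(v,X)^z \geq |Y|\,(\mu^*_i)^z$.

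For the first inequality, $|F_i| \geq (1-\gamma)|U_i|$, I would argue by contradiction using the minimality in \cref{def:mu_star} (in the incremental setting, \cref{def:mu_star_incr}). Observe that $F_i = U_i \setminus \ball(X, \mu^*_i)$, since $\ball(X,\mu^*_i)$ is the open ball. Suppose $|F_i| < (1-\gamma)|U_i|$. Then the open ball $\ball(X,\mu^*_i)$ contains more than $\gamma |U_i|$ vertices of $U_i$.

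Next, consider the function $r \mapsto |\ball[X, r] \cap U_i|$. It is a nondecreasing step function, and $r \mapsto |U_i \setminus \ball(X, r)|$ is nonincreasing. Only finitely many distances $\dist(v,X)$ with $v \in U_i$ exist, so I would take $r'$ to be the smallest value in $\{\dist(v,X) : v \in U_i\} \cup \{0\}$ such that $|\ball[X,r'] \cap U_i| \geq \gamma |U_i|$. Because more than $\gamma|U_i|$ vertices of $U_i$ lie at distance strictly less than $\mu^*_i$, we get $r' < \mu^*_i$.

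It remains to check property 2 at $r'$. The vertices in $\ball(X, r') \cap U_i$ are exactly those at distance strictly below $r'$. By the minimality of $r'$ among the attained distances, these number fewer than $\gamma |U_i|$, since that set equals $\ball[X, r''] \cap U_i$ for the next smaller attained distance $r''$, or is empty. Hence $|U_i \setminus \ball(X,r')| > (1-\gamma)|U_i|$. The set $X$ has size at most $k$ and satisfies both properties of \cref{def:mu_star} with the value $r' < \mu^*_i$, which contradicts the minimality of $\mu^*_i$.

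The main care point is this boundary handling between the open and closed balls. In particular, one must handle the degenerate case $\gamma|U_i| \le 0$, where $r'=0$ works trivially, and confirm that the same argument applies verbatim to \cref{def:mu_star_incr} after each edge insertion, because it uses only the definition at the current state.
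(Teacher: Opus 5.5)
Your proof is correct and follows the paper's: the second inequality is the same termwise bound $\dist(v,X)^z \geq (\mu^*_i)^z$ summed over $Y$. For the first inequality, the paper only says it follows from the definition of $\mu^*_i$, with a footnote noting that the witness set may be any subset of $V$; your contradiction argument, using $X$ itself as the witness at a radius $r' < \mu^*_i$ and handling the open/closed ball boundary via the finitely many attained distances, is a faithful expansion of that one-line appeal.
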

\begin{proof}
    The first part of the claim follows from the definition of $\mu^*_i$ (see~\cref{def:mu_star}).\footnote{We emphasize that in~\cref{def:mu_star}, the corresponding set $X_i$ must be a subset of $V$ and not just of $U_i$.} Since $Y \subseteq F_i$, the second part follows from the fact that:
    \begin{align*}
        \cost^z(X,Y) \,=\, \sum_{v \in Y} \dist(v, X)^z \,\geq\, \sum_{v \in Y} (\mu^*_{i})^z \,=\, |Y| \myhs (\mu^*_{i})^z.
    \end{align*}
\end{proof}

\begin{corollary}
    For all integers $l \in [0, t]$ and $\lambda > 0$, it holds that:
    \begin{align*}
        \cost^z\Big(X, \bigcup_{i\in M_{l,\lambda}}F^\lambda_i\Big) = \sum_{i\in M_{l,\lambda}} \cost^{z}(X,F^\lambda_{i})\geq \sum_{i\in M_{l,\lambda}}|F_{i}^\lambda|(\mu^*_{i})^z.
    \end{align*}
\end{corollary}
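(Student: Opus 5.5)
The corollary follows by combining the two claims just established, in two steps: first the equality, via disjointness, and then the inequality, termwise.

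\emph{Step 1 (the equality).} Fix integers $l \in [0,t]$ and $\lambda > 0$. By the preceding claim, the sets $\{F_i^\lambda\}_{i \in M_{l,\lambda}}$ are pairwise disjoint. The extended cost $\cost^z(X, Y) = \sum_{v \in Y} \dist(v,X)^z$ is additive over disjoint unions of the second argument, since each vertex contributes its term exactly once. Hence
\[
\cost^z\Big(X, \bigcup_{i\in M_{l,\lambda}}F^\lambda_i\Big) = \sum_{i\in M_{l,\lambda}} \cost^{z}(X,F^\lambda_{i}).
\]

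\emph{Step 2 (the inequality).} By definition, $F_i^\lambda = F_i \setminus \bigcup_{j>0} F_{i+j\lambda} \subseteq F_i$. Applying \cref{clm:rel_Y_Fi} with $Y = F_i^\lambda$ gives $\cost^z(X, F_i^\lambda) \ge |F_i^\lambda| (\mu_i^*)^z$ for each $i \in M_{l,\lambda}$. Summing over $i \in M_{l,\lambda}$ yields the inequality.

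There is no substantive obstacle. The only point to check is that the disjointness claim applies to every pair of distinct indices in $M_{l,\lambda}$, which is exactly its hypothesis. Indices $i+j\lambda > t$ contribute empty sets and can be ignored.
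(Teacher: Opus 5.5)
Your proposal is correct and is exactly the argument the paper intends. The paper gives no separate proof and presents the corollary as an immediate consequence of the pairwise-disjointness claim (which yields the equality) and \cref{clm:rel_Y_Fi} applied with $Y = F_i^\lambda \subseteq F_i$ (which yields the inequality termwise).
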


Recall that $r\coloneqq \ceil{\log_{(1-\beta)}((1-\gamma)/3)}$ and $0 < \beta < \gamma < 1$ are small constants. We show that for any  level $i$, the set $F_{i+r}$ has size at most one third of $F_i$, and thus $F_i^r$ has size at least half of $F_i$.

\begin{claim} \label{clm:Fi_shrink}
    For every level $i\in [0, t]$, it holds that $|F_{i+r}|\leq \frac{1}{3}|F_{i}|$.
\end{claim}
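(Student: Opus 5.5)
We want $|F_{i+r}| \le \frac13 |F_i|$. The plan is to compare both sides with $|U_i|$, using the geometric shrinking of the execution sets for the upper side and \cref{clm:rel_Y_Fi} for the lower side.

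First, the upper side. Since $F_{i+r} \subseteq U_{i+r}$ by definition, it suffices to bound $|U_{i+r}|$. By \cref{lem:exec_decr_beta} (or \cref{lem:size_rel_exec_sets} in the incremental setting), each level satisfies $|U_{j+1}| \le (1-\beta)|U_j|$. Iterating $r$ times gives
\[
|F_{i+r}| \le |U_{i+r}| \le (1-\beta)^r |U_i|.
\]
By the choice $r = \ceil{\log_{(1-\beta)}((1-\gamma)/3)}$, and since $0<1-\beta<1$ makes the map $x\mapsto(1-\beta)^x$ decreasing, we have $(1-\beta)^r \le (1-\gamma)/3$. Hence $|F_{i+r}| \le \frac{1-\gamma}{3}|U_i|$.

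Second, the lower side. By the first part of \cref{clm:rel_Y_Fi}, $|F_i| \ge (1-\gamma)|U_i|$. Combining the two bounds yields
\[
|F_{i+r}| \le \frac{1-\gamma}{3}|U_i| \le \frac13 |F_i|,
\]
which is the claim.

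The one point needing care is the boundary case $i+r > t$. There I would use the convention that $U_j$ and $F_j$ are empty for $j>t$, so the inequality is trivial. Otherwise the iteration of \cref{lem:exec_decr_beta} is only over levels $j \in [i, i+r) \subseteq [0,t)$, where that lemma applies. I expect no real obstacle beyond checking the direction of the inequality for $r$, namely that rounding the exponent up only decreases $(1-\beta)^r$.
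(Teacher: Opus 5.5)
Your proof is correct and is essentially the paper's argument: bound $|F_{i+r}| \le |U_{i+r}| \le (1-\beta)^r |U_i|$ by the geometric shrinking of the execution sets, then use $|F_i| \ge (1-\gamma)|U_i|$ from \cref{clm:rel_Y_Fi} together with the choice of $r$. You also spell out two details the paper leaves implicit: the boundary case $i+r>t$, handled by treating $F_j=\emptyset$ for $j>t$, and the check that the ceiling in $r$ gives $(1-\beta)^r \le (1-\gamma)/3$.
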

\begin{proof}
    By definition we have $F_{i+r} \subseteq U_{i+r}$. Based on~\cref{lem:exec_decr_beta,clm:rel_Y_Fi}, it holds that $|F_{i+r}|\leq |U_{i+r}|\leq (1-\beta)^r|U_i|\leq \frac{(1-\beta)^r}{1-\gamma}|F_{i}|$. The claim then follows by substituting the value of $r$.
\end{proof}
\begin{corollary}
    For every level $i\in [0, t]$, it holds that $|F_i^r|\geq \frac{|F_i|}{2}$.
\end{corollary}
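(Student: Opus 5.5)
The plan is to derive the corollary directly from \cref{clm:Fi_shrink} and the definition $F_i^r = F_i \setminus \bigcup_{j>0} F_{i+jr}$. The key structural fact is nestedness. Since the execution sets are nested, $U_{i+r} \subseteq U_i$, so one wants $F_{i+jr} \subseteq F_{i+r}$ for all $j \geq 1$. The cleanest way to obtain this is to show that the $\mu^*$ values are non-decreasing along nested execution sets, so that $F_{i'} \subseteq F_i$ whenever $i' \geq i$.

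If nestedness holds, the argument is immediate. First, $\bigcup_{j>0} F_{i+jr} = F_{i+r}$. Then, by \cref{clm:Fi_shrink},
\[
|F_i^r| \;\geq\; |F_i| - |F_{i+r}| \;\geq\; \left(1 - \tfrac{1}{3}\right)|F_i| \;=\; \tfrac{2}{3}|F_i| \;\geq\; \tfrac{1}{2}|F_i|.
\]

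I do not expect monotonicity of $\mu^*$ to hold in general, so the plan avoids relying on it. Instead, apply \cref{clm:Fi_shrink} iteratively along the arithmetic progression $i, i+r, i+2r, \dots$. Bound the union by a sum:
\[
\Bigl|\bigcup_{j>0} F_{i+jr}\Bigr| \;\leq\; \sum_{j\geq 1} |F_{i+jr}|.
\]
Iterating \cref{clm:Fi_shrink} gives $|F_{i+jr}| \leq 3^{-j}|F_i|$. Summing the geometric series yields
\[
\sum_{j\geq1} 3^{-j}|F_i| \;=\; \tfrac{1}{2}|F_i|,
\]
and hence $|F_i^r| \geq |F_i| - \tfrac12|F_i| = \tfrac12|F_i|$. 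This explains exactly why the constant one third in \cref{clm:Fi_shrink} produces the factor one half here. Indices beyond $t$ contribute empty sets and cause no issue.

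The main obstacle is justifying the iteration $|F_{i+jr}| \leq \tfrac13|F_{i+(j-1)r}|$. This follows by applying \cref{clm:Fi_shrink} at level $i+(j-1)r$, which is legitimate because the claim holds for every level in $[0,t]$. Chaining these inequalities gives $|F_{i+jr}| \leq 3^{-j}|F_i|$ by induction on $j$. If that chaining were for some reason unavailable, I would instead re-derive the bound directly from the proof of \cref{clm:Fi_shrink}:
\[
|F_{i+jr}| \;\leq\; |U_{i+jr}| \;\leq\; (1-\beta)^{jr}|U_i| \;\leq\; \frac{(1-\beta)^{jr}}{1-\gamma}|F_i|.
\]
The choice of $r$ then gives $(1-\beta)^{jr}/(1-\gamma) \leq 3^{-j}$. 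This alternative route also yields the geometric decay directly.
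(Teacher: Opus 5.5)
Your proposal is correct and follows the paper's proof: bound $|\bigcup_{j>0} F_{i+jr}|$ by $\sum_{j>0}|F_{i+jr}|$, use \cref{clm:Fi_shrink} iteratively to get $|F_{i+jr}|\le 3^{-j}|F_i|$, and sum the geometric series to obtain $|F_i^r|\ge |F_i|/2$. Your explicit chaining over the levels $i+(j-1)r$, and the direct alternative via $(1-\beta)^{jr}\le((1-\gamma)/3)^j$, fill in the decay step the paper leaves implicit, and you were right not to rely on monotonicity of the $\mu^*_i$.
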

\begin{proof}
    The size of $F_i^r$ is evaluated as follows:
    \begin{align*}
        |F_i^r| \;&=\; \Big|F_i \setminus \bigcup_{j>0} F_{i+jr}\Big|
        \;\geq \; |F_i| - \Big|\bigcup_{j>0} F_{i+jr}\Big| \\ \;&\geq\; |F_i| - \sum_{j > 0} \frac{|F_i|}{3^j} 
        \;\geq\; \frac{|F_i|}{2}.
    \end{align*}
\end{proof}

\begin{proof}[Proof of \Cref{lem:lower_bound_opt}]
Let $l = \arg\max_{0\leq j< r} \sum_{i\in M_{j,r}}|F_{i}^r| \myhs (\mu_i^*)^z$. The $(k, z)$-clustering cost of $X$ is then evaluated as follows:
\begin{align*}
    \cost^z(X) &\;\geq\; \cost^z\Big(X, \bigcup_{i \in M_{l, r}} F_i^r \Big)\\
    &\;\geq\; \sum_{i\in M_{l,r}}|F^{r}_i| \myhs (\mu_i^*)^z\\
    &\;\geq\; \frac{1}{r}\sum_{i\in [0, t]}|F_i^r| \myhs (\mu_i^*)^z\\
    &\;\geq\; \frac{1}{2r}\sum_{i\in [0, t]}|F_i| \myhs (\mu_i^*)^z\\
    &\;\geq\; \frac{1-\gamma}{2r}\sum_{i\in [0, t]} |U_i| \myhs (\mu_i^*)^z. 
    \qedhere
\end{align*}
\end{proof}

\section{Extending~\cref{lem:static_kz_clust} to vertex weights}
\label{appendix:vertex_weights}

The static $(k, z)$-clustering algorithm of Dupre la Tour and Saulpic~\cite{latour2025fastersimplergreedyalgorithm} defines a function $\text{Value}(\ball[x, r]) \coloneqq \lvert\ball[x, r]\rvert \cdot r^z$, for a vertex $x \in V$, a positive real number~$r$, and a constant $z \geq 1$. In fact, the authors show that an approximation of $\lvert\ball[x, r]\rvert \cdot r^z$ suffices. In particular, $\text{Value}(\ball[x, r])$ should satisfy the following inequalities (where $\lambda \geq 5$ is a constant):
\[
    \lvert\ball[x, r]\rvert \cdot \frac{r^z}{3} \;\leq\; \text{Value}(\ball[x, r]) \;\leq\; \lvert\ball[x, \lambda r]\rvert \cdot 3 \myhs r^z.
\]
To generalize the arguments to work with vertex weights $\wt(\cdot)$, the function $\text{Value}(\cdot)$ should be generalized to $\text{Value}_{\wt}(\ball[x, r]) \coloneqq \wt(\ball[x, r]) \cdot r^z$, where $\wt(\ball[x, r]) \coloneqq \sum_{v \in \ball[x, r]} \wt(v)$. Similarly, it suffices that $\text{Value}_{\wt}(\ball[x, r])$ satisfy the following inequalities (where $\lambda \geq 5$ is a constant):
\[
    \wt(\ball[x, r]) \cdot \frac{r^z}{3} \;\leq\; \text{Value}_{\wt}(\ball[x, r]) \;\leq\; \wt(\ball[x, \lambda \myhs r]) \cdot 3 \myhs r^z.
\]
Equipped with this definition of $\text{Value}_{\wt}(\ball[x, r])$, the same arguments presented in Appendix A of~\cite{latour2025fastersimplergreedyalgorithm} can be repeated for the weighted $(k, z)$-clustering problem.

Furthermore, the static $(k, z)$-clustering algorithm of Dupre la Tour and Saulpic~\cite{latour2025fastersimplergreedyalgorithm} employs the near-linear time algorithm of Cohen~\cite{Cohen97} in order to efficiently compute the values $\text{Value}_{\wt}(\ball[x, r])$. Since the algorithm in~\cite{Cohen97} works with vertex weights,~\cref{lem:static_kz_clust} can be extended to handle vertex weights as well.

\end{document}